 \journalname{xxx}
\begin{document}

\title{Solving the chemical master equation for monomolecular reaction systems analytically: \\ a Doi-Peliti path integral view\thanks{This work was supported by NSF Grant \# DMS 1562078.}
}
%\subtitle{Do you have a subtitle?\\ If so, write it here}

\titlerunning{Monomolecular reaction systems: a Doi-Peliti path integral view}        % if too long for running head

\author{John J. Vastola}

%\authorrunning{Short form of author list} % if too long for running head

\institute{John J. Vastola \at
              Department of Physics and Astronomy, Vanderbilt University, \\
              Nashville, Tennessee, USA \\
              \email{john.j.vastola@vanderbilt.edu}       
}

\date{Received: date / Accepted: date}
% The correct dates will be entered by the editor

\maketitle

\begin{abstract}
The chemical master equation (CME) is a fundamental description of interacting molecules commonly used to model chemical kinetics and noisy gene regulatory networks. Exact time-dependent solutions of the CME---which typically consists of infinitely many coupled differential equations---are rare, and are valuable for numerical benchmarking and getting intuition for the behavior of more complicated systems. Jahnke and Huisinga's landmark calculation of the exact time-dependent solution of the CME for monomolecular reaction systems is one of the most general analytic results known; however, it is hard to generalize, because it relies crucially on properties of monomolecular reactions. In this paper, we rederive Jahnke and Huisinga's result on the time-dependent probability distribution and moments of monomolecular reaction systems using the Doi-Peliti path integral approach, which reduces solving the CME to evaluating many integrals. While the Doi-Peliti approach is less intuitive, it is also more mechanical, and hence easier to generalize. To illustrate how the Doi-Peliti approach can go beyond the method of Jahnke and Huisinga, we also find an explicit and exact time-dependent solution to a problem involving an autocatalytic reaction that Jahnke and Huisinga identified as not solvable using their method. We also find a formal exact time-dependent solution for any CME whose list of reactions involves only zero and first order reactions, which may be the most general result currently known.
\keywords{Chemical master equation \and Explicit solution formula \and Continuous-time Markov process \and Path integral \and Doi-Peliti}
% \PACS{PACS code1 \and PACS code2 \and more}
\subclass{92C45 \and 60J27 \and 34A05 \and 81S40}
\end{abstract}

%%%%%%%%%%%%%%%%%%%%%%%%%%%%%%%%%%%%%%%%%%%%%%%%

\section{Introduction}
\label{sec:intro}

The chemical master equation (CME) provides a fundamental description of well-mixed molecules interacting with each other via a set of chemical reactions \cite{mcquarrie1967,gillespie1992,gillespie2000,gillespie2007,gillespie2013,fox2017,munsky2018}. It models dynamics that are discrete (the state of the system is a set of nonnegative integers) and stochastic (chemical reactions occur with some probability). The CME has recently enjoyed tremendous success as a framework for understanding noisy single cell data \cite{neuert2013,munsky2015,fox2016,munsky2018shape,weber2018,fox2019,fox2019temporal}, particularly in simple model organisms like yeast where techniques like single-molecule Fluorescence \textit{in situ} Hybridization (smFISH) allow RNA molecule numbers to be counted accurately \cite{raj2008,femino1998,rahman2013}. Outside of cell and molecular biology, master equations have been successfully used to model population dynamics \cite{ovaskainen2010,melbinger2010,assaf2017}, traffic \cite{nagel1992,mahnke1997,mahnke2005}, and gas phase chemical kinetics \cite{miller2006,glowacki2012,jasper2014}, among other things. 

Although it is very useful for defining discrete stochastic models, the CME generally cannot be solved directly. One typically resorts to an approximate approach, like using Gillespie's algorithm \cite{gillespie1976,gillespie1977} to extract information from many brute force simulations, or using finite state projection \cite{munsky2006,peles2006,fox2016}, or partitioning the system (e.g. low versus high copy number, slow versus fast time scale) \cite{harris2006,harris2009,iyengar2010,bokes2012,hasenauer2014,kan2016}, or solving a continuous approximation to the CME like the chemical Langevin equation \cite{gillespie2000,gillespie2002,grima2011,vastolaPRE2020}. 

Unsurprisingly, exact time-dependent solutions (as opposed to steady state solutions) of the CME are particularly rare, and have only been computed for specific cases. McQuarrie \cite{mcquarrie1967} describes some of the early attempts: in 1940, Max Delbr\"{u}ck evaluated the CME for the autocatalytic reaction $S \to S + S$ \cite{delbruck1940}; in 1954, Renyi solved the binding reaction $A + B \to C$ \cite{renyi1954}; in 1960, Ishida solved the death reaction $S \to \varnothing$ and presented the first CME solution with time-dependent rates \cite{ishida1960}; in 1963 and 1964, McQuarrie et al. solved many simple systems (including $A + A \to B$ and $A + B \to C$) using the method of generating functions \cite{mcquarrie1963,mcquarrie1964}. 

The situation did not change appreciably until Jahnke and Huisinga's landmark paper \cite{jahnke2007}, more than forty years later. Their 2007 paper constituted a major advance in our collective understanding of the CME; they were able to solve the CME for a system with an \textit{arbitrary} number of species experiencing an \textit{arbitrary} number of reactions whose rates have \textit{arbitrary} time-dependence, \textit{provided that} the reactions consisted of some combination of birth ($\varnothing \to S_k$), death ($S_j \to \varnothing$), and conversion ($S_j \to S_k$)\footnote{These systems are called ``monomolecular'' because all allowed reactions have at most one molecule as input, and at most one molecule as output.}. The shocking generality of their result, as well as the explicitness of the solution they wrote down (in Theorem 1 of that paper), was powerful. 

Since 2007, there have been few new results of the same generality. Reis et al. \cite{reis2018} extend Jahnke and Huisinga's result by considering hierarchical first-order reaction networks (which allow a certain subset of first-order reactions that is strictly larger than the set of monomolecular reactions). However, there has not been (for example) any result on the solution to general first-order reactions, or general bimolecular reactions. At present, even finding the exact solutions of simple systems that involve bimolecular reactions is nontrivial: the work of Laurenzi ($A + B \leftrightarrow C$) \cite{laurenzi2000}, as well as Arslan and Laurenzi ($A + B \leftrightarrow A + A$) \cite{arslan2008} are two examples. 

One drawback of Jahnke and Huisinga's paper is that it essentially relied on guessing the solution. It was well-known that Poisson and multinomial distributions solved the CME in special cases, and that these distributions had certain desirable properties (e.g. a Poisson distribution stays a Poisson distribution, and a multinomial distribution stays a multinomial distribution; see Sec. 3 of their paper). To derive their Theorem 1, these properties were exploited, along with the fact that only monomolecular reactions were considered. Of course, their method completely breaks down for a system that is only slightly more complicated; as they point out in Sec. 6, adding an autocatalytic reaction $S \to S + S$ to a system they can easily solve manages to make it beyond the scope of their results. 

Hence, it would be nice if there was a method to obtain their classic result that did not rely on systematic guessing. In this paper, we offer the Doi-Peliti path integral approach to solving the CME as one such method. The Doi-Peliti approach allows one to `turn the crank', so to speak, and generate a time-dependent solution of the CME through a straightforward but difficult calculation. Importantly, it is system-agnostic: one does not need to know properties like `Poisson distributions stay Poisson', or assume the solution takes a certain form. 

Doi-Peliti field theory---which emerged from the pioneering papers of researchers like Doi \cite{doi1976,doi1976second}, Peliti \cite{peliti1985,peliti1985eden,peliti1986}, and Grassberger \cite{grassberger1980,grassberger1982,cardy1985,grassberger1989}---reframes solving the CME as a field theory problem. This enables the use of powerful approximation schemes, like the renormalization group and diagrammatic perturbation theory \cite{mattis1998,lee1994,leecardy1994,lee1995,vanWijland1998,canet2004,canet2006,tauber2005}. While Doi-Peliti field theory is still somewhat obscure in mathematical biology, it has seen the occasional application: e.g. to understand population dynamics given colored noise \cite{fung2017}, age dependent branching processes \cite{greenman2016,greenman2017}, and large deviations in gene regulatory networks \cite{bressloff2014,assaf2017}. Although not Doi-Peliti, a qualitatively similar path integral has been used to solve the CME for a multistep transcription and translation process \cite{albert2019}.  

We will use Doi-Peliti field theory to rederive Jahnke and Huisinga's Theorem 1. Moreover, in order to show that the Doi-Peliti path integral approach is strictly \textit{more} powerful than the one used by Jahnke and Huisinga, we use it to exactly solve a problem they said their method could not, as well as a far more general problem (the CME whose list of reactions consists of any combination of zero and first order reactions). We solve these additional problems in complete generality, and obtain exact time-dependent solutions assuming rates with arbitrary time-dependence. 

The paper is organized as follows. In Sec. \ref{sec:pstatement}, we state the problems we will solve, as well as our main results on their solutions. This includes reviewing the monomolecular CME and Jahnke and Huisinga's solution of it. In Sec. \ref{sec:reframe}, we review the generating function formulation of the CME, and develop the basic machinery of the Doi-Peliti approach to solving the differential equation satisfied by the generating function. This includes introducing several important concepts (including coherent states and two inner products) and constructing the Doi-Peliti path integral. We also describe how information like transition probabilities, moments, and (the usual) generating functions can be extracted from the (bra-ket) generating function. 
Sec. \ref{sec:monocalc} through \ref{sec:firstcalc} contain the Doi-Peliti calculations that back up our main results, with each section focusing on one of the problems introduced in Sec. \ref{sec:pstatement}: (i) monomolecular systems, (ii) one species birth-death-autocatalysis, and (iii) arbitrary combinations of zero and first order reactions. Sec. \ref{sec:propview} describes an alternative view of the propagator, the central object to be calculated in the Doi-Peliti approach; in particular, we show that our results can also be derived by solving a certain partial differential equation (PDE) using the method of characteristics. Finally, in Sec. \ref{sec:discussion}, we discuss the merits and drawbacks of the Doi-Peliti approach to solving the CME, and speculate on how it could be further utilized. 

%In Sec. \ref{sec:propcalc}, we solve the Doi-Peliti path integral. In Sec. \ref{sec:proptoprob}, we convert the path integral solution into a solution of the CME. In Sec. \ref{sec:proptomoments}, we show how to convert the path integral solution directly into the time-dependent moments of the CME. In Sec. \ref{sec:newresult}, we completely solve a problem Jahnke and Huisinga could not solve to demonstrate the flexibility of the Doi-Peliti approach. 

\section{Problem statements and main results}
\label{sec:pstatement}

In this section, we introduce in detail the specific problems we will solve, and we present our main results regarding their solutions. If the reader is \textit{only} interested in the answers, and not how they were obtained, then in some sense the rest of the paper is irrelevant; the validity of any given solution can usually be checked by tedious but straightforward substitution into the CME.

We present results for several systems, in order of increasing complexity: the chemical birth-death process, monomolecular reactions, single species birth-death-autocatalysis, and arbitrary combinations of zero and first order reactions with arbitrarily many species. While the chemical birth-death process is a kind of monomolecular reaction system, we include it here to give mathematicians new to the CME a toy example (that can be stated with minimal notational baggage) of the kind of results we are seeking.

% one can usually validity can usually be easily checked by substituting it directly into the CME.

\subsection{The chemical birth-death process as a prototype}
\label{sec:intro_bd}

The chemical birth-death process is simple enough to be biologically relevant (it can be used to model how the number of a single type of mRNA or protein in a single cell changes stochastically with time in the absence of significant regulation \cite{fox2017,munsky2018,bressloff2017}), but complicated enough to have nontrivial dynamics (the number of molecules does not increase without bound or shrink to zero in the long time limit, allowing there to exist a Poisson-like steady state probability distribution). It is linear (in several distinct but related senses of the word), allowing its associated CME to be exactly solved via a variety of methods (e.g. separation of variables and ladder operators \cite{vastolaADD2019}, and via a path integral different from the one we will discuss here \cite{vastolaPRE2020}). 

It is characterized by the chemical reactions
\begin{equation} \label{eq:bdrxns}
\begin{split}
\varnothing &\xrightarrow{k} S  \\
S &\xrightarrow{\gamma} \varnothing 
\end{split}
\end{equation}
where $k(t)$ and $\gamma(t)$ parameterize the small time birth and death rates, respectively\footnote{In particular, if the system has $x$ molecules of species $S$ at time $t$, the probability that the birth reaction happens in a window of time $[t, t + \Delta t)$ is approximately $k(t) \Delta t$, and the probability that the death reaction happens is approximately $\gamma(t) x \Delta t$. See Gillespie \cite{gillespie2000} for more details on the interpretation of and formalism underlying the CME.}. The time-dependence of these parameters is allowed to be arbitrary as long as they both remain nonnegative for all times. The corresponding CME reads
\begin{equation} \label{eq:bdCME}
\begin{split}
\frac{\partial P(x, t)}{\partial t} =& \ k(t) \left[ P(x-1, t) - P(x, t) \right] \\
&+ \gamma(t) \left[ (x+1) P(x+1, t) - x P(x, t) \right] 
\end{split}
\end{equation}
where $P(x, t)$ is the probability that the state of the system is $x \in \mathbb{N} := \{ 0, 1, 2, ... \}$ at time $t \geq t_0$. We are particularly interested in the transition probability $P(x, t; \xi, t_0)$, i.e. the solution of Eq. \ref{eq:bdCME} whose probability distribution is initially certain ($P(x, t_0) = \delta(x - \xi)$ for some $\xi \in \mathbb{N}$); if the transition probability is known, the solution to Eq. \ref{eq:bdCME} for an arbitrary initial distribution $P_0(x)$ can be written
\begin{equation}
P(x, t) = \sum_{y = 0}^{\infty} P(x, t; y, t_0) P_0(y) \ .
\end{equation}
In practice, we are also interested in several other properties of the solution. In particular, we are interested in the long time behavior described by the steady state probability distribution
\begin{equation}
P_{ss}(x) := \lim_{t \to \infty} P(x, t; \xi, t_0) \ ,
\end{equation}
moments like
\begin{equation}
\begin{split}
\langle x(t) \rangle &:= \sum_{x = 0}^{\infty} x P(x, t) \\
\langle x(t) [ x(t) - 1 ] \rangle &:= \sum_{x = 0}^{\infty} x (x - 1) P(x, t) \ ,
\end{split}
\end{equation}
and the real or complex-valued probability generating function 
\begin{equation}
\psi(g, t) := \sum_{x = 0}^{\infty} P(x, t) \ g^x \ ,
\end{equation}
whose derivatives correspond to various moments of interest. In fact, knowing the generating function is equivalent to knowing $P(x, t)$, and its time evolution is described by a PDE analogous to the CME:
\begin{equation} \label{eq:bdGFeqn}
\frac{\partial \psi(g, t)}{\partial t} = k(t) [g - 1] \psi(g, t) - \gamma(t) [g - 1] \frac{\partial \psi(g, t)}{\partial g} \ .
\end{equation}
The initial condition corresponding to $P(x, t_0) = \delta(x - \xi)$ is $\psi(g, t_0) = g^{\xi}$, as can be verified using the definition of $\psi$. We can also take its long time limit, which we will denote by $\psi_{ss}(g)$. 

%We will denote the solution of Eq. \ref{eq:bdCME} whose probability distribution is initially certain (i.e. $P(x, t_0) = \delta(x - \xi)$ for some $\xi \in \mathbb{N}$) as $P(x, t; \xi, t_0)$, and its long time limit as
%
%
%Probability generating function:
%
%
%Can also write an equation for $\psi$ equivalent to the CME:

The main result for the chemical birth-death process is the following. 

%- P_trans, P_ss, first moment, second moment, generating function
\begin{theorem}[Chemical birth-death process] \label{thm:bd}
Let $\lambda(t)$ and $w(t)$ be the solutions of
\begin{equation}
\begin{split}
\dot{\lambda} &= k - \gamma \lambda \ , \ \lambda(t_0) = 0 \\
\dot{w} &= - \gamma w \ , \ w(t_0) = 1 \ .
\end{split}
\end{equation}
Then if $P(x, t_0) = \delta(x - \xi)$ for some $\xi \in \mathbb{N}$, we have:
\begin{enumerate}[label=(\roman*)]
\item 
\begin{equation}P(x, t; \xi, t_0) =  \sum_{k = 0}^{\min(x, \xi)}  \frac{\lambda(t)^{x - k} e^{-\lambda(t)}}{(x - k)!} \binom{\xi}{k} w(t)^k  [1 - w(t)]^{\xi - k} \end{equation}
\item \begin{equation}\expval{x(t)} = \xi w(t) + \lambda(t)
\end{equation}
\item \begin{equation} \expval{x(t)[x(t) - 1]} = w(t)^2 \xi (\xi - 1) + 2 \lambda(t) w(t) \xi + \lambda(t)^2   \end{equation}
\item \begin{equation}  \psi(g, t) = \left[ 1 + (g - 1) w(t) \right]^{\xi}  e^{(g - 1) \lambda(t)}   \end{equation}
\end{enumerate}
\end{theorem}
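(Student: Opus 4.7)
The cleanest route is to solve the first-order linear PDE for the probability generating function $\psi(g,t)$, Eq. \ref{eq:bdGFeqn}, in closed form and then read off (i)--(iv) from it. Since Eq. \ref{eq:bdGFeqn} has only one spatial variable, the method of characteristics will deliver the answer in a few lines, giving part (iv) directly; parts (i)--(iii) then fall out of it by elementary series expansion and differentiation.

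First I would set up the characteristic ODEs $dg/dt = \gamma(t)(g-1)$ and $d\psi/dt = k(t)(g-1)\psi$. The first equation integrates to show that the combination $(g-1)w(t)$ is conserved along characteristics (with $w$ as defined in the statement), while the second then yields a factor of the form $\exp[(g-1)\lambda(t)]$ after using $\dot\lambda = k - \gamma\lambda$ to handle the time integral along a characteristic. Combining with the initial condition $\psi(g,t_0) = g^\xi = [1 + (g-1)\cdot 1]^\xi$ produces the formula in (iv). Equivalently (and perhaps more transparently), one can substitute the proposed ansatz $\psi(g,t) = [1 + (g-1)w(t)]^\xi e^{(g-1)\lambda(t)}$ into Eq. \ref{eq:bdGFeqn} and verify it by direct calculation; after dividing through by $\psi(g-1)$, the pieces proportional to $1/A$ (with $A := 1 + (g-1)w$) and the remaining pieces separately enforce $\dot w = -\gamma w$ and $\dot\lambda = k - \gamma\lambda$, matching the ODEs specified in the statement.

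Given (iv), the remaining parts are routine. For (i), I would rewrite $1 + (g-1)w = (1-w) + gw$, expand $[(1-w)+gw]^\xi$ by the binomial theorem into $\sum_{k=0}^{\xi} \binom{\xi}{k} w^k (1-w)^{\xi-k} g^k$, multiply by the Taylor series $e^{(g-1)\lambda} = e^{-\lambda}\sum_{n\ge 0} (g\lambda)^n/n!$, and collect the coefficient of $g^x$; setting $n = x-k$ and noting that the binomial sum truncates at $k = \xi$ while $n \ge 0$ forces $k \le x$ produces exactly the stated convolution of a Poisson and a binomial, with upper summation limit $\min(x,\xi)$. For (ii) and (iii), I would use the standard identity $\partial_g^m \psi\big|_{g=1} = \langle x(x-1)\cdots(x-m+1)\rangle$, compute $\partial_g \psi = \psi\bigl[\xi w/A + \lambda\bigr]$ and $\partial_g^2 \psi = \psi\bigl[(\xi w/A + \lambda)^2 - \xi w^2/A^2\bigr]$ by the product and chain rules, and evaluate at $g=1$ where $A=1$ and $\psi=1$.

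There is essentially no analytic obstacle: the PDE is linear and first-order, the characteristics are explicit, and the ``hard'' step is only algebraic bookkeeping in (i) to get the summation range right. The one judgement call is whether to derive the ansatz via characteristics or verify it by substitution; I would present the verification route, since the two ODEs for $w$ and $\lambda$ emerge transparently from matching coefficients, and the reader can see why those particular auxiliary functions are forced on us.
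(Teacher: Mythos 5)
Your proof is correct, and every step checks out: the characteristic system $dg/dt=\gamma(t)(g-1)$, $d\psi/dt=k(t)(g-1)\psi$ does conserve $(g-1)w(t)$, the time integral along a characteristic is $(g-1)w(t)\int_{t_0}^t k(s)/w(s)\,ds=(g-1)\lambda(t)$ by variation of parameters, and the extraction of (i) as the coefficient of $g^x$ and of (ii)--(iii) from $\partial_g^m\psi|_{g=1}$ is routine and lands exactly on the stated formulas. However, it is not the route the paper actually works out. The paper's proof of this theorem is a pointer with two branches: the ``official'' one is the Doi--Peliti path integral of Secs.~\ref{sec:monocalc}, \ref{sec:onespeciesptrans}, and \ref{sec:onespeciesmoments}, where the propagator $U(ip^f,z^0)=e^{ip^f z(t)}$ is evaluated and then (i), (ii), (iii) are each obtained \emph{independently} by integrating against the exclusive or Grassberger--Scheunert inner products; the second branch is precisely your ``verify directly that $\psi$ satisfies Eq.~\ref{eq:bdGFeqn} and derive the rest by brute force,'' which the paper states but never carries out. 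Your characteristics derivation is in fact anticipated by Sec.~\ref{sec:propview}, where the propagator is shown to satisfy a first-order PDE equivalent to Eq.~\ref{eq:bdGFeqn} under $g-\mathbf{1}\to i\mathbf{p}^f$ and is solved by the method of characteristics. What your route buys is brevity and self-containedness: for a scalar linear first-order PDE it is by far the shortest path to (iv), and (i)--(iii) follow mechanically. What the paper's route buys is the point of the paper: the path-integral machinery generalizes without modification to the multi-species monomolecular system, the birth--death--autocatalysis system, and arbitrary zero- and first-order reaction networks, where guessing or solving the generating-function PDE directly is no longer easy. One presentational difference worth noting: you derive (i) from (iv) by series expansion, which is perfectly legitimate since the generating function determines the distribution, whereas the paper deliberately obtains each item independently to showcase that the formalism can target transition probabilities and moments separately.
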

\begin{proof}
All results can be obtained independently of one another using the Doi-Peliti approach described in the following sections. Alternatively, one can verify directly that $P(x, t; \xi, t_0)$ satisfies Eq. \ref{eq:bdCME} or that $\psi(g, t)$ satisfies Eq. \ref{eq:bdGFeqn}, and then obtain the rest of the results by brute force calculation. \qed
\end{proof}

These results simplify tremendously in the long time limit if $k$ and $\gamma$ are time-independent, essentially because $P(x, t)$ reduces to a Poisson distribution regardless of one's choice of initial distribution $P(x, t_0)$. 

\begin{corollary}[Long time behavior of chemical birth-death process]
Let $k$ and $\gamma$ be time-independent, and define $\mu := k/\gamma$. In the long time limit, we have:
\begin{enumerate}[label=(\roman*)]
\item 
\begin{equation}P_{ss}(x) =   \frac{\mu^{x} e^{- \mu}}{x!}  \end{equation}
\item \begin{equation}\expval{x} = \mu
\end{equation}
\item \begin{equation} \expval{x [x - 1]} = \mu^2   \end{equation}
\item \begin{equation}  \psi_{ss}(g) = e^{(g - 1) \mu}   \end{equation}
\end{enumerate}
\end{corollary}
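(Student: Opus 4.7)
The plan is to derive the corollary directly from Theorem \ref{thm:bd} by explicitly solving the linear ODEs for $\lambda(t)$ and $w(t)$ when $k$ and $\gamma$ are constant, then passing to the $t \to \infty$ limit in each of the four formulas. First I would note that with constant coefficients, the ODE $\dot{w} = -\gamma w$ with $w(t_0) = 1$ has the elementary solution $w(t) = e^{-\gamma(t - t_0)}$, and the ODE $\dot{\lambda} = k - \gamma \lambda$ with $\lambda(t_0) = 0$ has the solution $\lambda(t) = \mu \bigl[1 - e^{-\gamma(t - t_0)}\bigr]$, where $\mu := k/\gamma$. Since $\gamma > 0$, this immediately gives $w(t) \to 0$ and $\lambda(t) \to \mu$ as $t \to \infty$.

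Next I would substitute these limits into each of (ii)--(iv) of Theorem \ref{thm:bd}. Parts (ii) and (iii) follow instantly: $\xi w(t) + \lambda(t) \to \mu$, and $w(t)^2 \xi(\xi-1) + 2\lambda(t) w(t) \xi + \lambda(t)^2 \to \mu^2$. Part (iv) is equally direct: $[1 + (g-1)w(t)]^{\xi} e^{(g-1)\lambda(t)} \to 1 \cdot e^{(g-1)\mu}$.

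The one item requiring a brief argument is (i), the pointwise limit of the sum
\begin{equation*}
\sum_{k=0}^{\min(x,\xi)} \frac{\lambda(t)^{x-k} e^{-\lambda(t)}}{(x-k)!} \binom{\xi}{k} w(t)^k [1 - w(t)]^{\xi - k}.
\end{equation*}
Since the sum is finite (at most $\min(x,\xi) + 1$ terms) and each term is a product of continuous functions of $\lambda$ and $w$, I can pass the limit inside. For $k \geq 1$, the factor $w(t)^k \to 0$, killing those terms; only $k = 0$ survives, contributing $\frac{\mu^x e^{-\mu}}{x!} \cdot 1 \cdot 1$, which is exactly the claimed Poisson distribution.

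The main (mild) obstacle is simply justifying the interchange of limit and sum in (i), but since the sum has only finitely many $\xi$-dependent terms, no dominated convergence argument is needed. Alternatively, one could derive (i) as a consequence of (iv) by reading off the coefficients of the power series $e^{(g-1)\mu} = \sum_{x \geq 0} \frac{\mu^x e^{-\mu}}{x!} g^x$, and similarly obtain (ii) and (iii) by differentiating $\psi_{ss}(g)$ at $g = 1$; this gives an independent check.
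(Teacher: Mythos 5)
Your proposal is correct and follows essentially the same route as the paper, which simply takes the $t \to \infty$ limit of Theorem \ref{thm:bd} after observing that $w(t) \to 0$ and $\lambda(t) \to \mu$. The extra details you supply (explicit solutions of the ODEs and the finite-sum justification in (i)) are fine elaborations of that same argument.
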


\begin{proof} Take the $t \to \infty$ limit of the previous results, noting that $w(t) \to 0$ and $\lambda(t) \to \mu$. \qed
\end{proof}
%\begin{theorem}[Chemical birth-death process]
%The solution to Eq. \ref{eq:bdCME} with initial condition $P(x, t_0) = \delta(x - \xi)$ for some $\xi \in \mathbb{N}$ is 
%\begin{equation} 
%P(x, t; \xi, t_0) =  \sum_{k = 0}^{\min(x, \xi)}  \frac{\lambda(t)^{x - k} e^{-\lambda(t)}}{(x - k)!} \binom{\xi}{k} w(t)^k  [1 - w(t)]^{\xi - k} 
%\end{equation}
%for all $t \geq t_0$, where $\lambda(t)$ is the solution of
%\begin{equation}
%\dot{\lambda} = k - \gamma \lambda
%\end{equation} 
%with initial condition $\lambda(t_0) = 0$, and $w(t)$ is the solution of
%\begin{equation}
%\dot{w} = - \gamma w
%\end{equation}
%with initial condition $w(t_0) = 1$. Furthermore, the steady state distribution corresponding to this solution is
%\begin{equation}
%P_{ss}(x) = 
%\end{equation}
%\begin{enumerate}
%\item a
%\item b
%\end{enumerate}
%\end{theorem}

%\begin{equation}
%\begin{split}
%P(x, t; \xi, t_0) &=  \sum_{k = 0}^{\min(x, \xi)} \left[ \frac{\lambda(t)^{x - k} e^{-\lambda(t)}}{(x - k)!} \right] \left[ \binom{\xi}{k} w(t)^k  [1 - w(t)]^{\xi - k} \right]  
%\end{split}
%\end{equation}
%
%
%\begin{equation}
%\begin{split}
%P &=  \sum_{k = 0}^{\min(x, \xi)} \left[ \frac{\lambda(t)^{x - k} e^{-\lambda(t)}}{(x - k)!} \right] \left[ \binom{\xi}{k} w(t)^k  [1 - w(t)]^{\xi - k} \right]  \\
%&= \mathcal{P}(x, \lambda(t)) \star \mathcal{M}(x, \xi, w(t))
%\end{split}
%\end{equation}

Results on the chemical birth-death process are far from new. Still, we ask the reader to keep these results in the back of their mind as we go on to discuss the analogous results for more complicated systems. Because the chemical birth-death process is so fundamental, more complicated results often reduce to these in the appropriate limit.

\subsection{Monomolecular results}
\label{sec:intro_mono}
Heuristically, monomolecular reaction systems are the minimal multi-species generalization of the chemical birth-death process. Like the chemical birth-death process, they exhibit a certain kind of linearity; we will see that their solutions are completely determined by a system of linear ordinary differential equations (ODEs).

%In the same way that the chemical birth-death process is linear in certain senses of the word---in particular, the equations for $\lambda(t)$ and $w(t)$ are linear ordinary differential equations (ODEs)---the solutions of monomolecular reaction systems are also controlled by linear ODEs.

Let us define them. Consider a system with $n$ chemical species $S_1, ..., S_n$, whose reaction list reads
\begin{equation} \label{eq:rxns}
\begin{split}
S_j &\xrightarrow{c_{jk}} S_k \hspace{1in} j \neq k \\
\varnothing &\xrightarrow{c_{0k}} S_k \hspace{1in} k = 1, ..., n \\
S_j &\xrightarrow{c_{j0}} \varnothing \hspace{1.06in} j = 1, ..., n 
\end{split}
\end{equation}
i.e. all possible monomolecular reactions (birth, death, and conversion) are allowed\footnote{The one exception is the trivial conversion reaction $S_k \to S_k$, which is disallowed because including it would be pointless. To ease notation (i.e. to avoid writing $j \neq k$ many times), we follow Jahnke and Huisinga and define $c_{kk} := 0$ for all $k = 1, ..., n$.}. Note that the rates are allowed to have arbitrary time-dependence as long as $c_{jk}(t) \geq 0$ for all $j, k$ and all times $t \geq t_0$. The corresponding CME reads
\begin{equation} \label{eq:CME}
\begin{split}
\frac{\partial P(\mathbf{x},t)}{\partial t} =& \sum_{k=1}^{n} c_{0k}(t) \left[ P(\mathbf{x} - \boldsymbol{\epsilon}_k, t) - P(\mathbf{x},t)  \right]  \\
&+ \sum_{k=1}^{n} c_{k0}(t)\left[ (x_k + 1) P(\mathbf{x} + \boldsymbol{\epsilon}_k, t) - x_k P(\mathbf{x},t)  \right] \\
&+ \sum_{j=1}^{n} \sum_{k=1}^{n}  c_{jk}(t) \left[ (x_j + 1) P(\mathbf{x} + \boldsymbol{\epsilon}_j - \boldsymbol{\epsilon}_k, t) - x_j P(\mathbf{x},t)  \right]
\end{split}
\end{equation}
where $P(\mathbf{x}, t)$ is the probability that the state of the system is $\mathbf{x} := (x_1, ..., x_n)^T \in \mathbb{N}^n$ at time $t \geq t_0$, and where $\boldsymbol{\epsilon}_k$ is the $n$-dimensional vector with a $1$ in the $k$th place and zeros everywhere else. 

The exact solution to Eq. \ref{eq:CME}, given the initial condition $P(\mathbf{x}, t_0) = \delta(\mathbf{x} - \boldsymbol{\xi})$ for some vector $\boldsymbol{\xi} := (\xi_1, ..., \xi_n)^T \in \mathbb{N}^n$, is reported in Theorem 1 of Jahnke and Huisinga \cite{jahnke2007}. In order to state their solution, we will need some notation. 

Define the matrix $\mathbf{A}(t)$ and vector $\mathbf{b}(t)$ by 
\begin{equation}
\begin{split}
A_{j k}(t) &:= c_{k j}(t) \ \text{ for } j \neq k \geq 1 \\
A_{k k}(t) &:= - \sum_{j = 0}^n c_{k j}(t) \ \text{ for } 1 \leq k \leq n \\
\mathbf{b}(t) &:= \begin{pmatrix} c_{0 1}(t) & c_{0 2}(t) & \cdots & c_{0 n}(t)   \end{pmatrix}^T \ .
\end{split}
\end{equation}
The deterministic reaction rate equations corresponding to our reaction list can be written in terms of $\mathbf{A}(t)$ and $\mathbf{b}(t)$ as
\begin{equation} \label{eq:RRE}
\dot{\mathbf{x}} = \mathbf{A}(t) \mathbf{x} + \mathbf{b}(t) \ .
\end{equation}
Because Eq. \ref{eq:RRE} is linear, the solution with initial condition $\boldsymbol{\xi} = (\xi_1, ..., \xi_n)$ can be written as
\begin{equation}
\mathbf{x}(t) = \sum_{k=1}^n  \xi_k \mathbf{w}^{(k)}(t) + \boldsymbol{\lambda}(t)
\end{equation}
where the vectors $\mathbf{w}^{(1)}(t), ..., \mathbf{w}^{(n)}(t)$ and $\boldsymbol{\lambda}(t)$ are defined as
\begin{equation}
\begin{split}
\dot{\mathbf{w}}^{(k)} &= \mathbf{A}(t) \mathbf{w}^{(k)} \ , \ \mathbf{w}^{(k)}(t_0) = \boldsymbol{\epsilon}_k \\
\dot{\boldsymbol{\lambda}} &= \mathbf{A}(t) \boldsymbol{\lambda} + \mathbf{b}(t) \ , \ \boldsymbol{\lambda}(t_0) = \mathbf{0} \ .
\end{split}
\end{equation}
As we will shortly observe, the solution to the deterministic reaction rate equations is intimately related to the solution of the CME (at least for monomolecular reactions). 

Now define the $1$-norm of a vector $\mathbf{x}$ as
\begin{equation}
|\mathbf{x}| := \sum_{k=1}^n |x_k| \ .
\end{equation}
Define, because they will appear throughout this paper, multi-dimensional generalizations of powers, factorials, sums, integrals, and derivatives:
\begin{equation}
\begin{split}
\mathbf{v}^{\mathbf{x}} &:= v_1^{x_1} \cdots v_n^{x_n} \\
\mathbf{x}! &:= x_1! \cdots x_n! \\
\sum_{\mathbf{x}} &:= \sum_{x_1 = 0}^{\infty} \cdots \sum_{x_n = 0}^{\infty} \\
\int d\mathbf{x} &:= \int dx_1 \cdots \int dx_n \\
\left( \frac{d}{d\mathbf{z}} \right)^\mathbf{x} &:= \left( \frac{d}{dz_1} \right)^{x_1} \cdots \left( \frac{d}{dz_n} \right)^{x_n} \ .
\end{split}
\end{equation}
Using the above shorthand, we can define the product Poisson distribution as
\begin{equation} \label{eq:productpoisson}
\mathcal{P}(\mathbf{x}, \boldsymbol{\lambda}) := \frac{\lambda_1^{x_1}}{x_1!} \cdots \frac{\lambda_n^{x_n}}{x_n!} e^{-|\boldsymbol{\lambda}|} = \frac{\boldsymbol{\lambda}^{\mathbf{x}}}{\mathbf{x}!}  e^{-|\boldsymbol{\lambda}|} \ ,
\end{equation}
the multinomial distribution as
\begin{equation} \label{eq:multinomial}
\begin{split}
\mathcal{M}(\mathbf{x}, N, \mathbf{w}) &:= \frac{N! \ \left[ 1 - |\mathbf{w}| \right]^{N - |\mathbf{x}|}}{(N - |\mathbf{x}|)!}  \frac{w_1^{x_1}}{x_1!} \cdots \frac{w_n^{x_n}}{x_n!} \ \text{ if } \ |\mathbf{x}| \leq N \ \text{ and } \ x \in \mathbb{N}^N \\
 &= \frac{N! \ \left[ 1 - |\mathbf{w}| \right]^{N - |\mathbf{x}|}}{(N - |\mathbf{x}|)!}  \frac{\mathbf{w}^{\mathbf{x}}}{\mathbf{x}!}  \ \text{ if } \ |\mathbf{x}| \leq N \ \text{ and } \ x \in \mathbb{N}^N \ ,
\end{split}
\end{equation}
and the convolution of two probability distributions as
\begin{equation}
P_1(\mathbf{x}) \star P_2(\mathbf{x}) := \sum_{\mathbf{z}} P_1(\mathbf{z}) P_2(\mathbf{x} - \mathbf{z}) = \sum_{\mathbf{z}} P_1(\mathbf{x} - \mathbf{z}) P_2(\mathbf{z}) 
\end{equation}
where the sum is over all $\mathbf{z} \in \mathbb{N}^n$ such that $\mathbf{x} - \mathbf{z} \in \mathbb{N}^n$. As in the one-dimensional case, we can also define the probability generating function
\begin{equation}
\psi(\mathbf{g}, t) := \sum_{\mathbf{x}} P(\mathbf{x}, t) \ \mathbf{g}^{\mathbf{x}}
\end{equation}
which satisfies the PDE
\begin{equation} \label{eq:monoGFeqn}
\begin{split}
\frac{\partial \psi(\mathbf{g}, t)}{\partial t} =& \sum_{k=1}^{n} c_{0k}(t) \left[ g_k - 1 \right]  \psi(\mathbf{g}, t) \\
&- \sum_{k=1}^{n} c_{k0}(t)\left[ g_k - 1  \right] \frac{\partial \psi(\mathbf{g}, t)}{\partial g_k} \\
&+ \sum_{j=1}^{n} \sum_{k=1}^{n}  c_{jk}(t) \left[ g_k - g_j \right] \frac{\partial \psi(\mathbf{g}, t)}{\partial g_j} \ .
\end{split}
\end{equation}
With all of that notation defined, we are ready to state the main result for monomolecular reaction systems, which was originally proved by Jahnke and Huisinga\footnote{The generating function was not directly computed by them, but is essentially trivial to compute given their results.}.

\begin{theorem}[Jahnke-Huisinga monomolecular] \label{thm:mono}
Let $\boldsymbol{\lambda}(t)$ and $\mathbf{w}(t)$ be the solutions of
\begin{equation}   \label{eq:wlambda}
\begin{split}
\dot{\boldsymbol{\lambda}} &= \mathbf{A}(t) \boldsymbol{\lambda} + \mathbf{b}(t) \ , \ \boldsymbol{\lambda}(t_0) = \mathbf{0} \\
\dot{\mathbf{w}}^{(k)} &= \mathbf{A}(t) \mathbf{w}^{(k)} \ , \ \mathbf{w}^{(k)}(t_0) = \boldsymbol{\epsilon}_k \ .
\end{split}
\end{equation}
Then if $P(\mathbf{x}, t_0) = \delta(\mathbf{x} - \boldsymbol{\xi})$ for some $\boldsymbol{\xi} \in \mathbb{N}^n$, we have:
\begin{enumerate}[label=(\roman*)]
\item 
\begin{equation} \label{eq:fullsln}
P(\mathbf{x}, t; \boldsymbol{\xi}, t_0) = \mathcal{P}(\mathbf{x}, \boldsymbol{\lambda}(t)) \star \mathcal{M}(\mathbf{x}, \xi_1, \mathbf{w}^{(1)}(t)) \star \cdots \star \mathcal{M}(\mathbf{x}, \xi_n, \mathbf{w}^{(n)}(t))
\end{equation}
\item \begin{equation}
\expval{x_j(t)}  = \sum_{k=1}^n  {\xi}_k w^{(k)}_j(t) + \lambda_j(t)
\end{equation}
\item \begin{equation}
\mathrm{Cov}(x_j, x_{\ell}) =  \begin{cases} 
      \sum_{k = 1}^n \xi_k w^{(k)}_j \left[ 1 - w^{(k)}_j  \right] + \lambda_j & j = \ell \\
      - \sum_{k = 1}^n \xi_k  w^{(k)}_j w^{(k)}_{\ell} & j \neq \ell
   \end{cases} \ .
\end{equation}
\item \begin{equation}  \psi(\mathbf{g}, t) = \prod_{k = 1}^n \ \left[ 1 + ( \mathbf{g} - \mathbf{1}) \cdot \mathbf{w}^{(k)}(t)  \right]^{\xi_k}   \ e^{(\mathbf{g} - \mathbf{1}) \cdot \boldsymbol{\lambda}(t)}   \end{equation}
\end{enumerate}
\end{theorem}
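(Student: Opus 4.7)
The plan is to use the Doi-Peliti path integral machinery developed in Section \ref{sec:reframe} to compute the generating function $\psi(\mathbf{g}, t)$ directly, and then to read off the transition probability and moments from it. Since $\psi$ encodes all of the probabilistic content of the CME, establishing part (iv) is the crux of the argument; parts (i)--(iii) will then follow from part (iv) by elementary expansion and differentiation, with no further appeal to the path integral.

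First I would rewrite the monomolecular CME as a Schr\"odinger-like equation in Fock space with a ``Hamiltonian'' $\hat{H}(t)$ built from Doi-Peliti creation and annihilation operators. For the reactions in \eqref{eq:rxns}, $\hat{H}$ is a sum of a term proportional to $(a_k^\dagger - 1)$ for each birth, $(a_k^\dagger - 1)a_k$ for each death, and $(a_j^\dagger - a_k^\dagger)a_j$ for each conversion; crucially, every term is of degree at most one in the creation operators. Inserting resolutions of the identity in coherent states between time slices then turns the bra-ket generating function into a path integral over fields $\varphi(s), \varphi^*(s)$, with boundary data at $s = t_0$ set by $\boldsymbol{\xi}$ and at $s = t$ set by $\mathbf{g}$, and with an action that is linear in $\varphi^*$.

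Because the action is linear in $\varphi^*$, integrating $\varphi^*$ out produces a delta functional that localizes $\varphi$ onto the unique solution of the classical equation of motion, which for monomolecular reactions is a linear inhomogeneous first-order ODE system of the same form as the reaction rate equation \eqref{eq:RRE}; equivalently, the stationary-phase approximation is exact. Decomposing $\varphi$ by linearity into the pieces driven by each $\xi_k$ and by $\mathbf{b}$, these are exactly the $\mathbf{w}^{(k)}(s)$ and $\boldsymbol{\lambda}(s)$ of \eqref{eq:wlambda}. Evaluating the remaining boundary terms of the action on this classical trajectory then yields an explicit closed form for $\log \psi(\mathbf{g}, t)$, and collecting contributions species-by-species reproduces the product formula in part (iv).

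Once part (iv) is in hand, the rest follows quickly. For part (i), I would note that $e^{(\mathbf{g} - \mathbf{1}) \cdot \boldsymbol{\lambda}(t)}$ is the probability generating function of the product Poisson distribution \eqref{eq:productpoisson}, while each factor $[1 + (\mathbf{g} - \mathbf{1}) \cdot \mathbf{w}^{(k)}(t)]^{\xi_k}$ is the probability generating function of the multinomial distribution \eqref{eq:multinomial}; the product over $k$ therefore translates directly into the convolution in \eqref{eq:fullsln}. Parts (ii) and (iii) then drop out by differentiating $\psi$ at $\mathbf{g} = \mathbf{1}$ and using the standard identities $\expval{x_j} = \partial_{g_j}\psi|_{\mathbf{g} = \mathbf{1}}$ and $\expval{x_j x_\ell} - \delta_{j\ell}\expval{x_j} = \partial_{g_j}\partial_{g_\ell}\psi|_{\mathbf{g} = \mathbf{1}}$. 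The hard part will be the careful bookkeeping of boundary terms in the coherent-state path integral and confirming that the fluctuation factor is truly trivial; this is precisely where the restriction to monomolecular reactions---linearity of $\hat{H}$ in $a^\dagger$---does the essential work, and it is the same feature that fails for genuinely nonlinear reactions such as $S \to S + S$.
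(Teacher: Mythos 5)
Your proposal is correct and rests on the same Doi-Peliti machinery as the paper: the same Hamiltonian (at most first order in the Grassberger-Scheunert creation operators), the same coherent-state path integral, and the same key observation that linearity in the conjugate field makes the integral localize onto the solution of the reaction rate equations, giving $U(i\mathbf{p}^f,\mathbf{z}^0)=e^{i\mathbf{p}^f\cdot\mathbf{z}(t)}$ with $\mathbf{z}(t)=\sum_k z^0_k\,\mathbf{w}^{(k)}(t)+\boldsymbol{\lambda}(t)$ as in Lemma \ref{lem:mono_prop}. Where you genuinely diverge is in the extraction step. The paper deliberately computes parts (i)--(iii) independently of one another by separate direct integral evaluations: the transition probability comes from pairing the propagator with $\bra{\mathbf{x}}$ under the exclusive product and working through integration by parts, the multinomial theorem, and delta-function constraints (Secs.\ \ref{sec:onespeciesptrans}--\ref{sec:genptrans}), while the moments come from pairing with $\bra{\mathbf{0}}\hat{a}_j$ and $\bra{\mathbf{0}}\hat{a}_j\hat{a}_\ell$ under the Grassberger-Scheunert product; the generating function (iv) is the one item the paper declines to derive explicitly. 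You instead make (iv) the pivot and recover (i)--(iii) from it by recognizing $e^{(\mathbf{g}-\mathbf{1})\cdot\boldsymbol{\lambda}}$ and $[1+(\mathbf{g}-\mathbf{1})\cdot\mathbf{w}^{(k)}]^{\xi_k}$ as the probability generating functions of the product Poisson and multinomial distributions (so the product of factors becomes the convolution in Eq.\ \ref{eq:fullsln}) and by differentiating at $\mathbf{g}=\mathbf{1}$. This is more economical---it replaces the heaviest bookkeeping in Sec.\ \ref{sec:genptrans} with the standard fact that PGFs multiply under convolution, and the factorization of $\psi$ over species falls straight out of the linear decomposition Eq.\ \ref{eq:zdecomp}---at the mild cost of making every result hinge on a single path-integral output rather than exhibiting each as an independent check on the formalism, which is part of what the paper is advertising. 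Your closing remark correctly identifies why the argument is confined to monomolecular systems: a term such as $(\hat{a}^+)^2\hat{a}$ from $S\to S+S$ destroys the linearity that makes the localization exact.
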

\begin{proof}
All results can be obtained independently of one another using the Doi-Peliti approach described in the following sections. Alternatively, one can verify directly that $P(\mathbf{x}, t; \boldsymbol{\xi}, t_0)$ satisfies Eq. \ref{eq:CME} or that $\psi(\mathbf{g}, t)$ satisfies Eq. \ref{eq:monoGFeqn} with the correct initial condition, and then obtain the rest of the results by brute force calculation. \qed
\end{proof}

%\begin{theorem}[Jahnke-Huisinga monomolecular]
%The solution to Eq. \ref{eq:CME} with initial condition $P(\mathbf{x}, t_0) = \delta(\mathbf{x} - \boldsymbol{\xi})$ for some vector $\boldsymbol{\xi} := (\xi_1, ..., \xi_n) \in \mathbb{N}^n$ is 
%\begin{equation} 
%P(\mathbf{x}, t; \boldsymbol{\xi}, t_0) = \mathcal{P}(\mathbf{x}, \boldsymbol{\lambda}(t)) \star \mathcal{M}(\mathbf{x}, \xi_1, \mathbf{w}^{(1)}(t)) \star \cdots \star \mathcal{M}(\mathbf{x}, \xi_n, \mathbf{w}^{(n)}(t))
%\end{equation}
%for all $t \geq t_0$, where $P(\mathbf{x}, t; \boldsymbol{\xi}, t_0)$ denotes the probability that the system transitions from state $\boldsymbol{\xi}$ to state $\mathbf{x}$ in time $T := t - t_0$.
%\end{theorem}

\subsection{Birth-death-autocatalysis results}

In section 6 of their classic paper \cite{jahnke2007}, Jahnke and Huisinga solve the CME corresponding to the autocatalytic reaction $S \to S + S$ exactly; however, they note that adding birth and death reactions yields a system not amenable to their approach. In this section, we present the exact time-dependent solution to this problem, whose reactions read
\begin{equation} \label{eq:newrxns}
\begin{split}
\varnothing &\xrightarrow{k} S  \\
S &\xrightarrow{\gamma} \varnothing  \\
S &\xrightarrow{c} S + S  
\end{split}
\end{equation}
where the parameters controlling the rates of birth, death, and autocatalysis are all allowed to have arbitrary time-dependence as long as they are nonnegative for all times.  The CME reads
\begin{equation} \label{eq:newCME}
\begin{split}
\frac{\partial P(x, t)}{\partial t} =& \ k(t) \left[ P(x-1, t) - P(x, t) \right] \\
&+ \gamma(t) \left[ (x+1) P(x+1, t) - x P(x, t) \right] \\
&+ c(t) \left[ (x-1) P(x-1, t) - x P(x, t) \right] 
\end{split}
\end{equation}
where $P(x, t)$ is the probability that the state of the system is $x \in \mathbb{N}$ at time $t \geq t_0$. Meanwhile, the PDE satisfied by the probability generating function $\psi$ reads
\begin{equation} \label{eq:bdaGFeqn}
\begin{split}
\frac{\partial \psi(g, t)}{\partial t} =& \ k(t) [g - 1] \psi(g, t) - \gamma(t) [g - 1] \frac{\partial \psi(g, t)}{\partial g} \\
&+ c(t) [g - 1] g \frac{\partial \psi(g, t)}{\partial g} \ .
\end{split}
\end{equation}
Our main result on the solution of this system is the following. 

%
%\begin{equation} \label{eq:newDE}
%\dot{q}(s) = \left[ c(t - s + t_0) - \gamma(t - s + t_0) \right] q(s) + i c(t - s + t_0) \ q(s)^2 
%\end{equation}
%where $s \in [t_0, t]$ and $q(t_0) = p_f$. As can be verified by substitution, Eq. \ref{eq:newDE} is solved by
%\begin{equation}
%q(s) = \frac{w(s)}{\frac{1}{p_f} - i \int_{t_0}^s c(t - t' + t_0) w(t') \ dt'}
%\end{equation}
%where $w(t)$ is the solution to
%\begin{equation}
%\dot{w}(s) = [c(t - s + t_0) - \gamma(t - s + t_0)] \ w(s)
%\end{equation}
%with $w(t_0) = 1$ (c.f. Eq. \ref{eq:wlambda}), i.e.
%\begin{equation}
%w(s) = e^{\int_{t_0}^s c(t - t' + t_0) - \gamma(t - t' + t_0) \ dt'} \ .
%\end{equation}

\begin{theorem}[Birth-death-autocatalysis] \label{thm:bda}
Let $q(s)$ and $w(s)$ be the solutions of
\begin{equation}
\begin{split}
\dot{q} &= \left[ c(t - s + t_0) - \gamma(t - s + t_0) \right] q(s) + i c(t - s + t_0) \ q(s)^2  \ , \ q(t_0) = p_f \\
\dot{w} &= [c(t - s + t_0) - \gamma(t - s + t_0)] \ w(s) \ , \ w(t_0) = 1 
\end{split}
\end{equation}
for arbitrary $p_f \in \mathbb{R}$, which can be explicitly written as
\begin{equation}
\begin{split}
q(s) &= \frac{w(s)}{\frac{1}{p_f} - i \int_{t_0}^s c(t - t' + t_0) w(t') \ dt'} \\
w(s) &= e^{\int_{t_0}^s c(t - t' + t_0) - \gamma(t - t' + t_0) \ dt'} \ .
\end{split}
\end{equation}
Then if $P(x, t_0) = \delta(x - \xi)$ for some $\xi \in \mathbb{N}$, we have:
\begin{equation} \label{eq:contoursln2}
\begin{split}
P(x, t; \xi, t_0) =& \frac{1}{2 \pi} \int_{-\infty}^{\infty} dp_f \ \frac{\left[ 1 + i q(t) \right]^{\xi} e^{i \int_{t_0}^t k(t - s + t_0) q(s) ds}}{(1 + i p_f)^{x + 1}} \\
\psi(g, t) =& \left[ 1 + \frac{w(t)}{\frac{1}{g-1} - \int_{t_0}^t c(t - t' + t_0) w(t') \ dt'} \right]^{\xi} \times \\
&\times \exp\left\{ \int_{t_0}^t  \frac{k(t - s + t_0) w(s)}{\frac{1}{g-1} - \int_{t_0}^s c(t - t' + t_0) w(t') \ dt'} \ ds   \right\} \ .
\end{split}
\end{equation}
Moreover, if the parameters $k$, $\gamma$, and $c$ are all time-independent and non-zero, the function $w(s)$ is explicitly
\begin{equation}
w(s) = e^{(c - \gamma) (s - t_0)}
\end{equation}
and the transition probability can be rewritten as
\begin{equation}
\begin{split}
P =& \left( \frac{\frac{\gamma}{c} - 1}{\frac{\gamma}{c} - w} \right)^{k/c}  \frac{\left( 1 - w \right)^{x - \xi}}{\left(\frac{\gamma}{c} - w \right)^x}  \times \\
 &\times \sum_{j = 0}^{\xi} \binom{\xi}{j} \frac{\left( j + k/c \right)_x}{x!} \left[ 1 - \frac{\gamma}{c} w \right]^{\xi - j}  \left[ \frac{w \left( \frac{\gamma}{c} - 1 \right)^2}{\frac{\gamma}{c} - w} \right]^j  
\end{split}
\end{equation}
where $(y)_x := (y)(y + 1) \cdots (y + x - 1)$ is the Pochhammer symbol/rising factorial. The generating function in this case reduces to
\begin{equation}
\psi(g) = \left[ \frac{1 + (g-1) \frac{c - \gamma w}{c - \gamma}}{1 - (g - 1) \frac{c (w - 1)}{c - \gamma}} \right]^{\xi} \ \frac{1}{\left[ 1 - (g - 1) \frac{c (w - 1)}{c - \gamma}  \right]^{k/c}} \ .
\end{equation}

\end{theorem}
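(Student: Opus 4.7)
The plan is to bypass the Doi--Peliti machinery and attack Eq.~\ref{eq:bdaGFeqn} for the generating function directly by the method of characteristics. Rewriting it as
\[
\partial_t\psi + (g-1)[\gamma(t)-c(t)g]\,\partial_g\psi = k(t)(g-1)\psi
\]
exhibits it as a first-order linear PDE, so the characteristic curves satisfy $\dot g = (g-1)[\gamma(t)-c(t)g]$, along which $\psi$ evolves by the scalar linear ODE $\dot\psi = k(t)(g-1)\psi$. The initial data $\psi(g,t_0)=g^\xi$ is propagated to time $t$ once the characteristic through the target point $(g,t)$ has been identified, reducing everything to a single scalar Riccati equation and a one-dimensional quadrature.

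The next step is to solve the characteristic equation and read off $\psi(g,t)$. Under the substitution $g-1 = iq$ with time reversal $s = t-\tau+t_0$, the Riccati becomes exactly the equation for $q(s)$ stated in the theorem, with the endpoint $p_f := -i(g(t)-1)$ furnishing the initial condition $q(t_0)=p_f$. The standard Riccati linearization $q = w/u$---choosing $w$ to solve the linear piece $\dot w = (c-\gamma)w$ with $w(t_0)=1$---forces $u$ to satisfy $\dot u = -ic\,w$, whose integration with $u(t_0)=1/p_f$ produces the closed-form expressions for $q$ and $w$ in the statement. Tracing the characteristic back to $t_0$ gives $g(t_0)=1+iq(t)$, so the initial condition contributes the factor $(1+iq(t))^\xi$, while integrating $\dot\psi/\psi$ along the characteristic and changing variables from $\tau$ to $s$ produces $\exp\bigl(i\int_{t_0}^t k(t-s+t_0)q(s)\,ds\bigr)$. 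Back-substituting $p_f = -i(g-1)$ in the resulting expression for $\psi(1+ip_f,t)$ gives the displayed formula for $\psi(g,t)$.

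To recover $P(x,t;\xi,t_0)$, I would start from the Cauchy coefficient extraction $P(x,t) = \frac{1}{2\pi i}\oint \psi(g,t)\,g^{-(x+1)}\,dg$ around a small loop about the origin and deform the contour onto the vertical line $\Re g = 1$ parametrized by $g = 1+ip_f$, $p_f\in\mathbb{R}$. The deformation is justified by locating the singularities of $\psi$ (the only one lies at the real value of $g$ solving $(g-1)^{-1} = \int_{t_0}^t c(t-t'+t_0)w(t')\,dt'$, which sits to the right of the line for nonnegative rates) together with decay of the integrand as $|p_f|\to\infty$; the contour-integral formula for $P$ in the theorem then follows immediately.

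The final step handles the time-independent case. With constant $k,\gamma,c$, the equation $\dot w = (c-\gamma)w$ integrates to $w(s)=e^{(c-\gamma)(s-t_0)}$, and defining $F(s) := (g-1)^{-1} - c(w(s)-1)/(c-\gamma)$ the identity $F'(s) = -cw(s)$ collapses the exponential in $\psi$ via $\int_{t_0}^t kw(s)/F(s)\,ds = -(k/c)\ln[F(t)/F(t_0)]$, which after algebraic rearrangement of $(1+iq(t))^\xi$ reproduces the rational form of $\psi(g)$ stated in the theorem. To get the explicit $P(x)$, I would write the generating function as $\psi(g) = [1+(g-1)A_1]^\xi\,[1-(g-1)A_2]^{-(\xi+k/c)}$ with $A_1 = (c-\gamma w)/(c-\gamma)$ and $A_2 = c(w-1)/(c-\gamma)$, refactor each bracket about $g=0$ using $1+(g-1)A_1 = (1-A_1)(1+\tfrac{A_1}{1-A_1}g)$ and $1-(g-1)A_2 = (1+A_2)(1-\tfrac{A_2}{1+A_2}g)$, apply the finite and negative-integer binomial theorems to the two factors respectively, and collect the coefficient of $g^x$. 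I expect this last bookkeeping step to be the main obstacle: the natural expansion produces Pochhammer factors of the form $(\xi+k/c)_{x-j}$, whereas the theorem displays $(j+k/c)_x$, so nontrivial Pochhammer manipulations (or a cleverer regrouping of the binomial sums) will be needed to massage the raw coefficient into the compact closed form stated.
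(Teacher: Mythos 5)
Your route is genuinely different from the paper's primary derivation. The paper obtains the propagator by evaluating the Doi--Peliti coherent-state path integral (Lemma \ref{lem:propcalc_bda}), then extracts $P(x,t;\xi,t_0)$ via the exclusive product and a chain of $z_0$, $p_0$, $z_f$ integrals; you instead apply the method of characteristics directly to the generating-function PDE (Eq. \ref{eq:bdaGFeqn}) and recover $P$ by Cauchy coefficient extraction with the contour pushed to $\Re g = 1$. This is essentially the ``alternative view'' the paper itself develops in Sec. \ref{sec:propview}, where the propagator PDE is shown to coincide with the generating-function PDE under $g-1 \to ip^f$ and is solved by characteristics; your substitution $g-1 = iq$, the time reversal $s = t-\tau+t_0$, the Riccati linearization $q = w/u$ with $\dot u = -icw$, and the identification $p_f = -i(g-1)$ all check out and reproduce the stated $q$, $w$, $\psi$, and contour formulas. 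The characteristics route is more elementary and more self-evidently rigorous up to the contour deformation (your singularity analysis is essentially right, though the exponential factor contributes a whole ray of real singularities $g = 1 + 1/\int_{t_0}^{s} c\,w\,dt'$ for $s\in(t_0,t]$, not a single point --- all of which still lie on $(1,\infty)$, so the conclusion stands); what it loses is the system-agnostic mechanical character that the paper is advertising. Your treatment of the time-independent $\psi(g)$ via $F(s) = (g-1)^{-1} - c(w-1)/(c-\gamma)$ and $F' = -cw$ is clean and correct.

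The one genuine gap is the final step, and you have correctly diagnosed it yourself: expanding $[1+(g-1)A_1]^{\xi}[1-(g-1)A_2]^{-(\xi+k/c)}$ about $g=0$ by two binomial series yields the coefficient of $g^x$ in the form $\sum_j \binom{\xi}{j}(1-A_1)^{\xi-j}A_1^{j}\,(1+A_2)^{-(\xi+k/c)}\frac{(\xi+k/c)_{x-j}}{(x-j)!}\bigl(\tfrac{A_2}{1+A_2}\bigr)^{x-j}$, which is a correct but differently organized expression; matching it to the stated $\sum_j\binom{\xi}{j}\frac{(j+k/c)_x}{x!}[\cdots]$ requires a nontrivial ${}_2F_1$ transformation that your proposal does not supply. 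The efficient fix is already in your hands: you derived the contour representation $P = \frac{1}{2\pi}\int dp_f\,[1+iq(t)]^{\xi}[1-iBp_f]^{-k/c}(1+ip_f)^{-(x+1)}$, so instead of expanding $\psi$ about $g=0$, expand $[1+iq(t)]^{\xi} = \sum_j\binom{\xi}{j}[1-\tfrac{w}{B}]^{\xi-j}(\tfrac{w}{B})^{j}[1-iBp_f]^{-j}$ and evaluate each term by the residue at the order-$(x+1)$ pole $p_f = i$ (noting $B>0$ for all sign choices of $c-\gamma$, so the branch point $-i/B$ sits in the lower half-plane). The $x$-th derivative of $[1-iBp_f]^{-(j+k/c)}$ at $p_f=i$ produces $(j+k/c)_x\,B^x/(1+B)^{j+k/c+x}$ directly, which is exactly how the paper's Lemma \ref{lem:bda_timeind} arrives at the stated Pochhammer form.
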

\begin{proof} The transition probability and probability generating function can be obtained independently of one another using the Doi-Peliti approach described in the following sections. Alternatively, one can verify directly that the probability generating function solves Eq. \ref{eq:bdaGFeqn}, and use the definition of the generating function to find the transition probability. Because the expression for the transition probability is somewhat complicated, verifying it directly is not recommended. \qed
\end{proof}

It is expected that this solution reduces to familiar distributions in certain limits; in particular, as Jahnke and Huisinga originally point out, it should interpolate between a binomial distribution, a Poisson distribution, and a negative binomial distribution. Indeed it does, with these special cases corresponding to the $k = c = 0$ (pure death), $\gamma = c = 0$ (pure birth), and $k = \gamma = 0$ (pure autocatalysis) limits, respectively. We formalize this in the following corollary.

\begin{corollary}[Limiting behavior of the birth-death-autocatalysis transition probability] \label{cor:three_cases}

The transition probability $P(x, t; x_0, t_0)$ becomes (i) binomial in the limit that $k = c = 0$, (ii) Poisson in the limit that $\gamma = c = 0$, and (iii) negative binomial in the limit that $k = \gamma = 0$. That is, 
\begin{equation}
\lim_{k, c \to 0} P(x, t; x_0, t_0)  = \binom{\xi}{x} \left[ w(t) \right]^x \left[ 1 - w(t) \right]^{\xi - x}
\end{equation}
for $x \leq \xi$ and $0$ otherwise, i.e. a binomial distribution; 
\begin{equation}
\lim_{\gamma, c \to 0} P(x, t; x_0, t_0) = \frac{\lambda(t)^{x - \xi} e^{- \lambda(t)}}{(x - \xi)!} 
\end{equation}
for $x \geq \xi$ and $0$ otherwise, i.e. a (shifted) Poisson distribution; and
\begin{equation}
\lim_{k, \gamma \to 0} P(x, t; x_0, t_0) = \binom{x - 1}{\xi - 1} \left[ w(t) \right]^{\xi} \left[ 1 - w(t) \right]^{x - \xi}
\end{equation}
which is nonzero only for $x \geq \xi$. As Jahnke and Huisinga note in their Sec. 6, this is a shifted negative binomial distribution.
\end{corollary}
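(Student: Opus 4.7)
The plan is to derive each of the three limiting distributions by specializing the probability generating function from Theorem \ref{thm:bda}, rather than wrestling with the more cumbersome explicit PMF; once the limiting generating function is identified as a standard one, its inversion to a PMF is routine. The starting point is the time-independent expression
\[
\psi(g) = \left[ \frac{1 + (g-1)\,\frac{c - \gamma w}{c - \gamma}}{1 - (g - 1)\, \frac{c(w-1)}{c - \gamma}} \right]^{\xi} \frac{1}{\left[ 1 - (g-1)\,\frac{c(w-1)}{c-\gamma}\right]^{k/c}}
\]
with $w(t) = e^{(c-\gamma)(t-t_0)}$. I would then set the appropriate rates to zero one case at a time.

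For case (i), pure death ($k = c = 0$), the exponent $k/c$ is identically zero so the second factor is $1$, while in the first factor $(c - \gamma w)/(c - \gamma) \to e^{-\gamma(t-t_0)} =: w(t)$ and $c(w-1)/(c-\gamma) \to 0$; thus $\psi(g) \to [1 + (g-1)\,w(t)]^{\xi}$, the generating function of $\mathrm{Binomial}(\xi, w(t))$, which matches the stated PMF. Case (iii), pure autocatalysis ($k = \gamma = 0$), is analogous: the second factor is $1$, and the first collapses to $[g/(g + w - gw)]^{\xi}$ with $w(t) = e^{c(t-t_0)}$. Multiplying numerator and denominator by $\tilde{w} := 1/w$ rewrites this as $[\tilde{w}\,g/(1 - (1 - \tilde{w})g)]^{\xi}$, which is the generating function of the $\xi$-shifted negative binomial with parameters $(\xi, \tilde{w})$; this yields $\binom{x-1}{\xi-1}\tilde{w}^{\xi}(1 - \tilde{w})^{x - \xi}$, and I would flag here that the ``$w$'' appearing in the statement of case (iii) must be interpreted as $\tilde{w} = e^{-c(t-t_0)} \in (0,1)$ rather than the $w$ of the parent theorem, so that the resulting PMF is nonnegative.

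The step requiring the most care is case (ii), pure birth ($\gamma = c = 0$), because $c - \gamma$ appears in denominators while $k/c$ sits in an exponent, so the limit is of the indeterminate form $\infty^0$. My plan is to set $\gamma = 0$ first (keeping $c > 0$), obtaining
\[
\psi(g) = \left[ \frac{g}{1 - (g-1)(w-1)} \right]^{\xi} \bigl[1 - (g-1)(w-1)\bigr]^{-k/c},
\]
and then to send $c \to 0^+$ using $w - 1 = e^{c(t-t_0)} - 1 = c(t-t_0) + O(c^2)$. The first factor tends to $g^{\xi}$ (encoding the shift by $\xi$), and the standard exponential limit gives $[1 - (g-1)(w-1)]^{-k/c} \to e^{k(g-1)(t-t_0)}$, so $\psi(g) \to g^{\xi}\,e^{(g-1)\lambda(t)}$ with $\lambda(t) = k(t - t_0)$; this is the generating function of $\xi$ plus an independent $\mathrm{Poisson}(\lambda(t))$, producing the stated shifted-Poisson PMF. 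As a consistency check applicable to all three cases, I would observe that the limiting CME (Eq.\ \ref{eq:newCME} with the relevant rates set to zero) reduces to a classical exactly-solvable model---the pure death process, the pure birth process, and the Yule process, respectively---whose transition probabilities are well known and coincide with the limits derived above.
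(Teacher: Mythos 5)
Your proposal is correct, but it takes a genuinely different route from the paper. The paper works directly with the contour-integral representation of the transition probability from Theorem \ref{thm:bda} (Eq. \ref{eq:contoursln2}): it sets the relevant pair of rates to zero \emph{while keeping the remaining rate's time-dependence arbitrary}, observes that $q(t)$ collapses to a simple function of $p_f$ (e.g.\ $q(t) = w(t)p_f$ in the pure-death case, $q(t)=p_f$ in the pure-birth case), and then evaluates each contour integral by Cauchy's integral formula (or a table of integrals) to read off the binomial, shifted Poisson, and shifted negative binomial PMFs directly; in case (iii) it also redefines $w$ as the reciprocal of the parent theorem's $w$, exactly the reinterpretation you flag. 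You instead specialize the \emph{time-independent} closed form of $\psi(g)$ and recognize the limits as standard generating functions. What your approach buys is immediacy --- the three distributions are identified at a glance from their PGFs, and your careful handling of the $\infty^{0}$ indeterminacy in case (ii) (setting $\gamma=0$ first, then $c\to 0^{+}$) is a genuine subtlety that the paper's route avoids entirely, since with $\gamma=c=0$ in the contour integral there is nothing indeterminate. What you give up is generality: the paper's proof establishes each limit with the surviving rate ($\gamma(t)$, $k(t)$, or $c(t)$) still an arbitrary function of time, with $w(t)$ and $\lambda(t)$ the corresponding integrals, whereas your argument as written covers only constant rates. To recover full generality you would need to start from the time-dependent generating function in Eq.\ \ref{eq:contoursln2} rather than the constant-rate formula, after which the same PGF-recognition step goes through. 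One further small point: in case (i) you should state explicitly that $k=0$ is imposed before $c\to 0$ (so that $k/c=0$ is meaningful), parallel to the care you take in case (ii).
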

\begin{proof}
These limits can be taken directly, and are relatively straightforward; see Sec. \ref{sec:bdacalc} for the calculations. \qed
\end{proof}
It is also true that the solution of this problem reduces to a birth-death process in the $c \to 0$ limit (i.e. the limit in which autocatalysis no longer happens). This limit is harder to take properly (it is not clear how to take it using $P(x, t; x_0, t_0)$, so the generating function must be used), and the resulting distribution is less special than the Poisson, binomial, or negative binomial distributions, but it is still interesting; for these reasons, we formalize it as its own corollary below.

\begin{corollary}[No autocatalysis limit reduces to birth-death]
In the $c \to 0$ limit, the transition probability $P(x, t; x_0, t_0)$ and generating function $\psi(g, t)$ reduce to the ones for the chemical birth-death process described in Theorem \ref{thm:bd}. 
\end{corollary}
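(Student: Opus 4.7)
The plan is to work directly at the level of the probability generating function, since Theorem \ref{thm:bda} gives it in closed form and the transition probability is then recovered by the inverse generating-function transform used throughout the paper. Since the $P(x,t;\xi,t_0)$ expression in Theorem \ref{thm:bda} only makes sense when $c\neq 0$ (it has factors of $c$ in denominators and an expansion adapted to nontrivial autocatalysis), trying to take $c\to 0$ pointwise in that formula is awkward. The generating function, by contrast, degenerates cleanly.

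First I would substitute $c\equiv 0$ into the explicit formulas for $w(s)$ and into the memory integrals $\int_{t_0}^{s} c(t-t'+t_0)\, w(t')\, dt'$ appearing in the denominators. The memory integrals vanish identically, so every denominator $\tfrac{1}{g-1}-\int_{t_0}^{s} c\,w\,dt'$ collapses to $1/(g-1)$. Consequently the first factor of $\psi(g,t)$ reduces to $\bigl[1+(g-1)\,w(t)\bigr]^{\xi}$ and the exponent becomes $(g-1)\int_{t_0}^{t} k(t-s+t_0)\,w(s)\,ds$. This already has the structural form of the birth-death generating function in Theorem \ref{thm:bd}; what remains is to check that the functions $w$ and $\lambda$ defined implicitly here agree with the ones in Theorem \ref{thm:bd}.

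Next I would perform the time-reversal substitution $u = t-s+t_0$ (which is natural because the BDA ODEs in Theorem \ref{thm:bda} run on a reversed clock starting at $t_0$, whereas the BD ODEs in Theorem \ref{thm:bd} run forward from $t_0$). Under this substitution, the $c=0$ solution $w(s)=\exp\!\bigl(-\int_{t_0}^{s}\gamma(t-t'+t_0)\,dt'\bigr)$ evaluated at $s=t$ becomes $\exp\!\bigl(-\int_{t_0}^{t}\gamma(u)\,du\bigr)$, which is exactly the solution of $\dot w=-\gamma w$ with $w(t_0)=1$. Similarly, $\int_{t_0}^{t} k(t-s+t_0)\,w(s)\,ds$ becomes $\int_{t_0}^{t} k(u)\,\exp\!\bigl(-\int_u^{t}\gamma(v)\,dv\bigr)\,du$, which is Duhamel's formula for the solution of $\dot\lambda = k-\gamma\lambda$ with $\lambda(t_0)=0$. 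Hence the limiting $\psi(g,t)$ coincides with the generating function of Theorem \ref{thm:bd} part (iv).

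Finally, the transition probability statement follows without further work: the probability generating function uniquely determines the probability mass function via $P(x,t;\xi,t_0) = \tfrac{1}{x!}\,\partial_g^x \psi(g,t)\big|_{g=0}$, so agreement of the two $\psi$'s forces agreement of the two $P$'s, and the explicit sum in Theorem \ref{thm:bd}(i) is recovered. I expect the only mildly delicate step to be the change of variables $u=t-s+t_0$, because it must be applied consistently to both the outer integration variable $s$ and the nested integration variable $t'$; once that bookkeeping is right, everything else is a direct substitution.
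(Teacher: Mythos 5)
Your proposal is correct and follows exactly the route the paper takes: the paper's proof consists only of the remark that one should show the generating functions correspond in the $c \to 0$ limit (precisely because the limit is awkward at the level of $P$), and your substitution $c\equiv 0$, the collapse of the memory integrals, and the change of variables $u = t-s+t_0$ identifying $w$ and $\lambda$ with the birth-death ODE solutions is the ``straightforward calculation'' the paper declines to write out. Your final step, recovering agreement of the transition probabilities from agreement of the generating functions, is also sound.
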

\begin{proof}
The easiest way to do this is to show that the generating functions correspond in the $c \to 0$ limit. Because this is straightforward, we will not show the calculation explicitly. \qed
\end{proof}

If autocatalysis happens tends to happen more frequently than degradation (i.e. if $c > \gamma$), the number of molecules almost surely blows up to infinity in the long time limit. However, if degradation tends to overtake autocatalysis (i.e. if $\gamma > c$), then the steady state distribution exists and is nontrivial. 

\begin{corollary}[Long time behavior of birth-death-autocatalysis]
Let $k$, $\gamma$, and $c$ be time-independent, and suppose that $\gamma > c$. In the long time limit, we have:
\begin{equation}
\begin{split}
P_{ss}(x) = \left( \frac{\gamma - c}{\gamma} \right)^{k/c} \frac{\left( \frac{c}{\gamma} \right)^x}{x!} \left( \frac{k}{c}  \right)_x \\
\psi_{ss}(g) =  \left( \frac{\gamma - c}{\gamma} \right)^{k/c} \frac{1}{\left[ 1 - \frac{c g}{\gamma} \right]^{k/c}} \ .
\end{split}
\end{equation}
Moreover, these reduce to the $P_{ss}$ and $\psi_{ss}$ for the chemical birth-death process in the $c \to 0$ limit, i.e. 
\begin{equation}
\begin{split}
\left( \frac{\gamma - c}{\gamma} \right)^{k/c} \frac{\left( \frac{c}{\gamma} \right)^x}{x!} \left( \frac{k}{c}  \right)_x &\to \frac{\mu^x}{x!} e^{- \mu} \\
\left( \frac{\gamma - c}{\gamma} \right)^{k/c} \frac{1}{\left[ 1 - \frac{c g}{\gamma} \right]^{k/c}} &\to  e^{(g - 1) \mu} \ .
\end{split}
\end{equation}
\end{corollary}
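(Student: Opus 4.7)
The plan is to derive the steady-state formulas directly from the time-independent generating-function expression at the end of Theorem \ref{thm:bda}, then pass to the $c\to 0$ limit by standard asymptotic manipulations.

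First I would take the long-time limit in the explicit form of $w$. Since $k$, $\gamma$, $c$ are constant and $\gamma>c$, we have $w(t)=e^{(c-\gamma)(t-t_0)}\to 0$. Substituting $w\to 0$ into the closed-form generating function, the expression $\frac{c-\gamma w}{c-\gamma}$ tends to $\frac{c}{c-\gamma}$ while $\frac{c(w-1)}{c-\gamma}$ tends to $\frac{-c}{c-\gamma}=\frac{c}{\gamma-c}$. A short algebraic check then shows that the numerator and denominator inside the bracket raised to the $\xi$ power both equal $\frac{\gamma-cg}{\gamma-c}$, so that factor collapses to $1$. What remains is
\begin{equation*}
\psi_{ss}(g)=\frac{1}{\left[1-(g-1)\tfrac{c}{\gamma-c}\right]^{k/c}}=\left(\frac{\gamma-c}{\gamma-cg}\right)^{k/c}=\left(\frac{\gamma-c}{\gamma}\right)^{k/c}\frac{1}{\left[1-\tfrac{cg}{\gamma}\right]^{k/c}},
\end{equation*}
which is the advertised expression. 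Extracting $P_{ss}(x)$ is then a matter of expanding $(1-cg/\gamma)^{-k/c}$ using the generalized binomial series $\sum_{x}\frac{(k/c)_x}{x!}(cg/\gamma)^x$ and reading off the coefficient of $g^x$, which gives the stated Pochhammer form. Notice that the $\xi$-dependence has vanished, confirming that the limit is independent of the initial condition, as expected for a true steady state.

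For the $c\to 0$ reductions, the key identity is $\left(1-\frac{c}{\gamma}\right)^{k/c}=\exp\!\left(\tfrac{k}{c}\ln(1-c/\gamma)\right)\to e^{-k/\gamma}=e^{-\mu}$, and similarly $\left(1-\frac{cg}{\gamma}\right)^{-k/c}\to e^{\mu g}$, so $\psi_{ss}(g)\to e^{-\mu}e^{\mu g}=e^{(g-1)\mu}$, recovering Theorem \ref{thm:bd}. For $P_{ss}(x)$, combine the same prefactor limit with $(c/\gamma)^x (k/c)_x/x! = \prod_{j=0}^{x-1}(k/\gamma+jc/\gamma)/x!\to (k/\gamma)^x/x!=\mu^x/x!$, producing the Poisson distribution.

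The main obstacle is not conceptual but bookkeeping: one must verify that the bracketed factor in $\psi(g,t)$ really does collapse to $1$ as $w\to 0$ (so that the $\xi$-dependence drops out) rather than to something that merely becomes small, and one must handle the $0/0$-type expression $(k/c)_x(c/\gamma)^x$ with care in the $c\to 0$ limit. Both are standard: the first is linear algebra in $g$, and the second follows from noting that every factor $(k/c+j)\cdot c=k+jc\to k$.
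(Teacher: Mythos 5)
Your proof is correct and follows the route the paper itself indicates: take $w \to 0$ in the time-independent formulas of Theorem \ref{thm:bda} (the paper's proof mentions this as one of two equivalent options, the other being to solve the steady-state recurrence directly). All of your limit computations --- the collapse of the $\xi$-dependent bracket to $1$, the negative-binomial series expansion giving the Pochhammer form of $P_{ss}$, and the $c \to 0$ exponential limits --- check out.
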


\begin{proof} The simplest way to find $P_{ss}$ is to solve the steady state CME directly (i.e. set $\partial P/\partial t = 0$ and solve the resulting recurrence relation). Alternatively, noting that $w \to 0$, one can straightforwardly take the $t \to \infty$ limit of our result from Theorem \ref{thm:bda}. The $c \to 0$ is also easy to take. \qed
\end{proof}

%Birth-death-autocatalysis generating function:
%\begin{equation}
%\psi(g) = \left[ \frac{1 + (g-1) \frac{c - \gamma w}{c - \gamma}}{1 - (g - 1) \frac{c (w - 1)}{c - \gamma}} \right]^{\xi} \ \frac{1}{\left[ 1 - (g - 1) \frac{c (w - 1)}{c - \gamma}  \right]^{k/c}}
%\end{equation}

\subsection{Zero and first order reactions}

%While obtaining the solution to the birth-death-autocatalysis system is an interesting product of applying the Doi-Peliti approach, it is also very special: it is a one-dimensional problem involving only three chemical reactions. 

While obtaining the solution to the birth-death-autocatalysis system is certainly interesting, the problem itself is quite special: it is one-dimensional, and involves only three chemical reactions. What else can the Doi-Peliti approach be used to solve? What kind of sets of chemical reactions are tractable?

The full potential of the Doi-Peliti approach is not clear. As a partial answer to this question, however, we offer a result of somewhat shocking generality: a formal solution to the CME of any system whose reaction list only contains zero and first order reactions. By zero order reactions, we mean reactions like
\begin{equation}
\begin{split}
\varnothing &\rightarrow S_k \\
\varnothing &\rightarrow S_k + S_{\ell} \\
\varnothing &\rightarrow S_k + S_{\ell} + S_r \\
\varnothing &\rightarrow S_k + S_{\ell} + S_r + \cdots
\end{split}
\end{equation}
and so on, i.e. reactions requiring no molecules as input. By first order reactions, we mean reactions like
\begin{equation}
\begin{split}
S_j &\rightarrow \varnothing \\
S_j &\rightarrow S_k \\
S_j &\rightarrow S_k + S_{\ell}  \\
S_j &\rightarrow S_k + S_{\ell} + S_r \\
S_j &\rightarrow S_k + S_{\ell} + S_r + \cdots 
\end{split}
\end{equation}
and so on, i.e. reactions requiring exactly one molecule as input. The birth reactions described in the previous sections are examples of zero order reactions, while the death and conversion reactions are examples of first order reactions. Other biologically relevant examples of first order reactions include catalytic production ($S_j \to S_j + S_k$, $j \neq k$) and splitting ($S_j \to S_k + S_{\ell}$, $j \neq k$, $j \neq \ell$).

The list of all possible zero and first order reactions also includes many reactions that are almost certainly \textit{not} biologically relevant---for example, the reaction where one molecule splits into \textit{exactly} one hundred molecules with no intermediate splitting.

The CME of this system is somewhat tedious to write down, so we will instead note that the PDE satisfied by the generating function can be written in the form
\begin{equation} \label{eq:arb_gf_PDE}
\begin{split}
\frac{\partial \psi(\mathbf{g}, t)}{\partial t} =& \sum_{\nu_1, ..., \nu_n} \alpha_{\nu_1, ..., \nu_n}(t) \left( g_1 - 1 \right)^{\nu_1} \cdots \left(g_n - 1 \right)^{\nu_n} \psi(\mathbf{g}, t) \\
&+ \sum_k \sum_{\nu_1, ..., \nu_n} \beta^k_{\nu_1, ..., \nu_n}(t) \left( g_1 - 1 \right)^{\nu_1} \cdots \left( g_n - 1 \right)^{\nu_n} \frac{\partial \psi(\mathbf{g}, t)}{\partial g_k}
\end{split}
\end{equation}
where the precise form of the coefficients $\alpha_{\nu_1, ..., \nu_n}(t)$ and $\beta^j_{\nu_1, ..., \nu_n}(t)$ are determined by the details of one's list of reactions. Our main result for this class of systems is the following.

\begin{theorem}[Arbitrary combinations of zero and first order reactions] \label{thm:arb}

Let $\mathbf{q}(s)$ be the solution of
\begin{equation} \label{eq:zero_one_ODEs}
\dot{q}_j(s) = - i \sum_{\nu_1, ..., \nu_n} \beta^j_{\nu_1, ..., \nu_n}(t - s + t_0) \left[ i q_1(s) \right]^{\nu_1} \cdots \left[ i q_n(s) \right]^{\nu_n} \ , \ q_j(t_0) = p^f_j
\end{equation}
for some $\mathbf{p}^f \in \mathbb{R}^n$, with $s \in [t_0, t]$. Then if $P(\mathbf{x}, t_0) = \delta(\mathbf{x} - \boldsymbol{\xi})$ for some $\boldsymbol{\xi} \in \mathbb{N}^n$, we have
\begin{equation} 
P = \int_{\mathbb{R}^n} \frac{d\mathbf{p}^f}{(2\pi)^n} \ \frac{\left[ \mathbf{1} + i \mathbf{q}(t) \right]^{\boldsymbol{\xi}} e^{ \int_{t_0}^t \sum \alpha_{\nu_1, ..., \nu_n}(t - s + t_0) \left[ i q_1(s) \right]^{\nu_1} \cdots \left[ i q_n(s) \right]^{\nu_n}   \ ds }}{(\mathbf{1} + i \mathbf{p}^f)^{\mathbf{x} + \mathbf{1}}} 
\end{equation}
for the transition probability, and 
\begin{equation} 
\begin{split}
\psi(\mathbf{g}, t) =&  \left[ \mathbf{1} + i \mathbf{q}(t)  \right]^{\boldsymbol{\xi}}   \times  \\ &\times  \left.  e^{ \int_{t_0}^t \sum \alpha_{\nu_1, ..., \nu_n}(t - s + t_0) \left[ i q_1(s) \right]^{\nu_1} \cdots \left[ i q_n(s) \right]^{\nu_n}   \ ds } \right|_{\mathbf{p}^f = - i (\mathbf{g} - \mathbf{1})}
\end{split}
\end{equation}
for the probability generating function. 
\end{theorem}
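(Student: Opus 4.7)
The plan is to reduce Theorem~\ref{thm:arb} to the method of characteristics applied to the generating function PDE (Eq.~\ref{eq:arb_gf_PDE}), in parallel with the Doi-Peliti route developed in the preceding sections. The key observation is that for any reaction list involving only zero and first order reactions, each term in the CME shifts the argument of $P$ by at most one molecule of a single \emph{input} species, so Eq.~\ref{eq:arb_gf_PDE} is \emph{first order} in the derivatives $\partial_{g_k}\psi$. This linearity is exactly what makes the Doi-Peliti path integral over the conjugate field collapse to a delta function enforcing classical trajectories, and equivalently what makes the PDE for $\psi$ tractable by characteristics.

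First I would shift variables $h_k := g_k - 1$ so that Eq.~\ref{eq:arb_gf_PDE} reads
\[
\frac{\partial\psi}{\partial t} = a(\mathbf{h},t)\,\psi + \sum_k b^k(\mathbf{h},t)\,\frac{\partial\psi}{\partial h_k},
\]
with $a(\mathbf{h},t) = \sum_{\boldsymbol{\nu}}\alpha_{\boldsymbol{\nu}}(t)\,\mathbf{h}^{\boldsymbol{\nu}}$ and $b^k(\mathbf{h},t) = \sum_{\boldsymbol{\nu}}\beta^k_{\boldsymbol{\nu}}(t)\,\mathbf{h}^{\boldsymbol{\nu}}$. The backward characteristics $\tilde{\mathbf{h}}(t')$ through $(\mathbf{g},t)$ are defined by $d\tilde{\mathbf{h}}/dt' = -\mathbf{b}(\tilde{\mathbf{h}},t')$ with $\tilde{\mathbf{h}}(t) = \mathbf{g}-\mathbf{1}$, and along these curves the PDE collapses to $d\psi/dt' = a(\tilde{\mathbf{h}}(t'),t')\,\psi$. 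Integrating from $t_0$ to $t$ with the initial condition $\psi(\mathbf{g}_0,t_0) = \mathbf{g}_0^{\boldsymbol{\xi}}$ yields
\[
\psi(\mathbf{g},t) = \bigl(\mathbf{1} + \tilde{\mathbf{h}}(t_0)\bigr)^{\boldsymbol{\xi}}\,\exp\!\left\{\int_{t_0}^{t} a(\tilde{\mathbf{h}}(t'),t')\,dt'\right\}.
\]
Next I would perform the affine reparametrization $s = t - t' + t_0$ and set $\tilde{\mathbf{h}}(t') = i\mathbf{q}(s)$. A short calculation shows $\mathbf{q}(s)$ satisfies precisely Eq.~\ref{eq:zero_one_ODEs}, with $\mathbf{q}(t_0) = -i(\mathbf{g}-\mathbf{1}) = \mathbf{p}^f$ and $\tilde{\mathbf{h}}(t_0) = i\mathbf{q}(t)$, recovering the stated formula for $\psi$. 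The transition probability then follows from the standard Cauchy inversion $P(\mathbf{x},t) = \oint \psi(\mathbf{g},t)\,\mathbf{g}^{-(\mathbf{x}+\mathbf{1})}\,d\mathbf{g}/(2\pi i)^n$: deforming each contour onto the vertical line $g_k = 1 + i p^f_k$ with $p^f_k\in\mathbb{R}$ converts this into the Fourier-type integral over $\mathbf{p}^f$ stated in the theorem, with measure $d\mathbf{p}^f/(2\pi)^n$ and denominator $(\mathbf{1}+i\mathbf{p}^f)^{\mathbf{x}+\mathbf{1}}$.

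The main obstacle is not the characteristic computation but justifying the contour deformation for $P(\mathbf{x},t)$: one must verify that $\psi(\mathbf{g},t)/\mathbf{g}^{\mathbf{x}+\mathbf{1}}$ is analytic in the strip between the origin-centered contour and the vertical line through $\mathbf{g}=\mathbf{1}$, and that it decays fast enough along that line for the improper integral to converge. In the Doi-Peliti construction this integral arises directly from the coherent-state resolution of identity, so convergence is essentially built in; the equivalence of the two representations then reduces to controlling analytic properties of the flow $\mathbf{q}(s)$, which in general requires care because Eq.~\ref{eq:zero_one_ODEs} is nonlinear and may exhibit finite-time blow-up in the complex domain. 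As a sanity check, I would specialize the formula to the monomolecular case (where Eq.~\ref{eq:zero_one_ODEs} becomes linear and recovers Theorem~\ref{thm:mono}) and to the birth-death-autocatalysis case (where Eq.~\ref{eq:zero_one_ODEs} becomes a Riccati equation and recovers Theorem~\ref{thm:bda}).
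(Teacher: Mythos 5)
Your proposal is correct and lands on essentially the same mechanism the paper uses: the proof of Theorem~\ref{thm:arb} defers to Sec.~\ref{sec:propview}, where the result is obtained by the method of characteristics applied to the first-order PDE for the propagator $U(i\mathbf{p}^f,t;\mathbf{z}^0,t_0)$, whose characteristic ODEs (after the reparametrization $s = t - t' + t_0$) are exactly Eq.~\ref{eq:zero_one_ODEs}. You instead run characteristics directly on the generating-function PDE (Eq.~\ref{eq:arb_gf_PDE}); since the two PDEs differ only by the substitution $\mathbf{g}-\mathbf{1}\to i\mathbf{p}^f$ (a point the paper itself makes), and since the deterministic initial condition gives the tractable data $\psi(\mathbf{g},t_0)=\mathbf{g}^{\boldsymbol{\xi}}$, your route to the $\psi$ formula is a legitimate and slightly more economical version of the same argument --- it bypasses the coherent-state machinery entirely. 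The one place you genuinely diverge is the transition probability: the paper does not extract $P$ from $\psi$ by contour deformation, but obtains the $(\mathbf{1}+i\mathbf{p}^f)^{-(\mathbf{x}+\mathbf{1})}$ kernel directly from the coherent-state resolution of the identity via the Laplace-transform identity $\int_0^{\infty} z^x e^{-z(1+ip^f)}\,dz = x!/(1+ip^f)^{x+1}$, which sidesteps the analyticity-in-a-strip and decay questions you correctly flag. Indeed, interpreted as an ordinary improper integral your vertical-line representation is only conditionally convergent (already for $\boldsymbol{\xi}=\mathbf{0}$, $\mathbf{x}=\mathbf{0}$ the principal value gives $1/2$ rather than $1$), so the formula must be read with the paper's implicit prescription of closing in the upper half-plane and keeping full residues; you should either adopt that convention explicitly or derive $P$ the paper's way. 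Your sanity checks (linear case recovering Theorem~\ref{thm:mono}, Riccati case recovering Theorem~\ref{thm:bda}) are exactly the consistency checks the paper performs in Secs.~\ref{sec:monocalc} and~\ref{sec:bdacalc}.
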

\begin{proof}
In principle, the transition probability or generating function could be verified by substitution directly; however, this seems very difficult. A better way is discussed in Sec. \ref{sec:propview}.
 \qed
\end{proof}

In some sense all other results in this paper are corollaries of this result; still, it is helpful to study the simpler cases in their own right, both to double-check the correctness of this more general result, and to develop a sense for how to derive these solutions. 

While this result is perhaps shockingly general, it is also incredibly formal. For most systems of interest, it is likely that reducing the problem of solving the CME (an infinite number of coupled linear ODEs) to the problem of solving $n$ coupled nonlinear ODEs is not much of an improvement. However, in some cases legitimate simplification seems possible: monomolecular systems and the birth-death-autocatalysis system addressed in the previous two theorems are clearly examples. It is not clear whether there are large classes of more complicated systems for which the ODEs given by Eq. \ref{eq:zero_one_ODEs} are solvable, but searching for them seems like a promising avenue for future research on solving the CME. 

While we have not pursued this question, it is also possible that solving these ODEs numerically could yield new insights for efficiently solving the CME on computers.

\section{Reframing the problem and basic Doi-Peliti formalism}
\label{sec:reframe}

\subsection{A brief digression on notation}
\label{sec:digression}

Although it is not normally used in the study of stochastic processes, it is the author's strong belief that the bra-ket notation originally developed for quantum mechanics is most appropriate here. Because this notation makes it harder for most mathematicians to read this paper, here we will briefly argue why this is necessary. The reader who wishes to review the basics of bra-ket notation, and see how it compares to more standard mathematical notation for things like vector spaces and inner products, should refer to Appendix \ref{sec:notation_comp}. 

Given that the problems we are attempting to solve are quite complicated, carefully choosing notation is important; a bad choice of notation would clutter our already complicated arguments, making them nearly impossible to understand. We would like notation that (i) \textit{generalizes} cleanly to complicated systems in arbitrarily many dimensions; (ii) \textit{simplifies} the construction of the Doi-Peliti path integral; and (iii) is \textit{suggestive} of the operations we want to take, and not suggestive of operations that are not valid. 
 
%it eases notation, makes it easy to repeatedly apply the identity operator (c.f. the derivation of the path integral expression for the propagator $U$), and is suggestive for the inner products we are using. 
Let us say more about each heuristic requirement:
\begin{enumerate}
\item \textbf{Generalizes}: We will usually work with systems for which there are $n$ distinct chemical species, where $n \geq 1$ is some positive integer. In the next section, we will see that this forces us to work in a Hilbert space where each basis vector can be identified with an element of $\mathbb{N}^n$. We will eventually need notation for each basis vector, as well as for sums over $\mathbb{N}^n$, integrals over $\mathbb{R}^n$, and eigenvectors with eigenvalues $\mathbf{z} \in \mathbb{C}^n$. Denoting basis vectors, eigenvectors, and things like sums and integrals cleanly in arbitrarily many dimensions is easy using bra-ket notation.
\item \textbf{Simplifies}: Constructing the Doi-Peliti path integral involves using many identity operators/resolutions of the identity (see Sec. \ref{sec:roi}). This is cleanest with bra-ket notation, and using alternative notation obfuscates these steps.
\item \textbf{Suggestive/not confusing}: We will have to compute many inner products, as well as different kinds of inner products. Bra-ket notation allows them to be denoted simply, e.g. the inner product of $\ket{x}$ and $\ket{y}$ is $\braket{x}{y}$. If we used generating function notation, where we have $g^x$ instead of $\ket{x}$ and $g^y$ instead of $\ket{y}$, we would have to define strange operations like $g^x \cdot g^y = x! \delta(x - y)$. Moreover, this notation suggests operations like $g^x g^y = g^{x + y}$ are valid, although they are not. Vector notation (using e.g. $e_x$ and $e_y$ to denote basis vectors) would be somewhat confusing, because we are already considering vectors like $\mathbf{x} \in \mathbb{N}^n$ to denote particular states of our system. 
\end{enumerate}
Aside from issues of notation, there is a `deeper' reason this path integral requires special notation, whereas for others (see e.g. \cite{bressloff2014}) standard notation and a Chapman-Kolmogorov-based argument suffices. Most of the time, when path integrals are applied to stochastic processes or mathematical biology, what one is \textit{really} doing is applying the Chapman-Kolmogorov equation many times. This has the interpretation that one is imagining all possible paths from one state to another state and appropriately discretizing them. 

This kind of path integral is qualitatively different. It involves expanding an abstract object (rather than the transition probability itself) in terms of coherent states (which we will define later), which are themselves kind of abstract Poisson-like distributions. There does not seem to be the same obvious interpretation linking this path integral to the Chapman-Kolmogorov equation, or to all possible paths through state space (the space of all possible configurations of the system, i.e. $\mathbb{N}^n$ for a system with $n$ distinct chemical species). \\

%- conducive to inner products
%- reasonable labeling of all basis vectors
%- should not suggest invalid operations (addition, multiplication)
%
%
%NEED TO ADD A SMALL SECTION TO THE BEGINNING HERE, WHERE I EXPLAIN THE NEED FOR BRA-KET NOTATION AND HOW IT WORKS
%
%THEN REFER THE READER TO AN APPENDIX WHERE NOTATION PARALLELS BETWEEN MATH (INCLUDING STOCHASTIC PROCESSES) AND QUANTUM MECHANICS ARE DISCUSSED IN MORE DETAIL

\subsection{Hilbert space, the generating function, and basic operators}

In order to apply the Doi-Peliti technique, we first need to rewrite the CME in terms of states and operators in a certain Hilbert space. Consider an infinite-dimensional Hilbert space spanned by the $\ket{\mathbf{x}}$ states/basis vectors (where $\mathbf{x} = (x_1, ..., x_n)^T \in \mathbb{N}^n$), in which an arbitrary state $\ket{\phi}$ is written
\begin{equation} \label{eq:arbstate}
\ket{\phi} =  \sum_{x_1 = 0}^{\infty} \cdots \sum_{x_n = 0}^{\infty} c(\mathbf{x}) \ket{\mathbf{x}} 
\end{equation}
for some generally complex-valued coefficients $c(\mathbf{x})$. These states can be added and multiplied by (complex) scalars in the usual way. There is one basis vector for every possible state of the system (recall that the state space of the system is $\mathbb{N}^n$), e.g. $\ket{0 0 \cdots 0}$, $\ket{0 1 \cdots 0}$, $\ket{20, 45, 1, \cdots 10}$, and so on; one interpretation of these objects is that they encode a certain generalization of probability distributions on the state space, given that they assign every $\mathbf{x} \in \mathbb{N}^n$ a complex number. 

It should be noted that the basis vectors cannot be combined since they represent distinct directions in the Hilbert space, i.e. $\ket{\mathbf{x}} + \ket{\mathbf{y}} \neq \ket{\mathbf{x} + \mathbf{y}}$. We will denote the zero vector by $0$, which we emphasize for clarity's sake is distinct from the basis vector $\ket{\mathbf{0}}$ (e.g. $\ket{\mathbf{0}} + 0 = \ket{\mathbf{0}}$). The relevant inner products (without which this would just be a vector space) will be described in a few sections. 

To ease notation, we remind the reader that we will write
\begin{equation}
\sum_{\mathbf{x}} := \sum_{x_1 = 0}^{\infty} \cdots \sum_{x_n = 0}^{\infty} \ .
\end{equation}
The state we are principally interested in is the generating function, which is essentially the function $\psi(\mathbf{g}, t)$ described earlier, but using different notation. For more information on their correspondence, see Appendix \ref{sec:notation_comp}.

\begin{definition}
The \textit{generating function} is defined to be the state
\begin{equation} \label{eq:gf}
\ket{\psi(t)} := \sum_{\mathbf{x}}  P(\mathbf{x}, t) \ket{\mathbf{x}} 
\end{equation}
where $P(\mathbf{x}, t)$ is some solution to the CME (i.e. its precise form depends on the chosen initial condition $P(x, t_0)$). 
\end{definition}
For the rest of this paper, we will only be concerned with the case where $P(\mathbf{x}, t_0) = \delta(\mathbf{x} - \boldsymbol{\xi})$ for some $\boldsymbol{\xi} \in \mathbb{N}^n$, so we will always assume that $\ket{\psi(t_0)} = \ket{\boldsymbol{\xi}}$.

%which, by construction, contains exactly the same information that the probability distribution $P(\mathbf{x}, t)$ does. Although expressed in terms of different notation, this generating function is equivalent to the function $\psi(\mathbf{g}, t)$ described in Sec. \ref{sec:pstatement}. 

Because the generating function $\ket{\psi(t)}$ depends on $P(\mathbf{x}, t)$, whose dynamics are controlled by the CME, $\ket{\psi(t)}$ also has dynamics; we can write the equation controlling them (its `equation of motion') in the form 
\begin{equation} \label{eq:EOM}
\frac{\partial \ket{\psi}}{\partial t} = \hat{H} \ket{\psi} 
\end{equation}
where the Hamiltonian operator $\hat{H}$ is a linear operator whose precise form depends on the CME. For the reader familiar with quantum mechanics, this is analogous to the equation of motion for a quantum mechanical state. In any case, it is this equation that we will solve instead of the CME.

It may or may not be helpful for the reader to think of $\hat{H}$ as a (possibly infinite-dimensional) matrix. Although it is infinite-dimensional in essentially every case we care about in this paper, it would literally be a matrix if we were solving a CME with a finite state space. One example of a problem with a finite state space is the pure conversion process ($A \leftrightarrow B$), which involves $A$ molecules and $B$ molecules randomly converting between each other; it has a finite state space because the total number of molecules remains constant. 

Pressing the analogy between $\hat{H}$ and matrices, we have the usual formal solution for the generating function $\ket{\psi(t)}$ in terms of the (time-ordered) exponential of $\hat{H}$.

\begin{proposition}[Formal solution for the generating function] \label{prop:formal}
The equation of motion for the generating function $\ket{\psi(t)}$ (Eq. \ref{eq:EOM}) has the formal solution
\begin{equation} \label{eq:EOMformal}
\begin{split}
\ket{\psi(t)} &= \hat{T} e^{\int_{t_0}^t \hat{H}(t') dt'} \ket{\psi(t_0)}  \\
&= \sum_{j = 0}^{\infty} \frac{1}{j!} \int_{t_0}^t \cdots \int_{t_0}^t \hat{T} \left[ \hat{H}(t_1) \cdots \hat{H}(t_n) \right] \ dt_1 \cdots dt_n \\
&= 1 + \int_{t_0}^t \hat{H}(t_1) dt_1 + \frac{1}{2} \int_{t_0}^t \int_{t_0}^t \hat{T} \left[ \hat{H}(t_1) \hat{H}(t_2) \right] dt_1 dt_2 + \cdots
\end{split}
\end{equation}
where $\hat{T}$ is the time-ordering symbol, whose action on a product of operators is defined to be
\begin{equation}
\hat{T} \left[ \hat{\mathcal{A}}_1(t) \hat{\mathcal{A}}_2(t') \right] := \left\{ \begin{array}{cr} 
\hat{\mathcal{A}}_1(t) \hat{\mathcal{A}}_2(t') & \hspace{0.5in} t \geq t' \\
\hat{\mathcal{A}}_2(t') \hat{\mathcal{A}}_1(t) & \hspace{0.5in} t < t'
\end{array} \right. \ .
\end{equation}
\end{proposition}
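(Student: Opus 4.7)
The plan is to derive the series expression by Picard iteration of the equivalent integral equation, and then symmetrize using the time-ordering operator. First I would integrate Eq.~\ref{eq:EOM} from $t_0$ to $t$ to obtain the Volterra-type integral equation
\begin{equation}
\ket{\psi(t)} = \ket{\psi(t_0)} + \int_{t_0}^t \hat{H}(t_1) \ket{\psi(t_1)} \, dt_1 \ .
\end{equation}
Substituting this same identity for $\ket{\psi(t_1)}$ inside the integral and repeating yields the Picard iterates, whose $n$-th term is the nested integral
\begin{equation}
\int_{t_0}^t \! dt_1 \int_{t_0}^{t_1} \! dt_2 \cdots \int_{t_0}^{t_{n-1}} \! dt_n \, \hat{H}(t_1) \hat{H}(t_2) \cdots \hat{H}(t_n) \ket{\psi(t_0)} \ .
\end{equation}
The integration region is the ordered simplex $t \geq t_1 \geq \cdots \geq t_n \geq t_0$, and the operators already appear in chronological order, so the integrand agrees with $\hat{T}[\hat{H}(t_1) \cdots \hat{H}(t_n)] \ket{\psi(t_0)}$ on this region.

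Next I would symmetrize. The hypercube $[t_0,t]^n$ decomposes, up to a measure-zero set, into $n!$ ordered simplices obtained by permuting the coordinates; on each of these, $\hat{T}$ re-orders the factors into chronological order, so each simplex contributes the same value as the original. Thus
\begin{equation}
\int_{t_0}^t \! dt_1 \cdots \int_{t_0}^{t_{n-1}} \! dt_n \, \hat{H}(t_1) \cdots \hat{H}(t_n) \ = \ \frac{1}{n!} \int_{t_0}^t \! \cdots \! \int_{t_0}^t \hat{T}\!\left[\hat{H}(t_1) \cdots \hat{H}(t_n)\right] dt_1 \cdots dt_n \ ,
\end{equation}
and summing the Picard iterates reproduces exactly the three displayed equalities in Eq.~\ref{eq:EOMformal}. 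The compact notation $\hat{T} \exp\{\int_{t_0}^t \hat{H}(t') dt'\}$ is then defined to stand for this series. A final consistency check would differentiate the symmetrized series termwise in $t$: since $\hat{T}$ places the factor with the largest argument on the far left, differentiating the upper limit pulls out exactly $\hat{H}(t)$ times the series of order $n-1$, recovering Eq.~\ref{eq:EOM}, while at $t = t_0$ all but the zeroth term vanish, confirming $\ket{\psi(t_0)}$ as the initial value.

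The main obstacle is rigor rather than formal manipulation: $\hat{H}(t)$ is an unbounded operator on the infinite-dimensional Hilbert space spanned by the $\ket{\mathbf{x}}$, so neither the Picard iteration nor the termwise differentiation is automatically justified. The word \emph{formal} in the proposition's statement is precisely a hedge against this—convergence of the Dyson series would need to be checked case by case, typically by verifying that each iterate acting on $\ket{\boldsymbol{\xi}}$ remains in a suitable dense domain and that the resulting sum converges in an appropriate norm. In the rest of the paper this question is sidestepped by pairing the series with coherent states, where it reorganizes into a genuine (convergent) path integral; at this point in the exposition, however, I would simply present Proposition~\ref{prop:formal} as a formal identity between power series in $\hat{H}$, with the caveat that its analytic status is inherited term by term.
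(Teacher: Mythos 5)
Your proposal is correct and amounts to the standard Dyson-series argument that the paper's own proof explicitly declines to spell out: the paper simply says to substitute the series into Eq.~\ref{eq:EOM} and verify, which is exactly your final termwise-differentiation consistency check, while your Picard iteration and hypercube-to-simplex symmetrization supply the omitted details. Your closing remark about unboundedness of $\hat{H}$ correctly identifies why the result is only a \emph{formal} solution, in line with how the paper treats it.
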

\begin{proof}
Substitute this expression for the generating function directly into Eq. \ref{eq:EOM}. While the presence of the time-ordering symbol makes this more subtle than it would be in the case of a time-independent Hamiltonian (i.e. in the case where the reaction parameters were all time-independent), this exercise is standard, so we will not go through this in detail. \qed
\end{proof}

\begin{corollary}
For time-independent $\hat{H}$, the above formal solution reduces to 
\begin{equation}
\ket{\psi(t)} = e^{\hat{H} (t - t_0)} \ket{\psi(t_0)} \ .
\end{equation}
\end{corollary}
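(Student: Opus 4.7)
The plan is to specialize the time-ordered series from Proposition \ref{prop:formal} term by term, observing that time-ordering becomes trivial in the time-independent case. First I would note that, when $\hat{H}(t) = \hat{H}$ is constant in $t$, any product $\hat{H}(t_{i_1}) \cdots \hat{H}(t_{i_n})$ equals $\hat{H}^n$ regardless of the ordering, so the action of $\hat{T}$ reduces to the identity on such products.

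Next I would evaluate the $n$-th term of the series
\begin{equation}
\frac{1}{n!} \int_{t_0}^{t} \cdots \int_{t_0}^{t} \hat{T}\left[ \hat{H}(t_1) \cdots \hat{H}(t_n) \right] dt_1 \cdots dt_n = \frac{1}{n!} \hat{H}^n \int_{t_0}^{t} \cdots \int_{t_0}^{t} dt_1 \cdots dt_n = \frac{(t - t_0)^n}{n!} \hat{H}^n,
\end{equation}
pulling the constant operator $\hat{H}^n$ outside the multiple integral (which evaluates to $(t-t_0)^n$). Summing over all $n \geq 0$ then recovers the operator exponential
\begin{equation}
\sum_{n=0}^{\infty} \frac{(t - t_0)^n}{n!} \hat{H}^n = e^{\hat{H}(t - t_0)},
\end{equation}
and applying both sides to $\ket{\psi(t_0)}$ yields the claimed formula.

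The only mild subtlety — hardly an obstacle — is to justify exchanging the sum over $n$ with the action on the state, and to confirm that the operator exponential series converges in whatever sense is relevant for $\hat{H}$ acting on $\ket{\psi(t_0)}$. Since Proposition \ref{prop:formal} already treats the (more general) time-ordered exponential as a formal object, we may simply inherit that same formal interpretation here; no additional convergence analysis is needed beyond what was already assumed. Thus the corollary follows immediately from Proposition \ref{prop:formal} by collapsing the time-ordering and carrying out the elementary iterated integral.
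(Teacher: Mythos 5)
Your argument is correct and is exactly the second route the paper's own (one-line) proof gestures at, namely simplifying the time-ordered series of Proposition~\ref{prop:formal}: with $\hat{H}$ constant the time-ordering acts as the identity, each $n$-th term collapses to $\hat{H}^n (t-t_0)^n/n!$, and the sum is the operator exponential. Nothing further is needed beyond the formal interpretation already adopted in the paper.
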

\begin{proof}
One can either show that this solves the equation of motion directly, or simplify the result above. \qed
\end{proof}
Fortunately, we will never have to work with a time-ordered exponential of operators directly. The first salient consequence of the formal solution for us is that
\begin{equation}
\ket{\psi(t + \Delta t)} \approx \left[ 1 + \hat{H}(t) \Delta t \right] \ket{\psi(t)}
\end{equation}
for sufficiently small $\Delta t$, with the approximation becoming exact in the $\Delta t \to 0$ limit. Notice that this also matches what we would find by naively approximating the time derivative with a finite difference in Eq. \ref{eq:EOM}.

This formal solution motivates defining the time evolution operator, which carries the solution at time $t_1$ (the state $\ket{\psi(t_1)}$) to the solution at time $t_2$ (the state $\ket{\psi(t_2)}$).

\begin{definition} 
The \textit{time evolution operator} $\hat{U}(t_2, t_1)$ is defined as
\begin{equation}
\begin{split}
\hat{U}(t_2, t_1) :=& \ \hat{T} e^{\int_{t_1}^{t_2} \hat{H}(t') dt'}  \\
=& 1 + \int_{t_1}^{t_2} \hat{H}(s_1) ds_1 + \frac{1}{2} \int_{t_1}^{t_2} \int_{t_1}^{t_2} \hat{T} \left[ \hat{H}(s_1) \hat{H}(s_2) \right] ds_1 ds_2 + \cdots
\end{split}
\end{equation}
for any two times $t_1 \leq t_2$. In terms of the time evolution operator, the formal solution for $\ket{\psi(t)}$ can be written
\begin{equation} \label{eq:EOM_U}
\ket{\psi(t)} = \hat{U}(t, t_0) \ket{\psi(t_0)}  \ .
\end{equation}
\end{definition}
The second salient consequnce of Proposition \ref{prop:formal} is that this operator has an important composition property.
\begin{proposition}[Composition property of the time evolution operator]
The time evolution operator $\hat{U}$ has the property that
\begin{equation} \label{eq:Ucompose}
\hat{U}(t_2, t_1) = \hat{U}(t_2, t') \hat{U}(t', t_1)
\end{equation}
for any time $t'$ with $t_1 \leq t' \leq t_2$. 
\end{proposition}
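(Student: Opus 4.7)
The plan is to prove the composition property by invoking uniqueness of solutions to the equation of motion $\partial_t \ket{\psi} = \hat{H}(t) \ket{\psi}$, rather than manipulating the nested integrals in the time-ordered exponential directly. The key observation is that, by Eq. \ref{eq:EOM_U}, applying $\hat{U}(t_2, t_1)$ to any state $\ket{\phi}$ produces the unique trajectory that solves the equation of motion with initial data $\ket{\phi}$ at time $t_1$ and is sampled at time $t_2$. Composing two such evolutions therefore ought to give the same trajectory as a single evolution.

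Concretely, I would first confirm the identity property $\hat{U}(\tau, \tau) = 1$ for any $\tau$ by inspecting the series in Proposition \ref{prop:formal}: the $j=0$ term is the identity, and every higher term contains a time-ordered integral over the degenerate range $[\tau, \tau]$, which vanishes. Next, for fixed $\ket{\phi}$ and $t_1 \leq t' \leq t_2$, I would define two trajectories in the variable $t_2$:
\begin{equation}
\ket{\Phi_1(t_2)} := \hat{U}(t_2, t_1) \ket{\phi}, \qquad \ket{\Phi_2(t_2)} := \hat{U}(t_2, t') \hat{U}(t', t_1) \ket{\phi}.
\end{equation}
Using Proposition \ref{prop:formal}---applied with initial time $t_1$ for $\ket{\Phi_1}$, and with initial time $t'$ and initial state $\hat{U}(t', t_1) \ket{\phi}$ for $\ket{\Phi_2}$---both trajectories satisfy $\partial_{t_2} \ket{\Phi_i} = \hat{H}(t_2) \ket{\Phi_i}$. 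At $t_2 = t'$ they agree, since $\ket{\Phi_1(t')} = \hat{U}(t', t_1) \ket{\phi}$ while $\ket{\Phi_2(t')} = \hat{U}(t', t') \hat{U}(t', t_1) \ket{\phi} = \hat{U}(t', t_1) \ket{\phi}$ by the identity-property check. Uniqueness then forces $\ket{\Phi_1(t_2)} = \ket{\Phi_2(t_2)}$ for all $t_2 \geq t'$, and since $\ket{\phi}$ was arbitrary and both operators act linearly, the operator identity \ref{eq:Ucompose} follows.

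The main obstacle is justifying uniqueness in the infinite-dimensional setting---strictly speaking, one needs some mild regularity of $\hat{H}(t)$ in $t$ (for instance, time-integrability of operator norms on a suitable invariant dense domain) before the standard Picard-style ODE argument applies. For the monomolecular and zero/first order systems considered in this paper, $\hat{H}(t)$ acts nicely on polynomial states with time-integrable coefficients, so uniqueness is not a serious obstruction, but it should be acknowledged rather than swept under the rug. As a purely algebraic backup that sidesteps uniqueness entirely, one could split every nested integral in the defining series of $\hat{U}(t_2, t_1)$ at the intermediate time $t'$ and reorganize: the $j$-fold time-ordered integral decomposes into a sum indexed by the number $m \in \{0, 1, \dots, j\}$ of arguments falling above $t'$, and using the time-ordering to separate the resulting products of Hamiltonians recovers exactly the Cauchy product of the series for $\hat{U}(t_2, t')$ and $\hat{U}(t', t_1)$. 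This route is bookkeeping rather than conceptual work, which is why the uniqueness argument is the more attractive option.
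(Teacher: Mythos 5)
Your proposal is correct, but your primary route differs from the paper's. The paper proves this proposition purely algebraically: it expands both $\hat{U}(t_2,t_1)$ and $\hat{U}(t_2,t')\hat{U}(t',t_1)$ in the infinite time-ordered series and matches the two sides order by order---which is precisely the ``backup'' you describe, where the $j$-fold time-ordered integral over $[t_1,t_2]^j$ is split according to how many time arguments land in $[t',t_2]$ versus $[t_1,t']$, recovering the Cauchy product of the two series. Your main argument instead invokes uniqueness of solutions to the equation of motion $\partial_t\ket{\psi}=\hat{H}(t)\ket{\psi}$: both $\hat{U}(t_2,t_1)\ket{\phi}$ and $\hat{U}(t_2,t')\hat{U}(t',t_1)\ket{\phi}$ solve that equation in $t_2$ and agree at $t_2=t'$ (using $\hat{U}(t',t')=1$, which you correctly verify from the series). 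This is more conceptual and shorter, and it makes transparent \emph{why} the composition property must hold---$\hat{U}$ is by construction the solution operator of a first-order linear evolution equation. What it costs you, as you yourself flag, is a well-posedness hypothesis: in the infinite-dimensional Hilbert space here, uniqueness is not automatic and needs some regularity of $\hat{H}(t)$ on an invariant domain. The paper's term-by-term expansion sidesteps that analytic input entirely (at the level of rigor the paper operates at, which freely rearranges the series anyway). Since you identify the gap honestly and supply the paper's algebraic argument as a fallback, the proposal stands as a complete and arguably more illuminating proof.
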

\begin{proof}
This is most easily seen using the infinite series form of the time evolution operator $\hat{U}$, by expanding both sides and showing that they match at each order. \qed
\end{proof}

Because we are interested in the dynamics of the generating function $\ket{\psi(t)}$, we need to introduce operators to act on it. In particular, we introduce the creation and annihilation operators, which we will later use to write the Hamiltonian operator.
\begin{definition}
Define the \textit{annihilation} and \textit{creation operators} $\hat{a}_j$ and $\hat{\pi}_j$ for all $j = 1, ..., n$ as the operators whose action on a basis vector $\ket{\mathbf{x}}$ is 
\begin{equation} \label{eq:aops}
\begin{split}
\hat{a}_j \ket{\mathbf{x}} &= x_j \ket{\mathbf{x} - \boldsymbol{\epsilon}_j}  \\
\hat{\pi}_j \ket{\mathbf{x}} &= \ket{\mathbf{x} + \boldsymbol{\epsilon}_j}
\end{split}
\end{equation}
where we remind the reader that $\boldsymbol{\epsilon}_j$ is the $n$-dimensional vector with a $1$ in the $j$th place and zeros everywhere else. 
\end{definition}

It is easy to show that these operators satisfy the commutation relations analogous to those seen in quantum mechanics (for example, in the ladder operator treatment of the harmonic oscillator \cite{griffiths2018}, or in the canonical quantization approach to quantum field theory \cite{schwartz2014}). These properties will be used in calculations a few times throughout this paper. 

\begin{proposition}
Recall that, for two operators $\hat{\mathcal{A}}_1$ and $\hat{\mathcal{A}}_2$, their commutator is defined to be $[\hat{\mathcal{A}}_1, \hat{\mathcal{A}}_2] := \hat{\mathcal{A}}_1 \hat{\mathcal{A}}_2 - \hat{\mathcal{A}}_2 \hat{\mathcal{A}}_1$. The creation and annihilation operators satisfy the commutation relations
\begin{equation}
[\hat{a}_j, \hat{\pi}_k] = \delta(j - k) \ , \ [\hat{a}_j, \hat{a}_k] = [\hat{\pi}_j, \hat{\pi}_k] = 0 \ .
\end{equation}

\end{proposition}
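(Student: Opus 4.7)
The plan is to verify each commutation relation by evaluating both sides on an arbitrary basis vector $\ket{\mathbf{x}}$ with $\mathbf{x} \in \mathbb{N}^n$. Since the operators $\hat{a}_j$ and $\hat{\pi}_k$ are linear and the collection $\{\ket{\mathbf{x}}\}$ spans the Hilbert space, establishing $[\hat{a}_j, \hat{\pi}_k]\ket{\mathbf{x}} = \delta(j-k)\ket{\mathbf{x}}$ and $[\hat{a}_j, \hat{a}_k]\ket{\mathbf{x}} = [\hat{\pi}_j, \hat{\pi}_k]\ket{\mathbf{x}} = 0$ for every such $\mathbf{x}$ is enough to upgrade to the full operator identities.

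For the mixed commutator, I would first apply $\hat{\pi}_k$ to $\ket{\mathbf{x}}$ to obtain $\ket{\mathbf{x} + \boldsymbol{\epsilon}_k}$, and then apply $\hat{a}_j$ using Eq. \ref{eq:aops}; the $j$-th component of the shifted vector is $x_j + \delta(j-k)$, so $\hat{a}_j \hat{\pi}_k \ket{\mathbf{x}} = [x_j + \delta(j-k)] \ket{\mathbf{x} + \boldsymbol{\epsilon}_k - \boldsymbol{\epsilon}_j}$. In the other order, $\hat{\pi}_k \hat{a}_j \ket{\mathbf{x}} = x_j \ket{\mathbf{x} - \boldsymbol{\epsilon}_j + \boldsymbol{\epsilon}_k}$. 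Subtracting yields $\delta(j-k) \ket{\mathbf{x} + \boldsymbol{\epsilon}_k - \boldsymbol{\epsilon}_j}$; when $j = k$ the shift $\boldsymbol{\epsilon}_k - \boldsymbol{\epsilon}_j$ vanishes and the prefactor is $1$, whereas when $j \neq k$ the prefactor itself is zero, so in both cases the result collapses to $\delta(j-k)\ket{\mathbf{x}}$, as required.

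For the like-operator commutators the calculation is even more mechanical. Iterating Eq. \ref{eq:aops} twice gives $\hat{a}_j \hat{a}_k \ket{\mathbf{x}} = x_k \, (x_j - \delta(j-k)\,[\text{indicator}]) \ket{\mathbf{x} - \boldsymbol{\epsilon}_j - \boldsymbol{\epsilon}_k}$, which is manifestly symmetric in $j \leftrightarrow k$, so $[\hat{a}_j, \hat{a}_k] = 0$. Similarly, $\hat{\pi}_j \hat{\pi}_k \ket{\mathbf{x}} = \ket{\mathbf{x} + \boldsymbol{\epsilon}_j + \boldsymbol{\epsilon}_k}$ regardless of the order of application, yielding $[\hat{\pi}_j, \hat{\pi}_k] = 0$.

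The computation is genuinely routine; the only subtlety worth flagging is the edge case of the mixed commutator when some component $x_j = 0$. Here $\hat{a}_j \ket{\mathbf{x}}$ returns the zero vector $0$ rather than an undefined basis vector $\ket{\mathbf{x} - \boldsymbol{\epsilon}_j}$, but since $\hat{\pi}_k$ is linear it annihilates $0$ and the algebraic identity still holds by inspection. Apart from this bookkeeping, no real obstacle arises, which is why the statement is packaged as a proposition rather than a theorem.
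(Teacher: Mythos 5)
Your proof is correct and is exactly what the paper intends: the paper's own proof is the one-line instruction ``use their definitions to straightforwardly show this,'' and your verification on basis vectors, including the $\delta(j-k)$ bookkeeping and the $x_j = 0$ edge case, is the straightforward computation being alluded to. No gaps.
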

\begin{proof}
Use their definitions to straightforwardly show this. \qed
\end{proof}

%which are precisely the same as the commutation relations satisfied by the creation and annihilation operators familiar from quantum mechanics and quantum field theory \cite{griffiths2018,schwartz2014}.

%\section{Basic Doi-Peliti formalism}
%\label{sec:dpbasics}

In essence, the Doi-Peliti approach to solving Eq. \ref{eq:EOM} involves using many coherent state `resolutions of the identity' (a phrase we will define shortly) to rewrite Eq. \ref{eq:EOMformal} as a coherent state path integral. Once that path integral is evaluated, quantities like moments and $P(\mathbf{x}, t)$ can be recovered by manipulating the path integral solution in specific ways. In order to follow this prescription, we will need to define coherent states, define inner products, and construct associated resolutions of the identity; that is our next task. 

\subsection{Coherent states}

Because we will be expressing the Hamiltonian operator in terms of creation and annihilation operators, it is convenient to work in terms of states that behave simply when acted upon by these operators. These are the so-called coherent states, which are often used to study the semiclassical limit of quantum mechanics. Here, we will only care about them for their algebraic properties; while their biological meaning is not completely obscure (they are essentially states that correspond to Poisson distributions), thinking about it is not necessary in what follows.

\begin{definition}
Let $\mathbf{z} = (z_1, ..., z_n)^T \in \mathbb{C}^n$. A \textit{coherent state} is a state
\begin{equation}
\ket{\mathbf{z}} := \sum_{\mathbf{y}} c(\mathbf{y}) \ \ket{\mathbf{y}}
\end{equation} 
satisfying 
\begin{equation}
\begin{split}
\hat{a}_j \ket{\mathbf{z}} &= z_j \ket{\mathbf{z}} \hspace{0.5in} \text{ for all } j = 1, ..., n \\
\sum_{\mathbf{y}} c(\mathbf{y}) &= 1
\end{split}
\end{equation}
i.e. it is an eigenstate of all annihilation operators $\hat{a}_j$, and it has a specific normalization. 
\end{definition}
%\begin{equation}
%\hat{a}_j \ket{\mathbf{z}} = z_j \ket{\mathbf{z}}
%\end{equation}
%for all $j = 1, ..., n$, with $\mathbf{z} = (z_1, ..., z_n) \in \mathbb{C}^n$. 

By imposing the eigenstate constraint on an arbitrary state, it is straightforward to determine the coefficients $c(\mathbf{y})$ explicitly. Coherent states can also be written in terms of a specific combination of creation operators acting on the `vacuum' state $\ket{\mathbf{0}}$. We make these statements more precise in the following proposition. 

\begin{proposition}  \label{prop:cs_forms}
The coherent state $\ket{\mathbf{z}}$ can explicitly be written in the following two equivalent forms:
\begin{enumerate}[label=(\roman*)]
\item \begin{equation} \ket{\mathbf{z}} = \sum_{\mathbf{y}} \frac{z_1^{y_1} \cdots z_n^{y_n}}{y_1! \cdots y_n!} e^{-( z_1 + \cdots + z_n)} \ \ket{\mathbf{y}} = \sum_{\mathbf{y}} \frac{\mathbf{z}^{\mathbf{y}}}{\mathbf{y}!} e^{- \mathbf{z} \cdot \mathbf{1}} \ \ket{\mathbf{y}} \end{equation}
\item \begin{equation} \ket{\mathbf{z}} = \sum_{\mathbf{y}} \frac{[z_1 (\hat{\pi}_1 - 1)]^{y_1} \cdots [z_n (\hat{\pi}_n - 1)]^{y_n}}{y_1! \cdots y_n!} \ket{\mathbf{0}} = e^{\mathbf{z} \cdot (\hat{\boldsymbol{\pi}} - \mathbf{1})} \ket{\mathbf{0}} \end{equation}
\end{enumerate}
where $\hat{\boldsymbol{\pi}} := (\hat{\pi}_1, ..., \hat{\pi}_n)^T$. 
\end{proposition}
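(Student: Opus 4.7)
The plan is to establish (i) and (ii) independently, then observe they agree as a consistency check on the normalization. Both are essentially bookkeeping exercises with multi-index notation once one decodes the definition; the main obstacles are just careful index shifts and justifying the rearrangement of infinite sums.

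For (i), I would start from the ansatz $\ket{\mathbf{z}} = \sum_{\mathbf{y}} c(\mathbf{y}) \ket{\mathbf{y}}$ and impose the eigenstate condition $\hat{a}_j \ket{\mathbf{z}} = z_j \ket{\mathbf{z}}$ for each $j$. Using $\hat{a}_j \ket{\mathbf{y}} = y_j \ket{\mathbf{y} - \boldsymbol{\epsilon}_j}$ and reindexing the left-hand side by $\mathbf{y} \mapsto \mathbf{y} + \boldsymbol{\epsilon}_j$, matching coefficients of each $\ket{\mathbf{y}}$ on both sides yields the recurrence $(y_j + 1)\, c(\mathbf{y} + \boldsymbol{\epsilon}_j) = z_j\, c(\mathbf{y})$ for every $j = 1, \dots, n$. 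Iterating this from $\mathbf{y} = \mathbf{0}$ along each coordinate direction (the order does not matter, which is a small consistency check) gives $c(\mathbf{y}) = c(\mathbf{0})\, \mathbf{z}^{\mathbf{y}}/\mathbf{y}!$. Then the normalization $\sum_{\mathbf{y}} c(\mathbf{y}) = 1$ fixes $c(\mathbf{0}) = e^{-\mathbf{z}\cdot\mathbf{1}}$ via the product of geometric series $\sum_{\mathbf{y}} \mathbf{z}^{\mathbf{y}}/\mathbf{y}! = e^{z_1}\cdots e^{z_n} = e^{\mathbf{z}\cdot\mathbf{1}}$. This yields the formula in (i).

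For (ii), I would first observe that $[\hat{\pi}_j, \hat{\pi}_k] = 0$, so the $n$ exponentials $e^{z_j(\hat{\pi}_j - 1)}$ all commute with each other, and the identity operator commutes with everything. Hence we may factor
\begin{equation}
e^{\mathbf{z}\cdot(\hat{\boldsymbol{\pi}} - \mathbf{1})} \;=\; e^{-\mathbf{z}\cdot\mathbf{1}} \prod_{j=1}^{n} e^{z_j \hat{\pi}_j}.
\end{equation}
Expanding each $e^{z_j \hat{\pi}_j}$ as its operator power series, applying repeatedly the elementary identity $\hat{\pi}_j^{\,y_j} \ket{\mathbf{0}} = \ket{y_j \boldsymbol{\epsilon}_j}$ (from the definition of $\hat{\pi}_j$), and using that distinct $\hat{\pi}_j$ act on disjoint coordinates to get $\hat{\pi}_1^{y_1} \cdots \hat{\pi}_n^{y_n}\ket{\mathbf{0}} = \ket{\mathbf{y}}$, one obtains
\begin{equation}
e^{\mathbf{z}\cdot(\hat{\boldsymbol{\pi}} - \mathbf{1})} \ket{\mathbf{0}} \;=\; e^{-\mathbf{z}\cdot\mathbf{1}} \sum_{\mathbf{y}} \frac{\mathbf{z}^{\mathbf{y}}}{\mathbf{y}!} \ket{\mathbf{y}},
\end{equation}
which is precisely the expression from (i). The unordered sum form in (ii) is just the multi-index repackaging of the product of single-variable exponentials, which is valid because the factors commute.

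The main subtlety — really the only one — is the manipulation of infinite series: one must justify rearranging the double sum coming from $\prod_j \sum_{y_j}$ into $\sum_{\mathbf{y}}$, and implicitly argue that the formal operator series $e^{\mathbf{z}\cdot(\hat{\boldsymbol{\pi}} - \mathbf{1})}\ket{\mathbf{0}}$ defines an element of the Hilbert space in the sense of Eq.~\ref{eq:arbstate}. Since the resulting coefficients are those of a product Poisson distribution, absolute summability is immediate and all rearrangements are legal. I would flag this briefly but not belabor it, as the algebra is entirely standard once the notation is unpacked.
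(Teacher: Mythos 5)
Your proof is correct, but part (ii) takes a genuinely different route from the paper's. The paper proves (ii) by showing that the operator expression $e^{\mathbf{z}\cdot(\hat{\boldsymbol{\pi}}-\mathbf{1})}\ket{\mathbf{0}}$ satisfies the two defining properties of a coherent state: it verifies the eigenvalue equation $\hat{a}_j\ket{\mathbf{z}} = z_j\ket{\mathbf{z}}$ using the commutator identity $[\hat{a}_j, (\hat{\pi}_j-1)^y] = y(\hat{\pi}_j-1)^{y-1}$, then fixes the normalization by extracting the coefficient of $\ket{\mathbf{0}}$ (which is $e^{-\mathbf{z}\cdot\mathbf{1}}$, since every other term in the expansion carries a creation operator) and invoking uniqueness of simultaneous eigenstates up to scale. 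You instead expand the exponential directly on the vacuum: factoring out the scalar $e^{-\mathbf{z}\cdot\mathbf{1}}$ and using $\hat{\pi}_1^{y_1}\cdots\hat{\pi}_n^{y_n}\ket{\mathbf{0}} = \ket{\mathbf{y}}$ reproduces the series in (i) term by term, so (ii) equals (i) as an identity. Your route is shorter and avoids relying on the (unproved, though true) one-dimensionality of the joint eigenspace --- in fact your part (i) recurrence argument is exactly what would justify that uniqueness claim. What the paper's route buys is the commutator lemma itself, which it reuses immediately afterward in establishing that $\hat{a}_j^{\dagger} = \hat{\pi}_j - 1$ with respect to the Grassberger--Scheunert product; your proof would leave that lemma to be derived there instead. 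Your remarks on absolute summability and rearrangement are fine and, if anything, more careful than the paper.
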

\begin{proof}
Showing (i) is straightforward. To show (ii), first note that $[\hat{a}_j, (\hat{\pi}_j - 1)^y] = y (\hat{\pi}_j - 1)^{y - 1}$ for all $y \in \mathbb{N}$, a useful commutator result that can be proved by induction. Using this, along with the facts that $\hat{a}_j$ commutes with $\hat{\pi}_k$ for $k \neq j$ and the $\hat{\pi}_k$ all commute with each other, we have
\begin{equation}
\begin{split}
\hat{a}_j \ket{z} &=  e^{\mathbf{z} \cdot (\hat{\boldsymbol{\pi}} - \mathbf{1}) - z_j (\hat{\pi}_j - 1)} \sum_{y_j = 0}^{\infty} \frac{z_j^{y_j}}{y_j!} \hat{a}_j  (\hat{\pi}_j - 1)^{y_j} \ket{\mathbf{0}} \\
&=  e^{\mathbf{z} \cdot (\hat{\boldsymbol{\pi}} - \mathbf{1}) - z_j (\hat{\pi}_j - 1)} \sum_{y_j = 0}^{\infty} \frac{z_j^{y_j}}{y_j!} \left\{ (\hat{\pi}_j - 1)^{y_j} \hat{a}_j   +  y_j (\hat{\pi}_j - 1)^{y_j - 1} \right\} \ket{\mathbf{0}} \\
&= z_j \ e^{\mathbf{z} \cdot (\hat{\boldsymbol{\pi}} - \mathbf{1}) - z_j (\hat{\pi}_j - 1)} \sum_{y_j = 1}^{\infty} \frac{z_j^{y_j - 1} (\hat{\pi}_j - 1)^{y_j - 1}}{(y_j - 1)!} \ket{\mathbf{0}} \\
&= z_j \ket{\mathbf{z}} \ .
\end{split}
\end{equation}
Hence, this expression satisfies the eigenstate constraint. Noting that eigenstates are unique up to a proportionality constant, to show that it satisfies the normalization constraint (and hence is the same as the expression given by (i)), observe that
\begin{equation}
\begin{split}
e^{\mathbf{z} \cdot (\hat{\boldsymbol{\pi}} - \mathbf{1})} \ket{\mathbf{0}} &= \sum_{\mathbf{y}} \frac{z_1^{y_1} \cdots z_n^{y_n} \left[ (-1)^{y_1 + \cdots + y_n} + \cdots \right]}{y_1! \cdots y_n!} \ket{\mathbf{0}} \\
&= e^{- \mathbf{z} \cdot \mathbf{1}} \ket{\mathbf{0}} + \cdots
\end{split}
\end{equation}
i.e. the coefficient of $\ket{\mathbf{0}}$ is $e^{- \mathbf{z} \cdot \mathbf{1}}$, because every other term in the above expansion contains a creation operator. Because this is the same as the coefficient of $\ket{\mathbf{0}}$ in (i), we have equivalence. \qed
\end{proof}
In what follows, we will generally reserve the letters $\mathbf{z}$ and $\mathbf{p}$ for coherent states. 

%By imposing this constraint on an arbitrary state (Eq. \ref{eq:arbstate}), it is straightforward to show that the explicit form for $\ket{\mathbf{z}}$ must be proportional to
%\begin{equation}  \label{eq:cs}
%\ket{\mathbf{z}} = \sum_{\mathbf{y}} \frac{z_1^{y_1} \cdots z_n^{y_n}}{y_1! \cdots y_n!} e^{-( z_1 + \cdots + z_n)} \ \ket{\mathbf{y}} = \sum_{\mathbf{y}} \frac{\mathbf{z}^{\mathbf{y}}}{\mathbf{y}!} e^{- \mathbf{z} \cdot \mathbf{1}} \ \ket{\mathbf{y}}
%\end{equation}
%where the factor $e^{- \mathbf{z} \cdot \mathbf{1}}$ is a specific overall constant chosen for our later convenience, and where we again use the shorthand $\mathbf{x}! := x_1! \cdots x_n!$ and $\mathbf{z}^{\mathbf{y}} := z_1^{y_1} \cdots z_n^{y_n}$ to ease notation. We can also write a coherent state as
%\begin{equation} \label{eq:zpi}
%\ket{\mathbf{z}} = e^{\mathbf{z} \cdot (\hat{\boldsymbol{\pi}} - \mathbf{1})} \ket{\mathbf{0}}
%\end{equation}
%where $\hat{\boldsymbol{\pi}} := (\hat{\pi}_1, ..., \hat{\pi}_n)$. 

\subsection{Inner products}
\label{sec:innerproducts}

Now we will define two inner products on our Hilbert space: the exclusive product, and the Grassberger-Scheunert product. Both were introduced by Grassberger and Scheunert in a 1980 paper that clearly describes their motivation and properties \cite{grassberger1980}; we are calling their ``inclusive'' inner product the Grassberger-Scheunert product to recognize their contribution. 

Briefly, the exclusive product is useful for computing $P(\mathbf{x}, t)$, while the Grassberger-Scheunert product is useful for simplifying path integral calculations (specifically, we avoid having to perform a ``Doi shift'' \cite{cardy2008,weber2017}; see Eq. 3.4 of Peliti \cite{peliti1985} for an example of the Doi shift) and computing moments. We will use both inner products in solving the CME.

In this section and the following sections, the reader should keep in mind that $\mel{\mathbf{x}}{\hat{\mathcal{A}}(t)}{\mathbf{y}}$, where $\hat{\mathcal{A}}(t)$ is some possibly time-dependent operator, means the same as $\langle \mathbf{e}_{\mathbf{x}}, \hat{\mathcal{A}}(t) \mathbf{e}_{\mathbf{y}} \rangle$ in more standard notation. See Appendix \ref{sec:notation_comp} for more details.

%Define the exclusive product of two basis states by

\begin{definition}
Let $\ket{\mathbf{x}}$ and $\ket{\mathbf{y}}$ be basis vectors. Their \textit{exclusive product} is defined to be
\begin{equation} \label{eq:ex_basis}
\braket{\mathbf{x}}{\mathbf{y}}_{ex} := \mathbf{x}! \  \delta(\mathbf{x} - \mathbf{y}) \ .
\end{equation}
Extending this by linearity, define the exclusive product of two arbitrary states $\ket{\phi_1}$ and $\ket{\phi_2}$ as (c.f. Eq. \ref{eq:arbstate})
\begin{equation} \label{eq:ex}
\braket{\phi_2}{\phi_1}_{ex} =  \sum_{\mathbf{x}} \mathbf{x}! \  c_2^*(\mathbf{x}) c_1(\mathbf{x})  \ .
\end{equation}
\end{definition}

\begin{definition}
Let $\ket{\mathbf{x}}$ and $\ket{\mathbf{y}}$ be basis vectors, and define $\hat{\mathbf{a}} := (\hat{a}_1, ..., \hat{a}_n)^T$. Their \textit{Grassberger-Scheunert product} is defined to be
\begin{equation} \label{eq:in_basis}
\braket{\mathbf{x}}{\mathbf{y}} := \matrixel{\mathbf{x}}{e^{\hat{\boldsymbol{\pi}} \cdot \mathbf{1}} e^{\hat{\mathbf{a}} \cdot \mathbf{1}}}{\mathbf{y}}_{ex} = \sum_{\mathbf{k}} \frac{\mathbf{x}! \ \mathbf{y}!}{(\mathbf{x} - \mathbf{k})! \ (\mathbf{y} - \mathbf{k})! \ \mathbf{k}!}
\end{equation}
where the sum on the right is over all values of $\mathbf{k} \in \mathbb{N}^n$ with $k_j \leq \min(x_j, y_j)$ for all $j = 1, ..., n$. Extending this by linearity, define the Grassberger-Scheunert product of two arbitrary states $\ket{\phi_1}$ and $\ket{\phi_2}$ as
\begin{equation} \label{eq:in}
\braket{\phi_2}{\phi_1} =  \matrixel{\phi_2}{e^{\hat{\boldsymbol{\pi}} \cdot \mathbf{1}} e^{\hat{\mathbf{a}} \cdot \mathbf{1}}}{\phi_1}_{ex}   \ .
\end{equation}
\end{definition}

While it is not obvious just from looking at them, it is straightforward to show that the operator-based and sum-based definitions are equivalent (see Grassberger and Scheunert \cite{grassberger1980} and the appendix to Peliti \cite{peliti1985}). 

The primary reason these inner products are useful to define is that the creation and annihilation operators behave well under Hermitian conjugation with respect to them. 

\begin{proposition}[Hermitian conjugates of creation and annihilation operators]
Let $\ket{\mathbf{x}}$ and $\ket{\mathbf{y}}$ be basis vectors. With respect to the exclusive product, $\hat{a}_j$ and $\hat{\pi}_j$ are Hermitian conjugates of each other for all $j = 1, ..., n$, i.e.
\begin{equation}
\begin{split}
\left( \hat{a}_j \right)^{\dag} &= \hat{\pi}_j \\
\mel{\mathbf{x}}{\hat{a}_j}{\mathbf{y}}_{ex} &= \mel{\mathbf{y}}{\hat{\pi}_j}{\mathbf{x}}_{ex} \ .
\end{split}
\end{equation}
With respect to the Grassberger-Scheunert product, the Hermitian conjugate of $\hat{a}_j$ is
\begin{equation}
\left( \hat{a}_j \right)^{\dag} = \hat{\pi}_j - 1 
\end{equation}
for all $j = 1, ..., n$, i.e.
\begin{equation}
\mel{\mathbf{x}}{\hat{a}_j}{\mathbf{y}} = \mel{\mathbf{y}}{\hat{\pi}_j - 1}{\mathbf{x}} \ .
\end{equation}
\end{proposition}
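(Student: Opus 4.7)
The plan is to handle the two claims in sequence, proving the exclusive-product adjoint first and then leveraging it (together with one operator identity) to deduce the Grassberger-Scheunert adjoint.

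For the exclusive product identity, I would evaluate both matrix elements directly on basis vectors using Eqs.~\ref{eq:aops} and \ref{eq:ex_basis}. The left side gives $\mel{\mathbf{x}}{\hat{a}_j}{\mathbf{y}}_{ex} = y_j\,\mathbf{x}!\,\delta(\mathbf{x} - \mathbf{y} + \boldsymbol{\epsilon}_j)$ and the right gives $\mel{\mathbf{y}}{\hat{\pi}_j}{\mathbf{x}}_{ex} = \mathbf{y}!\,\delta(\mathbf{y} - \mathbf{x} - \boldsymbol{\epsilon}_j)$. Both expressions are supported on the single configuration $\mathbf{y} = \mathbf{x} + \boldsymbol{\epsilon}_j$, where $y_j = x_j + 1$ forces $y_j\,\mathbf{x}! = \mathbf{y}!$. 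This is essentially bookkeeping, so no obstacle is anticipated here.

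For the Grassberger-Scheunert claim, my plan is to reduce it to the exclusive adjoint plus a single commutator calculation. Using the operator form of the inner product in Eq.~\ref{eq:in}, I would rewrite $\mel{\mathbf{x}}{\hat{a}_j}{\mathbf{y}} = \mel{\mathbf{x}}{e^{\hat{\boldsymbol{\pi}} \cdot \mathbf{1}} e^{\hat{\mathbf{a}} \cdot \mathbf{1}} \hat{a}_j}{\mathbf{y}}_{ex}$, then apply the exclusive adjoint just established to swap bra and ket. Since Hermitian conjugation (w.r.t.\ the exclusive product) reverses operator order and exchanges $\hat{a}_k \leftrightarrow \hat{\pi}_k$, and since $(e^{\hat{\mathbf{a}} \cdot \mathbf{1}})^{\dagger} = e^{\hat{\boldsymbol{\pi}} \cdot \mathbf{1}}$ and vice versa, this yields $\mel{\mathbf{y}}{\hat{\pi}_j\,e^{\hat{\boldsymbol{\pi}} \cdot \mathbf{1}} e^{\hat{\mathbf{a}} \cdot \mathbf{1}}}{\mathbf{x}}_{ex}$. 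Because $\hat{\pi}_j$ commutes with every $\hat{\pi}_k$, it slides past $e^{\hat{\boldsymbol{\pi}} \cdot \mathbf{1}}$ for free, reducing the entire claim to moving $\hat{\pi}_j$ past $e^{\hat{\mathbf{a}} \cdot \mathbf{1}}$.

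The crux is therefore the single operator identity $\hat{\pi}_j\,e^{\hat{\mathbf{a}} \cdot \mathbf{1}} = e^{\hat{\mathbf{a}} \cdot \mathbf{1}}(\hat{\pi}_j - 1)$, which I would derive from the canonical commutator $[\hat{a}_k, \hat{\pi}_j] = \delta(j - k)$ via Hadamard's lemma (or equivalently a short induction on $\hat{a}_k^n$); the nested commutators past first order vanish, so the series truncates. Substituting produces $\mel{\mathbf{y}}{e^{\hat{\boldsymbol{\pi}} \cdot \mathbf{1}} e^{\hat{\mathbf{a}} \cdot \mathbf{1}}(\hat{\pi}_j - 1)}{\mathbf{x}}_{ex} = \mel{\mathbf{y}}{\hat{\pi}_j - 1}{\mathbf{x}}$, closing the argument. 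The only subtlety worth a sentence of comment is that the manipulations involve the formal operator exponentials $e^{\hat{\boldsymbol{\pi}} \cdot \mathbf{1}}$ and $e^{\hat{\mathbf{a}} \cdot \mathbf{1}}$ acting on an infinite-dimensional space; however, each basis vector has finite total occupation number, so every matrix element involved truncates to a finite sum, and convergence is not a genuine issue.
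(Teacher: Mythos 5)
Your proposal is correct and follows essentially the same route as the paper: both rest on the operator form $\braket{\phi_2}{\phi_1}=\mel{\phi_2}{e^{\hat{\boldsymbol{\pi}}\cdot\mathbf{1}}e^{\hat{\mathbf{a}}\cdot\mathbf{1}}}{\phi_1}_{ex}$, the exclusive-product adjoint, and a single ladder-operator/exponential commutation identity that produces the shift by $-1$. The only (harmless) difference is that you take the full exclusive adjoint first and then push $\hat{\pi}_j$ rightward through $e^{\hat{\mathbf{a}}\cdot\mathbf{1}}$, landing directly on $\mel{\mathbf{y}}{\hat{\pi}_j-1}{\mathbf{x}}$, whereas the paper pushes $\hat{a}_j$ leftward through $e^{\hat{\pi}_j}$ and then invokes the symmetry of $\braket{\mathbf{x}}{\mathbf{y}}$ at the end.
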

\begin{proof}
Showing that $\hat{\pi}_j$ and $\hat{a}_j$ are Hermitian conjugates with respect to the exclusive product is straightforward given their definitions, so we will show that $\left( \hat{a}_j \right)^{\dag} = \hat{\pi}_j - 1$ with respect to the Grassberger-Scheunert product. 

Recall the result mentioned in the proof of Proposition \ref{prop:cs_forms} that  $[\hat{a}_j, (\hat{\pi}_j - 1)^y] = y (\hat{\pi}_j - 1)^{y - 1}$ for all $y \in \mathbb{N}$. Using just the same argument, one can show  $[\hat{a}_j, \hat{\pi}_j^y] = y (\hat{\pi}_j)^{y - 1}$ for all $y \in \mathbb{N}$. This, in turn, can be used to prove that
\begin{equation}
e^{\hat{\pi}_j} \hat{a}_j  = (\hat{a}_j - 1) e^{\hat{\pi}_j} \ .
\end{equation}
Let $\ket{\mathbf{x}}$ and $\ket{\mathbf{y}}$ be arbitrary basis vectors. Now we can say that
\begin{equation}
\begin{split}
\mel{\mathbf{x}}{\hat{a}_j}{\mathbf{y}} =& \mel{\mathbf{x}}{e^{\hat{\boldsymbol{\pi}} \cdot \mathbf{1}} e^{\hat{\mathbf{a}} \cdot \mathbf{1}} \hat{a}_j}{\mathbf{y}}_{ex} \\
=& \mel{\mathbf{x}}{e^{\hat{\pi}_j} \hat{a}_j \ e^{\hat{\boldsymbol{\pi}} \cdot \mathbf{1} - \hat{\pi}_j} e^{\hat{\mathbf{a}} \cdot \mathbf{1}} }{\mathbf{y}}_{ex} \\
=& \mel{\mathbf{x}}{ (\hat{a}_j - 1) e^{\hat{\pi}_j}  \ e^{\hat{\boldsymbol{\pi}} \cdot \mathbf{1} - \hat{\pi}_j} e^{\hat{\mathbf{a}} \cdot \mathbf{1}} }{\mathbf{y}}_{ex} \\
=& \mel{\mathbf{x} + \boldsymbol{\epsilon_j}}{ e^{\hat{\boldsymbol{\pi}} \cdot \mathbf{1}} e^{\hat{\mathbf{a}} \cdot \mathbf{1}} }{\mathbf{y}}_{ex} - \mel{\mathbf{x}}{ e^{\hat{\boldsymbol{\pi}} \cdot \mathbf{1}} e^{\hat{\mathbf{a}} \cdot \mathbf{1}} }{\mathbf{y}}_{ex}  \\
=& \braket{\mathbf{x} + \boldsymbol{\epsilon_j}}{\mathbf{y}} - \braket{\mathbf{x}}{\mathbf{y}}
\end{split}
\end{equation}
where we have used the fact that $\hat{\pi}_j$ and $\hat{a}_j$ are Hermitian conjugates with respect to the exclusive product in the next to last step. But this is the same as
\begin{equation}
\begin{split}
\mel{\mathbf{y}}{\hat{\pi}_j - 1}{\mathbf{x}} =& \braket{\mathbf{y}}{\mathbf{x} + \boldsymbol{\epsilon_j}} - \braket{\mathbf{y}}{\mathbf{x}}  
\end{split}
\end{equation}
because the Grassberger-Scheunert product of two basis vectors is symmetric. Hence, $\hat{a}_j$ and $\hat{\pi}_j - 1$ are Hermitian conjugates with respect to the Grassberger-Scheunert product. \qed
\end{proof}
Now let us compute some inner products that we will use later. 

\begin{proposition}[Useful inner products]
Let $\ket{\mathbf{x}}$ be a basis vector, and let $\ket{\mathbf{z}}$ and $\ket{\mathbf{p}}$ be coherent states. Then
\begin{enumerate}[label=(\roman*)]
\item \begin{equation} 
\braket{\mathbf{x}}{\mathbf{z}}_{ex}  = \mathbf{z}^{\mathbf{x}} e^{- \mathbf{z} \cdot \mathbf{1}} 
\end{equation}
\item \begin{equation} 
\braket{\mathbf{x}}{\mathbf{z}} =  \left( \mathbf{1} + \mathbf{z} \right)^{\mathbf{x}}
\end{equation}
\item \begin{equation} 
\braket{\mathbf{p}}{\mathbf{z}}_{ex} = e^{\mathbf{p}^* \cdot \mathbf{z} - \left( \mathbf{p}^* + \mathbf{z} \right) \cdot \mathbf{1}} \ .
\end{equation}
\item \begin{equation}
\braket{\mathbf{p}}{\mathbf{z}} =  e^{\mathbf{p}^* \cdot \mathbf{z}}
\end{equation}
\end{enumerate}
Moreover, we remind the reader that other results (e.g. $\braket{\mathbf{z}}{ \mathbf{x}} = (\mathbf{1} + \mathbf{z}^*)^{\mathbf{x}}$) can be obtained from the above ones by taking a complex conjugate.
\end{proposition}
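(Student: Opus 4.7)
The plan is to prove the four identities in order, exploiting progressively more structure: (i) follows from pure definition-unpacking, (ii) uses (i) plus the operator definition of the Grassberger-Scheunert product, (iii) is a direct computation from the series form of coherent states, and (iv) combines (iii) with the eigenstate property. Throughout, the key algebraic fact is that $\hat{a}_j$ and $\hat{\pi}_j$ are Hermitian conjugates with respect to the exclusive product, and $\hat{a}_j\ket{\mathbf{z}} = z_j\ket{\mathbf{z}}$.

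For (i), I would insert Proposition \ref{prop:cs_forms}(i) into the definition \eqref{eq:ex}: since $\ket{\mathbf{z}} = \sum_\mathbf{y} \tfrac{\mathbf{z}^\mathbf{y}}{\mathbf{y}!}e^{-\mathbf{z}\cdot\mathbf{1}}\ket{\mathbf{y}}$, the exclusive-product formula \eqref{eq:ex_basis} collapses the $\mathbf{y}$-sum via $\mathbf{x}!\,\delta(\mathbf{x}-\mathbf{y})$, leaving $\mathbf{z}^{\mathbf{x}}e^{-\mathbf{z}\cdot\mathbf{1}}$ in one line.

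For (ii), I would start from the operator definition $\braket{\mathbf{x}}{\mathbf{z}} = \mel{\mathbf{x}}{e^{\hat{\boldsymbol{\pi}}\cdot\mathbf{1}}e^{\hat{\mathbf{a}}\cdot\mathbf{1}}}{\mathbf{z}}_{ex}$. The eigenstate property $\hat{a}_j\ket{\mathbf{z}}=z_j\ket{\mathbf{z}}$ pulls out the scalar $e^{\mathbf{z}\cdot\mathbf{1}}$ on the right. Then I would move $e^{\hat{\boldsymbol{\pi}}\cdot\mathbf{1}}$ onto the bra via $(\hat{\pi}_j)^{\dagger} = \hat{a}_j$ (with respect to the exclusive product) and expand $e^{\hat{\mathbf{a}}\cdot\mathbf{1}}\ket{\mathbf{x}} = \sum_{\mathbf{0}\le\mathbf{k}\le\mathbf{x}} \binom{\mathbf{x}}{\mathbf{k}}\ket{\mathbf{x}-\mathbf{k}}$, using $\hat{\mathbf{a}}^{\mathbf{k}}\ket{\mathbf{x}} = \tfrac{\mathbf{x}!}{(\mathbf{x}-\mathbf{k})!}\ket{\mathbf{x}-\mathbf{k}}$. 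Finally I apply (i) to each $\braket{\mathbf{x}-\mathbf{k}}{\mathbf{z}}_{ex}$; the $e^{\mathbf{z}\cdot\mathbf{1}}$ and $e^{-\mathbf{z}\cdot\mathbf{1}}$ cancel, and the resulting $\sum_{\mathbf{k}\le\mathbf{x}}\binom{\mathbf{x}}{\mathbf{k}}\mathbf{z}^{\mathbf{x}-\mathbf{k}}$ equals $(\mathbf{1}+\mathbf{z})^{\mathbf{x}}$ by the multinomial theorem.

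For (iii), I would again substitute Proposition \ref{prop:cs_forms}(i) for both $\ket{\mathbf{p}}$ and $\ket{\mathbf{z}}$ into \eqref{eq:ex}. The $\mathbf{x}!$ weights cancel the $1/\mathbf{x}!$ from one coherent state, giving
\begin{equation*}
\braket{\mathbf{p}}{\mathbf{z}}_{ex} = e^{-(\mathbf{p}^*+\mathbf{z})\cdot\mathbf{1}} \sum_{\mathbf{x}} \frac{(\mathbf{p}^*)^{\mathbf{x}}\,\mathbf{z}^{\mathbf{x}}}{\mathbf{x}!},
\end{equation*}
and the factorized sum is $\prod_{j=1}^n e^{p_j^*z_j} = e^{\mathbf{p}^*\cdot\mathbf{z}}$. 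For (iv), I would use the operator definition of the Grassberger-Scheunert product, apply $\hat{a}_j\ket{\mathbf{z}} = z_j\ket{\mathbf{z}}$ on the right to produce $e^{\mathbf{z}\cdot\mathbf{1}}$, then note that taking the Hermitian conjugate of $e^{\hat{\mathbf{a}}\cdot\mathbf{1}}\ket{\mathbf{p}} = e^{\mathbf{p}\cdot\mathbf{1}}\ket{\mathbf{p}}$ under the exclusive product yields $\bra{\mathbf{p}}e^{\hat{\boldsymbol{\pi}}\cdot\mathbf{1}} = e^{\mathbf{p}^*\cdot\mathbf{1}}\bra{\mathbf{p}}$. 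Combining these prefactors with (iii) gives $\braket{\mathbf{p}}{\mathbf{z}} = e^{(\mathbf{p}^*+\mathbf{z})\cdot\mathbf{1}} \cdot e^{\mathbf{p}^*\cdot\mathbf{z} - (\mathbf{p}^*+\mathbf{z})\cdot\mathbf{1}} = e^{\mathbf{p}^*\cdot\mathbf{z}}$.

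None of these steps is genuinely difficult; the main hazard is bookkeeping with componentwise multi-index notation and keeping the complex conjugate on the bra correct when transferring exponentials between bra and ket. The only subtle point is remembering that $(e^{\hat{\boldsymbol{\pi}}\cdot\mathbf{1}})^{\dagger}$ under the exclusive product conjugates the coefficients as well, which matters for (iv) since the coherent-state eigenvalues are complex.
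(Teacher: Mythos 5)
Your proof is correct and follows essentially the same route as the paper's: (i) and (iii) by expanding the coherent states in the number basis and collapsing the sums with $\mathbf{x}!\,\delta(\mathbf{x}-\mathbf{y})$, and (iv) by peeling the factor $e^{(\mathbf{p}^*+\mathbf{z})\cdot\mathbf{1}}$ off the operator definition of the Grassberger-Scheunert product and invoking (iii). The only place the mechanics differ is (ii): the paper absorbs $e^{\hat{\boldsymbol{\pi}}\cdot\mathbf{1}}$ into the operator form of the coherent state, recognizing $e^{(\mathbf{z}+\mathbf{1})\cdot(\hat{\boldsymbol{\pi}}-\mathbf{1})}\ket{\mathbf{0}}$ as a multiple of $\ket{\mathbf{z}+\mathbf{1}}$ and then applying (i), whereas you move $e^{\hat{\boldsymbol{\pi}}\cdot\mathbf{1}}$ onto the bra and expand $e^{\hat{\mathbf{a}}\cdot\mathbf{1}}\ket{\mathbf{x}}$ term by term --- an equally short and valid computation, with the minor quibble that the final identity $\sum_{\mathbf{k}}\binom{\mathbf{x}}{\mathbf{k}}\mathbf{z}^{\mathbf{x}-\mathbf{k}}=(\mathbf{1}+\mathbf{z})^{\mathbf{x}}$ is the componentwise binomial theorem rather than the multinomial theorem.
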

\begin{proof}
First, the exclusive product of a basis state $\ket{\mathbf{x}}$ with a coherent state $\ket{\mathbf{z}}$ is
\begin{equation} \label{eq:zex}
\braket{\mathbf{x}}{\mathbf{z}}_{ex} = \sum_{\mathbf{y}} \frac{\mathbf{z}^{\mathbf{y}}}{\mathbf{y}!} e^{- \mathbf{z} \cdot \mathbf{1}} \ \braket{\mathbf{x}}{\mathbf{y}}_{ex} = \sum_{\mathbf{y}} \frac{\mathbf{z}^{\mathbf{y}}}{\mathbf{y}!} e^{- \mathbf{z} \cdot \mathbf{1}} \ \mathbf{x}! \  \delta_{\mathbf{x} \mathbf{y}} = \mathbf{z}^{\mathbf{x}} e^{- \mathbf{z} \cdot \mathbf{1}} \ .
\end{equation}
Next, the Grassberger-Scheunert product of a basis state $\ket{\mathbf{x}}$ with a coherent state $\ket{\mathbf{z}}$ is
\begin{equation} \label{eq:zin}
\begin{split}
\braket{\mathbf{x}}{\mathbf{z}} =& \matrixel{\mathbf{x}}{e^{\hat{\boldsymbol{\pi}} \cdot \mathbf{1}} e^{\hat{\mathbf{a}} \cdot \mathbf{1}}}{\mathbf{z}}_{ex}   \\
=& e^{\mathbf{z} \cdot \mathbf{1}} \matrixel{\mathbf{x}}{e^{\hat{\boldsymbol{\pi}} \cdot \mathbf{1}}}{\mathbf{z}}_{ex} \\
=& e^{(\mathbf{z} + \mathbf{1}) \cdot \mathbf{1}} \matrixel{\mathbf{x}}{e^{\left(\mathbf{z} + \mathbf{1}\right) \cdot (\hat{\boldsymbol{\pi}} - \mathbf{1})}}{\mathbf{0}}_{ex} \\
=& e^{(\mathbf{z} + \mathbf{1}) \cdot \mathbf{1}}  \braket{\mathbf{x}}{\mathbf{z} + \mathbf{1}}_{ex} \\
=& \left( \mathbf{1} + \mathbf{z} \right)^{\mathbf{x}}
\end{split}
\end{equation}
where we have used that $\ket{\mathbf{z}}$ is an eigenstate of the annihilation operators $\hat{a}_j$, the operator representation of $\ket{\mathbf{z}}$ from Proposition \ref{prop:cs_forms}, and Eq. \ref{eq:zex}. The exclusive product of two coherent states is
\begin{equation} \label{eq:zzex}
\braket{\mathbf{p}}{\mathbf{z}}_{ex} = \sum_{\mathbf{y}} \frac{\left( \mathbf{p}^* \right)^{\mathbf{y}}}{\mathbf{y}!} e^{- \mathbf{p}^* \cdot \mathbf{1}} \ \braket{\mathbf{y}}{\mathbf{z}}_{ex} = \sum_{\mathbf{y}} \frac{\left( \mathbf{p}^* \right)^{\mathbf{y}} \mathbf{z}^{\mathbf{y}} }{\mathbf{y}!} e^{- \left( \mathbf{p}^* + \mathbf{z} \right) \cdot \mathbf{1}} = e^{\mathbf{p}^* \cdot \mathbf{z} - \left( \mathbf{p}^* + \mathbf{z} \right) \cdot \mathbf{1}} \ .
\end{equation}
Finally, the Grassberger-Scheunert product of two coherent states is
\begin{equation}
\braket{\mathbf{p}}{\mathbf{z}} = \matrixel{\mathbf{p}}{e^{\hat{\boldsymbol{\pi}} \cdot \mathbf{1}} e^{\hat{\mathbf{a}} \cdot \mathbf{1}}}{\mathbf{z}}_{ex}  = e^{\left( \mathbf{p}^* + \mathbf{z} \right) \cdot \mathbf{1}} \braket{\mathbf{p}}{\mathbf{z}}_{ex} =  e^{\mathbf{p}^* \cdot \mathbf{z}}
\end{equation}
where we have used Eq. \ref{eq:zzex}.  \qed
\end{proof}

\subsection{Resolution of the identity}
\label{sec:roi}

The phrase `resolution of the identity' refers to a useful way to write the identity operator. In our case, we would like to write the identity operator in terms of coherent states, which will allow us to construct the Doi-Peliti path integral. The relevant proposition, using coherent states and the Grassberger-Scheunert product, is the following.

%The coherent states we defined, along with the Grassberger-Scheunert product, can be used to construct a resolution of the identity. It is

\begin{proposition}[Identity operator in terms of coherent states] \label{prop:roi}
Let $\ket{\mathbf{x}}$ be a basis vector, and $\ket{\mathbf{z}}$ and $\ket{- i\mathbf{p}}$ be coherent states. Then
\begin{equation}
\ket{\mathbf{x}} = \int_{[0, \infty)^n} d\mathbf{z} \int_{\mathbb{R}^n} \frac{d\mathbf{p}}{(2\pi)^n}   \ \ket{\mathbf{z}} \braket{- i \mathbf{p}}{\mathbf{x}} e^{-i \mathbf{z} \cdot \mathbf{p}} 
\end{equation}
i.e. 
\begin{equation} \label{eq:ROI}
1 = \int_{[0, \infty)^n} d\mathbf{z} \int_{\mathbb{R}^n} \frac{d\mathbf{p}}{(2\pi)^n}   \ \ket{\mathbf{z}} \bra{- i \mathbf{p}} e^{-i \mathbf{z} \cdot \mathbf{p}} 
\end{equation}
is the identity operator (because the relationship holds for basis vectors, it holds for all states by linearity). 
\end{proposition}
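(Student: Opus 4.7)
My strategy is to pair both sides of the claimed identity with an arbitrary basis bra $\bra{\mathbf{m}}$ under the exclusive inner product and show that the two resulting scalars agree; since the exclusive product distinguishes basis vectors via $\braket{\mathbf{m}}{\mathbf{x}}_{ex}=\mathbf{m}!\,\delta_{\mathbf{m}\mathbf{x}}$, matching on every $\bra{\mathbf{m}}$ is sufficient by linearity. The LHS trivially becomes $\mathbf{m}!\,\delta_{\mathbf{m}\mathbf{x}}$. For the RHS I would substitute $\braket{\mathbf{m}}{\mathbf{z}}_{ex}=\mathbf{z}^{\mathbf{m}}e^{-\mathbf{z}\cdot\mathbf{1}}$ from the preceding ``Useful inner products'' proposition, together with $\braket{-i\mathbf{p}}{\mathbf{x}}=(\mathbf{1}+i\mathbf{p})^{\mathbf{x}}$, which follows by complex conjugating $\braket{\mathbf{x}}{\mathbf{z}}=(\mathbf{1}+\mathbf{z})^{\mathbf{x}}$ at $\mathbf{z}=-i\mathbf{p}$. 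The integrand then factorizes coordinate-wise, reducing the claim to the one-dimensional identity $I_{m,x}:=\int_{0}^{\infty}dz\int_{-\infty}^{\infty}\frac{dp}{2\pi}\,z^{m}e^{-z}(1+ip)^{x}e^{-izp}=m!\,\delta_{mx}$.

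To evaluate $I_{m,x}$ I would carry out the $p$-integration first. Expanding $(1+ip)^{x}=\sum_{k=0}^{x}\binom{x}{k}(ip)^{k}$ and repeatedly differentiating the standard Fourier identity $\int\frac{dp}{2\pi}e^{-izp}=\delta(z)$ gives $\int\frac{dp}{2\pi}(ip)^{k}e^{-izp}=(-1)^{k}\delta^{(k)}(z)$, so $I_{m,x}=\sum_{k=0}^{x}\binom{x}{k}(-1)^{k}\int_{0}^{\infty}z^{m}e^{-z}\delta^{(k)}(z)\,dz$. The Taylor series $z^{m}e^{-z}=\sum_{j\geq 0}(-1)^{j}z^{m+j}/j!$ shows that the $k$th derivative of $f(z)=z^{m}e^{-z}$ at the origin equals $(-1)^{k-m}\binom{k}{m}m!$ for $k\geq m$ and vanishes otherwise; applying the distributional rule $\int_{0}^{\infty}f(z)\delta^{(k)}(z)\,dz=(-1)^{k}f^{(k)}(0)$ reduces $I_{m,x}$ to $(-1)^{m}m!\sum_{k=m}^{x}\binom{x}{k}\binom{k}{m}(-1)^{k}$. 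I would then apply the trinomial identity $\binom{x}{k}\binom{k}{m}=\binom{x}{m}\binom{x-m}{k-m}$ followed by $\sum_{j=0}^{N}\binom{N}{j}(-1)^{j}=0^{N}$, collapsing the sum to $m!\binom{x}{m}\cdot 0^{x-m}=m!\,\delta_{mx}$, matching the LHS.

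The one delicate step will be the interpretation of $\int_{0}^{\infty}f(z)\delta^{(k)}(z)\,dz$, since the support of the delta derivatives sits exactly at the boundary of the integration region. Reversing the order of integration is tempting---doing $z$ first produces the convergent Gamma integral $m!/(1+ip)^{m+1}$---but the residual $p$-integral $m!\int\frac{dp}{2\pi}(1+ip)^{x-m-1}$ fails to converge absolutely and in fact returns a spurious $\tfrac{1}{2}$ on the diagonal under principal-value regularization, so the Fubini-reversed order is incorrect here. I plan to justify the boundary-delta prescription either by adopting the convention $\int_{0}^{\infty}f(z)\delta^{(k)}(z)\,dz:=(-1)^{k}f^{(k)}(0)$ outright, or equivalently by regularizing the $z$-integrand with a factor $e^{-\epsilon z}$, extending the $z$-integration to all of $\mathbb{R}$ with a Heaviside cutoff, performing the $p$-integral, and then sending $\epsilon\to 0^{+}$. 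Once this interpretation is fixed, the combinatorics above finishes the job and the off-diagonal terms die cleanly.
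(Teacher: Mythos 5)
Your proposal is correct and follows the same essential strategy as the paper's proof: reduce the claim to a scalar identity by extracting the coefficient of each basis vector (your pairing with $\bra{\mathbf{m}}$ under the exclusive product is exactly the paper's expansion of $\ket{\mathbf{z}}$ over $\ket{\mathbf{y}}$), then perform the $\mathbf{p}$-integral first so that the Fourier representation of the delta function localizes the $\mathbf{z}$-integral at the origin. The only differences are in execution: the paper writes $(\mathbf{1}+i\mathbf{p})^{\mathbf{x}}e^{-\mathbf{z}\cdot(\mathbf{1}+i\mathbf{p})}=(-d/d\mathbf{z})^{\mathbf{x}}e^{-\mathbf{z}\cdot(\mathbf{1}+i\mathbf{p})}$ and transfers the derivatives onto $\mathbf{z}^{\mathbf{y}}$ by integration by parts, which avoids your binomial expansion and the Vandermonde-type resummation, while you expand $(1+ip)^x$ term by term and close with the identity $\binom{x}{k}\binom{k}{m}=\binom{x}{m}\binom{x-m}{k-m}$; the two computations are equivalent. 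Where your write-up genuinely adds value is the explicit treatment of the boundary-supported delta derivatives: the paper silently adopts the convention $\int_{[0,\infty)^n}f(\mathbf{z})\,\delta(\mathbf{z})\,d\mathbf{z}=f(\mathbf{0})$, whereas you flag that the Fubini-reversed order of integration is conditionally convergent on the diagonal and yields a spurious factor of $\tfrac{1}{2}$ under principal-value regularization, and you propose a consistent regularization. That observation is correct and makes the proof more honest than the paper's, at the cost of some extra combinatorics.
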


\begin{proof}
To establish Eq. \ref{eq:ROI}, first observe that
\begin{equation} \label{eq:ROIwork1}
\begin{split}
& \int_{[0, \infty)^n} d\mathbf{z} \int_{\mathbb{R}^n} \frac{d\mathbf{p}}{(2\pi)^n}   \ \ket{\mathbf{z}} \braket{- i \mathbf{p}}{\mathbf{x}} e^{-i \mathbf{z} \cdot \mathbf{p}} \\
=& \int_{[0, \infty)^n} d\mathbf{z} \int_{\mathbb{R}^n} \frac{d\mathbf{p}}{(2\pi)^n}   \ \ket{\mathbf{z}} \left( \mathbf{1} + i \mathbf{p} \right)^{\mathbf{x}} e^{-i \mathbf{z} \cdot \mathbf{p}} \\
=& \sum_{\mathbf{y}} \frac{1}{\mathbf{y}!} \ \ket{\mathbf{y}} \int_{[0, \infty)^n} d\mathbf{z} \ \mathbf{z}^{\mathbf{y}} \int_{\mathbb{R}^n} \frac{d\mathbf{p}}{(2\pi)^n}   \ \left( \mathbf{1} + i \mathbf{p} \right)^{\mathbf{x}} e^{-\mathbf{z} \cdot \left( \mathbf{1} + i \mathbf{p} \right)} \\
=& \sum_{\mathbf{y}} \frac{1}{\mathbf{y}!} \ \ket{\mathbf{y}} \int_{[0, \infty)^n} d\mathbf{z} \ \mathbf{z}^{\mathbf{y}} \left( - \frac{d}{d\mathbf{z}} \right)^\mathbf{x} \int_{\mathbb{R}^n} \frac{d\mathbf{p}}{(2\pi)^n}   \ e^{-\mathbf{z} \cdot \left( \mathbf{1} + i \mathbf{p} \right)} \\
\end{split} 
\end{equation}
for all basis kets $\ket{\mathbf{x}}$, where we remind the reader of the shorthand
\begin{equation} \label{eq:ddzshorthand}
\left( \frac{d}{d\mathbf{z}} \right)^\mathbf{x} := \left( \frac{d}{dz_1} \right)^{x_1} \cdots \left( \frac{d}{dz_n} \right)^{x_n}
\end{equation}
used to ease notation. Integrate the last line of Eq. \ref{eq:ROIwork1} by parts to obtain
\begin{equation} \label{eq:ROIwork2}
\begin{split}
& \sum_{\mathbf{y}} \frac{1}{\mathbf{y}!} \ \ket{\mathbf{y}} \int_{[0, \infty)^n} d\mathbf{z} \ \left( \frac{d}{d\mathbf{z}} \right)^\mathbf{x} \left[ \mathbf{z}^{\mathbf{y}} \right]  \ e^{-\mathbf{z}} \delta(\mathbf{z}) \\
=& \sum_{\mathbf{y}} \frac{1}{\mathbf{y}!} \ \ket{\mathbf{y}}  \ \mathbf{x}! \ \delta(\mathbf{y} - \mathbf{x}) \\
=& \ket{\mathbf{x}}
\end{split} 
\end{equation}
which confirms that Eq. \ref{eq:ROI} is a resolution of the identity. \qed
\end{proof}

We can use the coherent state resolution of the identity (Eq. \ref{eq:ROI}) we just constructed to rewrite our formal solution for $\ket{\psi(t)}$ (Eq. \ref{eq:EOMformal}). Applying it twice, we have the following result.

\begin{corollary}[Generating function in terms of coherent states]   \label{cor:Utogf}
The generating function can be written in the form
\begin{equation} \label{eq:psiU}
\ket{\psi(t)} = \int  \frac{d\mathbf{z}^f d\mathbf{p}^f}{(2\pi)^n} \frac{d\mathbf{z}^0 d\mathbf{p}^0}{(2\pi)^n}  \ \ket{\mathbf{z}^f} \matrixel{- i \mathbf{p}^f}{\hat{U}(t, t_0)}{\mathbf{z}^0} \braket{- i \mathbf{p}^0}{\psi(t_0)}   e^{-i \mathbf{p}^0 \cdot \mathbf{z}^0 -i \mathbf{p}^f \cdot \mathbf{z}^f } \ .
\end{equation}
\end{corollary}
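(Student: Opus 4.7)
The plan is to obtain Eq.~\ref{eq:psiU} by straightforward application of the resolution of the identity established in Proposition~\ref{prop:roi}, inserted at two distinct positions in the formal expression for $\ket{\psi(t)}$ given by Eq.~\ref{eq:EOM_U}. Since Eq.~\ref{eq:ROI} is an operator identity, multiplying any state or operator by it on the left or right leaves the state or operator unchanged, so the manipulation is algebraically trivial; the only real work is notational bookkeeping to make sure the two inserted copies of the identity use distinct dummy integration variables.

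First I would start from $\ket{\psi(t)} = \hat{U}(t,t_0)\ket{\psi(t_0)}$ and rewrite it as $\ket{\psi(t)} = 1 \cdot \hat{U}(t,t_0) \cdot 1 \cdot \ket{\psi(t_0)}$. Next, I would substitute the first factor of $1$ using Eq.~\ref{eq:ROI} with dummy variables $(\mathbf{z}^f,\mathbf{p}^f)$, placed immediately to the left of $\hat{U}(t,t_0)$, and substitute the second factor of $1$ using Eq.~\ref{eq:ROI} with dummy variables $(\mathbf{z}^0,\mathbf{p}^0)$, placed between $\hat{U}(t,t_0)$ and $\ket{\psi(t_0)}$. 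After commuting the $c$-number coefficients $e^{-i\mathbf{p}^f\cdot\mathbf{z}^f}$ and $e^{-i\mathbf{p}^0\cdot\mathbf{z}^0}$ out of the operator product (they are ordinary complex-valued integrands, not operators), the pieces naturally group into $\ket{\mathbf{z}^f}$ on the far left, the matrix element $\matrixel{-i\mathbf{p}^f}{\hat{U}(t,t_0)}{\mathbf{z}^0}$ in the middle, and the inner product $\braket{-i\mathbf{p}^0}{\psi(t_0)}$ on the right, yielding exactly Eq.~\ref{eq:psiU}.

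No step here is genuinely hard; the ``main obstacle'' is purely cosmetic, namely keeping the two independent copies of the $(\mathbf{z},\mathbf{p})$ integration clearly distinguished by the superscripts $f$ and $0$ and combining the measures into the compact form $\frac{d\mathbf{z}^f d\mathbf{p}^f}{(2\pi)^n}\frac{d\mathbf{z}^0 d\mathbf{p}^0}{(2\pi)^n}$. The Fubini-style rearrangement of integrals and operator products is justified by the same formal manipulations already tacitly used in stating Proposition~\ref{prop:roi}; in a more rigorous treatment one would need to verify absolute convergence of the iterated integrals against the coefficient structure of $\ket{\psi(t_0)}=\ket{\boldsymbol{\xi}}$ and of the matrix elements of $\hat{U}(t,t_0)$, but at the level of the present paper this is not required.
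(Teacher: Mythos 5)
Your proposal is correct and is exactly the paper's argument: the paper's proof is simply ``apply Proposition \ref{prop:roi} to the formal solution $\ket{\psi(t)}=\hat{U}(t,t_0)\ket{\psi(t_0)}$ twice,'' which is precisely the double insertion of the coherent-state resolution of the identity (with dummy variables $(\mathbf{z}^f,\mathbf{p}^f)$ and $(\mathbf{z}^0,\mathbf{p}^0)$) that you describe. Your added remarks about bookkeeping and the formal nature of the interchange of integrals are accurate but do not change the substance.
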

\begin{proof}
Apply Proposition \ref{prop:roi} to the formal solution for $\ket{\psi(t)}$ (c.f. Eq. \ref{eq:EOM_U}) twice. \qed
\end{proof}

The object that appears in the middle of this expression is sufficiently important that it deserves its own name.

\begin{definition}
The \textit{propagator} is defined as the matrix element
\begin{equation}
U( i \mathbf{p}^f, t; \mathbf{z}^0, t_0) := \matrixel{- i \mathbf{p}^f}{\hat{U}(t, t_0)}{\mathbf{z}^0} \ .
\end{equation}
where $\ket{- i \mathbf{p}^f}$ and $\ket{\mathbf{z}^0}$ are coherent states, and $\hat{U}(t, t_0)$ is the time evolution operator. Usually, we will refer to it using the abbreviated notation $U( i \mathbf{p}^f, \mathbf{z}^0)$.
\end{definition}

%\section{Doi-Peliti path integral construction}
%\label{sec:constructpathint}

Now we will construct a coherent state path integral expression for the propagator---this is one of the most important equations in this paper, as it forms the basis of Doi-Peliti path integral calculations. 

\begin{proposition}[Path integral expression for the propagator]
The propagator $U( i \mathbf{p}^f, t; \mathbf{z}^0, t_0)$ is equal to the path integral
\begin{equation} \label{eq:Upathint}
\begin{split}
U = \ & \lim_{N \to \infty} \int \prod_{\ell = 1}^{N-1} \frac{d\mathbf{z}^{\ell} d\mathbf{p}^{\ell} }{(2\pi)^n} \exp\left\{ \sum_{\ell = 1}^{N-1}  - i \mathbf{p}^{\ell} \cdot ( \mathbf{z}^{\ell} - \mathbf{z}^{\ell - 1}) + \Delta t \ \mathcal{H}(i \mathbf{p}^{\ell}, \mathbf{z}^{\ell - 1}, t_{\ell - 1}) \right. \\
 & \left. + \Delta t \ \mathcal{H}(i \mathbf{p}^f, \mathbf{z}^{N - 1}, t_{\ell-1}) + i \mathbf{p}^f \cdot \mathbf{z}^{N-1}  \right\} 
\end{split}
\end{equation}
where $\Delta t := (t - t_0)/N$ and the Hamiltonian kernel $\mathcal{H}$ is defined as
\begin{equation}
\mathcal{H}(i \mathbf{p}, \mathbf{z}, t) := \matrixel{- i \mathbf{p}}{\hat{H}(t)}{\mathbf{z}} \ e^{- i \mathbf{p} \cdot \mathbf{z}}
\end{equation}
where $\ket{- i \mathbf{p}}$ and $\ket{\mathbf{z}}$ are coherent states  with $\mathbf{p}, \mathbf{z} \in \mathbb{R}^n$.
\end{proposition}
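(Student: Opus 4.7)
The plan is to derive the path integral expression via the standard time-slicing procedure. First I would partition $[t_0, t]$ into $N$ equal subintervals of width $\Delta t = (t - t_0)/N$ with endpoints $t_\ell = t_0 + \ell \Delta t$, and apply the composition property (Eq.~\ref{eq:Ucompose}) iteratively to write $\hat{U}(t, t_0) = \hat{U}(t_N, t_{N-1}) \cdots \hat{U}(t_1, t_0)$. Sandwiching this product between $\bra{-i\mathbf{p}^f}$ and $\ket{\mathbf{z}^0}$ expresses the propagator as a matrix element of $N$ short-time evolution operator factors, and I would then insert $N-1$ copies of the coherent state resolution of the identity from Proposition~\ref{prop:roi} between consecutive factors. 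Labelling the integration variables in the $\ell$th inserted copy by $(\mathbf{z}^\ell, \mathbf{p}^\ell)$ for $\ell = 1, \ldots, N-1$, each insertion contributes a factor $\ket{\mathbf{z}^\ell}\bra{-i\mathbf{p}^\ell} e^{-i\mathbf{p}^\ell \cdot \mathbf{z}^\ell}$ and reduces each sandwiched operator to a short-time matrix element of the form $\matrixel{-i\mathbf{p}^\ell}{\hat{U}(t_\ell, t_{\ell-1})}{\mathbf{z}^{\ell-1}}$, with the final factor being $\matrixel{-i\mathbf{p}^f}{\hat{U}(t_N, t_{N-1})}{\mathbf{z}^{N-1}}$.

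Next I would evaluate each short-time matrix element to first order in $\Delta t$. From Proposition~\ref{prop:formal}, $\hat{U}(t_\ell, t_{\ell-1}) = 1 + \hat{H}(t_{\ell-1}) \Delta t + O(\Delta t^2)$, so the matrix element equals $\braket{-i\mathbf{p}^\ell}{\mathbf{z}^{\ell-1}} + \Delta t \, \matrixel{-i\mathbf{p}^\ell}{\hat{H}(t_{\ell-1})}{\mathbf{z}^{\ell-1}}$ plus $O(\Delta t^2)$ corrections. From the inner products already computed, $\braket{-i\mathbf{p}^\ell}{\mathbf{z}^{\ell-1}} = e^{i\mathbf{p}^\ell \cdot \mathbf{z}^{\ell-1}}$, and the definition of the Hamiltonian kernel rearranges to $\matrixel{-i\mathbf{p}^\ell}{\hat{H}(t_{\ell-1})}{\mathbf{z}^{\ell-1}} = \mathcal{H}(i\mathbf{p}^\ell, \mathbf{z}^{\ell-1}, t_{\ell-1}) \, e^{i\mathbf{p}^\ell \cdot \mathbf{z}^{\ell-1}}$. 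Factoring out the overlap and re-exponentiating to leading order, each short-time factor becomes $\exp\{i\mathbf{p}^\ell \cdot \mathbf{z}^{\ell-1} + \Delta t \, \mathcal{H}(i\mathbf{p}^\ell, \mathbf{z}^{\ell-1}, t_{\ell-1})\}$ up to $O(\Delta t^2)$. Combining the $e^{i\mathbf{p}^\ell \cdot \mathbf{z}^{\ell-1}}$ from this short-time factor with the $e^{-i\mathbf{p}^\ell \cdot \mathbf{z}^\ell}$ contributed by the $\ell$th resolution of the identity yields exactly the $-i\mathbf{p}^\ell \cdot (\mathbf{z}^\ell - \mathbf{z}^{\ell-1})$ terms in the exponent of Eq.~\ref{eq:Upathint}, while the final time step ($\ell = N$) contributes the boundary pieces $i\mathbf{p}^f \cdot \mathbf{z}^{N-1}$ and $\Delta t \, \mathcal{H}(i\mathbf{p}^f, \mathbf{z}^{N-1}, t_{N-1})$. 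Taking $N \to \infty$ and assembling all the exponents then reproduces the stated formula.

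The main obstacle is rigorously justifying this $N \to \infty$ limit. Strictly speaking one must show that the accumulated $O(\Delta t^2)$ corrections vanish, that the slice-by-slice re-exponentiation is legitimate for an unbounded, time-dependent operator $\hat{H}$, and that the oscillatory $\mathbf{p}$ integrals, which are not absolutely convergent for each finite $N$, converge in a suitable distributional or regularised sense. A fully rigorous treatment would require a Trotter-Kato style product formula adapted to the Hilbert space of Sec.~\ref{sec:reframe}, together with growth estimates on $\mathcal{H}(i\mathbf{p}, \mathbf{z}, t)$ viewed as a polynomial in its arguments. In keeping with the conventions of the Doi-Peliti physics literature, I would present the heuristic derivation above as the proof and verify the result indirectly by checking that the explicit solutions extracted from the path integral in the later calculational sections satisfy the corresponding CMEs.
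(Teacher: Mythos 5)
Your proposal is correct and follows essentially the same route as the paper's own proof: decompose $\hat{U}(t,t_0)$ via the composition property, insert $N-1$ coherent state resolutions of the identity, expand each short-time factor to first order in $\Delta t$ using the kernel definition and the overlap $\braket{-i\mathbf{p}^\ell}{\mathbf{z}^{\ell-1}} = e^{i\mathbf{p}^\ell\cdot\mathbf{z}^{\ell-1}}$, re-exponentiate, and take $N\to\infty$. Your added remarks on the rigor of the limit go beyond what the paper states but do not change the argument.
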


\begin{proof}
First write the time evolution operator $U(t, t_0)$ as a product of many time evolution operators using the composition property (Eq. \ref{eq:Ucompose}):
\begin{equation}
\hat{U}(t, t_0) = \hat{U}(t, t_{N-1}) \hat{U}(t_{N-1}, t_{N-2})  \cdots \hat{U}(t_1, t_0)
\end{equation}
where $t_{\ell} := t_0 + \ell \Delta t$ for $\ell = 0, ..., N$, and $\Delta t := (t - t_0)/N$. Now insert $(N-1)$ resolutions of the identity to write
\begin{equation}
U = \int  \prod_{\ell = 1}^{N-1} \frac{d\mathbf{z}^{\ell} d\mathbf{p}^{\ell}}{(2\pi)^n}    \matrixel{- i \mathbf{p}^f}{\hat{U}(t, t_{N-1})}{\mathbf{z}^{\ell - 1}} \cdots  \matrixel{- i \mathbf{p}^1}{\hat{U}(t_1, t_0)}{\mathbf{z}^0}  e^{-i \sum_{\ell = 1}^{N-1} \mathbf{p}^{\ell} \cdot \mathbf{z}^{\ell} } \ .
\end{equation}
To arrive at our desired path integral, all we must do is compute the matrix elements in the above equation. Assuming that $N$ is large enough that $\Delta t$ is very small, we have that
\begin{equation} \label{eq:Utaylor}
\hat{U}(t_{\ell}, t_{\ell - 1}) \approx 1 + \hat{H}(t_{\ell-1}) \Delta t
\end{equation}
i.e. $\hat{U}$ is equal to its first order Taylor expansion. Moreover, this inequality becomes exact in the $N \to \infty$ limit. Using this,
\begin{equation}
\matrixel{- i \mathbf{p}^{\ell}}{\hat{U}(t_{\ell}, t_{\ell - 1})}{\mathbf{z}^{\ell-1}} \approx e^{i \mathbf{p}^{\ell} \cdot \mathbf{z}^{\ell-1}} + \Delta t \matrixel{- i \mathbf{p}^{\ell}}{\hat{H}(t_{\ell-1})}{\mathbf{z}^{\ell-1}}  \ .
\end{equation}
By the definition of the Hamiltonian kernel,
\begin{equation}
\begin{split}
\matrixel{- i \mathbf{p}^{\ell}}{\hat{U}(t_{\ell}, t_{\ell - 1})}{\mathbf{z}^{\ell-1}} &\approx e^{i \mathbf{p}^{\ell} \cdot \mathbf{z}^{\ell-1}} \left[ 1 + \mathcal{H}(i \mathbf{p}^{\ell}, \mathbf{z}^{\ell-1}, t_{\ell-1}) \Delta t \right] \\
&\approx e^{i \mathbf{p}^{\ell} \cdot \mathbf{z}^{\ell-1} + \Delta t \mathcal{H}(i \mathbf{p}^{\ell}, \mathbf{z}^{\ell-1}, t_{\ell-1})} 
\end{split}
\end{equation}
where we have again used the fact that $\Delta t$ is small. Putting all of these matrix elements together, our final coherent state path integral expression for $U(i \mathbf{p}^f, \mathbf{z}^0)$ reads
\begin{equation} 
\begin{split}
U = \ & \lim_{N \to \infty} \int \prod_{\ell = 1}^{N-1} \frac{d\mathbf{z}^{\ell} d\mathbf{p}^{\ell} }{(2\pi)^n} \exp\left\{ \sum_{\ell = 1}^{N-1}  - i \mathbf{p}^{\ell} \cdot ( \mathbf{z}^{\ell} - \mathbf{z}^{\ell - 1}) + \Delta t \ \mathcal{H}(i \mathbf{p}^{\ell}, \mathbf{z}^{\ell - 1}, t_{\ell - 1}) \right. \\
 & \left. + \Delta t \ \mathcal{H}(i \mathbf{p}^f, \mathbf{z}^{N - 1}, t_{\ell-1}) + i \mathbf{p}^f \cdot \mathbf{z}^{N-1}  \right\}
\end{split}
\end{equation}
where the $N \to \infty$ limit must be taken so that the approximation we made in Eq. \ref{eq:Utaylor} becomes exact. \qed
\end{proof}

\subsection{Grassberger-Scheunert creation operators}
\label{sec:gscreation}

%ARBITRARY OPERATOR RESULTS IN THIS SECTION SHOULD BE MORE GENERAL!!! \\

As we noted in Sec. \ref{sec:innerproducts}, the Hermitian conjugate of the annihilation operator $\hat{a}_j$ with respect to the Grassberger-Scheunert product is $\hat{\pi}_j - 1$ for all $j = 1, ..., n$. Motivated by this, we define the Grassberger-Scheunert creation operators.

\begin{definition} The \textit{Grassberger-Scheunert} creation operators are defined to be
\begin{equation}
\hat{a}_j^+ := \hat{\pi}_j - 1
\end{equation}
for all $j = 1, ..., n$. 
\end{definition}

In the rest of the article, we will take `creation operator' without qualification to mean one of these operators.

All Hamiltonians we consider may be expressed in terms of creation operators and annihilation operators. For example, the Hamiltonian operator corresponding to the chemical birth-death process (c.f. Sec. \ref{sec:intro_bd}) can be shown to read
\begin{equation}
\hat{H} = k(t) \hat{a}^+ - \gamma(t) \hat{a}^+ \hat{a}
\end{equation}
which is a specific case of a result we derive later (see Sec. \ref{sec:monocalc}). Note that this expression is `normal ordered'---all creation operators are to the left of all annihilation operators. For all (possibly time-dependent) operators $\hat{\mathcal{A}}(t)$ in this form, i.e.
\begin{equation}
\hat{\mathcal{A}}(t) := \sum_{\nu_1, ..., \nu_n, \rho_1, ..., \rho_n} d^{\nu_1, ..., \nu_n}_{\rho_1, ..., \rho_n}(t) \ (\hat{a}_1^+)^{\nu_1} \cdots (\hat{a}_n^+)^{\nu_n} (\hat{a}_1)^{\rho_1} \cdots (\hat{a}_n)^{\rho_n}   \ ,
\end{equation}
coherent state matrix elements are easily evaluated by exploiting that $(\hat{a}_j)^{\dag} = \hat{a}^+_j$ and that the coherent states are eigenstates of the annihilation operators. The particular result is the following.

\begin{proposition}[Coherent state matrix elements of normal ordered operators] \label{prop:csmel_normal} 
Let $\ket{\mathbf{z}}$ and $\ket{\mathbf{p}}$ be coherent states, and $\hat{\mathcal{A}}(t)$ be an arbitrary (possibly time-dependent) operator that is normal ordered (i.e. all creation operators are to the left of all annihilation operators). The coherent state matrix element $\matrixel{\mathbf{p}}{\hat{\mathcal{A}}(t)}{\mathbf{z}}$ can be evaluated to be
\begin{equation} \label{eq:howtomel}
\begin{split}
\matrixel{\mathbf{p}}{\hat{\mathcal{A}}(t)}{\mathbf{z}} &=  e^{\mathbf{p}^* \cdot \mathbf{z}} \sum_{\nu_1, ..., \nu_n, \rho_1, ..., \rho_n} d^{\nu_1, ..., \nu_n}_{\rho_1, ..., \rho_n}(t) \ (p_1^*)^{\nu_1} \cdots (p_n^*)^{\nu_n} (z_1)^{\rho_1} \cdots (z_n)^{\rho_n}    \ .
\end{split}
\end{equation}
\end{proposition}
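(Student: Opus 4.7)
The plan is to reduce the claim to the case of a single monomial $(\hat{a}_1^+)^{\nu_1}\cdots(\hat{a}_n^+)^{\nu_n}(\hat{a}_1)^{\rho_1}\cdots(\hat{a}_n)^{\rho_n}$ by linearity of $\matrixel{\mathbf{p}}{\cdot}{\mathbf{z}}$ in the operator argument, and then to evaluate this monomial matrix element by pushing annihilation operators to the right onto $\ket{\mathbf{z}}$ and creation operators to the left onto $\bra{\mathbf{p}}$. Because $\hat{\mathcal{A}}(t)$ is normal ordered, no commutators have to be worked out: the two groups of operators are already in position to act directly on the two coherent states.

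First I would use that $\ket{\mathbf{z}}$ is a simultaneous eigenstate of all $\hat{a}_j$ with eigenvalues $z_j$, so that
\begin{equation}
(\hat{a}_1)^{\rho_1}\cdots(\hat{a}_n)^{\rho_n}\ket{\mathbf{z}} \;=\; z_1^{\rho_1}\cdots z_n^{\rho_n}\ket{\mathbf{z}} \ .
\end{equation}
Then I would move the creation operators onto the left coherent state using the Hermitian-conjugate relation $(\hat{a}_j)^{\dagger}=\hat{a}_j^+$ with respect to the Grassberger--Scheunert product, established in the preceding proposition. Since $\hat{a}_j$ has eigenvalue $p_j$ on $\ket{\mathbf{p}}$, this gives $\bra{\mathbf{p}}\hat{a}_j^+ = p_j^*\bra{\mathbf{p}}$, and iterating
\begin{equation}
\bra{\mathbf{p}}(\hat{a}_1^+)^{\nu_1}\cdots(\hat{a}_n^+)^{\nu_n} \;=\; (p_1^*)^{\nu_1}\cdots(p_n^*)^{\nu_n}\bra{\mathbf{p}} \ .
\end{equation}
Combining these two facts pulls all creation/annihilation operators out as scalar factors, leaving $\braket{\mathbf{p}}{\mathbf{z}}$; the already-computed identity $\braket{\mathbf{p}}{\mathbf{z}}=e^{\mathbf{p}^*\cdot\mathbf{z}}$ supplies the remaining prefactor, and summing over multi-indices $(\nu_1,\dots,\nu_n,\rho_1,\dots,\rho_n)$ with coefficients $d^{\nu_1,\dots,\nu_n}_{\rho_1,\dots,\rho_n}(t)$ reproduces the claimed formula.

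There really is no main obstacle here; the only point that deserves care is to use the correct inner product when transferring the creation operators to the bra, since with respect to the exclusive product the adjoint relationship is different ($(\hat{a}_j)^\dagger = \hat{\pi}_j$, not $\hat{a}_j^+$). Provided one consistently interprets $\matrixel{\mathbf{p}}{\cdot}{\mathbf{z}}$ as a Grassberger--Scheunert matrix element, the argument is two lines. Absolute convergence of the sum, should the operator expansion be infinite rather than a finite polynomial, is not addressed by the statement and can be assumed as part of the hypothesis that $\hat{\mathcal{A}}(t)$ admits the given normal-ordered expansion.
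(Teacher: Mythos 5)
Your proposal is correct and follows essentially the same route as the paper's proof: linearity to reduce to a monomial, the eigenvalue property $\hat{a}_k\ket{\mathbf{z}}=z_k\ket{\mathbf{z}}$, the adjoint relation $\bra{\mathbf{p}}\hat{a}_k^+=p_k^*\bra{\mathbf{p}}$ with respect to the Grassberger--Scheunert product, and finally $\braket{\mathbf{p}}{\mathbf{z}}=e^{\mathbf{p}^*\cdot\mathbf{z}}$. Your explicit caution that the adjoint relationship would differ under the exclusive product is a worthwhile clarification, but it does not change the argument.
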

\begin{proof}
Use the linearity of the inner product to take the sum out, then use the facts that $(\hat{a}_j)^{\dag} = \hat{a}^+_j$, $\hat{a}_k \ket{\mathbf{z}} = z_k \ket{\mathbf{z}}$, and $\bra{\mathbf{p}} \hat{a}^+_k = p_k^* \bra{\mathbf{p}}$. Finally, note that $\braket{\mathbf{p}}{\mathbf{z}} = e^{\mathbf{p}^* \cdot \mathbf{z}} $. \qed
\end{proof}
We will use this result in the calculation sections to derive many Hamiltonian kernels.

\subsection{Probability distribution and moments}
\label{sec:howtoprob}

We need some way to extract information (like the transition probability $P(\mathbf{x}, t)$ or factorial moments) from the generating function $\ket{\psi(t)}$. It turns out that we can achieve this using the exclusive product \cite{peliti1985} and Grassberger-Scheunert product \cite{grassberger1980}. 

\begin{proposition}[Extracting transition probabilities and moments from the generating function] \label{prop:gfto}
Transition probabilities can be obtained from the generating function using the exclusive product, and factorial moments can be obtained from the generating function using the Grassberger-Scheunert product and the annihilation operators. In particular,
\begin{equation}\label{eq:gftop}
P(\mathbf{x}, t) = \frac{\braket{\mathbf{x}}{\psi(t)}_{ex}}{\mathbf{x}!} 
\end{equation} 
and
\begin{equation} \label{eq:gftomoments}
\begin{split}
\expval{x_j(t)} &= \matrixel{\mathbf{0}}{\hat{a}_j}{\psi(t)} \\
\expval{x_j(t) x_k(t)} &= \matrixel{\mathbf{0}}{\hat{a}_j \hat{a}_k}{\psi(t)} \\
\expval{x_j(t) [x_j(t) - 1]} &= \matrixel{\mathbf{0}}{\hat{a}^2_j}{\psi(t)} \\
\expval{x_j(t) [x_j(t) - 1] [x_j(t) - 2]} &= \matrixel{\mathbf{0}}{\hat{a}^3_j}{\psi(t)} \ .
\end{split}
\end{equation}
\end{proposition}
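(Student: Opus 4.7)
The plan is to verify each formula by direct substitution of the definition of $\ket{\psi(t)}$ (Eq. \ref{eq:gf}) and then use the linearity of the two inner products together with the defining actions of the basic operators on basis vectors. Both halves of the proposition are essentially bookkeeping consequences of the definitions assembled in the previous sections, so the proof is short; the only thing to be careful about is making sure the right inner product is used in each case.

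For the transition probability identity (Eq. \ref{eq:gftop}), I would expand $\ket{\psi(t)} = \sum_{\mathbf{y}} P(\mathbf{y}, t) \ket{\mathbf{y}}$, pull the sum out using linearity of $\braket{\cdot}{\cdot}_{ex}$, and invoke the defining relation $\braket{\mathbf{x}}{\mathbf{y}}_{ex} = \mathbf{x}!\, \delta(\mathbf{x} - \mathbf{y})$ from Eq. \ref{eq:ex_basis}. Only the $\mathbf{y} = \mathbf{x}$ term survives, giving $\braket{\mathbf{x}}{\psi(t)}_{ex} = \mathbf{x}!\, P(\mathbf{x}, t)$, from which Eq. \ref{eq:gftop} follows immediately. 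This is the one step where the exclusive product is really needed, since $\braket{\mathbf{x}}{\mathbf{y}}$ (the Grassberger-Scheunert product) is not diagonal in the basis.

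The moment identities (Eq. \ref{eq:gftomoments}) require one preliminary observation. Using the sum form of the Grassberger-Scheunert product in Eq. \ref{eq:in_basis}, the only $\mathbf{k} \in \mathbb{N}^n$ satisfying $k_j \leq \min(0, y_j) = 0$ for all $j$ is $\mathbf{k} = \mathbf{0}$, so
\begin{equation}
\braket{\mathbf{0}}{\mathbf{y}} = 1 \qquad \text{for every } \mathbf{y} \in \mathbb{N}^n.
\end{equation}
With this lemma in hand, I would apply the defining action $\hat{a}_j \ket{\mathbf{y}} = y_j \ket{\mathbf{y} - \boldsymbol{\epsilon}_j}$ (Eq. \ref{eq:aops}) repeatedly to $\ket{\psi(t)}$. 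Each annihilation contributes a factor from the falling factorial, e.g.\ $\hat{a}_j^2 \ket{\mathbf{y}} = y_j(y_j - 1) \ket{\mathbf{y} - 2\boldsymbol{\epsilon}_j}$ and $\hat{a}_j \hat{a}_k \ket{\mathbf{y}} = y_j y_k \ket{\mathbf{y} - \boldsymbol{\epsilon}_j - \boldsymbol{\epsilon}_k}$ for $j \neq k$. Projecting with $\bra{\mathbf{0}}$ under the Grassberger-Scheunert product then simply sums these weighted coefficients by the lemma, reproducing exactly the factorial moments listed in Eq. \ref{eq:gftomoments}.

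The main thing to watch, rather than a real obstacle, is ensuring that shifted basis vectors like $\mathbf{y} - 2\boldsymbol{\epsilon}_j$ are only ever weighted by coefficients that automatically vanish when those vectors would leave $\mathbb{N}^n$: the falling factorial $y_j (y_j - 1) \cdots (y_j - m + 1)$ is zero whenever $y_j < m$, so no ill-defined states contribute. Aside from this sanity check, each line of Eq. \ref{eq:gftomoments} follows from the same three-step template (expand $\ket{\psi(t)}$, commute annihilation operators through using Eq. \ref{eq:aops}, apply the $\braket{\mathbf{0}}{\cdot} = 1$ lemma), and one could also generalize the argument immediately to arbitrary mixed factorial moments of the form $\langle \prod_j x_j^{(m_j)} \rangle$ by evaluating $\matrixel{\mathbf{0}}{\hat{a}_1^{m_1} \cdots \hat{a}_n^{m_n}}{\psi(t)}$.
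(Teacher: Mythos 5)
Your proof is correct and follows essentially the same route as the paper: expand $\ket{\psi(t)}$ in the basis, use the diagonality $\braket{\mathbf{x}}{\mathbf{y}}_{ex} = \mathbf{x}!\,\delta(\mathbf{x}-\mathbf{y})$ for the transition probability, and use $\braket{\mathbf{0}}{\mathbf{y}} = 1$ together with the action of the annihilation operators for the factorial moments. Your added remark that the falling-factorial coefficients vanish before any shifted state could leave $\mathbb{N}^n$ is a sensible sanity check the paper leaves implicit, but it does not change the argument.
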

\begin{proof}
By the definition of the exclusive product,
\begin{equation}
\frac{\braket{\mathbf{x}}{\psi(t)}_{ex}}{\mathbf{x}!} = \sum_{\mathbf{y}} P(\mathbf{y}, t) \frac{\braket{\mathbf{x}}{\mathbf{y}}_{ex}}{\mathbf{x}!} =  \sum_{\mathbf{y}} P(\mathbf{y}, t) \delta(\mathbf{x} - \mathbf{y}) = P(\mathbf{x}, t) \ .
\end{equation}
By the explicit definition of the Grassberger-Scheunert product of two basis vectors (c.f. Eq. \ref{eq:in_basis}), note that $\braket{\mathbf{0}}{\mathbf{x}}$ for all $\mathbf{x} \in \mathbb{N}^n$. Then
\begin{equation}
\begin{split}
\matrixel{\mathbf{0}}{\hat{a}_j}{\psi(t)} &= \bra{\mathbf{0}} \sum_{\mathbf{x}} P(\mathbf{x}, t) \hat{a}_j \ket{\mathbf{x}} \\
&= \bra{\mathbf{0}} \sum_{\mathbf{x}} P(\mathbf{x}, t)  x_j \ket{\mathbf{x} - \boldsymbol{\epsilon}_j} \\
&=  \sum_{\mathbf{x}} P(\mathbf{x}, t)  x_j \braket{\mathbf{0}}{\mathbf{x} - \boldsymbol{\epsilon}_j}  \\
&= \sum_{\mathbf{x}} x_j P(\mathbf{x}, t) \\
&= \langle x_j(t) \rangle \ .
\end{split}
\end{equation}
The other expectation value formulas can be demonstrated in a similar fashion. \qed
\end{proof}
With this done, we have all of the machinery necessary to solve the problems identified in Sec. \ref{sec:pstatement}.

\section{Monomolecular calculations}
\label{sec:monocalc}

In this section, we present the calculations relevant to proving the formulas from Theorems \ref{thm:bd} and \ref{thm:mono} using the Doi-Peliti approach. First, we derive the Hamiltonian operator and use it to compute the Hamiltonian kernel. Then we evaluate the path integral expression for the propagator $U(i \mathbf{p}^f, \mathbf{z}^0)$. Finally, we use the explicit form of the propagator to derive the transition probability and several moments. We do not explicitly show how to compute the generating function directly from the propagator, because it is very similar to the other calculations. 

\subsection{The Hamiltonian operator and Hamiltonian kernel}

We would like an equation equivalent to the CME (Eq. \ref{eq:CME}) that is satisfied by the generating function $\ket{\psi(t)}$ (Eq. \ref{eq:gf}). Doing so involves a straightforward calculation, which we spell out here for the sake of illustration.

\begin{lemma}  \label{lem:monoHderiv}
The Hamiltonian operator corresponding to the monomolecular CME (Eq. \ref{eq:CME}) is
\begin{equation} \label{eq:Hdef}
\begin{split}
\hat{H} =& \ \sum_{k=1}^{n} c_{0k}(t) \left[ \hat{\pi}_k - 1 \right] - \sum_{k=1}^{n} c_{k0}(t)\left[ \hat{\pi}_k - 1 \right] \hat{a}_k \\
&+ \sum_{j=1}^{n} \sum_{k=1}^{n}  c_{jk}(t) \left[ \hat{\pi}_k  - \hat{\pi}_j  \right] \hat{a}_j  \ .
\end{split}
\end{equation}
\end{lemma}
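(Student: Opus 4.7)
The plan is to take the time derivative of the generating function $\ket{\psi(t)} = \sum_{\mathbf{x}} P(\mathbf{x},t) \ket{\mathbf{x}}$, substitute the CME (Eq. \ref{eq:CME}) for $\partial P/\partial t$, and then perform index shifts in the resulting sums so that every term has $\ket{\mathbf{x}}$ (rather than $\ket{\mathbf{x} \pm \boldsymbol{\epsilon}_j}$) as its basis vector. After that, I will recognize the coefficients as the actions of $\hat{a}_j$ and $\hat{\pi}_j$ on $\ket{\psi(t)}$, and read off $\hat{H}$ from the identity $\partial_t \ket{\psi} = \hat{H}\ket{\psi}$.

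More concretely, I would split the derivative into its three pieces corresponding to birth, death, and conversion. For the birth term, after shifting $\mathbf{x} \to \mathbf{x} + \boldsymbol{\epsilon}_k$ in the $P(\mathbf{x} - \boldsymbol{\epsilon}_k, t)$ piece, the coefficient of $P(\mathbf{x},t)$ becomes $\ket{\mathbf{x} + \boldsymbol{\epsilon}_k} - \ket{\mathbf{x}} = (\hat{\pi}_k - 1)\ket{\mathbf{x}}$, yielding $c_{0k}(t)(\hat{\pi}_k - 1)\ket{\psi(t)}$. For the death term, shifting $\mathbf{x} \to \mathbf{x} - \boldsymbol{\epsilon}_k$ in the $(x_k+1)P(\mathbf{x} + \boldsymbol{\epsilon}_k,t)$ piece produces $\sum_{\mathbf{x}} c_{k0}(t) x_k P(\mathbf{x},t)[\ket{\mathbf{x} - \boldsymbol{\epsilon}_k} - \ket{\mathbf{x}}]$; using $\hat{a}_k\ket{\mathbf{x}} = x_k\ket{\mathbf{x}-\boldsymbol{\epsilon}_k}$ and $\hat{\pi}_k\hat{a}_k\ket{\mathbf{x}} = x_k\ket{\mathbf{x}}$ rewrites this as $-c_{k0}(t)(\hat{\pi}_k - 1)\hat{a}_k \ket{\psi(t)}$. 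The conversion term is handled analogously: shifting $\mathbf{x} \to \mathbf{x} - \boldsymbol{\epsilon}_j + \boldsymbol{\epsilon}_k$ in the first piece, and factoring, gives $c_{jk}(t)(\hat{\pi}_k - \hat{\pi}_j)\hat{a}_j \ket{\psi(t)}$. Summing over all $j$ and $k$ and all three reaction types reproduces the claimed $\hat{H}$.

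The one subtle point is justifying the index shifts at the boundary of $\mathbb{N}^n$. I would note that $P(\mathbf{x},t) = 0$ whenever any $x_j < 0$, so terms in the shifted sums where the original index went out of range simply vanish, and the shifts are valid term-by-term. Separately, the death term contains $x_k P(\mathbf{x},t)$ as its coefficient after shifting, which is $0$ at $x_k = 0$, so the missing $\mathbf{x} = \mathbf{x} - \boldsymbol{\epsilon}_k$ contribution does not appear.

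I do not expect a real obstacle — this is a bookkeeping exercise, and the only care needed is to keep the reaction convention $c_{kk} \equiv 0$ in mind so that the double sum in the conversion term handles $j = k$ trivially. Once the three derived pieces are combined and compared with the asserted $\hat{H}$, the lemma follows immediately from the uniqueness of the operator representation on the basis $\{\ket{\mathbf{x}}\}$.
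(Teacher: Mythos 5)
Your proposal is correct and follows essentially the same route as the paper's proof: differentiate $\ket{\psi(t)}$, substitute the CME, reindex the sums so every term carries $P(\mathbf{x},t)$, and identify the shifted kets with $\hat{\pi}_k$ and $\hat{a}_j$ acting on $\ket{\mathbf{x}}$. The operator identifications for all three reaction types match the paper's exactly, and your remarks on the boundary terms only make explicit a point the paper passes over silently.
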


\begin{proof}
First, take the time derivative of $\ket{\psi(t)}$:
\begin{equation}
\begin{split}
\frac{\partial \ket{\psi}}{\partial t} =& \sum_{\mathbf{x}}  \frac{\partial P(\mathbf{x}, t)}{\partial t} \ket{\mathbf{x}}  \\
=& \sum_{\mathbf{x}}  \left\{ \sum_{k=1}^{n} c_{0k}(t) \left[ P(\mathbf{x} - \boldsymbol{\epsilon}_k, t) - P(\mathbf{x},t)  \right]  \right.  \\
&+ \sum_{k=1}^{n} c_{k0}(t)\left[ (x_k + 1) P(\mathbf{x} + \boldsymbol{\epsilon}_k, t) - x_k P(\mathbf{x},t)  \right] \\
&+ \left. \sum_{j=1}^{n} \sum_{k=1}^{n}  c_{jk}(t) \left[ (x_j + 1) P(\mathbf{x} + \boldsymbol{\epsilon}_j - \boldsymbol{\epsilon}_k, t) - x_j P(\mathbf{x},t)  \right] \right\} \ket{\mathbf{x}} 
\end{split}
\end{equation}
where we have used Eq. \ref{eq:CME}. Reindex the sums over $\mathbf{x}$ so that this expression reads
\begin{equation}
\begin{split}
\frac{\partial \ket{\psi}}{\partial t} =& \sum_{\mathbf{x}}  \left\{ \sum_{k=1}^{n} c_{0k}(t) \left[ \ \ket{\mathbf{x} + \boldsymbol{\epsilon}_k}  - \ket{\mathbf{x}}  \ \right]  \right.  \\
&+ \sum_{k=1}^{n} c_{k0}(t)\left[ \ x_k  \ket{\mathbf{x} - \boldsymbol{\epsilon}_k}  - x_k \ket{\mathbf{x}}  \ \right] \\
&+ \left. \sum_{j=1}^{n} \sum_{k=1}^{n}  c_{jk}(t) \left[ \  x_j  \ket{\mathbf{x} - \boldsymbol{\epsilon}_j + \boldsymbol{\epsilon}_k}  - x_j \ket{\mathbf{x}} \ \right] \right\} P(\mathbf{x},t) \ .
\end{split}
\end{equation}
Using the creation and annihilation operators we defined earlier, the right-hand side can be written as
\begin{equation}
\begin{split}
& \sum_{\mathbf{x}}  \left\{ \sum_{k=1}^{n} c_{0k}(t) \left[ \ \hat{\pi}_k - 1  \ \right]   + \sum_{k=1}^{n} c_{k0}(t)\left[ \ \hat{a}_k - \hat{\pi}_k \hat{a}_k  \ \right] \right. \\
& \hspace{0.5in} + \left.  \sum_{j=1}^{n} \sum_{k=1}^{n}  c_{jk}(t) \left[ \  \hat{a}_j \hat{\pi}_k  - \hat{\pi}_j \hat{a}_j \ \right] \right\} P(\mathbf{x},t) \ket{\mathbf{x}} \\
=& \left\{ \sum_{k=1}^{n} c_{0k}(t) \left[ \ \hat{\pi}_k - 1  \ \right]   + \sum_{k=1}^{n} c_{k0}(t)\left[ \ \hat{a}_k - \hat{\pi}_k \hat{a}_k  \ \right] \right. \\
& \hspace{0.5in} + \left.  \sum_{j=1}^{n} \sum_{k=1}^{n}  c_{jk}(t) \left[ \  \hat{a}_j \hat{\pi}_k  - \hat{\pi}_j \hat{a}_j \ \right] \right\} \ket{\psi(t)} \ .
\end{split}
\end{equation}
Comparing this with the definition of the Hamiltonian operator (c.f. Eq. \ref{eq:EOM}), we have our result. \qed
\end{proof}
The Hamiltonian can be written more compactly in terms of the Grassberger-Scheunert creation operators:

\begin{corollary}
In terms of the Grassberger-Scheunert creation operator, the Hamiltonian is
\begin{equation} \label{eq:HdefGS}
\hat{H} = \sum_{k=1}^{n} c_{0k}(t)  \hat{a}_k^+  - \sum_{k=1}^{n} c_{k0}(t) \hat{a}_k^+  \hat{a}_k + \sum_{j=1}^{n} \sum_{k=1}^{n}  c_{jk}(t) \left[ \hat{a}^+_k  - \hat{a}^+_j \right] \hat{a}_j \ .
\end{equation}
\end{corollary}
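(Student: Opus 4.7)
The plan is to obtain this corollary by a direct algebraic substitution into the formula for $\hat{H}$ derived in Lemma~\ref{lem:monoHderiv}. By the definition $\hat{a}_j^+ := \hat{\pi}_j - 1$, the birth term $c_{0k}(t)\left[\hat{\pi}_k - 1\right]$ is literally $c_{0k}(t)\,\hat{a}_k^+$, and the death term $c_{k0}(t)\left[\hat{\pi}_k - 1\right]\hat{a}_k$ is literally $c_{k0}(t)\,\hat{a}_k^+\hat{a}_k$. So the first two sums in the $\hat{\pi}$-form translate to the first two sums in the $\hat{a}^+$-form with no manipulation.

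The only step that is not immediate is the conversion term, for which I would use the identity $\hat{\pi}_k - \hat{\pi}_j = (\hat{\pi}_k - 1) - (\hat{\pi}_j - 1) = \hat{a}_k^+ - \hat{a}_j^+$, i.e. the additive constants $-1$ cancel exactly when we pass to the Grassberger--Scheunert creation operators. Substituting this into the third sum of Eq.~\ref{eq:Hdef} converts $c_{jk}(t)\left[\hat{\pi}_k - \hat{\pi}_j\right]\hat{a}_j$ to $c_{jk}(t)\left[\hat{a}_k^+ - \hat{a}_j^+\right]\hat{a}_j$, which is the claimed expression.

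There is no real obstacle to overcome here; the corollary is essentially a rewriting that exploits the cancellation of the constant shifts in the conversion term. The only thing worth checking carefully is that the constants do indeed cancel in the conversion piece (so that no spurious term proportional to $\hat{a}_j$ alone survives), which the identity above makes manifest. Hence the proof consists of a single substitution.
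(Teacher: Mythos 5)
Your proof is correct and is exactly the paper's approach: the paper's own proof is the one-line instruction to take Lemma~\ref{lem:monoHderiv} and make the identification $\hat{a}_j^+ = \hat{\pi}_j - 1$, which you carry out explicitly, including the observation that the constants cancel in the conversion term via $\hat{\pi}_k - \hat{\pi}_j = \hat{a}_k^+ - \hat{a}_j^+$.
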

\begin{proof}
Start with the result above and make the identification $\hat{a}_j^+ = \hat{\pi}_j - 1$. \qed
\end{proof}
Note that this expression is `normal ordered'---all creation operators are to the left of all annihilation operators. This allows us to use Proposition \ref{prop:csmel_normal} to compute the Hamiltonian kernel. 

\begin{corollary}
The Hamiltonian kernel for the monomolecular CME is
\begin{equation} \label{eq:Hkernel}
\begin{split}
- i \mathcal{H}(i \mathbf{p}^{\ell}, \mathbf{z}^{\ell-1}, t_{\ell-1}) =&  \ \sum_{k=1}^{n} c_{0k}(t_{\ell-1}) p^{\ell}_k  - \sum_{k=1}^{n} c_{k0}(t_{\ell-1}) p^{\ell}_k  z^{\ell-1}_k \\
&+ \sum_{j=1}^{n} \sum_{k=1}^{n}  c_{jk}(t_{\ell-1}) \left[  p^{\ell}_k  - p^{\ell}_j \right] z^{\ell-1}_j \ .
\end{split}
\end{equation}
\end{corollary}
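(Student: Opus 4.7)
The plan is to apply Proposition \ref{prop:csmel_normal} directly to the normal-ordered Hamiltonian given in Eq. \ref{eq:HdefGS}, then assemble the Hamiltonian kernel according to its definition. Because all the work has already been done in packaging the Hamiltonian into a sum of terms of the form (creation operators)(annihilation operators), the calculation reduces to careful bookkeeping of constants.

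First I would read off the coefficients $d^{\nu_1,\ldots,\nu_n}_{\rho_1,\ldots,\rho_n}(t)$ for $\hat H(t)$ in Eq. \ref{eq:HdefGS}: the term $c_{0k}(t)\hat a_k^+$ contributes a coefficient $c_{0k}(t)$ with $\nu_k=1$ and all other indices zero; the term $-c_{k0}(t)\hat a_k^+\hat a_k$ contributes $-c_{k0}(t)$ with $\nu_k=\rho_k=1$; and the conversion term $c_{jk}(t)[\hat a_k^+-\hat a_j^+]\hat a_j$ splits into the two pieces $c_{jk}(t)\hat a_k^+\hat a_j$ and $-c_{jk}(t)\hat a_j^+\hat a_j$. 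Applying Proposition \ref{prop:csmel_normal} with $\ket{\mathbf z}=\ket{\mathbf z^{\ell-1}}$ and $\bra{\mathbf p}=\bra{-i\mathbf p^\ell}$ (so that $\mathbf p^*=i\mathbf p^\ell$ since $\mathbf p^\ell\in\mathbb R^n$) gives
\begin{equation*}
\mel{-i\mathbf p^\ell}{\hat H(t_{\ell-1})}{\mathbf z^{\ell-1}}
=e^{i\mathbf p^\ell\cdot\mathbf z^{\ell-1}}\Bigl[\textstyle\sum_k c_{0k}(ip_k^\ell)-\sum_k c_{k0}(ip_k^\ell)z_k^{\ell-1}+\sum_{j,k}c_{jk}\bigl[(ip_k^\ell)-(ip_j^\ell)\bigr]z_j^{\ell-1}\Bigr],
\end{equation*}
where every rate is evaluated at $t_{\ell-1}$.

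Next I would use the definition $\mathcal H(i\mathbf p^\ell,\mathbf z^{\ell-1},t_{\ell-1})=\mel{-i\mathbf p^\ell}{\hat H(t_{\ell-1})}{\mathbf z^{\ell-1}}\,e^{-i\mathbf p^\ell\cdot\mathbf z^{\ell-1}}$; the overlap factor $e^{i\mathbf p^\ell\cdot\mathbf z^{\ell-1}}$ from the matrix element is precisely cancelled by $e^{-i\mathbf p^\ell\cdot\mathbf z^{\ell-1}}$, leaving the bracketed expression above. Finally, multiplying through by $-i$ turns each $(ip^\ell_k)$ into $p^\ell_k$ and yields exactly the right-hand side of Eq. \ref{eq:Hkernel}.

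The calculation is essentially routine; the only place to be careful is the bookkeeping of the factors of $i$ that arise from the substitution $\mathbf p\mapsto -i\mathbf p^\ell$ (so $\mathbf p^*=i\mathbf p^\ell$) and from the overall prefactor $-i$ attached to $\mathcal H$. No genuine obstacle should appear, which is the whole point of having derived Proposition \ref{prop:csmel_normal} and written $\hat H$ in normal-ordered form beforehand.
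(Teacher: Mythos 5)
Your proposal is correct and follows essentially the same route as the paper, which simply instructs the reader to make the identifications $\hat{a}_j^+ \to i p^{\ell}_j$ and $\hat{a}_j \to z^{\ell-1}_j$ in the normal-ordered Hamiltonian via Proposition \ref{prop:csmel_normal}. Your version just spells out the bookkeeping (the conjugation $\mathbf{p}^* = i\mathbf{p}^\ell$, the cancellation of the overlap factor against $e^{-i\mathbf{p}^\ell\cdot\mathbf{z}^{\ell-1}}$, and the overall factor of $-i$) that the paper leaves implicit.
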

\begin{proof}
Make the identifications $\hat{a}_j^+ \to i p^{\ell}_j$ and $\hat{a}_j \to z^{\ell-1}_j$ in the Hamiltonian above. \qed
\end{proof}
%%%%%%%%%%%%%%%%%%%%%%%%%%%%%

%%%%%%%%%%%%%%%%%

\subsection{Evaluating the propagator path integral}

In this section, we will evaluate the path integral expression for the propagator $U(i \mathbf{p}^f, \mathbf{z}^0)$ (Eq. \ref{eq:Upathint}) given our specific dynamics, which are captured by the Hamiltonian kernel $\mathcal{H}$ (Eq. \ref{eq:Hkernel}).

%%%%%%%%%%

\begin{lemma}[Monomolecular propagator]  \label{lem:mono_prop}
The propagator for the monomolecular system is
\begin{equation} \label{eq:Usln2}
U(i \mathbf{p}^f, \mathbf{z}^0) = e^{i \mathbf{p}^f \cdot  \mathbf{z}(t)} 
\end{equation}
where
\begin{equation} \label{eq:zdecomp}
\mathbf{z}(t) := \sum_{k=1}^n  z^0_k \mathbf{w}^{(k)}(t) + \boldsymbol{\lambda}(t)
\end{equation}
with $\mathbf{w}^{(k)}(t)$ and $\boldsymbol{\lambda}(t)$ as defined in Theorem \ref{thm:mono}. 
\end{lemma}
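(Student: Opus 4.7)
The plan is to exploit the fact that for monomolecular reactions, the Hamiltonian kernel is \emph{linear} in $\mathbf{p}$ (indeed, it is jointly linear in $\mathbf{p}$ and $\mathbf{z}$, but linearity in $\mathbf{p}$ alone is what makes the integrand exactly tractable). Inspection of Eq.~\ref{eq:Hkernel} shows that, after collecting terms using the definition of $\mathbf{A}(t)$ and $\mathbf{b}(t)$, the kernel can be rewritten as
\begin{equation}
\mathcal{H}(i\mathbf{p}^{\ell}, \mathbf{z}^{\ell-1}, t_{\ell-1}) = i \mathbf{p}^{\ell} \cdot \bigl[ \mathbf{A}(t_{\ell-1}) \mathbf{z}^{\ell-1} + \mathbf{b}(t_{\ell-1}) \bigr].
\end{equation}
Substituting this into the exponent of Eq.~\ref{eq:Upathint}, the full action collapses into a sum of terms of the form $-i \mathbf{p}^{\ell} \cdot [\mathbf{z}^{\ell} - \mathbf{z}^{\ell-1} - \Delta t (\mathbf{A}(t_{\ell-1}) \mathbf{z}^{\ell-1} + \mathbf{b}(t_{\ell-1}))]$ for $\ell = 1, \dots, N-1$, plus a boundary contribution involving $\mathbf{p}^f$.

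Next I would integrate out each of the $N-1$ intermediate momentum variables $\mathbf{p}^{\ell}$. Because each $\mathbf{p}^{\ell}$ appears only linearly in the exponent, the integrals are Fourier representations of Dirac delta functions:
\begin{equation}
\int \frac{d\mathbf{p}^{\ell}}{(2\pi)^n}\, e^{-i \mathbf{p}^{\ell} \cdot \bigl[\mathbf{z}^{\ell} - \mathbf{z}^{\ell-1} - \Delta t (\mathbf{A}(t_{\ell-1}) \mathbf{z}^{\ell-1} + \mathbf{b}(t_{\ell-1}))\bigr]} = \delta\bigl( \mathbf{z}^{\ell} - \mathbf{z}^{\ell-1} - \Delta t (\mathbf{A}(t_{\ell-1}) \mathbf{z}^{\ell-1} + \mathbf{b}(t_{\ell-1})) \bigr).
\end{equation}
Performing the remaining $\mathbf{z}^{\ell}$ integrations against these deltas recursively enforces the explicit Euler update $\mathbf{z}^{\ell} = \mathbf{z}^{\ell-1} + \Delta t[\mathbf{A}(t_{\ell-1}) \mathbf{z}^{\ell-1} + \mathbf{b}(t_{\ell-1})]$, starting from the fixed boundary value $\mathbf{z}^0$. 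In the $N \to \infty$ limit this Euler scheme converges to the solution of the linear ODE
\begin{equation}
\dot{\mathbf{z}} = \mathbf{A}(t)\mathbf{z} + \mathbf{b}(t), \qquad \mathbf{z}(t_0) = \mathbf{z}^0,
\end{equation}
evaluated at the final time $t$.

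What survives of the path integral is the boundary piece $i \mathbf{p}^f \cdot \mathbf{z}^{N-1} + i \Delta t\, \mathbf{p}^f \cdot [\mathbf{A}(t_{N-1})\mathbf{z}^{N-1} + \mathbf{b}(t_{N-1})]$, which in the continuum limit reduces to $i \mathbf{p}^f \cdot \mathbf{z}(t)$. This yields $U(i\mathbf{p}^f, \mathbf{z}^0) = e^{i\mathbf{p}^f \cdot \mathbf{z}(t)}$. Finally, I would invoke linearity of the ODE: writing $\mathbf{z}^0 = \sum_k z_k^0 \boldsymbol{\epsilon}_k$ and using that $\mathbf{w}^{(k)}(t)$ and $\boldsymbol{\lambda}(t)$ are defined (Eq.~\ref{eq:wlambda}) as the fundamental solutions with initial data $\boldsymbol{\epsilon}_k$ and $\mathbf{0}$ respectively, superposition gives exactly the decomposition Eq.~\ref{eq:zdecomp}.

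The main obstacle is not any single estimate but rather making the formal manipulation of the path integral rigorous: justifying the interchange of the $N \to \infty$ limit with the (oscillatory) iterated integrations, and checking that the boundary terms assemble cleanly as claimed. A concrete way to sidestep this is to observe that the right-hand side of Eq.~\ref{eq:Usln2} can be verified directly, either by showing it satisfies the backward equation for the propagator discussed in Sec.~\ref{sec:propview}, or by substituting into the pre-limit expression and confirming that both sides agree order-by-order in $\Delta t$; the formal calculation above then serves as the derivation that motivates the ansatz.
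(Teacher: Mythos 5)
Your proposal is correct and follows essentially the same route as the paper: integrate out the intermediate momenta to obtain delta functions enforcing an Euler discretization of the reaction rate equations $\dot{\mathbf{z}} = \mathbf{A}(t)\mathbf{z} + \mathbf{b}(t)$, collapse the $\mathbf{z}^{\ell}$ integrals, identify the surviving boundary term as $i\mathbf{p}^f \cdot \mathbf{z}(t)$ in the $N \to \infty$ limit, and use linearity of the ODE to obtain the decomposition into $\mathbf{w}^{(k)}(t)$ and $\boldsymbol{\lambda}(t)$. Your compact rewriting of the kernel as $i\mathbf{p}\cdot[\mathbf{A}(t)\mathbf{z}+\mathbf{b}(t)]$ is a correct and tidy repackaging of Eq.~\ref{eq:Hkernel}, and your closing remark about verifying the answer via the propagator PDE mirrors the paper's own discussion in Sec.~\ref{sec:propview}.
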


\begin{proof} 
Begin with the path integral expression for $U$ (Eq. \ref{eq:Upathint}). Let us first integrate over the $p^{\ell}_k$ (where $\ell \in \left\{ 1, ..., N-1\right\}$ and $k \in \left\{ 1, ..., n \right\}$). For fixed $\ell$ and $k$, these integrals look like
\begin{equation}
\int_{-\infty}^{\infty} \frac{dp^{\ell}_k}{2\pi} \exp\left\{ - i p^{\ell}_k \left[ (z_k^{\ell} - z_k^{\ell - 1}) - \Delta t \left( \ c_{0k}^{\ell - 1}  - c_{k0}^{\ell - 1} z_k^{\ell - 1} + \sum_{j=1}^{n} c_{jk}^{\ell - 1}  z_j^{\ell - 1}  - c_{kj}^{\ell - 1} z_k^{\ell - 1}  \ \right)  \ \right] \right\}
\end{equation}
where $c_{jk}^{\ell-1}$ is shorthand for $c_{jk}(t_{\ell-1})$. Using the usual integral representation of the Dirac delta function, these integrals are easily done to obtain $n \cdot (N-1)$ delta function constraints:
\begin{equation}
\delta\left[ (z_k^{\ell} - z_k^{\ell - 1}) - \Delta t \left( \ c_{0k}^{\ell - 1}  - c_{k0}^{\ell - 1} z_k^{\ell - 1} + \sum_{j=1}^{n} c_{jk}^{\ell - 1}  z_j^{\ell - 1}  - c_{kj}^{\ell - 1} z_k^{\ell - 1}  \ \right)  \ \right] \ .
\end{equation}
Fortunately, that is \textit{exactly} how many integrals we have left to do. Notice that the constraints force
\begin{equation} \label{eq:timestep}
z_k^{\ell} = z_k^{\ell - 1} + \Delta t \left( \ c_{0k}^{\ell - 1}  - c_{k0}^{\ell - 1} z_k^{\ell - 1} + \sum_{j=1}^{n} c_{jk}^{\ell - 1}  z_j^{\ell - 1}  - c_{kj}^{\ell - 1} z_k^{\ell - 1}  \ \right) 
\end{equation}
which exactly corresponds to taking an Euler time step given the deterministic dynamics described by the reaction rate equations, Eq. \ref{eq:RRE}. What remains of our calculation is to evaluate
\begin{equation} 
\begin{split}
U = \ & \lim_{N \to \infty} \exp\left\{ \Delta t \ \mathcal{H}(i \mathbf{p}^f, \mathbf{z}^{N - 1}, t_{\ell-1}) + i \mathbf{p}^f \cdot \mathbf{z}^{N-1}  \right\} 
\end{split}
\end{equation}
given Eq. \ref{eq:timestep}, the constraint on $\mathbf{z}^{N-1}$ relating it (via $(N-1)$ Euler time steps) to $\mathbf{z}^0$. We have
\begin{equation}
\begin{split}
& i \mathbf{p}^f \cdot \mathbf{z}^{N-1} + \Delta t \ \mathcal{H}(i \mathbf{p}^f, \mathbf{z}^{N - 1}, t_{\ell-1}) \\
=& i \sum_{k = 1}^n p^f_k \left\{ z_k^{N-1} +  \Delta t \left[ \ c_{0k}^{N - 1}  - c_{k0}^{N - 1} z_k^{N - 1} + \sum_{j=1}^{n} c_{jk}^{N - 1}  z_j^{N - 1}  - c_{kj}^{N - 1} z_k^{N - 1}  \ \right]   \right\} \\
=& i \sum_{k = 1}^n p^f_k z_k^N
\end{split}
\end{equation}
where we define $z_k^N$ as the result of taking $N$ time steps of length $\Delta t$ according to Eq. \ref{eq:timestep} given the initial condition $z_k^0$. In the $N \to \infty$ limit, $z_k^N \to z_k(t)$, where $z_k(t)$ is defined as the $k$th component of the solution to Eq. \ref{eq:RRE}. As described in Sec. \ref{sec:pstatement}, $\mathbf{z}(t)$ can be decomposed in terms of $\boldsymbol{\lambda}(t)$ and the $\mathbf{w}^{(k)}(t)$. 

%where $\mathbf{w}^{(k)}$ and $\boldsymbol{\lambda}$ are as defined in Eq. \ref{eq:wlambda}. Finally,
%\begin{equation} \label{eq:Usln}
%U(i \mathbf{p}^f, \mathbf{z}^0) = e^{i \mathbf{p}^f \cdot  \mathbf{z}(t)} \ .
%\end{equation}
\qed
\end{proof}
While it may seem that this path integral calculation was completely trivial, that is mostly because we put in the legwork to define and characterize the Grassberger-Scheunert product beforehand. Had we used the exclusive product to construct our path integral, we would either have to perform a hard to justify Doi shift, or deal with extra terms after enforcing the delta function constraints.

%%%%%%%%%%%%%%%%%%%%%%%%%%%%%

%\section{From the propagator to the transition probability}
%\label{sec:proptoprob}

Now that we have computed the propagator $U(i \mathbf{p}^f, \mathbf{z}^0)$, we can relate it to the generating function $\ket{\psi(t)}$ using Corollary \ref{cor:Utogf}. Then, using Proposition \ref{prop:gfto}, the generating function can be used to compute $P(\mathbf{x}, t)$ and various moments. Because the transition probability and moment calculations are somewhat involved, we first present them for the one species system (i.e. the chemical birth-death process). 

\subsection{One species transition probability derivation}
\label{sec:onespeciesptrans}

\begin{lemma}[One species monomolecular transition probability]
For the single species monomolecular system (i.e. the chemical birth-death process), the transition probability $P(x, t; \xi, t_0)$ is
\begin{equation}
\begin{split}
P &=  \sum_{k = 0}^{\min(x, \xi)} \left[ \frac{\lambda(t)^{x - k} e^{-\lambda(t)}}{(x - k)!} \right] \left[ \binom{\xi}{k} w(t)^k  [1 - w(t)]^{\xi - k} \right]  \\
&= \mathcal{P}(x, \lambda(t)) \star \mathcal{M}(x, \xi, w(t)) 
\end{split}
\end{equation}
where $w(t)$ and $\lambda(t)$ are as defined in Theorem \ref{thm:bd}.
\end{lemma}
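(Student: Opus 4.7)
The plan is to unpack the three main tools assembled above and run them in sequence. I would start from Proposition \ref{prop:gfto}, which gives $P(x,t) = \braket{x}{\psi(t)}_{ex}/x!$, and substitute the coherent-state representation of $\ket{\psi(t)}$ from Corollary \ref{cor:Utogf}, using that $\ket{\psi(t_0)}=\ket{\xi}$ and the explicit propagator $U(ip^f,z^0) = e^{ip^f [z^0 w(t) + \lambda(t)]}$ from Lemma \ref{lem:mono_prop} (specialized to $n=1$). Along with the inner-product identities $\braket{x}{z^f}_{ex}=(z^f)^x e^{-z^f}$ and $\braket{-ip^0}{\xi}=(1+ip^0)^{\xi}$, this produces an integral over the four variables $z^f,p^f,z^0,p^0$.

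I would then evaluate the integrals in a specific order to collapse them cleanly. First, the $p^f$ integral is a plain Fourier integral in $p^f$ and yields $\delta(z^f - z^0 w - \lambda)$; the $z^f$ integral then kills this delta function and leaves a factor $(z^0 w + \lambda)^x e^{-(z^0 w + \lambda)}$. At this point I would expand $(z^0 w + \lambda)^x = \sum_{m=0}^x \binom{x}{m}(z^0 w)^m \lambda^{x-m}$ by the binomial theorem to separate the $z^0$-dependence and pull out an explicit Poisson-looking factor $\lambda^{x-m} e^{-\lambda}/(x-m)!$. The remaining $p^0,z^0$ integrals (after expanding $(1+ip^0)^\xi=\sum_k\binom{\xi}{k}(ip^0)^k$ and integrating term by term) amount to
\begin{equation}
\int_0^\infty dz^0 \, (z^0)^m e^{-w z^0} \sum_{k=0}^{\xi} \binom{\xi}{k}\Bigl(-\frac{d}{dz^0}\Bigr)^k \delta(z^0),
\end{equation}
which I would handle by integrating by parts $k$ times and evaluating at $z^0 = 0$; only the $j=m$ term in the Leibniz expansion of the resulting derivatives survives.

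The final step is the combinatorial cleanup, which I expect to be the main obstacle. After the dust settles, one is left with a double sum of the shape $\sum_{k=m}^{\xi}\binom{\xi}{k}\binom{k}{m}(-w)^{k-m}$. Applying the absorption identity $\binom{\xi}{k}\binom{k}{m}=\binom{\xi}{m}\binom{\xi-m}{k-m}$ and then the binomial theorem in the shifted index $\ell = k - m$ collapses this to $\binom{\xi}{m}(1-w)^{\xi-m}$. After dividing by $x!$, the leftover $m!\binom{x}{m}/x! = 1/(x-m)!$ combines with the Poisson-like factor to yield
\begin{equation}
P(x,t) = \sum_{m=0}^{\min(x,\xi)} \frac{\lambda(t)^{x-m} e^{-\lambda(t)}}{(x-m)!}\,\binom{\xi}{m} w(t)^m [1-w(t)]^{\xi-m},
\end{equation}
which I would finally recognize as the convolution $\mathcal{P}(x,\lambda(t))\star \mathcal{M}(x,\xi,w(t))$ by comparing with the one-dimensional specializations of the product Poisson (Eq.\ \ref{eq:productpoisson}) and multinomial (Eq.\ \ref{eq:multinomial}) distributions. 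The only real subtleties are bookkeeping (keeping track of the upper limits $\min(x,\xi)$ comes from the constraints $j\le m\le k\le \xi$ plus $m\le x$) and justifying the formal manipulation of derivatives of the delta function, which is standard once the test functions are smooth and decay appropriately.
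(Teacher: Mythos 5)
Your proposal is correct and follows essentially the same route as the paper's own derivation: same propagator from Lemma \ref{lem:mono_prop}, same order of integration (the $p^f$ integral giving $\delta(z(t)-z^f)$, then localizing at $z^0=0$), and the same formal treatment of the delta function at the boundary of $\int_0^\infty dz^0$. The only organizational difference is that you expand $(1+ip^0)^{\xi}$ binomially and clean up with the absorption identity, whereas the paper keeps it as a single $\xi$-fold derivative $\left(-d/dz^0\right)^{\xi}e^{-z^0(1+ip^0)}$ and reads off the constant term of the resulting power series; the two bookkeeping schemes yield identical sums.
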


\begin{proof}
Recall that, since $P(x, t_0) = \delta(x - \xi)$ for some $\xi \geq 0$,
\begin{equation}
\ket{\psi(t_0)} = \ket{\xi} \ .
\end{equation}
Using Eq. \ref{eq:gftop} and Eq. \ref{eq:psiU}, we have
\begin{equation}
\begin{split}
P(x, t; \xi, t_0) &= \frac{1}{x!} \int \frac{dz^f dp^f}{2 \pi} \frac{dz^0 dp^0}{2 \pi} \braket{x}{z^f}_{ex} \ U(ip^f, z^0) \braket{- i p^0}{\psi(t_0)} \ e^{- i p^0 z^0  - i p^f z^f}  \\
&= \frac{1}{x!} \int \frac{dz^f dp^f}{2 \pi} \frac{dz^0 dp^0}{2 \pi} \left(z^f \right)^x e^{- z^f} \ e^{i p^f z(t)} (1 + i p^0)^{\xi} \ e^{- i p^0 z^0  - i p^f z^f} \ .
\end{split}
\end{equation}
The integral over $p^f$ is easily done:
\begin{equation}
\int_{-\infty}^{\infty} \frac{dp^f}{2 \pi} e^{i p^f [ z(t) - z^f ]} = \delta(z(t) - z^f) \ .
\end{equation}
Enforcing the delta function constraint removes the integral over $z^f$. Since $z(t) = z^0 w(t) + \lambda(t)$, 
\begin{equation}
\begin{split}
P &= \frac{1}{x!} \int \frac{dz^0 dp^0}{2 \pi} \left[ z^0 w(t) + \lambda(t) \right]^x e^{- [z^0 w(t) + \lambda(t)]} \ (1 + i p^0)^{\xi} \ e^{- i p^0 z^0} \\
&= \frac{e^{-\lambda(t)}}{x!} \int \frac{dz^0 dp^0}{2 \pi} \left[ z^0 w(t) + \lambda(t) \right]^x e^{z^0 [1 - w(t) ]} \ (1 + i p^0)^{\xi} \ e^{- z^0 [ 1 + i p^0]} \ .
\end{split}
\end{equation}
This can be rewritten as
\begin{equation}
\begin{split}
P &= \frac{e^{-\lambda(t)}}{x!} \int \frac{dz^0 dp^0}{2 \pi} \left[ z^0 w(t) + \lambda(t) \right]^x e^{z^0 [1 - w(t) ]} \ \left( - \frac{d}{dz^0} \right)^{\xi} \ e^{- z^0 [ 1 + i p^0]} \\
&= \frac{e^{-\lambda(t)}}{x!} \int \frac{dz^0 dp^0}{2 \pi} \left(\frac{d}{dz^0} \right)^{\xi} \left\{ \left[ z^0 w(t) + \lambda(t) \right]^x e^{z^0 [1 - w(t) ]} \right\}   \ e^{- z^0 [ 1 + i p^0]} 
\end{split}
\end{equation}
where we integrated by parts in the second step. The $p^0$ integral can now be done:
\begin{equation}
\int_{-\infty}^{\infty} \frac{dp^0}{2 \pi} e^{- i p^0 z^0} = \delta(z^0) \ .
\end{equation}
We now have
\begin{equation}
\begin{split}
P &= \frac{e^{-\lambda(t)}}{x!} \int_0^{\infty} dz^0 \left(\frac{d}{dz^0} \right)^{\xi} \left\{ \left[ z^0 w(t) + \lambda(t) \right]^x e^{z^0 [1 - w(t) ]} \right\}   \ e^{- z^0} \delta(z^0) \ .
\end{split}
\end{equation}
If we can evaluate the derivative, then we can easily evaluate the integral using the delta function. Using the binomial theorem,
\begin{equation}
\left[ z^0 w(t) + \lambda(t) \right]^x = \sum_{k = 0}^x \binom{x}{k} w(t)^k \lambda(t)^{x - k} \left( z^0 \right)^k \ .
\end{equation}
Since
\begin{equation}
\left( z^0 \right)^k e^{z^0 [1 - w(t) ]} = \sum_{j = 0}^{\infty} \frac{\left( z^0 \right)^{j + k} [1 - w(t)]^j}{j!} \ ,
\end{equation}
the derivative of a specific term is
\begin{equation}
\left(\frac{d}{dz^0} \right)^{\xi} \left\{ \left( z^0 \right)^k e^{z^0 [1 - w(t) ]} \right\}  = \sum_{j = 0}^{\infty} \frac{(j+k)(j+k-1) \cdots (j+k-\xi+1)\left( z^0 \right)^{j + k - \xi} [1 - w(t)]^j}{j!} \ .
\end{equation}
When enforcing the delta function constraint that $z^0 = 0$, all terms will disappear from this series except for the constant term. The constant term is the term with $j + k = \xi$, which reads
\begin{equation} \label{eq:constterm}
\frac{\xi!}{(\xi - k)!} [1 - w(t)]^{\xi - k} \theta(\xi - k)
\end{equation}
where the step function $\theta$, defined as
\begin{equation}
\theta(\xi - k) := \begin{cases} 
      1 & k \leq \xi \\
      0 & k > \xi
   \end{cases}
\end{equation}
must be there since the result will be zero if $k > \xi$. Hence,
\begin{equation}
\begin{split}
P &= \frac{e^{-\lambda(t)}}{x!} \sum_{k = 0}^x \binom{x}{k} w(t)^k \lambda(t)^{x - k} \frac{\xi!}{(\xi - k)!} [1 - w(t)]^{\xi - k} \theta(\xi - k) \\
&= e^{-\lambda(t)} \sum_{k = 0}^{\min(x, \xi)} \binom{\xi}{k} w(t)^k \lambda(t)^{x - k} \frac{1}{(x - k)!} [1 - w(t)]^{\xi - k}  \\
&=  \sum_{k = 0}^{\min(x, \xi)} \left[ \frac{\lambda(t)^{x - k} e^{-\lambda(t)}}{(x - k)!} \right] \left[ \binom{\xi}{k} w(t)^k  [1 - w(t)]^{\xi - k} \right]  \\
&= \mathcal{P}(x, \lambda(t)) \star \mathcal{M}(x, \xi, w(t))
\end{split}
\end{equation}
as desired. \qed
\end{proof}

\subsection{General transition probability derivation}
\label{sec:genptrans}

\begin{lemma}[Monomolecular transition probability]
For the general monomolecular system, the transition probability $P(\mathbf{x}, t; \boldsymbol{\xi}, t_0)$ is
\begin{equation}
P =  \mathcal{P}(\mathbf{x}, \boldsymbol{\lambda}(t)) \star \mathcal{M}(\mathbf{x}, \xi_1, \mathbf{w}^{(1)}(t)) \star \cdots \star \mathcal{M}(\mathbf{x}, \xi_n, \mathbf{w}^{(n)}(t)) 
\end{equation}
where $\boldsymbol{\lambda}(t)$ and the $\mathbf{w}^{(j)}(t)$ are as defined in Theorem \ref{thm:mono}.  
\end{lemma}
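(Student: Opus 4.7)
The plan is to mirror the one-species calculation from Section \ref{sec:onespeciesptrans}, lifting every step to multi-index notation. Start from $P(\mathbf{x}, t; \boldsymbol{\xi}, t_0) = \braket{\mathbf{x}}{\psi(t)}_{ex}/\mathbf{x}!$, substitute the coherent state expansion of $\ket{\psi(t)}$ from Corollary \ref{cor:Utogf}, and plug in the propagator $U(i\mathbf{p}^f, \mathbf{z}^0) = e^{i \mathbf{p}^f \cdot \mathbf{z}(t)}$ from Lemma \ref{lem:mono_prop} together with the inner products $\braket{\mathbf{x}}{\mathbf{z}^f}_{ex} = (\mathbf{z}^f)^{\mathbf{x}} e^{-\mathbf{z}^f \cdot \mathbf{1}}$ and $\braket{-i\mathbf{p}^0}{\boldsymbol{\xi}} = (\mathbf{1} + i\mathbf{p}^0)^{\boldsymbol{\xi}}$.

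Next, the $\mathbf{p}^f$ integration produces a product of delta functions that pins $\mathbf{z}^f$ to $\mathbf{z}(t) = \sum_k z_k^0 \mathbf{w}^{(k)}(t) + \boldsymbol{\lambda}(t)$ and kills the $\mathbf{z}^f$ integral. Writing $\mathbf{z}(t) \cdot \mathbf{1} = |\boldsymbol{\lambda}(t)| + \sum_k z_k^0 |\mathbf{w}^{(k)}(t)|$, rewriting $(\mathbf{1} + i\mathbf{p}^0)^{\boldsymbol{\xi}}\, e^{-i\mathbf{p}^0 \cdot \mathbf{z}^0} = e^{\mathbf{z}^0 \cdot \mathbf{1}} (-d/d\mathbf{z}^0)^{\boldsymbol{\xi}} e^{-\mathbf{z}^0 \cdot (\mathbf{1} + i\mathbf{p}^0)}$, integrating by parts $\xi_k$ times in each $z_k^0$, and then doing the $\mathbf{p}^0$ integral to produce $\delta(\mathbf{z}^0)$ reduces $P$ to the compact expression
\begin{equation}
P = \frac{e^{-|\boldsymbol{\lambda}(t)|}}{\mathbf{x}!} \left. \left( \frac{d}{d\mathbf{z}^0} \right)^{\boldsymbol{\xi}} \left\{ \mathbf{z}(t)^{\mathbf{x}} \, e^{\sum_k z_k^0 (1 - |\mathbf{w}^{(k)}(t)|)} \right\} \right|_{\mathbf{z}^0 = \mathbf{0}}.
\end{equation}

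The remaining step is to evaluate this mixed partial derivative and recognize the result as the stated convolution. I would expand each factor $(z_j(t))^{x_j} = (\lambda_j(t) + \sum_k z_k^0 w_j^{(k)}(t))^{x_j}$ by the multinomial theorem, generating a sum over nonnegative integer vectors $\mathbf{y}^{(0)}, \mathbf{y}^{(1)}, \dots, \mathbf{y}^{(n)} \in \mathbb{N}^n$ with $\mathbf{y}^{(0)} + \mathbf{y}^{(1)} + \cdots + \mathbf{y}^{(n)} = \mathbf{x}$, in which $z_k^0$ appears with total power $|\mathbf{y}^{(k)}|$ for each $k \geq 1$. Applying the same single-variable identity used to derive Eq. \ref{eq:constterm} in the one-species proof to each $z_k^0$ separately yields a factor $\xi_k!/(\xi_k - |\mathbf{y}^{(k)}|)! \cdot [1 - |\mathbf{w}^{(k)}(t)|]^{\xi_k - |\mathbf{y}^{(k)}|}$, enforcing $|\mathbf{y}^{(k)}| \leq \xi_k$. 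Collecting terms and comparing with Eqs. \ref{eq:productpoisson} and \ref{eq:multinomial} shows that a generic summand is precisely $\mathcal{P}(\mathbf{y}^{(0)}, \boldsymbol{\lambda}(t)) \prod_{k=1}^n \mathcal{M}(\mathbf{y}^{(k)}, \xi_k, \mathbf{w}^{(k)}(t))$, so the total sum is the claimed $(n+1)$-fold convolution.

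The main obstacle is the combinatorial bookkeeping in that last step: because each $z_k^0$ is multiplied by a different species-dependent coefficient $w_j^{(k)}(t)$ inside each of the $n$ base factors, the multinomial expansion is genuinely higher-dimensional, and care is needed to show that the cross terms reassemble into the product $\prod_k \mathcal{M}(\mathbf{y}^{(k)}, \xi_k, \mathbf{w}^{(k)}(t))$ rather than a more tangled expression. Everything else (the $\mathbf{p}^f$ integration, the integration by parts, and the $\mathbf{p}^0$ integration) is a direct multi-index rewriting of the one-species argument and should go through without complication.
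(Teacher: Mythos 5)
Your proposal is correct and follows essentially the same route as the paper's own proof: the same chain of $\mathbf{p}^f$, $\mathbf{p}^0$ integrations and integration by parts, the same reduction to a multi-index derivative evaluated at $\mathbf{z}^0 = \mathbf{0}$, and the same multinomial expansion whose summands reassemble into $\mathcal{P}(\mathbf{v}_{n+1},\boldsymbol{\lambda})\prod_k \mathcal{M}(\mathbf{v}_k,\xi_k,\mathbf{w}^{(k)})$ (your $\mathbf{y}^{(k)}$ are the paper's $\mathbf{v}_k$). The combinatorial bookkeeping you flag is handled in the paper exactly as you describe, by applying the one-species identity to each $z_\ell^0$ separately.
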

\begin{proof}
The general case proceeds analogously to the one species case. The main difference is that we must do the appropriate multivariable generalization of each of the steps in the previous subsection (e.g. use the multinomial theorem instead of the binomial theorem). Since $P(\mathbf{x}, t_0) = \delta(\mathbf{x} - \boldsymbol{\xi})$,
\begin{equation}
\ket{\psi(t_0)} = \ket{\boldsymbol{\xi}} \ .
\end{equation}
Using Eq. \ref{eq:gftop} and Eq. \ref{eq:psiU},
\begin{equation}
\begin{split}
P(\mathbf{x}, t; \boldsymbol{\xi}, t_0) &= \frac{1}{\mathbf{x}!} \int \frac{d\mathbf{z}^f d\mathbf{p}^f}{(2 \pi)^n} \frac{d\mathbf{z}^0 d\mathbf{p}^0}{(2 \pi)^n} \braket{\mathbf{x}}{\mathbf{z}^f}_{ex} \ U(i \mathbf{p}^f, \mathbf{z}^0) \braket{- i \mathbf{p}^0}{\psi(t_0)} \ e^{- i \mathbf{p}^0 \cdot \mathbf{z}^0  - i \mathbf{p}^f \cdot \mathbf{z}^f} \\
&= \frac{1}{\mathbf{x}!} \int \frac{d\mathbf{z}^f d\mathbf{p}^f}{(2 \pi)^n} \frac{d\mathbf{z}^0 d\mathbf{p}^0}{(2 \pi)^n} \left(\mathbf{z}^f \right)^\mathbf{x} e^{- \mathbf{z}^f \cdot \mathbf{1}} \ e^{i \mathbf{p}^f \cdot \mathbf{z}(t)} (\mathbf{1} + i \mathbf{p}^0)^{\boldsymbol{\xi}} \ e^{- i \mathbf{p}^0 \cdot \mathbf{z}^0  - i \mathbf{p}^f \cdot \mathbf{z}^f}  \ .
\end{split}
\end{equation}
The integrals over $p^f_1, ..., p^f_n$ yield delta functions:
\begin{equation}
\int \frac{d\mathbf{p}^f}{(2 \pi)^n} e^{i \mathbf{p}^f \cdot [ \mathbf{z}(t) - \mathbf{z}^f ]} = \delta(z_1(t) - z^f_1) \cdots \delta(z_n(t) - z^f_n) = \delta(\mathbf{z}(t) - \mathbf{z}^f) \ .
\end{equation}
Enforcing the delta function constraints removes the integrals over $z_1^f, ..., z^f_n$. Using Eq. \ref{eq:zdecomp},
\begin{equation}
\begin{split}
P &= \frac{1}{\mathbf{x}!} \int \frac{d\mathbf{z}^0 d\mathbf{p}^0}{(2 \pi)^n} \left[ \sum_{k=1}^n  z^0_k \mathbf{w}^{(k)} + \boldsymbol{\lambda} \right]^\mathbf{x} e^{- [\sum_{k=1}^n  z^0_k \mathbf{w}^{(k)} + \boldsymbol{\lambda}] \cdot \mathbf{1}} \ (\mathbf{1} + i \mathbf{p}^0)^{\boldsymbol{\xi}} \ e^{- i \mathbf{p}^0 \cdot \mathbf{z}^0}  \\
&= \frac{e^{-|\boldsymbol{\lambda}(t)|}}{\mathbf{x}!} \int \frac{d\mathbf{z}^0 d\mathbf{p}^0}{(2 \pi)^n} \left[ \sum_{k=1}^n  z^0_k \mathbf{w}^{(k)} + \boldsymbol{\lambda} \right]^\mathbf{x} e^{\sum_{k=1}^n z^0_k \left( 1 - |\mathbf{w}^{(k)}| \right)} \  (\mathbf{1} + i \mathbf{p}^0)^{\boldsymbol{\xi}} \ e^{- \mathbf{z}^0 \cdot [ \mathbf{1} + i \mathbf{p}^0]} \ .
\end{split}
\end{equation}
Reusing the notation we used earlier to denote many derivatives with respect to each variable (Eq. \ref{eq:ddzshorthand}), we can rewrite this result as
\begin{equation}
\begin{split}
P &= \frac{e^{-|\boldsymbol{\lambda}(t)|}}{\mathbf{x}!} \int \frac{d\mathbf{z}^0 d\mathbf{p}^0}{(2 \pi)^n} \left[ \sum_{k=1}^n  z^0_k \mathbf{w}^{(k)} + \boldsymbol{\lambda} \right]^\mathbf{x} e^{\sum_{k=1}^n z^0_k \left( 1 - |\mathbf{w}^{(k)}| \right)}  \ \left( - \frac{d}{d\mathbf{z}^0} \right)^{\boldsymbol{\xi}} \ e^{- \mathbf{z}^0 \cdot [ \mathbf{1} + i \mathbf{p}^0]}\\
&= \frac{e^{-|\boldsymbol{\lambda}(t)|}}{\mathbf{x}!} \int \frac{d\mathbf{z}^0 d\mathbf{p}^0}{(2 \pi)^n} \left( \frac{d}{d\mathbf{z}^0} \right)^{\boldsymbol{\xi}} \left\{ \left[ \sum_{k=1}^n  z^0_k \mathbf{w}^{(k)} + \boldsymbol{\lambda} \right]^\mathbf{x} e^{\sum_{k=1}^n z^0_k \left( 1 - |\mathbf{w}^{(k)}| \right)} \right\}   \ e^{- \mathbf{z}^0 \cdot [ \mathbf{1} + i \mathbf{p}^0]}
\end{split}
\end{equation}
where we integrated by parts many times in the second step. The $p^0_1, ..., p^0_n$ integrals can now be done:
\begin{equation}
\int \frac{d\mathbf{p}^0}{(2 \pi)^n} e^{- i \mathbf{p}^0 \cdot \mathbf{z}^0} = \delta(z^0_1) \cdots \delta(z^0_n) = \delta(\mathbf{z}^0) \ .
\end{equation}
We now have
\begin{equation}
\begin{split}
P &= \frac{e^{-|\boldsymbol{\lambda}(t)|}}{\mathbf{x}!} \int d\mathbf{z}^0  \left( \frac{d}{d\mathbf{z}^0} \right)^{\boldsymbol{\xi}} \left\{ \left[ \sum_{k=1}^n  z^0_k \mathbf{w}^{(k)} + \boldsymbol{\lambda} \right]^\mathbf{x} e^{\sum_{k=1}^n z^0_k \left( 1 - |\mathbf{w}^{(k)}| \right)} \right\}    \ e^{- \mathbf{z}^0 \cdot \mathbf{1}} \delta(\mathbf{z}^0) \ .
\end{split}
\end{equation}
If we can evaluate the derivative, then we can easily evaluate the integral using the delta function. Recall that
\begin{equation}
\begin{split}
\left[ \sum_{k=1}^n  z^0_k \mathbf{w}^{(k)} + \boldsymbol{\lambda} \right]^\mathbf{x} &= \left[ \sum_{k=1}^n  z^0_k w^{(k)}_1 + \lambda_1 \right]^{x_1}  \cdots \left[ \sum_{k=1}^n  z^0_k w^{(k)}_n + \lambda_n \right]^{x_n}  \ .
\end{split}
\end{equation}
Using the multinomial theorem,
\begin{equation}
\begin{split}
\left[ \sum_{k=1}^n  z^0_k w^{(k)}_j + \lambda_j \right]^{x_j} &= \sum_{v^j_1 + \cdots v^j_{n+1} = x_j} \binom{x_j}{v^j_1 \cdots v^j_{n+1}} \left[ z^0_1 w^{(1)}_j \right]^{v^j_1} \cdots \left[ z^0_n w^{(n)}_j  \right]^{v^j_n} \left[ \lambda_j \right]^{v^j_{n+1}} 
\end{split}
\end{equation}
for each $j = 1, ..., n$. Write $|\mathbf{v}_{\ell}| := v^1_{\ell} + \cdots + v^n_{\ell}$. Putting these multinomial expansions together, our integral now involves computing $n$ expressions of the form
\begin{equation}
\left. \left( \frac{d}{dz^0_{\ell}} \right)^{\xi_{\ell}} \left\{ \left[ z^0_{\ell} \right]^{|\mathbf{v}_{\ell}|} e^{z^0_{\ell} \left( 1 - |\mathbf{w}^{(\ell)}| \right)} \right\} \right|_{z^0_{\ell} = 0} = \frac{\xi_{\ell}!}{(\xi_{\ell} - |\mathbf{v}_{\ell}|)!} \left( 1 - |\mathbf{w}^{(\ell)}| \right)^{\xi_{\ell} - |\mathbf{v}_{\ell}|} \theta(\xi_{\ell} - |\mathbf{v}_{\ell}|)
\end{equation}
where we have used the result from earlier (Eq. \ref{eq:constterm}) to evaluate it. When enforcing the delta function constraint that $z^0_{\ell} = 0$ for all $\ell = 1, ..., n$, we get
\begin{equation}
\frac{e^{-|\boldsymbol{\lambda}(t)|}}{\mathbf{x}!} \sum_{v^j_k} \left\{ \prod_{j = 1}^n \binom{x_j}{v^j_1 \cdots v^j_{n+1}}  \left[ w^{(1)}_j \right]^{v^j_1} \cdots \left[ w^{(n)}_j  \right]^{v^j_n} \left[ \lambda_j \right]^{v^j_{n+1}} \frac{\xi_j!}{(\xi_j - |\mathbf{v}_j|)!} \left( 1 - |\mathbf{w}^{(j)}| \right)^{\xi_j - |\mathbf{v}_j|} \theta(\xi_j - |\mathbf{v}_j|) \right\} 
\end{equation}
for $P$. This is the final result, but let us rewrite it so that we recover the result from Theorem 1 (Eq. \ref{eq:fullsln}) of Jahnke and Huisinga's paper. Note that
\begin{equation}
e^{-|\boldsymbol{\lambda}(t)|} \prod_{j = 1}^n \frac{\left[ \lambda_j \right]^{v^j_{n+1}}}{v^j_{n+1}!} =  \frac{\boldsymbol{\lambda}(t)^{\mathbf{v}_{n+1}}}{\mathbf{v}_{n+1}!} e^{-|\boldsymbol{\lambda}(t)|} = \mathcal{P}(\mathbf{v}_{n+1}, \boldsymbol{\lambda}(t)) \ .
\end{equation}
Also,
\begin{equation}
\begin{split}
& \frac{\xi_k! \left( 1 - |\mathbf{w}^{(k)}| \right)^{\xi_k - |\mathbf{v}_k|}}{(\xi_k - |\mathbf{v}_k|)!} \theta(\xi_k - |\mathbf{v}_k|) \prod_{j = 1}^n \frac{\left[ w^{(k)}_j \right]^{v^j_k}}{v^j_k!}  \\
=&  \frac{\xi_k! \left( 1 - |\mathbf{w}^{(k)}| \right)^{\xi_k - |\mathbf{v}_k|} }{(\xi_k - |\mathbf{v}_k|)!}  \theta(\xi_k - |\mathbf{v}_k|) \frac{\left[ \mathbf{w}^{(k)} \right]^{\mathbf{v}_k}}{\mathbf{v}_k!} \\
=& \mathcal{M}(\mathbf{v}_k, \xi_k, \mathbf{w}^{(k)}) \ .
 \end{split}
\end{equation}
We are left with
\begin{equation}
\begin{split}
P &=  \sum_{v^j_k} \mathcal{P}(\mathbf{v}_{n+1}, \boldsymbol{\lambda}(t)) \ \mathcal{M}(\mathbf{v}_1, \xi_1, \mathbf{w}^{(1)}) \cdots \mathcal{M}(\mathbf{v}_n, \xi_n, \mathbf{w}^{(n)})  \\
&=  \sum_{v^j_k} \mathcal{P}(\mathbf{x} - \mathbf{v}_1 - \cdots - \mathbf{v}_n, \boldsymbol{\lambda}(t)) \ \mathcal{M}(\mathbf{v}_1, \xi_1, \mathbf{w}^{(1)}) \cdots \mathcal{M}(\mathbf{v}_n, \xi_n, \mathbf{w}^{(n)})  \\
&=  \mathcal{P}(\mathbf{x}, \boldsymbol{\lambda}(t)) \star \mathcal{M}(\mathbf{x}, \xi_1, \mathbf{w}^{(1)}(t)) \star \cdots \star \mathcal{M}(\mathbf{x}, \xi_n, \mathbf{w}^{(n)}(t))
\end{split}
\end{equation}
which matches Eq. \ref{eq:fullsln}.  \qed
\end{proof}
%\begin{theorem}
%The general solution of the monomolecular system with initial distribution BLANK takes the following form:
%\end{theorem}
%
%\begin{proposition}
%The one-dimensional result is BLANK:
%\end{proposition}

%\section{From the propagator to moments}
%\label{sec:proptomoments}

If we wanted to compute the moments of $P(\mathbf{x}, t)$, we could just use Eq. \ref{eq:fullsln} and carry out the calculation directly; however, the Doi-Peliti approach offers a way to compute moments which bypasses $P(\mathbf{x}, t)$ completely. In other words, if we are \textit{only} interested in moments, the work from the previous section is unnecessary. Instead, we can use Proposition \ref{prop:gfto}. As with the previous calculation, we will warm up with the one species case before treating the multi-species case. 

\subsection{One species moments derivation}
\label{sec:onespeciesmoments}

\begin{lemma}[One species monomolecular moments]
For the single species monomolecular system (i.e. the chemical birth-death process), the first and second factorial moments are
\begin{equation} 
\begin{split}
\expval{x(t)} &= \xi w(t) + \lambda(t) \\
\expval{x(t)[x(t) - 1]} &= w(t)^2 \xi (\xi - 1) + 2 \lambda(t) w(t) \xi + \lambda(t)^2   
\end{split}
\end{equation}
where $w(t)$ and $\lambda(t)$ are as defined in Theorem \ref{thm:bd}.
\end{lemma}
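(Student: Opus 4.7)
The plan is to apply Proposition \ref{prop:gfto} in the Grassberger--Scheunert form $\expval{x(t)^{(j)}} = \matrixel{0}{\hat{a}^j}{\psi(t)}$ for $j = 1, 2$ (where $x^{(j)}$ denotes the $j$th factorial moment), and to evaluate these matrix elements by inserting the coherent-state expansion of $\ket{\psi(t)}$ from Corollary \ref{cor:Utogf} together with the explicit propagator $U(ip^f, z^0) = e^{ip^f z(t)}$ from Lemma \ref{lem:mono_prop}, where $z(t) = z^0 w(t) + \lambda(t)$.

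The starting observation is that $\hat{a}$ acts as the eigenvalue $z^f$ on the coherent state $\ket{z^f}$, and the Grassberger--Scheunert inner product satisfies $\braket{0}{z^f} = (1 + z^f)^0 = 1$, so
\begin{equation*}
\matrixel{0}{\hat{a}^j}{z^f} = (z^f)^j .
\end{equation*}
Using also $\braket{-ip^0}{\xi} = (1+ip^0)^\xi$, I obtain
\begin{equation*}
\expval{x(t)^{(j)}} = \int \frac{dz^f \, dp^f}{2\pi} \frac{dz^0 \, dp^0}{2\pi} \, (z^f)^j \, e^{ip^f z(t)} \, (1+ip^0)^\xi \, e^{-ip^0 z^0 - ip^f z^f} .
\end{equation*}

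The remaining steps mirror the transition-probability derivation of Sec.~\ref{sec:onespeciesptrans}. First I would perform the $p^f$ integral, producing $\delta(z(t) - z^f)$ and eliminating the $z^f$ integral via the substitution $z^f = z^0 w(t) + \lambda(t)$. To handle the factor $(1+ip^0)^\xi e^{-ip^0 z^0}$, I would use the identity $(1+ip^0)^\xi e^{-ip^0 z^0} = e^{z^0} \bigl(-\tfrac{d}{dz^0}\bigr)^\xi e^{-(1+ip^0) z^0}$, integrate by parts $\xi$ times in $z^0$, and then evaluate the $p^0$ integral, which produces $\delta(z^0)$. This collapses everything to the compact formula
\begin{equation*}
\expval{x(t)^{(j)}} = \left. \left(\frac{d}{dz^0}\right)^\xi \bigl\{ [z^0 w(t) + \lambda(t)]^j \, e^{z^0} \bigr\} \right|_{z^0 = 0} .
\end{equation*}

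The final step is a finite combinatorial calculation: expand $[z^0 w(t) + \lambda(t)]^j$ by the binomial theorem and apply the Leibniz rule, using the elementary fact $\left.(d/dz^0)^k (z^0)^m\right|_{z^0 = 0} = m! \, \delta_{km}$ to pick out the surviving contributions. For $j = 1$ only the $k = 0, 1$ terms survive and add up to $\lambda(t) + \xi w(t)$. For $j = 2$ the $k = 0, 1, 2$ terms give $\lambda(t)^2$, $2 \lambda(t) w(t) \xi$, and $w(t)^2 \xi (\xi - 1)$, respectively, summing to the claimed expression. The main technical obstacle is justifying the integration by parts (boundary behavior at $z^0 = 0$ and $z^0 \to \infty$) and the use of $\delta(z^0)$ against a half-line integral, but this is handled in exactly the same distributional manner as in the transition-probability calculation of Sec.~\ref{sec:onespeciesptrans}, so it should go through routinely.
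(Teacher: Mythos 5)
Your proposal is correct and follows essentially the same route as the paper's own derivation: apply Proposition \ref{prop:gfto} with the Grassberger--Scheunert product, insert the coherent-state resolution and the propagator $U(ip^f,z^0)=e^{ip^fz(t)}$, carry out the $p^f$, $z^f$, and $p^0$ integrals exactly as in the transition-probability calculation, and evaluate $\left.(d/dz^0)^{\xi}\{[z^0w(t)+\lambda(t)]^{j}e^{z^0}\}\right|_{z^0=0}$ by elementary differentiation. The only difference is cosmetic (you organize the final step via the Leibniz rule rather than evaluating the derivative directly), so nothing further is needed.
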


\begin{proof}
Using Eq. \ref{eq:gftomoments}, 
\begin{equation}
\begin{split}
\expval{x(t)} &=  \matrixel{0}{\hat{a}}{\psi(t)} \\
&= \int \frac{dz^f dp^f}{2 \pi} \frac{dz^0 dp^0}{2 \pi} \mel{0}{\hat{a}}{z^f} \ U(ip^f, z^0) \braket{- i p^0}{\psi(t_0)} \ e^{- i p^0 z^0  - i p^f z^f}  \\
&= \int \frac{dz^f dp^f}{2 \pi} \frac{dz^0 dp^0}{2 \pi} z^f  \ e^{i p^f z(t)} (1 + i p^0)^{\xi} \ e^{- i p^0 z^0  - i p^f z^f} \ .
\end{split}
\end{equation}
The $p^f$, $z^f$, and $p^0$ integrals can be done as in Sec. \ref{sec:onespeciesptrans}, leaving
\begin{equation}
\expval{x(t)}  = \int_{0}^{\infty} dz^0 \ \left( \frac{d}{dz^0} \right)^{\xi} \left\{ \left[ z^0 w(t) + \lambda(t) \right] e^{z^0}  \right\} \ e^{- z^0} \delta(z^0) \ .
\end{equation}
The derivative is easily evaluated, and we obtain
\begin{equation}
\expval{x(t)}  = \int_{0}^{\infty} dz^0 \  \left[ \xi w(t) e^{z^0} + z(t) e^{z^0}  \right] \ e^{- z^0} \delta(z^0) = \xi w(t) + \lambda(t)
\end{equation}
which is just the solution to the one species reaction rate equation with $x(t_0) = \xi$, just as expected. The second factorial moment can be computed in similar fashion:
\begin{equation}
\begin{split}
\expval{x(t)[x(t) - 1]} &=  \matrixel{0}{\hat{a}^2}{\psi(t)} \\
&= \int \frac{dz^f dp^f}{2 \pi} \frac{dz^0 dp^0}{2 \pi} \left( z^f \right)^2 \ e^{i p^f z(t)} (1 + i p^0)^{\xi} \ e^{- i p^0 z^0  - i p^f z^f} \\
&= \int_{0}^{\infty} dz^0 \ \left( \frac{d}{dz^0} \right)^{\xi} \left\{ \left[ z^0 w(t) + \lambda(t) \right]^2 e^{z^0}  \right\} \ e^{- z^0} \delta(z^0) \\
&= w(t)^2 \xi (\xi - 1) + 2 \lambda(t) w(t) \xi + \lambda(t)^2 \ .
\end{split}
\end{equation}
\qed
\end{proof}
Higher factorial moments can be computed in exactly the same way.

\subsection{General moments derivation}

Unlike in the one species case, there are many first moments: $\expval{x_1(t)}, ..., \expval{x_n(t)}$. There are also many second moments. To summarize them usefully, we compute the covariance matrix elements (i.e. $\mathrm{Cov}(x_j, x_{\ell}) := \langle x_j(t) x_{\ell}(t) \rangle - \langle x_j(t) \rangle \langle x_{\ell}(t) \rangle$ for all pairs of $j$ and $\ell$). 

\begin{lemma}[Monomolecular moments]
For the general monomolecular system, the first moments, second factorial moments, and covariance matrix elements are given by
\begin{equation} 
\begin{split}
\expval{x_j(t)}  =& \sum_{k=1}^n  {\xi}_k w^{(k)}_j(t) + \lambda_j(t) \hspace{0.5in} j = 1, ..., n \\
\expval{x_j x_{\ell}}  =& \sum_{k=1}^n \sum_{k' = 1}^n  {\xi}_k {\xi}_{k'} w^{(k)}_j w^{(k')}_{\ell} \\
&+ \sum_{k=1}^n {\xi}_k \left[ w^{(k)}_j \lambda_{\ell} + w^{(k)}_{\ell} \lambda_{j} - w^{(k)}_j w^{(k)}_{\ell} \right] + \lambda_j \lambda_{\ell} \hspace{0.5in}  j \neq \ell \\
\expval{x_j(t)[x_{j}(t) - 1]}  =& \sum_{k=1}^n \sum_{k' = 1}^n  {\xi}_k {\xi}_{k'} w^{(k)}_j w^{(k')}_{j} \\
&+ \sum_{k=1}^n {\xi}_k \left[ 2 w^{(k)}_j \lambda_{j} - \left( w^{(k)}_j\right)^2 \right] + \lambda_j^2 \hspace{0.5in} j = 1, ..., n \\
\mathrm{Cov}(x_j, x_{\ell}) =&  \begin{cases} 
      \sum_{k = 1}^n \xi_k w^{(k)}_j \left[ 1 - w^{(k)}_j  \right] + \lambda_j & j = \ell \\
      - \sum_{k = 1}^n \xi_k  w^{(k)}_j w^{(k)}_{\ell} & j \neq \ell
   \end{cases} 
\end{split}
\end{equation}
where $\boldsymbol{\lambda}(t)$ and the $\mathbf{w}^{(j)}(t)$ are as defined in Theorem \ref{thm:mono}. 
\end{lemma}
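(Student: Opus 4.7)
The plan is to generalize the one-species moment derivation of Sec.~\ref{sec:onespeciesmoments} to $n$ species by leveraging the same machinery used to derive the transition probability in Sec.~\ref{sec:genptrans}. By Proposition~\ref{prop:gfto}, each factorial moment is a matrix element of the form $\matrixel{\mathbf{0}}{\hat{a}_j}{\psi(t)}$, $\matrixel{\mathbf{0}}{\hat{a}_j \hat{a}_{\ell}}{\psi(t)}$, or $\matrixel{\mathbf{0}}{\hat{a}_j^2}{\psi(t)}$. Inserting the coherent-state representation of $\ket{\psi(t)}$ from Corollary~\ref{cor:Utogf} together with the explicit monomolecular propagator $U(i\mathbf{p}^f, \mathbf{z}^0) = e^{i\mathbf{p}^f \cdot \mathbf{z}(t)}$ from Lemma~\ref{lem:mono_prop}, every moment is reduced to the same $(\mathbf{z}^f, \mathbf{p}^f, \mathbf{z}^0, \mathbf{p}^0)$ integral that appeared in the transition-probability calculation, except that the $\braket{\mathbf{x}}{\mathbf{z}^f}_{ex}/\mathbf{x}!$ factor is replaced by $\mel{\mathbf{0}}{\hat{a}_j}{\mathbf{z}^f} = z^f_j$, $\mel{\mathbf{0}}{\hat{a}_j \hat{a}_{\ell}}{\mathbf{z}^f} = z^f_j z^f_{\ell}$, or $\mel{\mathbf{0}}{\hat{a}_j^2}{\mathbf{z}^f} = (z^f_j)^2$.

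The integrals are then disposed of in the same order as in Sec.~\ref{sec:genptrans}. Integration over $\mathbf{p}^f$ yields $\delta(\mathbf{z}(t) - \mathbf{z}^f)$, which eliminates the $\mathbf{z}^f$ integral by enforcing $\mathbf{z}^f = \mathbf{z}(t) = \sum_k z^0_k \mathbf{w}^{(k)}(t) + \boldsymbol{\lambda}(t)$. Rewriting the factor $(\mathbf{1} + i\mathbf{p}^0)^{\boldsymbol{\xi}}$ as $\bigl(-\tfrac{d}{d\mathbf{z}^0}\bigr)^{\boldsymbol{\xi}} e^{-\mathbf{z}^0 \cdot (\mathbf{1} + i\mathbf{p}^0)}$ and integrating by parts $|\boldsymbol{\xi}|$ times allows the $\mathbf{p}^0$ integral to collapse to $\delta(\mathbf{z}^0)$. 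What remains is to evaluate $\bigl(\tfrac{d}{d\mathbf{z}^0}\bigr)^{\boldsymbol{\xi}}$ of a polynomial-times-exponential expression at $\mathbf{z}^0 = \mathbf{0}$; by the constant-term argument already used in Sec.~\ref{sec:onespeciesptrans} (see Eq.~\ref{eq:constterm}), only the terms whose total degree in each $z^0_k$ equals $\xi_k$ survive.

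For $\expval{x_j(t)}$, the surviving contributions come from expanding $z_j(t) = \sum_k z^0_k w^{(k)}_j + \lambda_j$ and pairing each linear-in-$z^0_k$ term with one derivative $\partial/\partial z^0_k$; this immediately yields $\sum_k \xi_k w^{(k)}_j + \lambda_j$, i.e.\ the reaction-rate-equation solution, as expected. For $\expval{x_j x_{\ell}}$ with $j \neq \ell$, expanding the product $z_j(t) z_{\ell}(t)$ produces four types of terms: a bilinear $z^0_k z^0_{k'}$ piece (contributing $\xi_k \xi_{k'} w^{(k)}_j w^{(k')}_{\ell}$ for $k \neq k'$ and $\xi_k(\xi_k-1) w^{(k)}_j w^{(k)}_{\ell}$ for $k = k'$ after the derivatives act), two crossed linear pieces (giving $\xi_k[w^{(k)}_j \lambda_{\ell} + w^{(k)}_{\ell} \lambda_j]$), and a constant piece $\lambda_j \lambda_{\ell}$. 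The $j = \ell$ case $\expval{x_j^2 - x_j}$ is analogous. Finally, the covariance formulas follow by algebraic subtraction: the pure $\sum_{k,k'} \xi_k \xi_{k'} w^{(k)}_j w^{(k')}_{\ell}$ double sum cancels exactly against $\expval{x_j}\expval{x_{\ell}}$, and the remaining $k = k'$ diagonal contributions collapse into the compact forms stated in the lemma.

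The main obstacle is purely combinatorial bookkeeping: organizing which monomials in $z^0_1, \ldots, z^0_n$ survive the $\bigl(\tfrac{d}{d\mathbf{z}^0}\bigr)^{\boldsymbol{\xi}}$ derivative at the origin, and keeping the $k = k'$ and $k \neq k'$ contributions straight when expanding $z_j(t) z_{\ell}(t)$. No new analytic technique is needed beyond what was developed in Sec.~\ref{sec:onespeciesmoments} and Sec.~\ref{sec:genptrans}; once the bookkeeping is carried out, the final simplification yielding $\mathrm{Cov}(x_j, x_{\ell})$ is immediate.
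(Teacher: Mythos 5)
Your proposal is correct and follows essentially the same route as the paper: express each moment as a matrix element via Proposition \ref{prop:gfto}, insert the coherent-state resolution and the propagator $U(i\mathbf{p}^f,\mathbf{z}^0)=e^{i\mathbf{p}^f\cdot\mathbf{z}(t)}$, carry out the $\mathbf{p}^f$, $\mathbf{z}^f$, $\mathbf{p}^0$ integrals as in the transition-probability calculation, and evaluate the surviving $\bigl(\tfrac{d}{d\mathbf{z}^0}\bigr)^{\boldsymbol{\xi}}$ derivative at $\mathbf{z}^0=\mathbf{0}$. Your bookkeeping of the $k=k'$ versus $k\neq k'$ contributions (the $\xi_k(\xi_k-1)$ diagonal terms producing the $-\xi_k w^{(k)}_j w^{(k)}_{\ell}$ correction) and the final subtraction yielding the covariance match the paper's derivation exactly.
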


\begin{proof}
Picking a specific $x_j$ and using Eq. \ref{eq:gftomoments}, we have
\begin{equation}
\begin{split}
\expval{x_j(t)} &=  \matrixel{\mathbf{0}}{\hat{a}_j}{\psi(t)} \\
&= \int \frac{d\mathbf{z}^f d\mathbf{p}^f}{(2 \pi)^n} \frac{d\mathbf{z}^0 d\mathbf{p}^0}{(2 \pi)^n} \mel{\mathbf{0}}{\hat{a}_j}{\mathbf{z}^f} \ U(i \mathbf{p}^f, \mathbf{z}^0) \braket{- i \mathbf{p}^0}{\psi(t_0)} \ e^{- i \mathbf{p}^0 \cdot \mathbf{z}^0  - i \mathbf{p}^f \cdot \mathbf{z}^f} \\
&= \int \frac{d\mathbf{z}^f d\mathbf{p}^f}{(2 \pi)^n} \frac{d\mathbf{z}^0 d\mathbf{p}^0}{(2 \pi)^n} z_j^f  \ e^{i \mathbf{p}^f \cdot \mathbf{z}(t)} (\mathbf{1} + i \mathbf{p}^0)^{\boldsymbol{\xi}} \ e^{- i \mathbf{p}^0 \cdot \mathbf{z}^0  - i \mathbf{p}^f \cdot \mathbf{z}^f} \ .
\end{split}
\end{equation}
The $\mathbf{p}^f$, $\mathbf{z}^f$, and $\mathbf{p}^0$ integrals can be done as in Sec. \ref{sec:genptrans}, yielding
\begin{equation}
\begin{split}
\expval{x_j(t)}  &= \int d\mathbf{z}^0 \ \left( \frac{d}{d\mathbf{z}^0} \right)^{\boldsymbol{\xi}} \left\{ \left[ \sum_{k=1}^n  z^0_k w^{(k)}_j + \lambda_j \right] e^{\mathbf{z}^0 \cdot \mathbf{1}}  \right\} \ e^{- \mathbf{z}^0 \cdot \mathbf{1}} \delta(\mathbf{z}^0) \\
&= \sum_{k=1}^n  {\xi}_k w^{(k)}_j(t) + \lambda_j(t)
\end{split}
\end{equation}
which is the $j$th component of the solution to Eq. \ref{eq:RRE} with $\mathbf{x}(t_0) = \boldsymbol{\xi}$. 

Let us compute $\expval{x_j(t) x_{\ell}(t)}$ for $j \neq \ell$. To start off,
\begin{equation}
\begin{split}
\expval{x_j(t) x_{\ell}(t)} &=  \matrixel{\mathbf{0}}{\hat{a}_j \hat{a}_{\ell}}{\psi(t)} \\
&= \int \frac{d\mathbf{z}^f d\mathbf{p}^f}{(2 \pi)^n} \frac{d\mathbf{z}^0 d\mathbf{p}^0}{(2 \pi)^n} \mel{\mathbf{0}}{\hat{a}_j \hat{a}_{\ell}}{\mathbf{z}^f} \ U(i \mathbf{p}^f, \mathbf{z}^0) \braket{- i \mathbf{p}^0}{\psi(t_0)} \ e^{- i \mathbf{p}^0 \cdot \mathbf{z}^0  - i \mathbf{p}^f \cdot \mathbf{z}^f} \\
&= \int \frac{d\mathbf{z}^f d\mathbf{p}^f}{(2 \pi)^n} \frac{d\mathbf{z}^0 d\mathbf{p}^0}{(2 \pi)^n} z_j^f z_{\ell}^f  \ e^{i \mathbf{p}^f \cdot \mathbf{z}(t)} (\mathbf{1} + i \mathbf{p}^0)^{\boldsymbol{\xi}} \ e^{- i \mathbf{p}^0 \cdot \mathbf{z}^0  - i \mathbf{p}^f \cdot \mathbf{z}^f} \ .
\end{split}
\end{equation}
Proceeding as we just did, we obtain
\begin{equation}
\begin{split}
\expval{x_j x_{\ell}}  &= \int d\mathbf{z}^0 \ \left( \frac{d}{d\mathbf{z}^0} \right)^{\boldsymbol{\xi}} \left\{ \left[ \sum_{k=1}^n  z^0_k w^{(k)}_j + \lambda_j \right] \left[ \sum_{k'=1}^n  z^0_{k'} w^{(k')}_{\ell} + \lambda_{\ell} \right] e^{\mathbf{z}^0 \cdot \mathbf{1}}  \right\} \ e^{- \mathbf{z}^0 \cdot \mathbf{1}} \delta(\mathbf{z}^0) \\
&= \sum_{k=1}^n \sum_{k' = 1}^n  {\xi}_k {\xi}_{k'} w^{(k)}_j w^{(k')}_{\ell} + \sum_{k=1}^n {\xi}_k \left[ w^{(k)}_j \lambda_{\ell} + w^{(k)}_{\ell} \lambda_{j} - w^{(k)}_j w^{(k)}_{\ell} \right] + \lambda_j \lambda_{\ell} \ .
\end{split}
\end{equation}
For the similar case $j = \ell$, we obtain
\begin{equation}
\begin{split}
\expval{x_j(t)[x_{j}(t) - 1]}  &= \int d\mathbf{z}^0 \ \left( \frac{d}{d\mathbf{z}^0} \right)^{\boldsymbol{\xi}} \left\{ \left[ \sum_{k=1}^n  z^0_k w^{(k)}_j + \lambda_j \right]^2 e^{\mathbf{z}^0 \cdot \mathbf{1}}  \right\} \ e^{- \mathbf{z}^0 \cdot \mathbf{1}} \delta(\mathbf{z}^0) \\
&= \sum_{k=1}^n \sum_{k' = 1}^n  {\xi}_k {\xi}_{k'} w^{(k)}_j w^{(k')}_{j} + \sum_{k=1}^n {\xi}_k \left[ 2 w^{(k)}_j \lambda_{j} - \left( w^{(k)}_j\right)^2 \right] + \lambda_j^2 \ .
\end{split}
\end{equation}
Putting these results together, we find that the covariance of $x_j$ and $x_{\ell}$ is
\begin{equation}
\text{Cov}(x_j, x_{\ell}) =  \begin{cases} 
      \sum_{k = 1}^n \xi_k w^{(k)}_j \left[ 1 - w^{(k)}_j  \right] + \lambda_j & j = \ell \\
      - \sum_{k = 1}^n \xi_k  w^{(k)}_j w^{(k)}_{\ell} & j \neq \ell
   \end{cases} \ .
\end{equation}
Hence, we have recovered the moment results from Sec. 4.2 of Jahnke and Huisinga. \qed \end{proof}

%\begin{theorem}
%The time-dependent moments of the monomolecular system with initial distribution BLANK takes the following form:
%\end{theorem}
%
%\begin{proposition}
%The one-dimensional result is BLANK:
%\end{proposition}

\section{Birth-death-autocatalysis calculations}
\label{sec:bdacalc}

In this section, we present the calculations relevant to proving the formulas from Theorem \ref{thm:bda} on the birth-death-autocatalysis system. First, we present the Hamiltonian operator and kernel. Then we evaluate the path integral expression for the propagator $U(i \mathbf{p}^f, \mathbf{z}^0)$. Finally, we use the explicit form of the propagator to derive the transition probability, and several interesting limiting forms of it. We do not explicitly show how to compute the generating function directly from the propagator, because it is very similar to the other calculations.

%In section 6 of their classic paper \cite{jahnke2007}, Jahnke and Huisinga solve the CME corresponding to the autocatalytic reaction $S \to S + S$ exactly; however, they note that adding birth and death reactions yields a system not amenable to their approach. In this section, we present the exact time-dependent solution to this problem, whose reactions read
%\begin{equation} \label{eq:newrxns}
%\begin{split}
%\varnothing &\xrightarrow{k} S  \\
%S &\xrightarrow{\gamma} \varnothing  \\
%S &\xrightarrow{c} S + S  
%\end{split}
%\end{equation}
%where the rates of birth, death, and autocatalysis are all allowed to have arbitrary time-dependence as long as they are nonnegative for all times.  The CME reads
%\begin{equation} \label{eq:newCME}
%\begin{split}
%\frac{\partial P(x, t)}{\partial t} =& \ k(t) \left[ P(x-1, t) - P(x, t) \right] \\
%&+ \gamma(t) \left[ (x+1) P(x+1, t) - x P(x, t) \right] \\
%&+ c(t) \left[ (x-1) P(x-1, t) - x P(x, t) \right] 
%\end{split}
%\end{equation}
%where $P(x, t)$ is the probability that the state of the system is $x \in \mathbb{N}$ at time $t \geq t_0$. 
%
%
%

\subsection{Evaluating the propagator}

%The Hamiltonian operator corresponding to this problem is
%\begin{equation}
%\hat{H} := \hat{a}^+ \left[ k + (c - \gamma) \hat{a} + c \ \hat{a}^+ \hat{a} \right]
%\end{equation}
%in terms of the Grassberger-Scheunert creation and annihilation operators. 

We can straightforwardly go from the CME (Eq. \ref{eq:newCME}) to the Hamiltonian operator and Hamiltonian kernel.

\begin{lemma}
The Hamiltonian operator corresponding to the birth-death-autocatalysis CME (Eq. \ref{eq:newCME}) is
\begin{equation} \label{eq:bda_H}
\begin{split}
\hat{H} = \hat{a}^+ \left[ k + (c - \gamma) \hat{a} + c \ \hat{a}^+ \hat{a} \right] \ .
\end{split}
\end{equation}
\end{lemma}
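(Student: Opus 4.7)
The plan is to mirror the derivation of Lemma \ref{lem:monoHderiv}: start from the generating function $\ket{\psi(t)} = \sum_x P(x,t)\ket{x}$, differentiate in time, substitute the birth-death-autocatalysis CME (Eq. \ref{eq:newCME}) for $\partial P/\partial t$, and then reindex the resulting sums over $x$ so that each term contains a shifted basis ket (like $\ket{x-1}$, $\ket{x}$, or $\ket{x+1}$) multiplied by some simple $x$-dependent prefactor. By construction, this puts the right-hand side into a form from which the Hamiltonian can be read off by matching against the equation of motion $\partial_t\ket{\psi} = \hat{H}\ket{\psi}$.

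After reindexing, I would recognize each piece as a product of creation and annihilation operators acting on $\ket{x}$ via $\hat{\pi}\ket{x} = \ket{x+1}$, $\hat{a}\ket{x} = x\ket{x-1}$, and the composite $\hat{\pi}\hat{a}\ket{x} = x\ket{x}$ (which implements multiplication by $x$). The $k$ and $\gamma$ contributions produce $k[\hat{\pi}-1]$ and $\gamma[\hat{a} - \hat{\pi}\hat{a}]$ exactly as in the birth-death subcase of the monomolecular calculation. The only genuinely new term is the autocatalysis contribution $c[(x-1)P(x-1,t) - xP(x,t)]$: after the shift $x\mapsto x+1$ in the first piece, the prefactor $(x-1)$ becomes $x$, so that term equals $x\ket{x+1} = \hat{\pi}^2\hat{a}\ket{x}$, contributing $c[\hat{\pi}^2\hat{a} - \hat{\pi}\hat{a}]$. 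This is the only place in the derivation where operators quadratic in $\hat{\pi}$ appear, and it is what distinguishes this Hamiltonian from a purely monomolecular one.

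The last step is cosmetic: combining everything gives $\hat{H} = k[\hat{\pi} - 1] + \gamma[\hat{a} - \hat{\pi}\hat{a}] + c[\hat{\pi}^2\hat{a} - \hat{\pi}\hat{a}]$, and substituting the Grassberger-Scheunert creation operator $\hat{a}^+ := \hat{\pi} - 1$ from Sec. \ref{sec:gscreation} (so $\hat{\pi} = \hat{a}^+ + 1$) and collecting terms will cause the $-\hat{\pi}\hat{a}$ pieces to combine with $\hat{\pi}^2\hat{a}$ to produce $c\,\hat{a}^+\hat{a}^+\hat{a}$, while the remaining $\gamma$ and linear $c$ pieces combine to $(c-\gamma)\hat{a}^+\hat{a}$. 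Factoring out the common leading $\hat{a}^+$ yields the claimed normal-ordered form.

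I do not anticipate any real obstacle: every manipulation is algebraic and closely parallels the monomolecular derivation already carried out in detail. The only spot that warrants a bit of care is the reindexing of the autocatalysis term, where keeping track of the fact that $(x-1)$ turns into $x$ (not $x-1$) after the shift is essential for getting the correct coefficient of $\hat{\pi}^2\hat{a}$, and hence the correct $c(\hat{a}^+)^2\hat{a}$ term at the end.
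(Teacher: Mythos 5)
Your proposal is correct and follows exactly the route the paper takes: repeat the reindexing argument of Lemma \ref{lem:monoHderiv} for the CME in Eq. \ref{eq:newCME} to get $\hat{H} = k[\hat{\pi}-1] + \gamma[\hat{a}-\hat{\pi}\hat{a}] + c[\hat{\pi}^2\hat{a}-\hat{\pi}\hat{a}]$, then substitute $\hat{\pi} = \hat{a}^+ + 1$ and factor. The algebra checks out, including the key point that the $(x-1)$ prefactor becomes $x$ after the shift, yielding $\hat{\pi}^2\hat{a}$ and hence the $c(\hat{a}^+)^2\hat{a}$ term.
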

\begin{proof}
Starting with Eq. \ref{eq:newCME}, follow the argument from Lemma \ref{lem:monoHderiv}, and then substitute in the Grassberger-Scheunert creation operator. \qed
\end{proof}

\begin{corollary}
The Hamiltonian kernel for the birth-death-autocatalysis CME is
\begin{equation} \label{eq:bda_Hkernel}
\begin{split}
\mathcal{H}(i p, z, t) = i p \left[ k + (c - \gamma) z \right] - c \ p^2 z \ .
\end{split}
\end{equation}
\end{corollary}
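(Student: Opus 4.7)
The plan is to apply Proposition \ref{prop:csmel_normal} directly, after verifying that the Hamiltonian operator in Eq. \ref{eq:bda_H} is already in normal-ordered form. Expanding the product, we can write
\begin{equation}
\hat{H} = k(t) \, \hat{a}^+ + \bigl[c(t) - \gamma(t)\bigr] \, \hat{a}^+ \hat{a} + c(t) \, (\hat{a}^+)^2 \hat{a} \ ,
\end{equation}
and in each of the three terms every creation operator sits to the left of every annihilation operator, so the expression falls into the class of operators addressed by Proposition \ref{prop:csmel_normal}.

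Next I would compute the coherent state matrix element $\matrixel{-ip}{\hat{H}(t)}{z}$ by reading off the coefficients $d^{\nu}_{\rho}(t)$ from the expansion above and substituting into Eq. \ref{eq:howtomel}. The one subtlety to keep track of is that the bra carries the label $-ip$ with $p \in \mathbb{R}$, so the complex conjugation in Proposition \ref{prop:csmel_normal} sends $\hat{a}^+ \to (-ip)^* = ip$, while annihilation operators are simply replaced by $z$. This yields
\begin{equation}
\matrixel{-ip}{\hat{H}(t)}{z} = e^{ipz}\Bigl\{ k(ip) + (c - \gamma)(ip)z + c(ip)^2 z \Bigr\} \ .
\end{equation}
Multiplying by the factor $e^{-ipz}$ from the definition of the Hamiltonian kernel cancels the overall exponential, and collecting the remaining terms gives exactly Eq. \ref{eq:bda_Hkernel}.

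There is no genuine obstacle here; the corollary is purely a bookkeeping consequence of the previously established substitution rule $\hat{a}^+ \mapsto ip$, $\hat{a} \mapsto z$ for normal-ordered operators. The only thing one must be careful about is that this substitution rule differs from the one used implicitly in the monomolecular case (Eq. \ref{eq:Hkernel}), where the authors displayed $-i\mathcal{H}$ rather than $\mathcal{H}$ itself; a quick consistency check against the known monomolecular kernel reassures us that the powers of $i$ have been handled correctly before concluding.
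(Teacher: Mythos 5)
Your proposal is correct and follows the same route as the paper, which proves the corollary by making the identifications $\hat{a}^+ \to ip$ and $\hat{a} \to z$ in the normal-ordered Hamiltonian of Eq.~\ref{eq:bda_H}; you have simply spelled out the expansion, the conjugation $(-ip)^* = ip$, and the cancellation of $e^{ipz}$ more explicitly. The consistency check against the monomolecular kernel (where the paper displays $-i\mathcal{H}$) is a sensible extra precaution but not a substantive difference.
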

\begin{proof}
Make the identifications $\hat{a}^+ \to i p$ and $\hat{a} \to z$ in the Hamiltonian above. \qed
\end{proof}

Now we must compute the propagator, a calculation which turns out to be somewhat involved. 

\begin{lemma}[Birth-death-autocatalysis propagator] \label{lem:propcalc_bda}
The propagator for the birth-death-autocatalysis system is
\begin{equation}
U(i p_f, z_0) =  \exp\left\{ i z_0 q(t) + i \int_{t_0}^t k(s) q(t - s + t_0)  \ ds \right\} 
\end{equation}
where $q(s)$ is as in Theorem \ref{thm:bda}.
\end{lemma}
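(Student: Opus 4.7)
The plan is to evaluate the path integral expression for the propagator (Eq.~\ref{eq:Upathint}) directly, exploiting the fact that the Hamiltonian kernel
\[
\mathcal{H}(ip, z, t) = ip\,k(t) + \bigl[\,ip(c-\gamma)(t) - c(t)\,p^2\,\bigr]\,z
\]
is \emph{linear} in $z$. In contrast to the monomolecular case, where $\mathcal{H}$ was linear in $p$ and we integrated over $p^\ell$ first in Lemma~\ref{lem:mono_prop}, here I would integrate over each $z^\ell$ for $\ell = 1, \ldots, N-1$ first. Each such integral yields a Dirac delta whose argument is the coefficient of $z^\ell$ in the discrete action; setting $p^N := p^f$ and solving, these deltas enforce the nonlinear recurrence
\[
p^\ell = p^{\ell+1} + \Delta t\,(c-\gamma)(t_\ell)\,p^{\ell+1} + i\,\Delta t\,c(t_\ell)\,(p^{\ell+1})^2,
\]
which determines $p^{N-1}, \ldots, p^1$ in turn from the boundary value $p^f$.

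Integrating subsequently over the momenta $p^1, \ldots, p^{N-1}$ collapses each $p$-integral onto the root of its delta function. The exponent that survives reads
\[
\bigl(ip^1 + \Delta t\,[\,ip^1(c-\gamma)(t_0) - c(t_0)(p^1)^2\,]\bigr)\,z^0 \;+\; \sum_{\ell=1}^{N} \Delta t\,ip^\ell\,k(t_{\ell-1}),
\]
where I have absorbed the boundary piece $\Delta t\,ip^f\,k(t_{N-1})$ into the sum by extending the index to $\ell = N$. The coefficient of $z^0$ coincides with the right-hand side of the recurrence pushed one step further, so by \emph{defining} $p^0$ through one additional iteration it becomes simply $ip^0$. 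Reparameterizing via $s_\ell := t - t_\ell + t_0$ (so $s_N = t_0$ and $s_0 = t$) and identifying $p^\ell \leftrightarrow q(s_\ell)$, the discrete recurrence becomes the forward Euler discretization in $s$ of
\[
\dot{q}(s) = \bigl[c(t-s+t_0) - \gamma(t-s+t_0)\bigr]\,q(s) + i\,c(t-s+t_0)\,q(s)^2, \quad q(t_0) = p^f,
\]
which is exactly the ODE in the theorem statement.

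Finally, in the $N \to \infty$ limit we obtain $p^0 \to q(t)$, while the Riemann sum $\sum_\ell \Delta t\,p^\ell\,k(t_{\ell-1})$ converges to $\int_{t_0}^t k(\tau)\,q(t-\tau+t_0)\,d\tau$; exponentiating then gives the advertised formula. The main obstacle will be controlling the continuum limit of the nonlinear iteration: because the ODE for $q$ is Riccati-type (the $iq^2$ nonlinearity arising from the autocatalytic $-cp^2 z$ piece in $\mathcal{H}$), one would ideally verify that the Euler iterates remain bounded on $[t_0, t]$ and converge uniformly to the exact solution. A secondary formal subtlety, inherited from Lemma~\ref{lem:mono_prop}, is the treatment of the $z^\ell$-integrals as producing Fourier-type delta functions even though Proposition~\ref{prop:roi} restricts $z$-integration to $[0,\infty)$; we adopt the same formal convention used throughout the paper.
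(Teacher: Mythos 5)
Your proposal follows essentially the same route as the paper: integrate out the intermediate $z_\ell$ first, obtain the backwards recurrence $p_\ell = p_{\ell+1} + \Delta t[(c_\ell-\gamma_\ell)p_{\ell+1} + i c_\ell p_{\ell+1}^2]$, recognize it as a reversed Euler scheme for the Riccati ODE satisfied by $q(s)$ with $q(t_0)=p_f$, and assemble the surviving boundary term $i z_0 q(t)$ and the Riemann sum $\sum_\ell k_{\ell-1} p_\ell \Delta t \to \int_{t_0}^t k(s)q(t-s+t_0)\,ds$. The recurrence, the ODE, and the final formula all match the paper's. The one place you diverge is the justification of the intermediate step: you assert that each $z^\ell$-integral produces a Dirac delta and flag this as a formal convention "inherited" from Lemma~\ref{lem:mono_prop}. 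That is not what the paper does here, and the attribution is also off: in Lemma~\ref{lem:mono_prop} the genuine Fourier deltas come from the $p^\ell$-integrals (which run over all of $\mathbb{R}$), not the $z^\ell$-integrals. In the present lemma the paper evaluates $\int_0^\infty dz_\ell\, e^{z_\ell[\cdots]}$ honestly as $\tfrac{1}{2\pi i}\bigl[(p_\ell - p_{\ell+1}) - \Delta t((c_\ell-\gamma_\ell)p_{\ell+1} + i c_\ell p_{\ell+1}^2)\bigr]^{-1}$, i.e.\ a simple pole, and then enforces the constraints by performing the $p_{N-1},\dots,p_1$ integrals in that specific order via Cauchy's integral formula. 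Your delta-function shortcut happens to land on the same constraint because the pole is simple and linear in $p_\ell$ (so no Jacobian factor is lost), but the paper's version is the cleaner one and removes the "formal subtlety" you worry about at the end; you should adopt it rather than appeal to a convention the paper does not actually use at this step. Your remark about controlling the continuum limit of the nonlinear iteration is a fair point that the paper itself passes over silently.
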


\begin{proof}
The path integral expression for the propagator $U(i p_f, z_0)$ is
\begin{equation}
\begin{split}
U = \ & \lim_{N \to \infty} \int \prod_{\ell = 1}^{N-1} \frac{dz_{\ell} dp_{\ell} }{2\pi} \exp\left\{ \sum_{\ell = 1}^{N-1}  - i p_{\ell} (z_{\ell} - z_{\ell - 1}) + \Delta t \mathcal{H}(i p_{\ell}, z_{\ell - 1}, t_{\ell-1}) \right. \\
 & \left. + \Delta t \mathcal{H}(i p_f, z_{N - 1}, t_{\ell-1}) + i p_f z_{N-1}  \right\}
 \end{split}
\end{equation}
where we have used slightly different notation than before since there is only one chemical species. In order to evaluate this path integral, first integrate over each $z_{\ell}$, and then integrate over each $p_{\ell}$. Collecting terms containing $z_{\ell}$, the integral over each $z_{\ell}$ looks like
\begin{equation}
\begin{split}
& \int_{0}^{\infty} \frac{dz_{\ell}}{2 \pi} \ \exp\left\{ z_{\ell} \left[ - c_{\ell} \Delta t \ p_{\ell + 1}^2  + i (c_{\ell} - \gamma_{\ell}) \Delta t \ p_{\ell + 1} - i (p_{\ell} - p_{\ell + 1}) \right] \right\} \\
=& \frac{1}{2\pi i} \frac{1}{(p_{\ell} - p_{\ell + 1}) - \Delta t \left[   (c_{\ell} - \gamma_{\ell})  \ p_{\ell + 1} + i c_{\ell} \ p_{\ell + 1}^2\right]  } \ .
\end{split}
\end{equation}
The integrals over $p_{\ell}$ can now be done---but they must be done in a specific order. Do the integral over $p_{N-1}$, then $p_{N-2}$, and so on, until the integral over $p_1$ has been done. Each of these integrals is schematically
\begin{equation}
\frac{1}{2\pi i} \int_{-\infty}^{\infty} dp_{\ell} \  \frac{f(p_{\ell})}{(p_{\ell} - p_{\ell + 1}) - \Delta t \left[   (c_{\ell} - \gamma_{\ell})  \ p_{\ell + 1} + i c_{\ell} \ p_{\ell + 1}^2\right]  }
\end{equation}
where the function $f(p_{\ell})$ has no poles. This means that each integral can be evaluated using Cauchy's integral formula, so that the net effect of doing them is to enforce the $(N-1)$ constraints
\begin{equation} \label{eq:newstep}
p_{\ell}  = p_{\ell + 1} + \Delta t \left[   (c_{\ell} - \gamma_{\ell})  \ p_{\ell + 1} + i c_{\ell} \ p_{\ell + 1}^2\right] 
\end{equation}
on the $p_{\ell}$ for $\ell = 1, ..., N-1$. There are no more integrals to do, so all that remains is to evaluate what's left of the propagator using these constraints. Eq. \ref{eq:newstep} looks like an Euler time step, although it is `backwards'---we go from $p_{\ell + 1}$ to $p_{\ell}$ instead of the other way around. Define $q_{N- \ell} := p_{\ell}$ so that it reads
\begin{equation} \label{eq:newstepQ}
q_{N - \ell}  = q_{N - \ell - 1} + \Delta t \left[   (c_{\ell} - \gamma_{\ell})  \ q_{N - \ell - 1} + i c_{\ell} \ q_{N - \ell - 1}^2\right] \ .
\end{equation}
Choosing $\ell = N - n$, we find
\begin{equation} 
q_{n}  = q_{n - 1} + \Delta t \left[   (c_{N - n} - \gamma_{N - n})  \ q_{n - 1} + i c_{N - n} \ q_{n - 1}^2\right] \ .
\end{equation}
This corresponds to dynamics
\begin{equation} \label{eq:newDE}
\dot{q}(s) = \left[ c(t - s + t_0) - \gamma(t - s + t_0) \right] q(s) + i c(t - s + t_0) \ q(s)^2 
\end{equation}
where $s \in [t_0, t]$ and $q(t_0) = p_f$. As can be verified by substitution, Eq. \ref{eq:newDE} is solved by
\begin{equation}
q(s) = \frac{w(s)}{\frac{1}{p_f} - i \int_{t_0}^s c(t - t' + t_0) w(t') \ dt'}
\end{equation}
where $w(t)$ is the solution to
\begin{equation}
\dot{w}(s) = [c(t - s + t_0) - \gamma(t - s + t_0)] \ w(s)
\end{equation}
with $w(t_0) = 1$ (c.f. Eq. \ref{eq:wlambda}), i.e.
\begin{equation}
w(s) = e^{\int_{t_0}^s c(t - t' + t_0) - \gamma(t - t' + t_0) \ dt'} \ .
\end{equation}
The continuous limit of Eq. \ref{eq:newstep} is then $p(s) := q(t - s + t_0)$. With that done, the propagator with most terms integrated out reads
\begin{equation}
U(i p_f, z_0) = \lim_{N \to \infty}  \exp\left\{ i \sum_{\ell = 1}^{N-1} k_{\ell - 1} p_{\ell} \ \Delta t + i p_1 z_0 + \Delta t \left[ i p_1 (c_0 - \gamma_0) z_0 - c_0 p_1^2 z_0 \right]  \right\} \ .
\end{equation}
The term on the right is just another Euler time step, so we can write it as
\begin{equation}
i z_0 \left\{ p_1 + \Delta t \left[ p_1 (c_0 - \gamma_0) - c_0 p_1^2 \right]    \right\} = i z_0 p_0
\end{equation}
where we define
\begin{equation}
p_0 := p_1 + \Delta t \left[ p_1 (c_0 - \gamma_0) - c_0 p_1^2 \right] \ .
\end{equation}
In the limit as $N \to \infty$, $p_0 \to p(t_0) = q(t)$. The term on the left is just a Riemann sum:
\begin{equation}
\begin{split}
\sum_{\ell = 1}^{N-1} k_{\ell - 1} p_{\ell} \ \Delta t \approx \int_{t_0}^t k(s) p(s) \ ds = \int_{t_0}^t k(s) q(t - s + t_0)  \ ds \ .
\end{split}
\end{equation}
Hence, our final answer for the propagator $U$ is
\begin{equation}
U(i p_f, z_0) =  \exp\left\{ i z_0 q(t) + i \int_{t_0}^t k(t - s + t_0) q(s)  \ ds \right\} 
\end{equation}
where we have reparameterized the integral on the right to swap $s$ and $(t - s + t_0)$.  \qed
\end{proof}
As an aside, we note that this calculation closely resembles the Martin-Siggia-Rose-Janssen-De Dominicis path integral computation from our earlier paper \cite{vastola2019gill}: in particular, many applications of Cauchy's integral formula and another `backwards' Euler time step constraint are both involved.

\subsection{Deriving the transition probability}

As in Sec. \ref{sec:onespeciesptrans} and \ref{sec:genptrans}, we will use the propagator derived in the previous section to derive an expression for the transition probability $P(x, t; \xi, t_0)$.

\begin{lemma}[Birth-death-autocatalysis transition probability]
For the birth-death-autocatalysis system, the transition probability $P(x, t; \xi, t_0)$ is
\begin{equation}
\begin{split}
P(x, t; \xi, t_0) =& \frac{1}{2 \pi} \int_{-\infty}^{\infty} dp_f \ \frac{\left[ 1 + i q(t) \right]^{\xi} e^{i \int_{t_0}^t k(t - s + t_0) q(s) ds}}{(1 + i p_f)^{x + 1}} 
\end{split}
\end{equation}
where $q(s)$ is as in Theorem \ref{thm:bda}.
\end{lemma}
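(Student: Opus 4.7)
The plan is to start from Proposition~\ref{prop:gfto} and Corollary~\ref{cor:Utogf}, which together express the transition probability as
\begin{equation*}
P(x,t;\xi,t_0) = \frac{1}{x!}\int \frac{dz^f\, dp^f}{2\pi}\frac{dz^0\, dp^0}{2\pi}\, (z^f)^x e^{-z^f}\, U(ip^f,z^0)\, (1+ip^0)^{\xi}\, e^{-ip^0 z^0 - ip^f z^f},
\end{equation*}
where I have used $\ket{\psi(t_0)} = \ket{\xi}$ so that $\braket{-ip^0}{\xi} = (1+ip^0)^{\xi}$, together with $\braket{x}{z^f}_{ex} = (z^f)^x e^{-z^f}$. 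I would then substitute the explicit propagator from Lemma~\ref{lem:propcalc_bda}, remembering that $q(s)$ depends on $p^f$ through the boundary condition $q(t_0)=p^f$ but \emph{not} on $z^f$, $z^0$, or $p^0$; this will cause all factors involving $q$ to pull out of the $z^0, p^0, z^f$ integrations.

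Next I would carry out the integrals one at a time in a carefully chosen order. First, the $z^f$ integral: since the propagator carries no $z^f$ dependence, only the factor $(z^f)^x e^{-z^f(1+ip^f)}$ survives, and a standard Gamma integral over $[0,\infty)$ yields $x!/(1+ip^f)^{x+1}$. This single step is exactly what produces the $(1+ip_f)^{x+1}$ denominator in the claim. Second, I would handle the $z^0$ and $p^0$ integrals by imitating the trick from Section~\ref{sec:onespeciesptrans}: insert $1 = e^{z^0} e^{-z^0}$ to write
\begin{equation*}
e^{iz^0 q(t)}\,(1+ip^0)^{\xi}\, e^{-ip^0 z^0} \;=\; e^{z^0[1+iq(t)]}\,(1+ip^0)^{\xi}\, e^{-z^0(1+ip^0)},
\end{equation*}
and then use the identity $(1+ip^0)^{\xi}\, e^{-z^0(1+ip^0)} = \left(-d/dz^0\right)^{\xi} e^{-z^0(1+ip^0)}$. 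Integrating by parts $\xi$ times in $z^0$ moves the derivatives onto $e^{z^0[1+iq(t)]}$ and produces the factor $[1+iq(t)]^{\xi}$. The $p^0$ integral then collapses to $\delta(z^0)$ times $e^{-z^0}$, and the $z^0$ integral against this delta function leaves exactly the claimed one-dimensional integral over $p^f$.

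The main obstacle here is bookkeeping rather than genuine analytic difficulty. I need to verify that the integration by parts contributes no surviving boundary term at $z^0=\infty$, which forces me to read the $p^f$ contour as slightly displaced into the upper half-plane so that $\operatorname{Re}(1+ip^f)>0$ and every remaining Gamma-type integral converges; the same shift is what justifies the $z^f$ integration in the first place. A secondary subtlety is that the delta function produced by the $p^0$ integral sits at the boundary $z^0=0$ of the domain $[0,\infty)$, but this was already handled in the monomolecular derivation of Section~\ref{sec:onespeciesptrans} by a standard convention, so I would simply invoke that precedent. No further manipulation of the $p^f$ integral is attempted, since the theorem statement \emph{is} the contour integral itself.
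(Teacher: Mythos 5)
Your proposal is correct and follows essentially the same route as the paper: insert the propagator from Lemma \ref{lem:propcalc_bda} into the bra-ket formula for $P(x,t;\xi,t_0)$ and evaluate the four integrals, with the $z_f$ integral handled identically as a Gamma/Laplace integral yielding $x!/(1+ip_f)^{x+1}$. The only difference is in the $z^0$ and $p^0$ step: you recycle the integration-by-parts-plus-delta-function trick from the monomolecular derivation of Sec.\ \ref{sec:onespeciesptrans}, whereas the paper integrates $z^0$ first to produce the kernel $\tfrac{1}{2\pi i}\tfrac{1}{p_0-q(t)}$ and then applies Cauchy's integral formula to the $p^0$ integral, obtaining $[1+iq(t)]^{\xi}$ in one stroke; the paper's route is slightly cleaner here because it avoids the boundary-term and endpoint-delta-function subtleties you correctly flag, but both arguments sit at the same (informal) level of rigor and give the same result.
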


\begin{proof}
Since $P(x, t_0) = \delta(x - \xi)$, we have $\ket{\psi_0} = \ket{\xi}$. Using Eq. \ref{eq:gftop} and Eq. \ref{eq:psiU},
\begin{equation}
\begin{split}
P(x, t; \xi, t_0) &= \frac{1}{x!} \int \frac{dz_f dp_f}{2 \pi} \frac{dz_0 dp_0}{2 \pi} \braket{x}{z_f}_{ex} \ U(ip_f, z_0) \braket{- i p_0}{\psi(t_0)} \ e^{- i p_0 z_0  - i p_f z_f}  \\
&= \frac{1}{x!} \int \frac{dz_f dp_f}{2 \pi} \frac{dz_0 dp_0}{2 \pi} \left(z_f \right)^x e^{- z_f} \ e^{ i z_0 q(t) + i \int_{t_0}^t k(t - s + t_0) q(s)  \ ds }  (1 + i p_0)^{\xi} \ e^{- i p_0 z_0  - i p_f z_f} \ .
\end{split}
\end{equation}
The integral over $z_0$ is
\begin{equation}
\int_0^{\infty} \frac{dz_0}{2 \pi} \ e^{- i z_0 \left[ p_0 - q(t) \right]} = \frac{1}{2 \pi i} \frac{1}{p_0 - q(t)} \ .
\end{equation}
The integral over $p_0$ can be performed using Cauchy's integral formula:
\begin{equation}
\frac{1}{2 \pi i} \int_{-\infty}^{\infty} dp_0 \ \frac{(1 + i p_0)^{\xi}}{p_0 - q(t)} = \left[ 1 + i q(t) \right]^{\xi} \ .
\end{equation}
The integral over $z_f$ can be recognized as a Laplace transform:
\begin{equation}
\int_0^{\infty} dz_f \ \left(z_f \right)^x e^{- z_f \left[ 1 + i p_f \right]} = \frac{x!}{\left( 1 + i p_f \right)^{x + 1}} \ .
\end{equation}
Putting these together, we obtain the desired result.
\qed \end{proof}
We will leave our solution in this form, since it is difficult to evaluate the contour integral without knowing the explicit time-dependence of the rates. In the next few sections, we will examine a few special cases.

\subsection{Time-independent rates}

\begin{lemma}[Birth-death-autocatalysis transition probability for time-independent rates]\label{lem:bda_timeind}
Suppose the parameters $k$, $\gamma$, and $c$ are all time-independent and non-zero. Then the transition probability can be rewritten as
\begin{equation}
\begin{split}
P =& \left( \frac{\frac{\gamma}{c} - 1}{\frac{\gamma}{c} - w} \right)^{k/c}  \frac{\left( 1 - w \right)^{x - \xi}}{\left(\frac{\gamma}{c} - w \right)^x}  \times \\
 &\times \sum_{j = 0}^{\xi} \binom{\xi}{j} \frac{\left( j + k/c \right)_x}{x!} \left[ 1 - \frac{\gamma}{c} w \right]^{\xi - j}  \left[ \frac{w \left( \frac{\gamma}{c} - 1 \right)^2}{\frac{\gamma}{c} - w} \right]^j  
\end{split}
\end{equation}
where $(y)_x := (y)(y + 1) \cdots (y + x - 1)$ is the Pochhammer symbol/rising factorial, and where $w(t) = e^{- (\gamma - c) (t - t_0)}$.
\end{lemma}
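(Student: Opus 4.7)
The plan is to extract $P(x) = [g^x]\psi(g)$ directly from the closed-form generating function for time-independent rates already displayed in Theorem \ref{thm:bda}. Writing $\alpha := (c-\gamma w)/(c-\gamma)$ and $\beta := c(1-w)/(c-\gamma)$, that formula reads $\psi(g) = [1+(g-1)\alpha]^\xi [1+(g-1)\beta]^{-(\xi+k/c)}$. A one-line calculation will first establish the key algebraic identity $\alpha - \beta = w$, which is the engine that collapses the whole expression into a form whose Taylor coefficients can be read off.

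The crucial move will be to substitute $\theta := 1 + (g-1)\beta$ and then rewrite the numerator using $\alpha = \beta + w$ together with $g-1 = (\theta-1)/\beta$, which yields $1 + (g-1)\alpha = (\alpha\theta - w)/\beta$. After applying the binomial theorem we get
\begin{equation*}
\psi(g) = \beta^{-\xi}(\alpha\theta - w)^\xi \theta^{-(\xi+k/c)} = \beta^{-\xi}\sum_{j=0}^\xi \binom{\xi}{j}\alpha^j(-w)^{\xi-j}\theta^{-(k/c+\xi-j)}.
\end{equation*}
Because $\theta = (1-\beta) + g\beta$, each factor $\theta^{-(k/c+\xi-j)}$ expands in powers of $g$ via the generalized binomial series, and the coefficient of $g^x$ reads off cleanly as $(k/c+\xi-j)_x(-\beta)^x(1-\beta)^{-(k/c+\xi-j+x)}/x!$.

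Next I will sum these contributions over $j$ and re-index $j \mapsto \xi - j$: this converts $(k/c+\xi-j)_x$ into $(k/c+j)_x$ and turns $\alpha^j$ into $\alpha^{\xi-j}$, producing the same summand structure displayed in the lemma. The remaining work is algebraic: converting $\alpha$, $\beta$, and $1-\beta$ to $\mu := \gamma/c$ and $w$ via $\alpha = (\mu w - 1)/(\mu - 1)$, $\beta = (w-1)/(\mu-1)$, and $1-\beta = (\mu-w)/(\mu-1)$, then collecting powers of $(\mu-1)$, $(1-w)$, $(\mu-w)$, and $(1-\mu w)$. The $g$-independent prefactor assembles into $((\mu-1)/(\mu-w))^{k/c}(1-w)^{x-\xi}/(\mu-w)^x$ with a residual $(1-\mu w)^\xi$, while the $j$-dependent factor $[-w/(\alpha(1-\beta))]^j$ simplifies to $[w(\mu-1)^2/((1-\mu w)(\mu-w))]^j$; absorbing $(1-\mu w)^\xi$ into $(1-\mu w)^{\xi-j}$ gives the bracketed expression verbatim as stated.

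The main obstacle is purely bookkeeping: there is no conceptual difficulty, but the numerous signs and powers of $(\mu-1)$, $(1-w)$, $(\mu-w)$, and $(1-\mu w)$ must be tracked with care. In particular, the re-indexing $j \mapsto \xi - j$ is what makes the somewhat unexpected Pochhammer $(j+k/c)_x$ indexed by the summation variable appear, rather than a Pochhammer of the form $(k/c+\xi-j)_x$ that one might naively expect straight out of the binomial expansion; missing this step would leave the sum in an equivalent but visibly different form from the one the lemma claims.
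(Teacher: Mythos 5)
Your proof is correct, and the algebra checks out: $\alpha-\beta=w$ is indeed the identity that makes everything collapse, the prefactor and the $j$-dependent bracket assemble exactly as you claim, and starting from the closed-form $\psi(g)$ is not circular, since that formula is an independently established part of Theorem \ref{thm:bda} (it follows from the general time-dependent $\psi(\mathbf g,t)$ by substituting the explicit $w(s)$, and does not rely on this lemma). The route differs from the paper's mainly in the choice of representation: the paper specializes the contour-integral formula for $P(x,t;\xi,t_0)$ to time-independent rates, splits $1+iq(t)$ into an affine function of $[1-iB(t)p_f]^{-1}$, and evaluates the residue at the pole $p_f=i$ by $x$-fold differentiation, which produces the Pochhammer symbol and the index $j$ directly in the form $(j+k/c)_x$ with no re-indexing. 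Under the dictionary $g-1=ip_f$ (so $\theta=1+(g-1)\beta=1-iBp_f$ and the pole $p_f=i$ is the point $g=0$), your Taylor-coefficient extraction is literally the same computation, with the generalized binomial series playing the role of Cauchy's integral formula; the only structural difference is that you expand $(\alpha\theta-w)^{\xi}$ in powers of $\theta$ rather than of $\theta^{-1}$, which is why you need the re-index $j\mapsto\xi-j$ at the end. What your version buys is independence from the contour-integral machinery (no need to verify that the integrand has no poles in the upper half-plane); what the paper's version buys is that it works directly from the object produced by the path integral, without presupposing the closed-form generating function. One shared, unstated caveat: both arguments divide by $c-\gamma$, so the formula as derived requires $c\neq\gamma$.
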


\begin{proof}
In this case, $q(t)$ reads
\begin{equation}
\begin{split}
\dot{q} &= [c - \gamma] \ q + i c \ q^2 \\
q(t) &= \frac{e^{(c - \gamma) T}}{\frac{1}{p_f} - i \frac{c}{c - \gamma} \left[ e^{(c - \gamma) T} - 1 \right]} = \frac{w(t)}{\frac{1}{p_f} - i \frac{c}{c - \gamma} \left[ w(t) - 1 \right]}
\end{split}
\end{equation}
where $T := t - t_0$. We have
\begin{equation}
\int_{t_0}^t q(s) \ ds = \frac{i}{c} \log\left\{ 1 - \frac{i c}{c - \gamma}  \left[ e^{(c - \gamma) T} - 1 \right]  p_f \right\}
\end{equation}
so that the convolution term from the propagator reads
\begin{equation} \label{eq:convolutionterm}
e^{i k \int_{t_0}^t q(s) \ ds } = \frac{1}{\left[ 1 - \frac{i c}{c - \gamma}  \left[ e^{(c - \gamma) T} - 1 \right]  p_f \right]^{k/c}} = \frac{1}{\left[ 1 - i B(t) p_f \right]^{k/c}} 
\end{equation}
where we define
\begin{equation}
B(t) := \frac{c}{c - \gamma} \left[ w(t) - 1 \right] \ .
\end{equation}
It is important to note that Eq. \ref{eq:convolutionterm} has no poles in the upper half-plane (the region around which we are integrating), regardless of whether $c - \gamma > 0$, $c - \gamma < 0$, or $c = \gamma$. Next, 
\begin{equation}
1 + i q(t) = 1 + \frac{i w(t) p_f}{1 - i B(t) p_f} = \left[ 1 - \frac{w(t)}{B(t)}  \right] + \frac{w(t)}{B(t)} \frac{1}{\left[ 1 - i B(t) p_f \right]}
\end{equation}
so that
\begin{equation}
\begin{split}
\left[ 1 + i q(t) \right]^{\xi} &= \sum_{j = 0}^{\xi} \binom{\xi}{j} \left[ 1 - \frac{w(t)}{B(t)}  \right]^{\xi - j} \left( \frac{w(t)}{B(t)} \right)^j \frac{1}{\left[ 1 - i B(t) p_f \right]^j} \ .
\end{split}
\end{equation}
Putting all these results together, our expression for the transition probability is
\begin{equation}
P =  \sum_{j = 0}^{\xi} \binom{\xi}{j} \left[ 1 - \frac{w(t)}{B(t)}  \right]^{\xi - j} \left( \frac{w(t)}{B(t)} \right)^j \frac{1}{x! \ i^x} \left\{ \frac{x!}{2 \pi i} \int_{-\infty}^{\infty} dp_f \ \frac{1}{\left[ 1 - i B(t) p_f \right]^{j+ k/c}} \frac{1}{(p_f - i)^{x + 1}} \right\} \ .
\end{equation}
Since
\begin{equation}
\frac{d^x}{dp_f^x} \left[ \frac{1}{\left[ 1 - i B(t) p_f \right]^{j+ k/c}} \right]_{p = i} = \frac{i^x B(t)^x}{\left[ 1 + B(t) \right]^{j + k/c + x}} \left( j + \frac{k}{c} \right) \left( j + \frac{k}{c} + 1 \right) \cdots \left( j + \frac{k}{c} + x - 1 \right)
\end{equation}
we have
\begin{equation}
\begin{split}
P &=  \sum_{j = 0}^{\xi} \binom{\xi}{j} \left[ 1 - \frac{w(t)}{B(t)}  \right]^{\xi - j} \left( \frac{w(t)}{B(t)} \right)^j \frac{\left( j + k/c \right)_x}{x!}  \frac{B(t)^x}{\left[ 1 + B(t) \right]^{j + k/c + x}} \\
&= \left( \frac{1 - \frac{\gamma}{c} }{w - \frac{\gamma}{c} } \right)^{k/c} \left( \frac{w - 1}{w - \frac{\gamma}{c}} \right)^x \sum_{j = 0}^{\xi} \binom{\xi}{j} \frac{\left( j + k/c \right)_x}{x!} \left[ 1 - \left( 1 - \frac{\gamma}{c} \right) \frac{w}{w - 1}  \right]^{\xi - j}  \left[ \frac{w \left( 1 - \frac{\gamma}{c} \right)^2}{(w - 1)(w - \frac{\gamma}{c})} \right]^j \\
=& \left( \frac{\frac{\gamma}{c} - 1}{\frac{\gamma}{c} - w} \right)^{k/c}  \frac{\left( 1 - w \right)^{x - \xi}}{\left(\frac{\gamma}{c} - w \right)^x}  \times \\
 &\times \sum_{j = 0}^{\xi} \binom{\xi}{j} \frac{\left( j + k/c \right)_x}{x!} \left[ 1 - \frac{\gamma}{c} w \right]^{\xi - j}  \left[ \frac{w \left( \frac{\gamma}{c} - 1 \right)^2}{\frac{\gamma}{c} - w} \right]^j  
\end{split}
\end{equation}
where $(y)_x := (y)(y + 1) \cdots (y + x - 1)$ is the Pochhammer symbol/rising factorial. This can also be written in terms of the hypergeometric function $_2F_1(a, b; c; x)$. \qed
\end{proof}

%Assuming that $\gamma > c$, taking the time length $T \to \infty$ yields the steady state solution 
%\begin{equation}
%P_{ss}(x) = \left( \frac{\gamma - c}{\gamma} \right)^{k/c} \frac{\left( \frac{c}{\gamma} \right)^x}{x!} \left( \frac{k}{c}  \right)_x \ .
%\end{equation}
%It is easy to check that this solution is normalized and solves the (steady state) CME. In the limit as $c \to 0$, we have a birth-death process, and the steady state probability distribution becomes
%\begin{equation}
%P_{ss}(x) = \frac{\left( \frac{k}{\gamma} \right)^x e^{-k/\gamma}}{x!}
%\end{equation}
%which is Poisson, as expected. In the limit taking $k \to 0$ (while keeping $\gamma$ and $c$ finite), we have $P_{ss}(x) = \delta(x)$, which is also expected (a system with only autocatalytic and death reactions, with $\gamma > c$, has all of its probability concentrated in $x = 0$ at steady state). 

\subsection{Binomial, Poisson, and negative binomial special cases}

\begin{proof}[Corollary \ref{cor:three_cases}]

Return to the original contour integral for time-dependent rates (Eq. \ref{eq:contoursln2}), and set $k = c = 0$, but leave the time-dependence of $\gamma(t)$ arbitrary. We have
\begin{equation}
w(t) := \exp\left[ -\int_{t_0}^t \gamma(t') dt' \right] 
\end{equation}
\begin{equation}
q(t) = w(t) p_f
\end{equation}
\begin{equation}
P(x, t; \xi, t_0) = \frac{1}{2 \pi} \int_{-\infty}^{\infty} dp_f \ \frac{\left[ 1 + i w(t) p_f \right]^{\xi}}{(1 + i p_f)^{x + 1}} \ .
\end{equation}
The function in the numerator has no poles, so the contour integral can easily be evaluated using Cauchy's integral formula. The result is
\begin{equation}
P(x, t; \xi, t_0) = \binom{\xi}{x} \left[ w(t) \right]^x \left[ 1 - w(t) \right]^{\xi - x}
\end{equation}
for $x \leq \xi$ and $0$ otherwise, i.e. a binomial distribution. 

%\subsection{Poisson distribution special case}

Return to the original contour integral for time-dependent rates (Eq. \ref{eq:contoursln2}), and set $\gamma = c = 0$, but leave the time-dependence of $k(t)$ arbitrary. We have
\begin{equation}
\lambda(t) := \int_{t_0}^t k(t') dt' 
\end{equation}
\begin{equation}
q(t) = p_f
\end{equation}
\begin{equation}
P(x, t; \xi, t_0) = \frac{1}{2 \pi} \int_{-\infty}^{\infty} dp_f \ \frac{e^{i \lambda(t) p_f }}{(1 + i p_f)^{x + 1 - \xi}} \ .
\end{equation}
This contour integral can be evaluated using either Cauchy's integral formula or a table of integrals (c.f. Gradshteyn and Ryzhik \cite{gradshteyn2014} ET I 118(3), in section 3.382, on pg. 365). The result is
\begin{equation}
P(x, t; \xi, t_0) = \frac{\lambda(t)^{x - \xi} e^{- \lambda(t)}}{(x - \xi)!} 
\end{equation}
for $x \geq \xi$ and $0$ otherwise, i.e. a (shifted) Poisson distribution.

%\subsection{Negative binomial distribution special case}

Return to the original contour integral for time-dependent rates (Eq. \ref{eq:contoursln}), and set $k = \gamma = 0$, but leave the time-dependence of $c(t)$ arbitrary. In this case, we will define $w(t)$ differently from before as
\begin{equation}
w(t) := \exp\left[ - \int_{t_0}^t c(t') dt' \right] 
\end{equation}
i.e. as the reciprocal of what we previously called $w(t)$. This is to match the result from Jahnke and Huisinga. Now we have
\begin{equation}
q(t) = \frac{w(t)^{-1}}{\frac{1}{p_f} - i \left[ w(t)^{-1} - 1 \right]}
\end{equation}
\begin{equation}
P(x, t; \xi, t_0) = \frac{1}{2 \pi} \int_{-\infty}^{\infty} dp_f \ \frac{1}{(1 + i p_f)^{x - \xi + 1}} \frac{1}{\left[ 1 - i (w(t)^{-1} - 1) p_f \right]^{\xi}} \ .
\end{equation}
The term on the right has no poles in the upper half-plane, so we can evaluate it using Cauchy's formula to find
\begin{equation}
P(x, t; \xi, t_0) = \binom{x - 1}{\xi - 1} \left[ w(t) \right]^{\xi} \left[ 1 - w(t) \right]^{x - \xi}
\end{equation}
which is nonzero only for $x \geq \xi$, i.e. we have a shifted negative binomial distribution. \qed
\end{proof}

\section{Zero and first order calculations}
\label{sec:firstcalc}

In this section, we sketch the calculations necessary for proving the formulas from Theorem \ref{thm:arb} on a system with arbitrary combinations of zero and first order reactions. First, we present the Hamiltonian operator and kernel. Then we sketch how the path integral expression for the propagator $U(i \mathbf{p}^f, \mathbf{z}^0)$ may be evaluated. Finally, we use the explicit form of the propagator to derive the transition probability and generating function. 

%\subsection{Evaluating the propagator}

Writing down the CME directly is difficult; however, we \textit{do} know that a CME involving only zero or first order reactions only as terms proportional to Grassberger-Scheunert creation operators, and has terms with at most one annihilation operator. We can assume it takes the form (c.f. Eq. \ref{eq:arb_gf_PDE})
\begin{equation}
\begin{split}
\hat{H}(t) =& \sum_{\nu_1, ..., \nu_n} \alpha_{\nu_1, ..., \nu_n}(t) \left( \hat{a}^+_1 \right)^{\nu_1} \cdots \left( \hat{a}^+_n \right)^{\nu_n} \\
&+ \sum_k \sum_{\nu_1, ..., \nu_n} \beta^k_{\nu_1, ..., \nu_n}(t) \left( \hat{a}^+_1 \right)^{\nu_1} \cdots \left( \hat{a}^+_n \right)^{\nu_n} \hat{a}_k
\end{split}
\end{equation}
for some coefficients $\alpha_{\nu_1, ..., \nu_n}(t)$ and $\beta^j_{\nu_1, ..., \nu_n}(t)$ that are determined by the details of one's list of reactions. The corresponding Hamiltonian kernel is
\begin{equation}
\begin{split}
\mathcal{H}(i \mathbf{p}, \mathbf{z}, t) =& \matrixel{- i \mathbf{p}}{\hat{H}(t)}{\mathbf{z}} \\
=& \sum_{\nu_1, ..., \nu_n} \alpha_{\nu_1, ..., \nu_n}(t) \left( i p_1 \right)^{\nu_1} \cdots \left( i p_n \right)^{\nu_n} \\
&+ \sum_k \sum_{\nu_1, ..., \nu_n} \beta^k_{\nu_1, ..., \nu_n}(t) \left( i p_1 \right)^{\nu_1} \cdots \left( i p_n \right)^{\nu_n} z_k \ .
\end{split}
\end{equation}
Now we can compute the propagator corresponding to this kernel. 

\begin{lemma}[Zero and first order reactions propagator] \label{lem:ult_prop}
The propagator for the system with arbitrary combinations of zero and first order reactions is
\begin{equation}
U =  \exp\left\{ i \ \mathbf{z}^0 \cdot \mathbf{q}(t) + \int_{t_0}^t \sum_{\nu_1, ..., \nu_n} \alpha_{\nu_1, ..., \nu_n}(t - s + t_0) \left[ i q_1(s) \right]^{\nu_1} \cdots \left[ i q_n(s) \right]^{\nu_n}   \ ds \right\} 
\end{equation}
where $q(s)$ is as defined in Theorem \ref{thm:arb}.
\end{lemma}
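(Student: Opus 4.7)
The Hamiltonian kernel comes for free from Proposition~\ref{prop:csmel_normal} by making the replacements $\hat a^+_k \to ip_k$ and $\hat a_k \to z_k$ in the normal-ordered Hamiltonian given just above the lemma. My plan is to substitute this kernel into Eq.~\ref{eq:Upathint} and imitate the strategy of Lemma~\ref{lem:propcalc_bda}, which already executed this program in the birth-death-autocatalysis case, where the $\alpha$- and $\beta$-polynomials happened to be at most quadratic in $i\mathbf{p}$.

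The crucial structural feature is that, because every reaction on the list is zero- or first-order, the kernel is \emph{linear} in $\mathbf{z}$ but only polynomial in $\mathbf{p}$; so I would integrate each $z^\ell_k$ out first. Collecting its coefficient in the action gives
\[
z^\ell_k \Bigl[\,i(p^{\ell+1}_k - p^\ell_k) + \Delta t \sum_{\nu_1,\ldots,\nu_n} \beta^k_{\nu_1,\ldots,\nu_n}(t_\ell)\,(ip^{\ell+1}_1)^{\nu_1}\cdots(ip^{\ell+1}_n)^{\nu_n}\,\Bigr],
\]
with the convention $\mathbf{p}^N \equiv \mathbf{p}^f$, and the integral over $z^\ell_k \in [0,\infty)$ reduces to $(2\pi i A^\ell_k)^{-1}$ with $A^\ell_k$ the bracketed quantity divided by $-i$. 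This is precisely the Cauchy-type denominator that appeared in Lemma~\ref{lem:propcalc_bda}, merely with higher-degree polynomials in $\mathbf{p}^{\ell+1}$.

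Next I would integrate out $\mathbf{p}^\ell$ for $\ell = N-1, N-2, \ldots, 1$ in that order. At step $\ell$ the only singularity in $p^\ell_k$ is the simple pole of $1/A^\ell_k$, because the $\alpha$-pieces and the denominators $A^{\ell'}_k$ with $\ell' < \ell$ remain polynomial in $\mathbf{p}^\ell$. Cauchy's formula thus enforces, exactly as in the birth-death-autocatalysis calculation, the backward Euler constraint
\[
p^\ell_k = p^{\ell+1}_k - i\Delta t \sum_{\nu_1,\ldots,\nu_n} \beta^k_{\nu_1,\ldots,\nu_n}(t_\ell)\,(ip^{\ell+1}_1)^{\nu_1}\cdots(ip^{\ell+1}_n)^{\nu_n}.
\]
Relabelling $q^k_{N-\ell} := p^\ell_k$ turns this into a forward Euler scheme whose continuum limit is exactly the ODE system of Eq.~\ref{eq:zero_one_ODEs}, with initial condition $\mathbf{q}(t_0)=\mathbf{p}^f$ coming from the boundary identification $\mathbf{p}^N = \mathbf{p}^f$.

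Finally I would read off the surviving exponent. The $\ell=1$ piece of $-i\mathbf{p}^\ell\cdot(\mathbf{z}^\ell-\mathbf{z}^{\ell-1})$ leaves behind $i\mathbf{p}^1\cdot\mathbf{z}^0$; combining this with the $\mathbf{z}^0$-linear part of $\Delta t\,\mathcal{H}(i\mathbf{p}^1,\mathbf{z}^0,t_0)$ gives $i\mathbf{z}^0\cdot\mathbf{p}^0$ via one additional instance of the same Euler step, and $\mathbf{p}^0\to\mathbf{q}(t)$ in the continuum. The remaining $\alpha$-contributions assemble into the Riemann sum $\sum_\ell \Delta t\sum_\nu \alpha_{\nu_1,\ldots,\nu_n}(t_{\ell-1})\,(ip^\ell_1)^{\nu_1}\cdots(ip^\ell_n)^{\nu_n}$, which under the continuum substitution $\mathbf{p}^\ell \mapsto \mathbf{q}(t-t_{\ell-1}+t_0)$ together with the change of variables $s = t-t_{\ell-1}+t_0$ becomes the integral stated in the lemma. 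The step I expect to be the main obstacle is justifying the Cauchy evaluation once the $\alpha$-polynomials are of high degree, because their exponentials do not obviously decay at infinity in $p^\ell_k$; I would argue, in the spirit of Lemma~\ref{lem:propcalc_bda}, that the $\Delta t$ prefactor keeps each such contribution at order $\Delta t$ and therefore analytic enough for only the simple pole to contribute in the $N\to\infty$ limit. For a cleaner derivation that avoids this technicality altogether, Sec.~\ref{sec:propview} obtains the same propagator by solving a first-order linear PDE whose characteristics are precisely Eq.~\ref{eq:zero_one_ODEs}; the path-integral route sketched here should be viewed as a parallel, mechanical verification.
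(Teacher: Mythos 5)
Your proposal is correct and follows essentially the same route as the paper: the paper's own proof of this lemma is precisely a "rerun the argument of Lemma \ref{lem:propcalc_bda}" sketch (integrate out the $\mathbf{z}^{\ell}$ to get Cauchy-type denominators, evaluate the $\mathbf{p}^{\ell}$ integrals from $\ell = N-1$ down to $1$ to enforce the backward Euler constraints whose continuum limit is Eq.~\ref{eq:zero_one_ODEs}, and read off the $i\,\mathbf{z}^0\cdot\mathbf{q}(t)$ term plus the $\alpha$-convolution integral), which is exactly what you carry out in more detail. The analyticity caveat you flag is real but is equally present in the paper's own treatment, and the paper likewise defers to the method-of-characteristics derivation of Sec.~\ref{sec:propview} as the cleaner justification.
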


\begin{proof} 
We will only sketch this proof, because the argument is exactly the same as the one presented in Lemma \ref{lem:propcalc_bda}---the notation is just more cluttered, because we are now dealing with a mult-species system and an arbitrarily large list of reactions. One may notice, from a careful look at that prior argument, that its success did not depend on the detailed features of the birth-death-autocatalysis system at all; it only depended on the Hamiltonian containing terms at most first order in annihilation operators (i.e. no terms like $\hat{a}_j \hat{a}_k$ or $(\hat{a}_j)^6$ appear). Since this is also true in the current case, we can rerun that argument to find that the propagator can be written in terms of the solution $\mathbf{q}(t)$ to 
\begin{equation}
\dot{q}_j(s) = - i \sum_{\nu_1, ..., \nu_n} \beta^j_{\nu_1, ..., \nu_n}(t - s + t_0) \left[ i q_1(s) \right]^{\nu_1} \cdots \left[ i q_n(s) \right]^{\nu_n} 
\end{equation}
satisfying the initial condition $q_j(t_0) = p^f_j$. As before, the final propagator has two terms. There is one term that comes from $\mathbf{p}^0 = \mathbf{q}(t)$ coupling to $\mathbf{z}^0$, and another term (due to the terms in the Hamiltonian involving no annihilation operators) that becomes a convolution integral. 
\qed
\end{proof}
Next, we will derive the transition probability.
\begin{lemma}[Zero and first order reactions transition probability]
The transition probability for the system with arbitrary combinations of zero and first order reactions is
\begin{equation} 
P = \int_{\mathbb{R}^n} \frac{d\mathbf{p}^f}{(2\pi)^n} \ \frac{\left[ \mathbf{1} + i \mathbf{q}(t) \right]^{\boldsymbol{\xi}} e^{ \int_{t_0}^t \sum \alpha_{\nu_1, ..., \nu_n}(t - s + t_0) \left[ i q_1(s) \right]^{\nu_1} \cdots \left[ i q_n(s) \right]^{\nu_n}   \ ds }}{(\mathbf{1} + i \mathbf{p}^f)^{\mathbf{x} + \mathbf{1}}} 
\end{equation}
where $q(s)$ is as in Theorem \ref{thm:arb}. 
\end{lemma}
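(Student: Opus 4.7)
The plan is to mirror the two preceding transition-probability derivations (Lemmas in Sec.~\ref{sec:monocalc} and Sec.~\ref{sec:bdacalc}) almost verbatim. First I would use Proposition~\ref{prop:gfto} together with Corollary~\ref{cor:Utogf} to write
\begin{equation*}
P(\mathbf{x},t;\boldsymbol{\xi},t_0) = \frac{1}{\mathbf{x}!}\int \frac{d\mathbf{z}^f d\mathbf{p}^f}{(2\pi)^n}\frac{d\mathbf{z}^0 d\mathbf{p}^0}{(2\pi)^n}\,\braket{\mathbf{x}}{\mathbf{z}^f}_{ex}\,U(i\mathbf{p}^f,\mathbf{z}^0)\,\braket{-i\mathbf{p}^0}{\boldsymbol{\xi}}\,e^{-i\mathbf{p}^0\cdot\mathbf{z}^0 - i\mathbf{p}^f\cdot\mathbf{z}^f},
\end{equation*}
and then substitute the three explicit ingredients already in hand: $\braket{\mathbf{x}}{\mathbf{z}^f}_{ex}=(\mathbf{z}^f)^{\mathbf{x}}e^{-\mathbf{z}^f\cdot\mathbf{1}}$, $\braket{-i\mathbf{p}^0}{\boldsymbol{\xi}}=(\mathbf{1}+i\mathbf{p}^0)^{\boldsymbol{\xi}}$, and the propagator from Lemma~\ref{lem:ult_prop}, namely $U=\exp\{i\mathbf{z}^0\cdot\mathbf{q}(t)+K(\mathbf{p}^f)\}$, where I abbreviate the $\mathbf{p}^f$-dependent convolution term by $K(\mathbf{p}^f):=\int_{t_0}^t \sum \alpha_{\nu_1,\ldots,\nu_n}(t-s+t_0)\,[iq_1(s)]^{\nu_1}\cdots[iq_n(s)]^{\nu_n}\,ds$. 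Crucially, $\mathbf{q}(s)$ enters with initial condition $\mathbf{q}(t_0)=\mathbf{p}^f$, so $\mathbf{q}(t)$ and $K$ depend on $\mathbf{p}^f$ but not on $\mathbf{p}^0,\mathbf{z}^0,\mathbf{z}^f$.

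Next I would eliminate four of the five vector integrals in a fixed order. Since the $\mathbf{z}^0$ dependence is now purely linear in the exponent, each $z^0_j$ integral produces a simple pole, $\int_0^\infty \frac{dz^0_j}{2\pi}e^{-iz^0_j[p^0_j - q_j(t)]} = \frac{1}{2\pi i}\frac{1}{p^0_j - q_j(t)}$. The resulting $p^0_j$ integrals can then be evaluated one at a time by Cauchy's integral formula applied to $(1+ip^0_j)^{\xi_j}/(p^0_j - q_j(t))$, each yielding the factor $[1+iq_j(t)]^{\xi_j}$; taking the product over $j$ gives $[\mathbf{1}+i\mathbf{q}(t)]^{\boldsymbol{\xi}}$. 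Finally, each $z^f_j$ integral is a Laplace transform, $\int_0^\infty dz^f_j\,(z^f_j)^{x_j}e^{-z^f_j(1+ip^f_j)} = x_j!/(1+ip^f_j)^{x_j+1}$, whose $\mathbf{x}!$ cancels the prefactor. What remains is exactly the claimed $n$-dimensional integral over $\mathbf{p}^f$.

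The main obstacle is the same one encountered in Lemma~\ref{lem:propcalc_bda}: the order of integration matters, and one must verify that the Cauchy contour closes correctly, i.e.\ that the integrand of each $p^0_j$ integral has no poles in the upper half-plane other than the one picked up by $q_j(t)$. Because the ODE defining $\mathbf{q}(s)$ can be genuinely nonlinear in $\mathbf{q}$ when the reaction list contains products like $\varnothing\to S_k+S_\ell$ or $S_j\to S_k+S_\ell+\cdots$, one should check that $\mathbf{q}(s)$ remains a well-defined analytic function of $\mathbf{p}^f$ along the real axis (or at least in a neighborhood of the relevant contour), so that the nested application of Cauchy's formula is justified. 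Apart from this analyticity issue, the derivation is essentially a bookkeeping exercise upgraded to $n$ dimensions.

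As a brief sanity check before committing, one notes that the formula reduces correctly in the known cases: setting all $\beta^j_{\nu_1,\ldots,\nu_n}$ to zero except for the pure-drain coefficients recovers Lemma from Sec.~\ref{sec:monocalc}, while the single-species reduction with the birth-death-autocatalysis $\alpha$'s and $\beta$'s recovers the $\xi$th power structure of Theorem~\ref{thm:bda}. The generating-function half of Theorem~\ref{thm:arb} would then follow by observing that replacing $\braket{\mathbf{x}}{\mathbf{z}^f}_{ex}$ with the appropriate coherent-state overlap that extracts $\psi(\mathbf{g},t)$ effectively removes the final $\mathbf{p}^f$ integral and leaves the evaluation $\mathbf{p}^f=-i(\mathbf{g}-\mathbf{1})$, exactly as stated.
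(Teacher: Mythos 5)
Your proposal follows the paper's proof essentially step for step: insert the explicit overlaps and the propagator from Lemma \ref{lem:ult_prop} into the Corollary \ref{cor:Utogf} representation, integrate out $\mathbf{z}^0$ to get simple poles, apply Cauchy's formula to the $\mathbf{p}^0$ integrals to produce $[\mathbf{1}+i\mathbf{q}(t)]^{\boldsymbol{\xi}}$, and recognize the $\mathbf{z}^f$ integrals as Laplace transforms cancelling the $\mathbf{x}!$. The additional remarks on contour closure and analyticity of $\mathbf{q}$ in $\mathbf{p}^f$ are sensible caveats but do not change the route, which is the same as the paper's.
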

\begin{proof}
Begin with the expression for the generating function in terms of the propagator $U$ (c.f. Corollary \ref{cor:Utogf}). As in Sec. \ref{sec:genptrans}, we have
\begin{equation}
\begin{split}
P =& \frac{1}{\mathbf{x}!} \int \frac{d\mathbf{z}^f d\mathbf{p}^f}{(2 \pi)^n} \frac{d\mathbf{z}^0 d\mathbf{p}^0}{(2 \pi)^n} \braket{\mathbf{x}}{\mathbf{z}^f}_{ex} \ U(i \mathbf{p}^f, \mathbf{z}^0) \braket{- i \mathbf{p}^0}{\psi(t_0)} \ e^{- i \mathbf{p}^0 \cdot \mathbf{z}^0  - i \mathbf{p}^f \cdot \mathbf{z}^f} \\
=& \frac{1}{\mathbf{x}!} \int \frac{d\mathbf{z}^f d\mathbf{p}^f}{(2 \pi)^n} \frac{d\mathbf{z}^0 d\mathbf{p}^0}{(2 \pi)^n} \left(\mathbf{z}^f \right)^\mathbf{x} e^{- \mathbf{z}^f \cdot \mathbf{1}} \ e^{i \mathbf{z}^0 \cdot \mathbf{q}(t)} (\mathbf{1} + i \mathbf{p}^0)^{\boldsymbol{\xi}} \\
&\times  \ e^{- i \mathbf{p}^0 \cdot \mathbf{z}^0  - i \mathbf{p}^f \cdot \mathbf{z}^f}  e^{ \int_{t_0}^t \sum_{\nu_1, ..., \nu_n} \alpha_{\nu_1, ..., \nu_n}(t - s + t_0) \left[ i q_1(s) \right]^{\nu_1} \cdots \left[ i q_n(s) \right]^{\nu_n}   \ ds }  \ .
\end{split}
\end{equation}
The integral over $\mathbf{z}^0$ is
\begin{equation}
\int d\mathbf{z}^0 \ e^{- i \mathbf{z}^0 \cdot ( \mathbf{p}^0 - \mathbf{q}(t))} = \frac{1}{i^n ( \mathbf{p}^0 - \mathbf{q}(t))^{\mathbf{1}}} \ ,
\end{equation}
the integral over $\mathbf{p}^0$ is
\begin{equation}
\int \frac{d\mathbf{p}^0}{(2\pi i)^n} \ \frac{(\mathbf{1} + i \mathbf{p}^0)^{\boldsymbol{\xi}}}{( \mathbf{p}^0 - \mathbf{q}(t))^{\mathbf{1}}} = (\mathbf{1} + i \mathbf{q}(t))^{\boldsymbol{\xi}} \ ,
\end{equation}
and the integral over $\mathbf{z}^f$ is
\begin{equation}
\int d\mathbf{z}^f \ \frac{\left(\mathbf{z}^f \right)^\mathbf{x}}{\mathbf{x}!}  e^{- \mathbf{z}^f \cdot (\mathbf{1} + i \mathbf{p}^f)}= \frac{1}{(\mathbf{1} + i \mathbf{p}^f)^{\mathbf{x} + \mathbf{1}}} \ ,
\end{equation}
leaving only the desired integral.\qed
\end{proof}
Finally, we will derive the generating function.
\begin{lemma}[Zero and first order reactions generating function]
The generating function for the system with arbitrary combinations of zero and first order reactions is
\begin{equation}
\begin{split}
\psi(\mathbf{g}, t) =&  \left[ \mathbf{1} + i \mathbf{q}(t)  \right]^{\boldsymbol{\xi}}   \times  \\ &\times  \left.  e^{ \int_{t_0}^t \sum \alpha_{\nu_1, ..., \nu_n}(t - s + t_0) \left[ i q_1(s) \right]^{\nu_1} \cdots \left[ i q_n(s) \right]^{\nu_n}   \ ds } \right|_{\mathbf{p}^f = - i (\mathbf{g} - \mathbf{1})} 
\end{split}
\end{equation}
where $q(s)$ is as in Theorem \ref{thm:arb}. 
\end{lemma}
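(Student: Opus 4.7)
The plan is to extract the generating function from the time-evolved state $\ket{\psi(t)}$ via a Grassberger-Scheunert inner product with a coherent state, in close parallel to how the preceding lemma extracted the transition probability using the exclusive product with a basis state. The key observation is that for real $\mathbf{p}^f$ one has $\braket{-i\mathbf{p}^f}{\mathbf{x}} = (\mathbf{1} + i\mathbf{p}^f)^{\mathbf{x}}$; since the right-hand side is polynomial in $\mathbf{p}^f$, this extends by analytic continuation to complex $\mathbf{p}^f$, and the substitution $\mathbf{p}^f = -i(\mathbf{g}-\mathbf{1})$ turns it into $\mathbf{g}^{\mathbf{x}}$. Therefore
\begin{equation*}
\psi(\mathbf{g},t) \;=\; \sum_{\mathbf{x}} P(\mathbf{x},t)\, \mathbf{g}^{\mathbf{x}} \;=\; \left.\braket{-i\mathbf{p}^f}{\psi(t)}\right|_{\mathbf{p}^f = -i(\mathbf{g}-\mathbf{1})} \;=\; \left.\matrixel{-i\mathbf{p}^f}{\hat{U}(t,t_0)}{\boldsymbol{\xi}}\right|_{\mathbf{p}^f = -i(\mathbf{g}-\mathbf{1})} ,
\end{equation*}
so the problem reduces to evaluating this single matrix element.

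Next I would insert one coherent-state resolution of the identity (Proposition \ref{prop:roi}) between $\hat{U}$ and $\ket{\boldsymbol{\xi}}$ and use $\braket{-i\mathbf{p}^0}{\boldsymbol{\xi}} = (\mathbf{1}+i\mathbf{p}^0)^{\boldsymbol{\xi}}$, obtaining
\begin{equation*}
\psi(\mathbf{g},t) \;=\; \left.\int \frac{d\mathbf{z}^0\, d\mathbf{p}^0}{(2\pi)^n}\, U(i\mathbf{p}^f,\mathbf{z}^0)\, (\mathbf{1}+i\mathbf{p}^0)^{\boldsymbol{\xi}}\, e^{-i\mathbf{p}^0\cdot\mathbf{z}^0}\right|_{\mathbf{p}^f=-i(\mathbf{g}-\mathbf{1})} .
\end{equation*}
Substituting the propagator formula from Lemma \ref{lem:ult_prop}, the convolution-integral factor $\exp\{\int_{t_0}^t \sum \alpha_{\nu_1,\ldots,\nu_n}(t-s+t_0)[iq_1(s)]^{\nu_1}\cdots[iq_n(s)]^{\nu_n}\,ds\}$ depends only on $\mathbf{p}^f$ (through the initial condition for $\mathbf{q}$) and pulls out of the integral. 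What remains is exactly the pair of integrals treated in the preceding lemma: the $\mathbf{z}^0$-integration over $[0,\infty)^n$ yields $\prod_k 1/[i(p^0_k - q_k(t))]$, and each $p^0_k$-integral is then evaluated by Cauchy's integral formula to produce $(\mathbf{1}+i\mathbf{q}(t))^{\boldsymbol{\xi}}$. Setting $\mathbf{p}^f = -i(\mathbf{g}-\mathbf{1})$ at the end then reproduces the stated formula.

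The main obstacle is the substitution $\mathbf{p}^f = -i(\mathbf{g}-\mathbf{1})$ itself, which makes both the initial condition of $\mathbf{q}$ and $\mathbf{q}(t)$ complex and so requires care with the contour closures in the $\mathbf{p}^0$-integrals. The cleanest way to handle this is to carry out the entire calculation with real $\mathbf{p}^f$ (where the integrals are handled in exactly the same formal sense used in the preceding lemma), observe that the resulting expression is an analytic function of $\mathbf{p}^f$ (since $\mathbf{q}(t)$ depends analytically on its initial condition by standard ODE theory, and $(\mathbf{1}+i\mathbf{q}(t))^{\boldsymbol{\xi}}$ is a polynomial in $\mathbf{q}(t)$), and only then analytically continue. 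A shorter alternative that avoids this concern altogether is to sum the transition-probability formula directly against $\mathbf{g}^{\mathbf{x}}$: each inner geometric series $\sum_{x_k\geq 0} g_k^{x_k}/(1+ip^f_k)^{x_k+1} = 1/(1-g_k+ip^f_k)$ produces a simple pole at $p^f_k = -i(g_k-1)$, and one further application of Cauchy's formula collapses the $\mathbf{p}^f$-integral to evaluation of the rest of the integrand at $\mathbf{p}^f = -i(\mathbf{g}-\mathbf{1})$, reproducing the same result.
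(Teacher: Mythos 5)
Your proposal is correct, but your main route differs from the paper's in how the final-time pairing is handled. The paper starts from Corollary \ref{cor:Utogf} (two insertions of the resolution of the identity), switches to analytic notation via $\ket{\mathbf{z}} \to e^{\mathbf{z}\cdot(\mathbf{g}-\mathbf{1})}$, and evaluates all four integrals: the $\mathbf{z}^0$ and $\mathbf{p}^0$ integrals exactly as you do, then the $\mathbf{z}^f$ integral to produce $1/\left[ -(\mathbf{g}-\mathbf{1}) + i\mathbf{p}^f\right]^{\mathbf{1}}$, and finally a Cauchy evaluation of the $\mathbf{p}^f$ integral at the pole $\mathbf{p}^f = -i(\mathbf{g}-\mathbf{1})$. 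Your main route instead pairs $\ket{\psi(t)}$ directly with the coherent-state bra $\bra{-i\mathbf{p}^f}$ (the identity $\braket{\mathbf{g}-\mathbf{1}}{\psi(t)} = \psi(\mathbf{g},t)$ that the paper itself records in Sec. \ref{sec:propview}), inserts only one resolution of the identity, and recovers the substitution $\mathbf{p}^f = -i(\mathbf{g}-\mathbf{1})$ by analytic continuation rather than as the location of a pole. This is shorter and makes transparent \emph{why} the answer is the propagator evaluated at that particular complex argument, at the cost of having to justify the continuation (which you do correctly, via analyticity of $\mathbf{q}(t)$ in its initial condition). Your ``shorter alternative'' at the end --- summing the transition-probability formula against $\mathbf{g}^{\mathbf{x}}$ so that the geometric series produces the pole at $\mathbf{p}^f = -i(\mathbf{g}-\mathbf{1})$ --- is in fact essentially the paper's own computation: the paper's $\mathbf{z}^f$ integral plays exactly the role of your geometric series, and the final Cauchy step is identical. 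Either version of your argument establishes the lemma.
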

\begin{proof}
Begin again with the expression for the generating function in terms of the propagator $U$ (c.f. Corollary \ref{cor:Utogf}). For proving this result, it is convenient to switch over to analytic notation, in which 
\begin{equation}
\begin{split}
\ket{\psi(t)} &\to \psi(\mathbf{g}, t) \\
\ket{\mathbf{x}} &\to \mathbf{g}^{\mathbf{x}} \\
\ket{\mathbf{z}} &\to e^{\mathbf{z} \cdot (\mathbf{g} - \mathbf{1})} \ .
\end{split}
\end{equation}
In particular, we have
\begin{equation}
\begin{split}
\psi =& \int \frac{d\mathbf{z}^f d\mathbf{p}^f}{(2 \pi)^n} \frac{d\mathbf{z}^0 d\mathbf{p}^0}{(2 \pi)^n}  \ e^{\mathbf{z}^f \cdot (\mathbf{g} - \mathbf{1})} e^{i \mathbf{z}^0 \cdot \mathbf{q}(t)} (\mathbf{1} + i \mathbf{p}^0)^{\boldsymbol{\xi}} \\
&\times  \ e^{- i \mathbf{p}^0 \cdot \mathbf{z}^0  - i \mathbf{p}^f \cdot \mathbf{z}^f}  e^{ \int_{t_0}^t \sum_{\nu_1, ..., \nu_n} \alpha_{\nu_1, ..., \nu_n}(t - s + t_0) \left[ i q_1(s) \right]^{\nu_1} \cdots \left[ i q_n(s) \right]^{\nu_n}   \ ds }  \ .
\end{split}
\end{equation}
The integral over $\mathbf{z}^0$ is
\begin{equation}
\int d\mathbf{z}^0 \ e^{- i \mathbf{z}^0 \cdot ( \mathbf{p}^0 - \mathbf{q}(t))} = \frac{1}{i^n ( \mathbf{p}^0 - \mathbf{q}(t))^{\mathbf{1}}} \ ,
\end{equation}
the integral over $\mathbf{p}^0$ is
\begin{equation}
\int \frac{d\mathbf{p}^0}{(2\pi i)^n} \ \frac{(\mathbf{1} + i \mathbf{p}^0)^{\boldsymbol{\xi}}}{( \mathbf{p}^0 - \mathbf{q}(t))^{\mathbf{1}}} = (\mathbf{1} + i \mathbf{q}(t))^{\boldsymbol{\xi}} \ ,
\end{equation}
and the integral over $\mathbf{z}^f$ is
\begin{equation}
\int d\mathbf{z}^f \ e^{- \mathbf{z}^f \cdot \left[- (\mathbf{g} - \mathbf{1}) + i \mathbf{p}^f \right]}= \frac{1}{\left[ - (\mathbf{g} - \mathbf{1}) + i \mathbf{p}^f \right]^{\mathbf{1}}} \ ,
\end{equation}
leaving only the integral
\begin{equation}
\int \frac{d\mathbf{p}^f}{(2\pi i)^n} \ \frac{\left[ \mathbf{1} + i \mathbf{q}(t)  \right]^{\boldsymbol{\xi}} e^{ \int_{t_0}^t \sum_{\nu_1, ..., \nu_n} \alpha_{\nu_1, ..., \nu_n}(t - s + t_0) \left[ i q_1(s) \right]^{\nu_1} \cdots \left[ i q_n(s) \right]^{\nu_n}   \ ds }}{\left[ \mathbf{p}^f + i (\mathbf{g} - \mathbf{1})  \right]^{\mathbf{1}}}   \ .
\end{equation}
This integral is a simple contour integral (the numerator is the exponential of a function analytic in $\mathbf{p}^f$), whose evaluation via Cauchy's integral formula corresponds to the desired result. \qed
\end{proof}

\section{Another view of the propagator}
\label{sec:propview}

The mess of formalism aside, a coarse view of what we have been doing is that we have been calculating the propagator $U$, which we remind the reader is defined via
\begin{equation}
U( i \mathbf{p}^f, \mathbf{z}^0) := \matrixel{- i \mathbf{p}^f}{\hat{U}(t, t_0)}{\mathbf{z}^0} 
\end{equation}
where $\ket{\mathbf{z}^0}$ and $\ket{- i \mathbf{p}^f}$ are coherent states. We computed $U$ by evaluating many integrals, and then used the formula (c.f. Corollary \ref{cor:Utogf})
\begin{equation} 
\ket{\psi(t)} = \int  \frac{d\mathbf{z}^f d\mathbf{p}^f}{(2\pi)^n} \frac{d\mathbf{z}^0 d\mathbf{p}^0}{(2\pi)^n}  \ \ket{\mathbf{z}^f} U( i \mathbf{p}^f, \mathbf{z}^0)  \braket{- i \mathbf{p}^0}{\psi(t_0)}   e^{-i \mathbf{p}^0 \cdot \mathbf{z}^0 -i \mathbf{p}^f \cdot \mathbf{z}^f } 
\end{equation}
to recover the generating function $\ket{\psi(t)}$. This expression for $\ket{\psi(t)}$ is then suitably manipulated to directly recover other objects of interest, like moments or transition probabilities. 

Given the relatively simple-looking results we have derived for $U$ (c.f. Lemmas \ref{lem:mono_prop}, \ref{lem:propcalc_bda}, and \ref{lem:ult_prop}), one may wonder whether there is another way to derive it---in particular, does $U$ satisfy some PDE? 

In the following, it will be more convenient to switch to a more standard notation for the probability generating function:
\begin{equation}
\ket{\psi(t)} = \sum_{\mathbf{x}} P(\mathbf{x}, t) \ket{\mathbf{x}} \to \psi(\mathbf{g}, t) := \sum_{\mathbf{x}} P(\mathbf{x}, t) \ \mathbf{g}^{\mathbf{x}}
\end{equation}
which really just amounts to the replacement $\ket{\mathbf{x}} \to \mathbf{g}^{\mathbf{x}}$. This is related to our notation by taking the Grassberger-Scheunert product of the generating function with a coherent state $\ket{\mathbf{g} - \mathbf{1}}$ for some $\mathbf{g} \in \mathbb{R}^n$:
\begin{equation}
\braket{\mathbf{g} - \mathbf{1}}{\psi(t)} = \sum_{\mathbf{x}} P(\mathbf{x}, t) \braket{\mathbf{g} - \mathbf{1}}{\mathbf{x}} = \sum_{\mathbf{x}} P(\mathbf{x}, t) (\mathbf{1} + (\mathbf{g} - \mathbf{1}))^{\mathbf{x}} = \psi(\mathbf{g}, t) \ .
\end{equation}
In this notation, the relationship between the generating function and the propagator reads
\begin{equation}  \label{eq:gftoU_analytic}
\psi(\mathbf{g}, t) = \int  \frac{d\mathbf{z}^f d\mathbf{p}^f}{(2\pi)^n} \frac{d\mathbf{z}^0 d\mathbf{p}^0}{(2\pi)^n}  \ e^{\mathbf{z}^f \cdot (\mathbf{g} - \mathbf{1})} U( i \mathbf{p}^f, \mathbf{z}^0)  \braket{- i \mathbf{p}^0}{\psi(t_0)}   e^{-i \mathbf{p}^0 \cdot \mathbf{z}^0 -i \mathbf{p}^f \cdot \mathbf{z}^f } \ .
\end{equation}
Recall that the generating function $\psi(\mathbf{g}, t)$ satisfies a partial differential equation. For simplicity, suppose we are dealing with the chemical birth-death process, for which the relevant PDE reads
\begin{equation} 
\frac{\partial \psi}{\partial t} = k(t) [g - 1] \psi - \gamma(t) [g - 1] \frac{\partial \psi}{\partial g} \ .
\end{equation}
Substituting this into (the one-dimensional version of) Eq. \ref{eq:gftoU_analytic}, the right-hand side reads 
\begin{equation}  \label{eq:gftoU_EOM}
\int  \frac{dz_f dp_f}{2\pi} \frac{dz_0 dp_0}{2\pi}  U \left\{ k [g - 1]  - \gamma [g - 1] z_f   \right\} \ e^{z_f (g - 1)} e^{- i p_f z_f} \braket{- i p_0}{\psi(t_0)}   e^{-i p_0 z_0} \ .
\end{equation}
But note that
\begin{equation}
\begin{split}
(g - 1) e^{z_f (g - 1)} &= \frac{\partial}{\partial z_f} e^{z_f (g - 1)} \\
z_f e^{- i p_f z_f} &= i \frac{\partial}{\partial p_f} e^{- i p_f z_f} \ .
\end{split}
\end{equation}
Using these identities, integrating by parts, and freely removing boundary terms, the right-hand side now reads
\begin{equation}  
\int  \frac{dz_f dp_f}{2\pi} \frac{dz_0 dp_0}{2\pi} \left\{ i p_f \left[ k U + i \gamma \frac{\partial U}{\partial p_f} \right]  \right\} \ e^{z_f (g - 1)} e^{- i p_f z_f} \braket{- i p_0}{\psi(t_0)}   e^{-i p_0 z_0} \ .
\end{equation}
This suggests that the expression given by Eq. \ref{eq:gftoU_analytic} will solve the equation of motion for $\ket{\psi(t)}$ (Eq. \ref{eq:EOM}) if
\begin{equation}
\frac{\partial U(i p_f, z_0)}{\partial t} = i p_f \left[ k U(i p_f, z_0) + i \gamma \frac{\partial U(i p_f, z_0)}{\partial p_f} \right] \ .
\end{equation}
It is easy to verify that our expression for the propagator of the chemical birth-death process (c.f. Lemma \ref{lem:mono_prop}) does solve this PDE. 

We can generalize this enough for our purposes, although it should be clear that this correspondence holds for any CME (and not just ones involving only zero and first order reactions). 

\begin{proposition}[Propagator PDE]
If the generating function $\psi(\mathbf{g}, t)$ satisfies the PDE given by Eq. \ref{eq:arb_gf_PDE}, then the propagator $U(i \mathbf{p}^f, t; \mathbf{z}^0, t_0)$ satisfies a PDE
\begin{equation}  \label{eq:prop_PDE}
\begin{split}
\frac{\partial U}{\partial t} =& \sum_{\nu_1, ..., \nu_n} \alpha_{\nu_1, ..., \nu_n}(t) \left[ i p^f_1 \right]^{\nu_1} \cdots \left[ i p^f_n \right]^{\nu_n} U \\
&- i \sum_k \sum_{\nu_1, ..., \nu_n} \beta^k_{\nu_1, ..., \nu_n}(t) \left[ i p^f_1 \right]^{\nu_1} \cdots \left[ i p^f_n \right]^{\nu_n} \frac{\partial U}{\partial p^f_k}
\end{split}
\end{equation}
with initial condition $U(i \mathbf{p}^f, t_0; \mathbf{z}^0, t_0) = \exp( i \mathbf{z}^0 \cdot \mathbf{p}^f)$ for arbitrary $\mathbf{p}^f, \mathbf{z}^0 \in \mathbb{R}^n$.
\end{proposition}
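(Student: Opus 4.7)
The plan is to derive the PDE directly from the operator identity $\partial_t \hat U(t,t_0) = \hat H(t)\,\hat U(t,t_0)$ rather than threading through the integral representation of $\psi$. Differentiating the definition of $U$ in time yields $\partial_t U = \matrixel{-i\mathbf{p}^f}{\hat H(t)\,\hat U(t,t_0)}{\mathbf{z}^0}$, and the task is to recognize the right-hand side as the claimed first-order differential operator in $\mathbf{p}^f$ applied to $U$. Since $\hat H(t)$ is presented in normal-ordered form, every term is either $(\hat a^+)^{\nu}$ or $(\hat a^+)^{\nu}\hat a_k$. For the creation-operator strings I would use the Hermitian-conjugate property with respect to the Grassberger-Scheunert product to slide them through the bra:
\[
\bra{-i\mathbf{p}^f}\,(\hat a^+_1)^{\nu_1}\cdots(\hat a^+_n)^{\nu_n} = (ip^f_1)^{\nu_1}\cdots(ip^f_n)^{\nu_n}\,\bra{-i\mathbf{p}^f}.
\]
This immediately reduces the $\alpha$-contributions to $(i\mathbf{p}^f)^{\nu}\,U$ and reduces the $\beta$-contributions to evaluating $\matrixel{-i\mathbf{p}^f}{\hat a_k\,\hat U(t,t_0)}{\mathbf{z}^0}$.

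The key lemma I would establish next is the representation identity
\[
\matrixel{-i\mathbf{p}^f}{\hat a_k}{\phi} \;=\; -i\,\frac{\partial}{\partial p^f_k}\,\braket{-i\mathbf{p}^f}{\phi}
\]
valid for any state $\ket{\phi}$. I would prove this by expanding $\ket{\phi} = \sum_{\mathbf{y}} c(\mathbf{y})\ket{\mathbf{y}}$, applying $\hat a_k\ket{\mathbf{y}} = y_k\ket{\mathbf{y}-\boldsymbol{\epsilon}_k}$, using $\braket{-i\mathbf{p}^f}{\mathbf{y}} = (\mathbf{1}+i\mathbf{p}^f)^{\mathbf{y}}$ (the complex-conjugate version of the ``useful inner products'' formula), and differentiating the resulting series termwise in $p^f_k$. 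Applied with $\ket{\phi} = \hat U(t,t_0)\ket{\mathbf{z}^0}$, it converts the remaining annihilation operator into $-i\,\partial_{p^f_k} U$. Collecting the $\alpha$- and $\beta$-contributions reproduces exactly the claimed PDE, and the initial condition $U(i\mathbf{p}^f,t_0;\mathbf{z}^0,t_0) = \exp(i\mathbf{p}^f\cdot\mathbf{z}^0)$ is immediate from $\hat U(t_0,t_0)=1$ together with the Grassberger-Scheunert coherent-state inner product $\braket{\mathbf{p}}{\mathbf{z}} = e^{\mathbf{p}^*\cdot\mathbf{z}}$ specialized to $\mathbf{p}=-i\mathbf{p}^f$, $\mathbf{z}=\mathbf{z}^0$.

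The main obstacle I anticipate is verifying this $\partial_{p^f_k}$ identity cleanly: the argument $-i\mathbf{p}^f$ is imaginary, so the expansion coefficients of $\bra{-i\mathbf{p}^f}$ carry both $(i\mathbf{p}^f)^{\mathbf{x}}$ monomials and an $e^{-i\mathbf{p}^f\cdot\mathbf{1}}$ prefactor, and one must check that termwise differentiation really does reduce to the index shift $\mathbf{y}\mapsto\mathbf{y}-\boldsymbol{\epsilon}_k$ implemented by $\hat a_k$ rather than producing spurious cross terms. As a cross-check (and an alternative route avoiding this identity entirely), one can mimic the birth-death calculation done just before the proposition: substitute \eqref{eq:gftoU_analytic} into \eqref{eq:arb_gf_PDE}, rewrite each factor $(g_k-1)$ as $\partial_{z^f_k}$ acting on $e^{\mathbf{z}^f\cdot(\mathbf{g}-\mathbf{1})}$ and each $\partial_{g_k}\psi$ as a $z^f_k$ multiplier obtained by differentiating under the integral, transfer all derivatives onto $U\,e^{-i\mathbf{p}^f\cdot\mathbf{z}^f}$ via integration by parts in $\mathbf{z}^f$ and then in $\mathbf{p}^f$, and read off the PDE from the requirement that the integrands match for arbitrary $\ket{\psi(t_0)}$.
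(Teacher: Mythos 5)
Your proposal is correct, but it takes a genuinely different route from the paper. The paper's proof is by the integration-by-parts argument it spells out in one dimension just before the proposition: substitute the integral representation of $\psi(\mathbf{g},t)$ in terms of $U$ into the generating-function PDE, trade each factor $(g_k-1)$ for $\partial_{z^f_k}$ acting on $e^{\mathbf{z}^f\cdot(\mathbf{g}-\mathbf{1})}$ and each $z^f_k$ for $i\,\partial_{p^f_k}$ acting on $e^{-i\mathbf{p}^f\cdot\mathbf{z}^f}$, integrate by parts discarding boundary terms, and read off a PDE for $U$ that is \emph{sufficient} for $\psi$ to solve its equation. You instead work at the operator level: differentiate $U=\matrixel{-i\mathbf{p}^f}{\hat U(t,t_0)}{\mathbf{z}^0}$ using $\partial_t\hat U=\hat H(t)\hat U$, slide the normal-ordered creation strings onto the bra via $\bra{-i\mathbf{p}^f}\hat a^+_k=(ip^f_k)\bra{-i\mathbf{p}^f}$, and convert the remaining $\hat a_k$ into $-i\,\partial_{p^f_k}$. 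Your key lemma checks out: in the Grassberger--Scheunert product $\braket{-i\mathbf{p}^f}{\mathbf{y}}=(\mathbf{1}+i\mathbf{p}^f)^{\mathbf{y}}$ carries \emph{no} exponential prefactor (that prefactor belongs to the exclusive-product pairing), so $-i\,\partial_{p^f_k}(\mathbf{1}+i\mathbf{p}^f)^{\mathbf{y}}=y_k(\mathbf{1}+i\mathbf{p}^f)^{\mathbf{y}-\boldsymbol{\epsilon}_k}$ reproduces the action of $\hat a_k$ exactly, and the obstacle you flag dissolves. Your route buys two things the paper's does not: it avoids the ``freely removing boundary terms'' caveat entirely, and it derives the PDE for $U$ as a genuine consequence of the dynamics rather than as a sufficient condition (note the paper itself only says its manipulation ``suggests'' the PDE). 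What the paper's route buys is the explicit structural link between the generating-function PDE and the propagator PDE (they differ only by the substitution $\mathbf{g}-\mathbf{1}\to i\mathbf{p}^f$), which motivates the discussion that follows. Your initial-condition argument via $\braket{-i\mathbf{p}^f}{\mathbf{z}^0}=e^{i\mathbf{p}^f\cdot\mathbf{z}^0}$ coincides with the paper's.
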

\begin{proof} Integrate by parts as in the one-dimensional example. The initial condition comes from the definition of $U$:
\begin{equation}
U( i \mathbf{p}^f, t_0; \mathbf{z}^0, t_0) = \matrixel{- i \mathbf{p}^f}{\hat{U}(t_0, t_0)}{\mathbf{z}^0} = \braket{- i \mathbf{p}^f}{\mathbf{z}^0} = e^{i \mathbf{z}^0 \cdot \mathbf{p}^f} \ .
\end{equation}
\qed
\end{proof}
At this point, we should note that the PDE satisfied by the propagator (Eq. \ref{eq:prop_PDE}) and the PDE satisfied by the generating function (Eq. \ref{eq:arb_gf_PDE}) are equivalent up to a change of variables (i.e. $\mathbf{g} - \mathbf{1} \to i \mathbf{p}^f$). Does this mean that the propagator, along with the entire Doi-Peliti artifice we have constructed, is extraneous? 

While this is a reasonable question to ask, the answer is probably no. It is easy to see that our expressions for the propagator and our expressions for the generating function have tended to look somewhat different, with the latter almost always being more complicated. The main reason for this difference seems to be that the propagator's initial condition is much simpler than the initial condition for the generating function PDE, which usually permits finding explicit solutions of the propagator PDE.

Now that we have this result, how can we connect it with the propagator solution we found in Lemma \ref{lem:ult_prop} (for arbitrary combinations of zero and first order reactions, which includes all other propagators considered in this paper as special cases)? It turns out that there is a straightforward way to do this using the method of characteristics, a standard approach for solving first order PDEs like the one above. 

The method involves supposing that the relevant independent variables (in this case, $\mathbf{p}^f$ and $t$) lie along some parameterized curve. For a pedagogical example applying this method to solve a toy problem in chemical kinetics (the chemical birth-death process with additive noise), see \cite{vastolaADD2019}.

\begin{lemma}[Method of characteristics solution]

The propagator for the system with arbitrary combinations of zero and first order reactions matches the one given by Lemma \ref{lem:ult_prop}.
\end{lemma}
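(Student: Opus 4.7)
The plan is to apply the method of characteristics directly to the PDE for the propagator established in the previous proposition, and then re-parametrize time to match the form given in Lemma \ref{lem:ult_prop}. The PDE
\begin{equation*}
\frac{\partial U}{\partial t} + i \sum_k \sum_{\nu_1, \ldots, \nu_n} \beta^k_{\nu_1, \ldots, \nu_n}(t) \left[ i p^f_1 \right]^{\nu_1} \cdots \left[ i p^f_n \right]^{\nu_n} \frac{\partial U}{\partial p^f_k} = \sum_{\nu_1, \ldots, \nu_n} \alpha_{\nu_1, \ldots, \nu_n}(t) \left[ i p^f_1 \right]^{\nu_1} \cdots \left[ i p^f_n \right]^{\nu_n} U
\end{equation*}
is first order and semilinear in $U$, so its characteristic system (using $t$ itself as the curve parameter) is
\begin{equation*}
\frac{d p^f_k}{d t} = i \sum_{\nu_1, \ldots, \nu_n} \beta^k_{\nu_1, \ldots, \nu_n}(t) \left[ i p^f_1 \right]^{\nu_1} \cdots \left[ i p^f_n \right]^{\nu_n} , \qquad \frac{d U}{d t} = \left[ \sum_{\nu_1, \ldots, \nu_n} \alpha_{\nu_1, \ldots, \nu_n}(t) \left[ i p^f_1 \right]^{\nu_1} \cdots \left[ i p^f_n \right]^{\nu_n} \right] U .
\end{equation*}
The initial data sits on the slice $t = t_0$, where $U(i \mathbf{p}^f, t_0; \mathbf{z}^0, t_0) = e^{i \mathbf{z}^0 \cdot \mathbf{p}^f}$.

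The first step is to resolve the mismatch of time direction between the characteristic ODE for $\mathbf{p}^f$ (which runs from $t_0$ up to $t$, with the final value $\mathbf{p}^f(t)$ being the argument at which we wish to evaluate the propagator) and the ODE in Theorem \ref{thm:arb} for $\mathbf{q}$ (which runs from $s = t_0$ to $s = t$, with $\mathbf{q}(t_0) = \mathbf{p}^f$ being the evaluation point). Introducing the reversal $\mathbf{q}(s) := \mathbf{p}^f(t - s + t_0)$ so that $s = t_0$ corresponds to the end of the characteristic and $s = t$ corresponds to its beginning, the chain rule gives
\begin{equation*}
\dot{q}_j(s) = - \frac{d p^f_j}{d \tau}\bigg|_{\tau = t - s + t_0} = - i \sum_{\nu_1, \ldots, \nu_n} \beta^j_{\nu_1, \ldots, \nu_n}(t - s + t_0) \left[ i q_1(s) \right]^{\nu_1} \cdots \left[ i q_n(s) \right]^{\nu_n} ,
\end{equation*}
with $q_j(t_0) = p^f_j$. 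This is exactly Eq. \ref{eq:zero_one_ODEs}, verifying that the characteristic curves are determined by the same family of ODEs whose solutions appear in Lemma \ref{lem:ult_prop}.

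The second step is to integrate the ODE for $U$ along the characteristic from $t_0$ to $t$, using the initial condition $U|_{t = t_0} = \exp\left(i \mathbf{z}^0 \cdot \mathbf{p}^f(t_0)\right)$. Since $\mathbf{p}^f(t_0) = \mathbf{q}(t)$, the boundary factor becomes $e^{i \mathbf{z}^0 \cdot \mathbf{q}(t)}$. Applying the change of variables $\tau = t - s + t_0$ (so $d\tau = -\, ds$, with endpoints swapped) to the remaining line integral,
\begin{equation*}
\int_{t_0}^{t} \sum_{\nu_1, \ldots, \nu_n} \alpha_{\nu_1, \ldots, \nu_n}(\tau) \left[ i p^f_1(\tau) \right]^{\nu_1} \cdots \left[ i p^f_n(\tau) \right]^{\nu_n} d\tau = \int_{t_0}^{t} \sum_{\nu_1, \ldots, \nu_n} \alpha_{\nu_1, \ldots, \nu_n}(t - s + t_0) \left[ i q_1(s) \right]^{\nu_1} \cdots \left[ i q_n(s) \right]^{\nu_n} ds ,
\end{equation*}
so exponentiating yields precisely the expression for $U$ given by Lemma \ref{lem:ult_prop}.

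The only real subtlety is bookkeeping the time reversal: the characteristic naturally evolves forward in $t$, while the ODE in Theorem \ref{thm:arb} is written with an argument $t - s + t_0$ (so $\mathbf{q}$ evolves forward in $s$ but samples the rate functions at the time-reversed argument). Once the substitution $\mathbf{q}(s) = \mathbf{p}^f(t - s + t_0)$ is in place, both the characteristic vector field and the inhomogeneous term in the $U$ equation transform consistently, and the identification is immediate. No residual boundary terms or regularity issues arise, because all coefficient functions are assumed smooth in $t$ and $\mathbf{p}^f$ appears only polynomially in the PDE.
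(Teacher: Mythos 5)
Your proof is correct and follows essentially the same route as the paper: both apply the method of characteristics to the propagator PDE, identify the (time-reversed) characteristic curves with the ODEs for $\mathbf{q}(s)$ from Theorem \ref{thm:arb}, and integrate the $\alpha$ term along the characteristic to recover the exponential factor. The only difference is bookkeeping --- the paper parameterizes the characteristics by a backward-running parameter $s$ with $\partial t/\partial s = -1$ from the outset, whereas you run them forward in $t$ and then substitute $\mathbf{q}(s) = \mathbf{p}^f(t - s + t_0)$ --- and your handling of the time reversal and of the initial condition $\mathbf{p}^f(t_0) = \mathbf{q}(t)$ is accurate.
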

\begin{proof}
Suppose (where we use slightly different notation here, because only the initial condition of the PDE depends on $\mathbf{z}^0$ and $t_0$) that $\mathbf{p}^f$ and $t$ lie along curves parameterized by some parameter $s$, so that
\begin{equation}
\begin{split}
\frac{\partial}{\partial s} \left[ U(\mathbf{p}^f(s), t(s) \right] =& \frac{\partial U}{\partial t} \frac{\partial t}{\partial s} + \sum_{k = 1}^n \frac{\partial U}{\partial p^f_k} \frac{\partial p^f_k}{\partial s} \\
 =& - \frac{\partial U}{\partial t} - i \sum_k \sum_{\nu_1, ..., \nu_n} \beta^k_{\nu_1, ..., \nu_n}(t) \left[ i p^f_1 \right]^{\nu_1} \cdots \left[ i p^f_n \right]^{\nu_n} \frac{\partial U}{\partial p^f_k} \ .
\end{split}
\end{equation}
Choose the curve so that
\begin{equation}
\begin{split}
\frac{\partial t}{\partial s} &= - 1 \\
\frac{\partial p^f_k}{\partial s} &= - i \sum_{\nu_1, ..., \nu_n} \beta^k_{\nu_1, ..., \nu_n}(t(s)) \left[ i p^f_1(s) \right]^{\nu_1} \cdots \left[ i p^f_n(s) \right]^{\nu_n} \ .
\end{split}
\end{equation}
Suppose that we are interested in $U(i \mathbf{p}^f, t_f; \mathbf{z}^0, t_0)$ for some particular final time $t_f$. Solving the equation for $t(s)$, we have
\begin{equation}
t(s) = t_f - s + t_0
\end{equation}
where the arbitrary constant was chosen so that $s \in [t_0, t_f]$ with $t(t_0) = t_f$ and $t(t_f) = t_0$. Then the equation determining $\mathbf{p}^f(s)$ reads
\begin{equation}
\begin{split}
\frac{\partial p^f_k}{\partial s} &= - i \sum_{\nu_1, ..., \nu_n} \beta^k_{\nu_1, ..., \nu_n}(t_f - s + t_0) \left[ i p^f_1(s) \right]^{\nu_1} \cdots \left[ i p^f_n(s) \right]^{\nu_n} \ .
\end{split}
\end{equation}
Notice that this is exactly the same as the equation satisfied by $\mathbf{q}(s)$ (see Theorem \ref{thm:arb}). Moreover, our $\mathbf{p}^f(s)$ and $\mathbf{q}(s)$ satisfy the same initial condition: $\mathbf{p}^f(s = t_0) = \mathbf{p}^f(t(s) = t_f) = \mathbf{p}^f$, since the symbol $\mathbf{p}^f$ means the value corresponding to the evaluation of $U(i \mathbf{p}^f, t_f; \mathbf{z}^0, t_0)$ at the final time $t_f$. This point is somewhat subtle, so convince yourself of it before going forward. 

Hence, we can make the identification $\mathbf{p}^f(s) \to \mathbf{q}(s)$. This means our PDE for $U$ now reads
\begin{equation}
\frac{\partial U}{\partial s} =  \sum_{\nu_1, ..., \nu_n} \alpha_{\nu_1, ..., \nu_n}(t) \left[ i q_1 \right]^{\nu_1} \cdots \left[ i q_n \right]^{\nu_n} U  \ .
\end{equation}
Solving this as usual, we have the solution
\begin{equation}
U(s) =  C \exp\left\{ \int_{t_0}^s \sum_{\nu_1, ..., \nu_n} \alpha_{\nu_1, ..., \nu_n}(t - s + t_0) \left[ i q_1(s) \right]^{\nu_1} \cdots \left[ i q_n(s) \right]^{\nu_n}   \ ds \right\} 
\end{equation}
for some constant $C$ that depends on the initial condition. Implement the initial condition for $s = t_0$ (i.e. $t(s) = t_f$), noting that $\mathbf{p}^f(s = t_0) = \mathbf{q}(s = t_f)$. Finally, evaluate $U(s)$ at $s = t_f$ to obtain
\begin{equation}
U(t_f) =  \exp\left\{ i \mathbf{z}^0 \cdot \mathbf{q}(t_f) + \int_{t_0}^{t_f} \sum_{\nu_1, ..., \nu_n} \alpha_{\nu_1, ..., \nu_n}(t - s + t_0) \left[ i q_1(s) \right]^{\nu_1} \cdots \left[ i q_n(s) \right]^{\nu_n}   \ ds \right\} 
\end{equation}
which is the desired answer. \qed
\end{proof}
After all this, it is natural to ask whether the path integral calculations were necessary if the answer for the propagator can be determined by solving a relatively simple PDE. The author can only note that he was able to come up with this alternative approach only after carefully studying the path integral answer. It is likely that there are other cases where one can `turn the crank' to determine the path integral answer, and then justify that answer using some more conventional method after one realizes why it takes its precise form.

%THIS SOLUTION REALLY CORRESPONDS WITH A METHOD OF CHARACTERISTICS SOLUTION OF THIS PDE.

%%%%%%%%%%%%%%%%%%%%%%%%%%%%%

\section{Discussion}
\label{sec:discussion}

The strength of the Doi-Peliti approach---that calculations require nothing more clever than evaluating many integrals---is probably also its primary weakness. In Jahnke and Huisinga's original paper, they began with proofs of partial results that offered intuition for why their main result is true: in short, Poisson remains Poisson, and multinomial remains multinomial. In contrast, our calculation does not seem to offer such insight en route to the full solution. This may make it easier to generalize to other kinds of systems (as we did in Sec. \ref{sec:bdacalc} and Sec. \ref{sec:firstcalc}), but it is a little unsatisfying. 
%
%As they noted at the end of their Sec. 6, it interpolates between Poisson ($\gamma = c = 0$), binomial ($k = c = 0$), and negative binomial ($k = \gamma = 0$). 

Still, the Doi-Peliti approach \textit{was} able to generate a solution in a nontrivial case where Jahnke and Huisinga's approach broke down, and we showed that it can offer solutions in far more general and nontrivial cases in Sec. \ref{sec:firstcalc}. While the calculation is likely to be tedious, it seems possible that the Doi-Peliti approach could also be used to find explicit generating functions and transition probabilities (i.e. involving the explicit solution for $\mathbf{q}(t)$) for suitable generalizations of (for example) the birth-death-autocatalysis system, like one that involves many birth reactions, death reactions, and reactions of the form $S_j \to S_k + S_{\ell}$. It is not clear what new insights are necessary to solve explicitly for $\mathbf{q}(t)$ in cases like this.

Another obvious objection to the Doi-Peliti approach is that it is not entirely mathematically rigorous: in rederiving Jahnke and Huisinga's result, we freely swapped many improper integrals, frequently utilized the integral representation of the Dirac delta function, and so on. But we did get answers, and the method is likely to yield answers for problems that other methods cannot currently solve. If nothing else, the Doi-Peliti approach can be used as a tool to generate answers, which can be justified as rigorously correct using some other method (e.g. by showing that they solve the CME directly).  

While we did not resort to approximations in this paper, it is worth noting that utilizing Doi-Peliti path integrals enables the use of powerful perturbative and asymptotic expansions. For most systems of interest in mathematical biology (e.g. gene networks with many species and interactions), this is the way in which the Doi-Peliti approach can be practically applied. See Weber and Frey \cite{weber2017}, and Assaf and Meerson \cite{assaf2017}, for recent reviews discussing approximation techniques related to path integral descriptions of the CME. 

The Doi-Peliti path integral is just one example of a stochastic path integral \cite{weber2017,vastola2019}. The Onsager-Machlup \cite{OM1953pt1,OM1953pt2,graham1977,hertz2016} and Martin-Siggia-Rose-Janssen-De Dominicis \cite{msr1973,janssen1976,dd1976,ddpeliti1978,hertz2016} path integrals are two other examples, which offer an alternative to the Fokker-Planck equation in the same way the Doi-Peliti path integral is an alternative to the CME. While exact computations of these path integrals are also tedious, they are just as mechanical---one can `turn the crank' and generate answer, without relying on (for example) \textit{a priori} knowledge of special functions to solve differential equations \cite{vastolaADD2019,vastola2019gill}. 

\section{Conclusion}

We rederived Jahnke and Huisinga's classic result on monomolecular reaction systems using the Doi-Peliti coherent state path integral approach, which reduces solving the CME to the computation of many integrals. In addition, we also derived an explicit exact time-dependent solution to a problem involving an autocatalytic reaction that was beyond the scope of Jahnke and Huisinga's method, and a formal exact solution for systems involving arbitrary combinations of zero and first order reactions. We hope that our calculations, as well as our detailed description of the Doi-Peliti formalism, help make the Doi-Peliti method more accessible to mathematical biologists studying the CME.

%%%%%%%%%%%%%%%%%%%%%%%%%%%%%%%%%%%

\begin{acknowledgements}
This work was supported by NSF Grant \# DMS 1562078.
\end{acknowledgements}

\appendix

\section{Quantum vs standard notation}
\label{sec:notation_comp}

In this paper, we make abundant use of Dirac's bra-ket notation for vectors and inner products. While this notation is standard in quantum mechanics, it is less often used in areas with a more strictly mathematical bent. At the beginning of Sec. \ref{sec:reframe}, we listed a few important reasons for this choice; to reiterate, it eases notation, makes it easy to repeatedly apply the identity operator (c.f. the derivation of the path integral expression for the propagator $U$), and is suggestive for the inner products we are using. 

In this appendix, we will briefly review it and compare it with notation more commonly used in linear algebra and stochastic processes, so that this paper can be easily read by mathematicians unfamiliar with quantum mechanical notation. 

For now, we will work in one dimension for simplicity. Consider a complex vector space $V$ with a countable basis $e_0$, $e_1$, $e_2$, ..., so that an arbitrary state in this space reads
\begin{equation}
\phi = \sum_{x = 0}^{\infty} c(x) e_x
\end{equation}
for some complex coefficients $c(x)$. In terms of bra-ket notation, we would denote the basis vectors (also called `kets' or `states') by $\ket{0}, \ket{1}, \ket{2}$, ... and an arbitrary state by
\begin{equation}
\ket{\phi} = \sum_{x = 0}^{\infty} c(x) \ket{x} \ ,
\end{equation}
which essentially amounts to the identifications $e_x \to \ket{x}$ and $\phi \to \ket{\phi}$. 

Define the inner product $\langle e_x, e_y \rangle := \delta(x - y)$ for all $x, y \in \mathbb{N}$, and extend it to arbitrary states by linearity. Using bra-ket notation, we would write
\begin{equation}
\braket{x}{y} = \delta(x - y) \ .
\end{equation}
At this point, there are not yet any significant differences between the two choices of notation. The significant differences begin when we consider linear functionals like the functional $L_y: V \to \mathbb{C}$ defined by its action on a basis vector $e_x$:
\begin{equation}
L_y(e_x) := \langle e_y, e_x \rangle \ .
\end{equation} 
Using bra-ket notation, we would denote $L_y$ by $\bra{y}$ (this is called a `bra'), and $L_y$ acting on $e_x$ by $\braket{y}{x}$ (the inner product is sometimes called a `bra-ket'). This allows us to represent Fourier-like identities like
\begin{equation}
\phi = \sum_{y = 0}^{\infty} \langle e_y, \phi \rangle e_y
\end{equation}
via
\begin{equation}  \label{eq:bk_fourier}
\ket{\phi} = \sum_{y = 0}^{\infty} \braket{y}{\phi} \ket{y} \ ,
\end{equation}
or more succinctly by defining the operator
\begin{equation}   \label{eq:app_roi}
1 = \sum_{y = 0}^{\infty}  \ket{y} \bra{y}
\end{equation}
which by definition is equal to the identity operator. Equations like these are often called `resolutions of the identity', because they recast the identity operator in some convenient form. The notation above is meant to be highly suggestive; one can imagine it `bumping into' a vector/state $\ket{\phi}$ from the left to recover Eq. \ref{eq:bk_fourier}.

This notation also makes it easy to repeatedly apply resolutions of the identity, and to see what the result will be. Compare
\begin{equation}
\phi = \sum_{y_1, y_2, y_3} \langle e_{y_3}, e_{y_2} \rangle \langle e_{y_2}, e_{y_1} \rangle \langle e_{y_1}, \phi \rangle e_{y_3}
\end{equation}
to
\begin{equation}
\ket{\phi} = \sum_{y_1, y_2, y_3} \ket{y_3} \braket{y_3}{y_2} \braket{y_2}{y_1} \braket{y_1}{\phi} \ .
\end{equation}
The above can be obtained simply by inserting Eq. \ref{eq:app_roi} next to $\ket{\phi}$ many times. 

One helpful feature of bra-ket notation is that eigenvectors are traditionally labeled by their eigenvalues. For example, if $\hat{\mathcal{A}} \phi = \lambda \phi$, it is traditional to write $\phi$ as
\begin{equation}
\phi \to \ket{\lambda} \ ,
\end{equation}
so that $\hat{\mathcal{A}} \ket{\lambda} = \lambda \ket{\lambda}$. We used this throughout the paper to denote coherent states, which we defined to be eigenstates of the annihilation operators. 

Matrix elements---expressions of the form $\langle \phi_2, \hat{\mathcal{A}} \phi_1 \rangle$ for two vectors $\phi_1$ and $\phi_2$ and some operator $\hat{\mathcal{A}}$---are denoted by
\begin{equation}
\mel{\phi_2}{\hat{\mathcal{A}}}{\phi_1} \ .
\end{equation}
This notation is convenient when we are computing matrix elements involving operators and their eigenstates. For example, let $\hat{a}$ be an operator, $\hat{a}^{\dag}$ be its Hermitian conjugate, and $\phi_1 \to \ket{\lambda_1}$ and $\phi_2 \to \ket{\lambda_2}$ be eigenstates with eigenvalues $\lambda_1$ and $\lambda_2$, respectively. Then on the one hand, we have
\begin{equation}
\langle \phi_2, \hat{a}^{\dag} \hat{a} \ \phi_1 \rangle = \langle \hat{a} \phi_2 , \hat{a} \phi_1 \rangle = \lambda_2^* \lambda_1 \langle \phi_2, \phi_1 \rangle
\end{equation}
in standard notation. On the other hand, we have
\begin{equation}
\mel{\lambda_2}{\hat{a}^{\dag} \hat{a}}{\lambda_1} = \lambda_2^* \lambda_1 \braket{\lambda_2}{\lambda_1}
\end{equation}
using bra-ket notation, where we imagine $\hat{a}^{\dag}$ `acting to the left' and $\hat{a}$ `acting to the right'. 

That is about all there is to say about the correspondence between bra-ket notation and typical vector space notation. One should keep in mind that the strength of bra-ket notation is in repeatedly applying the identity operator/resolutions of the identity, which is required to construct the Doi-Peliti path integral. The correspondence is summarized (for arbitrary dimensions, using the notation introduced in Sec. \ref{sec:pstatement}) in Table \ref{table:qm_vs_notation}.

A few words should also be said about the relationship between our generating function and its usual analytic function form. We remind the reader that both are defined (in one dimension again, for simplicity) via
\begin{equation}
\psi(g, t) = \sum_{x = 0}^{\infty} P(x, t) \ g^x \hspace{0.5in} \ket{\psi} = \sum_{x = 0}^{\infty} P(x, t) \ \ket{x}
\end{equation}
where $P(x, t)$ is a solution to the CME. These expressions are completely equivalent, up to the identification $g^x \to \ket{x}$. In fact, the equations of motion they satisfy exactly correspond. For example, in the case of the chemical birth-death process, we remind the reader that $\psi(g, t)$ satisfies the PDE
\begin{equation} 
\frac{\partial \psi(g, t)}{\partial t} = k(t) [g - 1] \psi(g, t) - \gamma(t) [g - 1] \frac{\partial \psi(g, t)}{\partial g} 
\end{equation}
whereas $\ket{\psi}$ satisfies the equation 
\begin{equation}
\frac{\partial \ket{\psi}}{\partial t} = \hat{H} \ket{\psi}
\end{equation}
where in this case the Hamiltonian operator $\hat{H}$ is given (in terms of our original creation and annihilation operators) by
\begin{equation}
\hat{H} = k (\hat{\pi} - 1) - \gamma (\hat{\pi} - 1) \hat{a} \ .
\end{equation}
This is the same as the above PDE, provided one makes the identifications
\begin{equation}
\begin{split}
\hat{\pi} &\to g \\
\hat{a} &\to \frac{\partial}{\partial g} \ .
\end{split}
\end{equation}
It turns out that these identifications work more generally (for arbitrary numbers of dimensions, and an arbitrary list of reactions). Although they are equivalent, one form of the generating function is often more convenient to use than the other. In our case, we use the Hilbert space form for almost the entirety of this paper, because it allows us to exploit bra-ket notation to denote applying many resolutions of the identity (c.f. Sec. \ref{sec:roi}), and to work straightforwardly in terms of matrix elements of the Hamiltonian.

\begin{table}[ht!] 
%\hspace*{-2cm}
\centering
 \begin{tabular}{| c | c | c |} 
 \hline
Object & Bra-ket notation & Standard notation \\ [0.5ex] 
 \hline \hline
basis vector/ket & $\ket{\mathbf{x}}$ & $\mathbf{e}_{\mathbf{x}}$ \\ 
linear functional/bra & $\bra{\mathbf{x}}$ & $L_{\mathbf{x}}: \mathbf{e}_{\mathbf{y}} \mapsto \langle \mathbf{e}_{\mathbf{x}} , \mathbf{e}_{\mathbf{y}} \rangle$ \\ 
zero vector & $0$ & $0$ \\
arbitrary state &  $\displaystyle\ket{\phi} = \sum_{\mathbf{x}} c(\mathbf{x}) \ket{\mathbf{x}}$ &  $\displaystyle\boldsymbol{\phi} = \sum_{\mathbf{x}} c(\mathbf{x}) \ \mathbf{e}_{\mathbf{x}}$ \\
inner product & $\braket{\mathbf{x}}{\mathbf{y}}$ & $\langle \mathbf{e}_{\mathbf{x}} , \mathbf{e}_{\mathbf{y}} \rangle$  \\ 
operator matrix element & $\mel{\mathbf{x}}{\mathcal{A}}{\mathbf{y}}$ & $\langle \mathbf{e}_{\mathbf{x}} , \mathcal{A} \ \mathbf{e}_{\mathbf{y}} \rangle = \langle \mathcal{A}^{\dag} \ \mathbf{e}_{\mathbf{x}} , \mathbf{e}_{\mathbf{y}} \rangle$ \\
generating function & $\displaystyle\ket{\psi(t)} = \sum_{\mathbf{x}} P(\mathbf{x}, t) \ket{\mathbf{x}}$ & $\displaystyle\psi(t) = \sum_{\mathbf{x}} P(\mathbf{x}, t) \ \mathbf{e}_{\mathbf{x}}$ \\
coherent state (c.s.) & $\displaystyle\ket{\mathbf{z}} = \sum_{\mathbf{y}} \frac{\mathbf{z}^{\mathbf{y}}}{\mathbf{y}!} e^{- \mathbf{z} \cdot \mathbf{1}} \ \ket{\mathbf{y}}$ & $\displaystyle\text{cs}(\mathbf{z}) = \sum_{\mathbf{y}} \frac{\mathbf{z}^{\mathbf{y}}}{\mathbf{y}!} e^{- \mathbf{z} \cdot \mathbf{1}} \ \mathbf{e}_{\mathbf{y}}$ \\
c.s. identity operator &  $\displaystyle\ket{\mathbf{x}} = \int_{[0, \infty)^n} d\mathbf{z} \int_{\mathbb{R}^n} \frac{d\mathbf{p}}{(2\pi)^n}   \ \ket{\mathbf{z}} \braket{- i \mathbf{p}}{\mathbf{x}} e^{-i \mathbf{z} \cdot \mathbf{p}} $ & $\displaystyle\mathbf{e}_{\mathbf{x}} = \int_{[0, \infty)^n} d\mathbf{z} \int_{\mathbb{R}^n} \frac{d\mathbf{p}}{(2\pi)^n}   \ \text{cs}(\mathbf{z}) \langle \text{cs}(- i \mathbf{p}), \mathbf{e}_\mathbf{x} \rangle e^{-i \mathbf{z} \cdot \mathbf{p}} $ \\
 \hline
 \end{tabular}
 \caption{Let $\mathbf{x} \in \mathbb{R}^n$, and let the notation be as in Sec. \ref{sec:intro_mono} (e.g. $\mathbf{x}!  := x_1! \cdots x_n!$). This table summarizes the correspondence between quantum and standard notation for several objects discussed in this appendix, as well as objects discussed elsewhere in this paper (e.g. coherent states). }
\label{table:qm_vs_notation}
\end{table}

Finally, we should say that the coherent state resolution of the identity we used many times to construct the Doi-Peliti path integral (c.f. Sec. \ref{sec:roi}) can be written in terms of ordinary functions as
\begin{equation}
g^x = \int_0^{\infty} dz \int_{-\infty}^{\infty} \frac{dp}{2\pi} e^{z (g-1)} (1 + i p)^x e^{- i z p} 
\end{equation}
in one dimension, and
\begin{equation}
\mathbf{g}^{\mathbf{x}} = \int_{[0, \infty)^n} d\mathbf{z} \int_{\mathbb{R}^n} \frac{d\mathbf{p}}{(2\pi)^n}   \ e^{\mathbf{z} \cdot (\mathbf{g} - \mathbf{1})} (\mathbf{1} + i \mathbf{p})^{\mathbf{x}} e^{-i \mathbf{z} \cdot \mathbf{p}} 
\end{equation}
in arbitrarily many dimensions. However, attempting to construct the path integral using this notation instead of bra-ket notation is significantly messier, so we have avoided it. 

\bibliographystyle{spphys}       % APS-like style for physics
%\bibliography{}   % name your BibTeX data base
%\bibliography{dp_bib, gillnoise_bib, addnoise_bib, re_pathint,last_bib,other_bib, resub_bib}

% Non-BibTeX users please use

\end{document}